\documentclass[11pt]{article}

\usepackage[utf8]{inputenc}

\def\martin#1{\marginpar{$\leftarrow$\fbox{Ma}}\footnote{$\Rightarrow$~{\sf\textcolor{ForestGreen}{#1 --Martin}}}}  
\def\sayan#1{\marginpar{$\leftarrow$\fbox{Sa}}\footnote{$\Rightarrow$~{\sf\textcolor{red}{#1 --Sayan}}}}

\usepackage[ruled,vlined,linesnumbered,noend]{algorithm2e}
\SetKwInOut{Parameter}{Parameters}

\usepackage{amsthm}
\usepackage{amsmath}
\usepackage{amssymb}
\usepackage{enumerate}
\usepackage{graphicx}
\usepackage[margin=1in]{geometry}
\usepackage{dsfont}
\usepackage{booktabs} 
\usepackage{tabularx}
\usepackage{multirow}
\usepackage{enumitem}
\usepackage{tablefootnote}

\usepackage[dvipsnames]{xcolor}
\definecolor{ForestGreen}{rgb}{0.1333,0.5451,0.1333}
\definecolor{DarkRed}{rgb}{0.65,0,0}

\usepackage{hyperref}
\hypersetup{
    colorlinks=true,
    linkcolor=DarkRed,
    filecolor=magenta,      
    urlcolor=cyan,
    citecolor=ForestGreen,
}

\usepackage{cleveref}

\usepackage{framed}
\usepackage[framemethod=tikz]{mdframed}
\usepackage{tikz-cd}

\newenvironment{wrapper}[1]
{
	\begin{center}
		\begin{minipage}{\linewidth}
			\begin{mdframed}[hidealllines=true, backgroundcolor=gray!20, leftmargin=0cm,innerleftmargin=0.4cm,innerrightmargin=0.4cm,innertopmargin=0.4cm,innerbottommargin=0.4cm,roundcorner=10pt]
				#1}
			{\end{mdframed}
		\end{minipage}
	\end{center}
}

\DeclareMathAlphabet{\mathmybb}{U}{bbold}{m}{n}
\newcommand{\1}{\mathmybb{1}}

\newtheorem{theorem}{Theorem}[section]
\newtheorem{lemma}[theorem]{Lemma}

\newtheorem{corollary}[theorem]{Corollary}

\newtheorem{definition}[theorem]{Definition}

\newtheorem{observation}[theorem]{Observation}
\newtheorem{claim}[theorem]{Claim}

\newcommand{\Afl}{\textsc{ALG}^{\textnormal{fl}}}

\newcommand{\Opt}{\textsc{OPT}}
\newcommand{\optf}{\textsc{OPT}^{\textnormal{fl}}}
\newcommand{\optc}{\textsc{OPT}^{\textnormal{cl}}}
\newcommand{\foptf}{\widehat{\textsc{OPT}}{}^{\textnormal{fl}}}
\newcommand{\foptc}{\widehat{\textsc{OPT}}{}^{\textnormal{cl}}}
\newcommand{\fl}{\textnormal{fl}}
\newcommand{\cl}{\textnormal{cl}}
\newcommand{\Proj}{\textnormal{\texttt{Proj}}}

\newcommand{\RadiiMP}{\textnormal{\textsc{RadiiMP}}}
\newcommand{\FracLMP}{\textnormal{\textsc{FracLMP}}}
\newcommand{\FracKMed}{\textnormal{\textsc{FracKMed}}}

\newcommand{\Sparsifier}{\textnormal{\textsc{Sparsifier}}}

\newcommand{\RandLoc}{\textnormal{\textsc{RandLocalSearch}}}

\title{Fully Dynamic $k$-Clustering \\
with Fast Update Time and Small Recourse
}

\author{
Sayan Bhattacharya 
\\University of Warwick 
\\\texttt{s.bhattacharya@warwick.ac.uk}  
\and 
Mart\'{i}n Costa 
\\University of Warwick
\\\texttt{martin.costa@warwick.ac.uk}  
\and 
Naveen Garg 
\\IIT Delhi 
\\\texttt{naveen@cse.iitd.ac.in} 
\and Silvio Lattanzi 
\\Google Research 
\\\texttt{silviol@google.com}  
\and Nikos Parotsidis 
\\Google Research 
\\\texttt{nikosp@google.com}
}

\date{}

\begin{document}

\maketitle

\pagenumbering{gobble}

\begin{abstract}
In the dynamic metric $k$-median problem, we wish to maintain a set of $k$ centers $S \subseteq V$ in an input metric space $(V, d)$ that gets updated via point insertions/deletions, so as to minimize the objective $\sum_{x \in V} \min_{y \in S} d(x, y)$. The quality of a dynamic algorithm is measured in terms of its approximation ratio, ``recourse'' (the number of changes in $S$ per update) and ``update time'' (the time it takes to handle an update). The ultimate goal in this line of research is to obtain a dynamic $O(1)$ approximation algorithm with  $\tilde{O}(1)$ recourse and $\tilde{O}(k)$ update time.

Dynamic $k$-median is a canonical example of a class of problems known as dynamic $k$-clustering, that has received significant attention in recent years [Fichtenberger et al, SODA'21], [Bateni et al, SODA'23], [Lacki et al, SODA'24]. To the best of our knowledge, however, all these previous papers either attempt to minimize the algorithm's recourse while ignoring its update time, or minimize the algorithm's update time while ignoring its recourse. For dynamic $k$-median in particular, the state-of-the-art results get $\tilde{O}(k^2)$ update time and $O(k)$ recourse~[Cohen-Addad et al, ICML'19], [Henzinger and Kale, ESA'20], [Bhattacharya et al, NeurIPS'23]. But, this recourse bound of $O(k)$ can be trivially obtained by recomputing an optimal solution from scratch after every update,  provided we ignore the update time. In addition, the update time of $\tilde{O}(k^2)$ is polynomially far away from the desired bound of $\tilde{O}(k)$. We come {\em arbitrarily close} to resolving the main open question on this topic, with the following results.

\medskip
\noindent 
(I) We develop a new framework of {\em randomized local search} that is suitable for adaptation in a dynamic setting. For every $\epsilon > 0$, this gives us a dynamic $k$-median algorithm with $O(1/\epsilon)$ approximation ratio, $\tilde{O}(k^{\epsilon})$ recourse and $\tilde{O}(k^{1+\epsilon})$ update time. This framework also generalizes to dynamic $k$-clustering with $\ell^p$-norm objectives. As a corollary, we obtain similar bounds for the dynamic $k$-means problem, and a new trade-off between approximation ratio, recourse and update time for the dynamic $k$-center problem.

\medskip
\noindent
(II) If it suffices to maintain only an estimate of the {\em value} of the optimal $k$-median objective, then we obtain a $O(1)$ approximation algorithm with $\tilde{O}(k)$ update time. We achieve this result via adapting the Lagrangian Relaxation framework of [Jain and Vazirani, JACM'01], and a facility location algorithm of [Mettu and Plaxton, FOCS'00] in the dynamic setting.

\noindent
\end{abstract}

\newpage

\tableofcontents

\newpage


\part{Extended Abstract}
\label{part:0}

\pagenumbering{arabic}

\section{Introduction}
Consider a general metric space $(V, w, d)$ on $|V| = n$ points, with a distance function $d : V \times V \rightarrow \mathbb{R}^+$ satisfying triangle inequality, a weight function $w : V \rightarrow \mathbb{R}^+$, and a positive integer $k \in [n]$.  In the ``metric $k$-median'' problem, we are asked to open a subset $S \subseteq V$ of  $k$ ``centers'', and assign each point $x \in V$ to its nearest center in $S$. 
The objective is to compute  an $S$ so as to minimize its ``clustering cost'' $\cl(S, V) := \sum_{x \in V} w(x) \cdot d(x, S)$; where $d(x, S) := \min_{y \in S} d(x, y)$. Throughout the paper, we make the standard assumption that we have access to a ``distance oracle'' which, in $O(1)$ time, returns the value  $d(x, y)$ for any pair of points $x, y \in V$.

As a  fundamental, textbook clustering problem with wide-ranging applications, metric $k$-median has been extensively studied over several decades~\cite{WilliamsonShmoys2011}. The problem happens to be NP-hard, and has  served as a testbed for powerful techniques such as local search~\cite{AryaGKMMP04} and Lagrangian relaxation~\cite{JainV01} in the field of approximation algorithms.
Due to a celebrated result by Mettu and Plaxton~\cite{MettuP02}, it admits a $O(1)$ approximation  in $\tilde{O}(kn)$ time,\footnote{Throughout, the notation $\tilde{O}(\cdot)$ hides terms that are polylogarithmic in $n$ and the aspect ratio of the metric space.} and this runtime bound is is known to be tight upto polylogarithmic factors~\cite{BadoiuCIS05}. 


\medskip
\noindent {\bf The Dynamic Setting.}
Now, suppose that the input keeps changing by means of ``updates''.  Each update inserts/deletes a point $x$ in $V$ (in case of an insertion, it also specifies the weight $w(x)$), and we have to maintain a clustering $S \subseteq V$ in the current input. The time spent by the algorithm per update is  its ``update time'', whereas the number of changes (i.e., insertions/deletions of points) in $S$ per update is  its ``recourse''. Intuitively, the ``update time''  of an algorithm measures how efficient it is in the presence of dynamic inputs, whereas its ``recourse'' measures the {\em stability} of the maintained clustering over time.  This leads us naturally  to the following question.

\begin{wrapper}
\textbf{Q1:} Is there an algorithm for the dynamic $k$-median problem that {\bf simultaneously} achieves  $O(1)$ approximation ratio, $\tilde{O}(1)$ recourse and $\tilde{O}(k)$ update time?
\end{wrapper}

\noindent
Note that if there is a dynamic $O(1)$ approximation algorithm for this problem with update time $T$, then we also get a {\em static} $O(1)$ approximation algorithm that runs in $O(T n)$ time.\footnote{Simply insert the $n$ points in $V$,  let the dynamic algorithm deal with these insertions, and return its final output.} Thus, we cannot expect to have $T = o(k)$, for otherwise it will imply a static algorithm with runtime $o(kn)$, contradicting the results of~\cite{MettuP02,BadoiuCIS05} as discussed above. In this sense, an affirmative answer to {\bf Q1} will (nearly) settle the complexity of the dynamic $k$-median problem.

\medskip
\noindent {\bf Metric $k$-Clustering.} We use the term ``metric $k$-clustering''  to refer to any problem where we have to compute a set $S \subseteq V$ of $k$ centers, so as to minimize a given objective. Besides $k$-median, two other canonical problems  under this category are: (i) ``metric $k$-center'', where we wish to minimize $
\max_{x \in V} d(x, S)$, and (ii) ``metric $k$-means'', where we wish to minimize $\sum_{x \in V} w(x) \cdot \left( d(x, S)\right)^{2}$. In the dynamic setting,  a question analogous to {\bf Q1} naturally arises for each of these problems.

\subsection{Our Contributions} 

In recent years, a long and influential line of work has been devoted to the study of dynamic $k$-clustering~\cite{ChanGS18,icml/LattanziV17,soda/FichtenbergerLN21,LHRJ23,nips/Cohen-AddadHPSS19,esa/HenzingerK20,nips/BiabaniHM023,ourneurips2023,BateniEFHJMW23}. To the best of our knowledge, however, all the existing papers focus on optimizing either: (a) exclusively the algorithm's update time (while ignoring its recourse) or (b) exclusively the algorithm's recourse (while ignoring its update time). {\bf Q1}, therefore, remains an outstanding open question on this topic.  Indeed, if we take {\em any} state-of-the-art $O(1)$ approximation algorithm for dynamic $k$-median/$k$-means/$k$-center, then currently there is a polynomial in $k$ gap between:  {\em either} its recourse {\em or}  update time, and the corresponding bound that {\bf Q1} asks for. Furthermore, for dynamic $k$-median and $k$-means, currently we do not know of  \emph{any algorithm with better than $O(k)$ recourse}, which is trivial to achieve by simply recomputing a new optimal solution from scratch after every update.


Using  {\em a single algorithm}, we come {\em arbitrarily close} to bridging these polynomial gaps for both $k$-median and $k$-means (see Table~\ref{tab:kmedian}) and derive a new trade-off between approximation ratio, recourse and update time for $k$-center (see Table~\ref{tab:kcenter}). In addition, if it suffices to maintain an estimate of the optimal objective {\em value}, then via a different approach we completely attain the approximation ratio and update time bound stated in {\bf Q1} for $k$-median (see Theorem~\ref{thm:value:main:2}).

\begin{table}[h!]
  \centering
  \begin{tabular}{cccc}
    \toprule
    Approximation Ratio & Update Time & Recourse & Paper \\
    \midrule
     $O(1)$ & $\tilde O(n + k^2)$ & $O(k)$ & \cite{nips/Cohen-AddadHPSS19} \\
     $O(1)$ & $\tilde O(k^2)$ & $O(k)$ & \cite{esa/HenzingerK20} \\
     $O(1)$ & $\tilde O(k^2)$ & $O(k)$ & \cite{ourneurips2023}\tablefootnote{We remark that \cite{ourneurips2023} actually maintain a $O(1)$ approximate ``sparsifier'' of size $\tilde{O}(k)$ in $\tilde{O}(k)$ update time (see Section~\ref{sec:prelim:new}), and they need to run the static algorithm of~\cite{MettuP00} on top of this sparsifier after every update. This leads to an update time of $\tilde{O}(k^2)$ for the dynamic $k$-median and $k$-means problems.}\\
     $O(1/\epsilon)$ & $\tilde O(k^{1 + \epsilon})$ & $\tilde O(k^\epsilon)$ & \textbf{Our Result} \\
    \bottomrule
  \end{tabular}
  \caption{State-of-the-art for {\bf fully dynamic $k$-median} in general metrics. The table for {\bf fully dynamic $k$-means} is identical, except that the approximation ratio of our algorithm is $O(1/\epsilon^2)$.}
  \label{tab:kmedian}
\end{table}

\begin{table}[h!]
  \centering
  \begin{tabular}{cccc}  
    \toprule
    Approximation Ratio & Update Time & Recourse & Paper \\
    \midrule
     $2 + \epsilon$ & $\tilde O(k^2)$ & $O(k)$ & \cite{ChanGS18} \\
     $2 + \epsilon$ & $\tilde O(k)$ & $O(k)$ & \cite{BateniEFHJMW23} \\
     $O(1)$ & $\text{poly}(n, k)$ & $O(1)$ & \cite{LHRJ23} \\
      $O(\log n \log k)$ & $\tilde O(k)$ & $\tilde O(1)$ & \textbf{Our Result} \\
    \bottomrule
  \end{tabular}
  \caption{State-of-the-art algorithms for {\bf fully dynamic $k$-center} in general metric spaces.}
  \label{tab:kcenter}
\end{table}

We obtain our results by adapting two versatile techniques, ``local search''~\cite{AryaGKMMP04} and ``Lagrangian relaxation''~\cite{JainV01}, in the dynamic setting.
Throughout this paper, we only consider ``amortized update time'' (resp.~``amortized recourse'').\footnote{This is the total time (resp.~total recourse) spent by the algorithm across an update sequence, divided by the number of updates.}
{\em All our algorithms can handle an ``adaptive adversary'', which  determines the  future updates based on the concerned algorithm's past outputs.}

The machinery behind our dynamic local search framework even generalizes to any ``$\ell^p$-norm objective''. Here, for a given $p \geq 1$,  we wish to find a set $S \subseteq V$ of $k$ centers so as to minimize $\cl_p(S, V) := \left( \sum_{x \in V} w(x) \cdot \left( d(x, S) \right)^p \right)^{1/p}$. Our main result is summarized in the theorem below.

\begin{theorem}
\label{thm:lpnorm:main}
    For any $\epsilon > 0$, there is a dynamic $k$-clustering algorithm for the $\ell^p$-norm objective, with $O(p/\epsilon)$ approximation ratio, $\tilde O(k^{1 + \epsilon}p/\epsilon^3)$ update time and $\tilde O(k^\epsilon/\epsilon)$ recourse.
\end{theorem}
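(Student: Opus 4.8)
The plan is to build a dynamic algorithm around a carefully chosen \emph{randomized local search} primitive and then bootstrap it recursively over $\log k / \log(1/\epsilon)$ ``levels'' to drive the recourse and update time down to $k^\epsilon$ and $k^{1+\epsilon}$ respectively. First I would fix the static backbone: the paper already gives us (via \cite{ourneurips2023}, see the footnote in Table~\ref{tab:kmedian}) a dynamically maintainable $O(1)$-approximate \emph{sparsifier} $D \subseteq V$ of size $\tilde O(k)$ with $\tilde O(k)$ update time, where a good $k$-clustering of $D$ lifts to a good $k$-clustering of $V$ (with an $O(1)$ loss, or $O(p)$ loss for the $\ell^p$-norm objective). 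So it suffices to solve the problem on an instance of size $\tilde O(k)$; all subsequent work is on the sparsifier, and the final $\tilde O(k^{1+\epsilon})$ update time is dominated by this reduction plus the local search cost.

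Next I would design \textsc{RandLocalSearch}: given a current candidate solution $S$ and a target size $k$, repeatedly pick a random sample of centers to drop and reoptimize locally. The key structural fact to prove is a \emph{quantitative local-improvement lemma}: if $\cl_p(S, D) > (1 + \Omega(\epsilon/p)) \cdot \cl_p(\Opt, D)$, then a random swap of $\tilde O(1)$ centers improves the cost by a multiplicative $(1 - \Omega(\epsilon/(p \cdot \text{poly}\log)))$ factor with constant probability. This is the analogue, in the $\ell^p$-norm setting, of the classical single-swap local search analysis of \cite{AryaGKMMP04}: one pairs up centers of $S$ and $\Opt$, uses the triangle inequality and the power-mean / Hölder inequality to control how much reassigning a cluster to a nearby $\Opt$-center costs, and the extra factor $p$ in the approximation ratio comes precisely from the $\ell^p$ triangle-inequality slack. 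Iterating this lemma $\tilde O(p/\epsilon)$ times brings the cost to within $(1 + \epsilon)$ of the local optimum, hence within $O(p/\epsilon)$ of the global optimum after accounting for the locality gap.

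The dynamic part is the recursion. At the top level, partition the $k$ centers to be opened into $k^{1-\epsilon}$ groups of $\sim k^\epsilon$ centers each (using a random partition of the sparsifier's points, so each group is responsible for a balanced subinstance); recurse within each group to maintain a local solution, and treat the union of group-solutions as a coarse candidate on which one run of \textsc{RandLocalSearch} with a small sample stitches things together. An update touches $O(1)$ groups, each of size $k^\epsilon$, so by induction the recourse satisfies $R(k) \le R(k^\epsilon) \cdot \text{poly}\log + \tilde O(1)$ and the update time $T(k) \le T(k^\epsilon) \cdot \text{poly}\log + \tilde O(k)$, unwinding to $\tilde O(k^\epsilon/\epsilon)$ recourse and $\tilde O(k^{1+\epsilon}p/\epsilon^3)$ time after $O(1/\epsilon)$ levels; the $1/\epsilon^2$ or $1/\epsilon^3$ overhead comes from compounding a $(1+\epsilon)$ loss per level. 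To keep \emph{amortized} bounds against an adaptive adversary, I would use a standard epoch/rebuild argument: periodically recompute the solution on each affected subinstance from scratch and charge the cost to the updates in the epoch, taking care that the randomness used is independent of the adversary's future choices (which holds because the adversary only sees the maintained solution, not the internal coins — or, if needed, one re-randomizes at each rebuild).

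The main obstacle I anticipate is the interaction between the random sampling inside \textsc{RandLocalSearch} and the adaptive adversary across many levels of recursion: a naive union bound over all local-search steps in an epoch would blow up the failure probability, and re-using sampled centers between updates risks leaking information the adversary could exploit. I expect the fix to be a combination of (i) proving the local-improvement lemma with \emph{constant} success probability per step and amplifying by $\tilde O(1)$ independent repetitions so the per-update failure probability is $n^{-\Omega(1)}$, and (ii) structuring the algorithm so that the only thing exposed to the adversary is the clustering itself, with all sampling coins kept private and refreshed at rebuild boundaries — exactly the ``oblivious-within-epoch, adaptive-across-epochs'' pattern. A secondary technical nuisance is ensuring the recursive partition stays balanced under insertions and deletions (so no group degenerates), which I would handle by rebuilding a group whenever its size drifts outside $[\tfrac12 k^\epsilon, 2 k^\epsilon]$, again charged amortized to updates.
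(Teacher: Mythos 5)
Your proposal shares the paper's two outer ingredients (the sparsifier reduction to instances of size $\tilde O(k)$, and a randomized local search primitive), but the dynamic mechanism you build on top of them has genuine gaps. First, the local-improvement lemma you rely on --- ``if $\cl_p(S) > (1+\Omega(\epsilon/p))\cdot \cl_p(\Opt)$ then an improving random swap exists'' --- cannot hold as stated: local optima of single- (or $O(1)$-)swap local search have a \emph{constant} locality gap ($5$ for $k$-median, $O(p)$ for the $\ell^p$ objective), so the best one can certify for a search restricted to a candidate set $X$ is a guarantee of the form $\cl(S) \le \alpha\cdot \cl(X) + O(p)\cdot \Opt_k$. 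This is exactly the crux of the paper: to compose $\Theta(1/\epsilon)$ levels without an exponential blowup of the approximation ratio, the coefficient $\alpha$ in front of $\cl(X)$ must be $1+O(\epsilon)$ rather than a constant, and the paper proves this fine-grained $(1+O(\epsilon),O(p))$-restricted guarantee by benchmarking against the composed projection $\pi_X\circ\pi_{S^{\star}}$ instead of against the optimum directly (Theorem~\ref{thm:local:new} and its $\ell^p$ analogue, \Cref{lem:local search core}). Your proposal neither states nor proves such a lemma; you simply assert ``a $(1+\epsilon)$ loss per level,'' which is the very thing that needs proof.

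Second, the recursive decomposition into $k^{1-\epsilon}$ groups of $\sim k^{\epsilon}$ centers over a random split of the points is both unjustified and where the quantitative claims break. A random partition of the points does not preserve clustering cost (optimal clusters get split across groups), the union of the group solutions already has $k$ centers so it is unclear what the single ``stitching'' local-search pass is supposed to prune or repair, and your recurrences $R(k)\le R(k^{\epsilon})\cdot\mathrm{polylog}+\tilde O(1)$ and $T(k)\le T(k^{\epsilon})\cdot\mathrm{polylog}+\tilde O(k)$ generate $O(\log\log k)$ levels (not $O(1/\epsilon)$) and unwind to $\mathrm{polylog}$ recourse and $\tilde O(k)$ time --- not to the claimed $\tilde O(k^{\epsilon}/\epsilon)$ and $\tilde O(k^{1+\epsilon}p/\epsilon^{3})$ --- so the arithmetic does not produce the theorem's trade-off. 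The mechanism the paper actually uses is different: a nested hierarchy $S_0\supseteq\cdots\supseteq S_{\ell+1}$ of solutions of sizes $k+s_i$ with geometrically shrinking slack $s_i=\lfloor k^{1-i\epsilon}\rfloor$, rebuilt lazily from layer $i$ only after roughly $s_i$ updates; a rebuild changes at most $O(s_{i-1})$ centers and costs $\tilde O(n\,s_{i-1})$ time because the randomized local search is run for a number of iterations proportional to the \emph{slack} $|X|-k$ (each iteration samples one uniformly random non-center and performs its best swap). Amortizing gives $s_{i-1}/s_i=k^{\epsilon}$ recourse and $n\cdot k^{\epsilon}=\tilde O(k^{1+\epsilon})$ update time per layer, summed over $O(1/\epsilon)$ layers. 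Without this slack-based lazy-rebuild structure (or a worked-out substitute), and without the $\alpha\approx 1$ restricted-local-search guarantee, your proposal does not establish Theorem~\ref{thm:lpnorm:main}.
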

Since $\cl_p(S, V)$ respectively corresponds to  $k$-median,  $k$-center and  square root of the $k$-means objectives when $p = 1$,  $p = \Theta(\log n)$ and $p = 2$, Theorem~\ref{thm:lpnorm:main} implies the following corollaries.

\begin{corollary}
\label{thm:solution:main:1}
For any  $\epsilon > 0$, there is an  algorithm for dynamic $k$-median with $O(1/\epsilon)$ approximation ratio, $\tilde{O}(k^{1+\epsilon}/\epsilon^3)$   update time and $\tilde{O}(k^{\epsilon}/\epsilon)$  recourse.
\end{corollary}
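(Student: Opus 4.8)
The plan is simply to instantiate Theorem~\ref{thm:lpnorm:main} at the parameter value $p = 1$. The key observation is that, by definition, $\cl_1(S, V) = \left( \sum_{x \in V} w(x) \cdot (d(x,S))^1 \right)^{1/1} = \sum_{x \in V} w(x) \cdot d(x, S) = \cl(S, V)$, so the dynamic $k$-clustering problem for the $\ell^p$-norm objective with $p = 1$ is \emph{literally} the dynamic $k$-median problem defined in the introduction. Hence any dynamic algorithm for the former, run with $p = 1$, is a dynamic algorithm for the latter, maintaining a set $S \subseteq V$ of $k$ centers minimizing $\cl(S, V)$ against point insertions and deletions.

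It then remains to substitute $p = 1$ into the three guarantees of Theorem~\ref{thm:lpnorm:main}. The approximation ratio $O(p/\epsilon)$ becomes $O(1/\epsilon)$; the update time $\tilde O(k^{1+\epsilon} p / \epsilon^3)$ becomes $\tilde O(k^{1+\epsilon}/\epsilon^3)$; and the recourse $\tilde O(k^\epsilon / \epsilon)$ is unchanged. Since $p = 1 \ge 1$ lies in the valid range of Theorem~\ref{thm:lpnorm:main}, this substitution is legitimate. The qualitative properties claimed in the paper's overview --- that the bounds are \emph{amortized} and that the algorithm works against an \emph{adaptive adversary} --- are inherited verbatim from Theorem~\ref{thm:lpnorm:main}, as no additional construction is performed on top of it.

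There is essentially no obstacle here: Corollary~\ref{thm:solution:main:1} is a pure specialization, and all the real work lies in establishing Theorem~\ref{thm:lpnorm:main} via the randomized local search framework. The only point that warrants a one-line check is that the $\tilde O(\cdot)$ notation (which hides factors polylogarithmic in $n$ and in the aspect ratio of the metric) does not conceal any $p$-dependence that would be triggered at $p = 1$; since $p = 1$ is the smallest admissible value of $p$, it is the most favorable case and no such blow-up occurs. I would therefore present the proof as a short paragraph: set $p = 1$ in Theorem~\ref{thm:lpnorm:main}, note $\cl_1 = \cl$, and read off the stated bounds.
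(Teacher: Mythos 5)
Your proposal is correct and is exactly how the paper obtains this corollary: the paper derives Corollary~\ref{thm:solution:main:1} by instantiating Theorem~\ref{thm:lpnorm:main} at $p=1$, noting that $\cl_1(S,V,w)$ is precisely the $k$-median objective, and reading off the bounds. No further argument is needed.
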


\begin{corollary}
    \label{cor:lpnorm:1}
    For any  $\epsilon > 0$, there is an  algorithm for dynamic $k$-means with $O(1/\epsilon^2)$ approximation ratio, $\tilde{O}(k^{1+\epsilon}/\epsilon^3)$   update time and $\tilde{O}(k^{\epsilon}/\epsilon)$  recourse.
\end{corollary}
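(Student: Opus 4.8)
The plan is to derive Corollary~\ref{cor:lpnorm:1} as a direct instantiation of Theorem~\ref{thm:lpnorm:main}, plus a standard reduction between the $\ell^p$-norm objective and the $k$-means objective. Recall that the $k$-means cost of a center set $S$ is $\sum_{x \in V} w(x) \cdot (d(x,S))^2$, which is exactly $\left(\cl_2(S,V)\right)^2$. Hence a center set $S$ is an $\alpha$-approximation for the $\ell^2$-norm objective if and only if it is an $\alpha^2$-approximation for the $k$-means objective: for any two center sets $S, S'$ we have $\cl_2(S,V) \le \alpha \cdot \cl_2(S',V)$ iff $\left(\cl_2(S,V)\right)^2 \le \alpha^2 \left(\cl_2(S',V)\right)^2$. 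In particular, if $S$ is $\alpha$-approximate for $\cl_2$ against the optimal $\ell^2$-norm solution, it is $\alpha^2$-approximate for $k$-means against the optimal $k$-means solution, since the two problems have the same set of optimal solutions (the objectives are monotone transformations of one another).

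First I would apply Theorem~\ref{thm:lpnorm:main} with $p = 2$. This yields a dynamic $k$-clustering algorithm for the $\ell^2$-norm objective with approximation ratio $O(2/\epsilon) = O(1/\epsilon)$, update time $\tilde{O}(k^{1+\epsilon} \cdot 2/\epsilon^3) = \tilde{O}(k^{1+\epsilon}/\epsilon^3)$, and recourse $\tilde{O}(k^\epsilon/\epsilon)$. The maintained solution $S$ is therefore, at every point in time, an $O(1/\epsilon)$-approximation to the optimal $\ell^2$-norm clustering of the current point set.

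Next I would invoke the reduction above: since $S$ is $O(1/\epsilon)$-approximate for $\cl_2$, it is $\left(O(1/\epsilon)\right)^2 = O(1/\epsilon^2)$-approximate for the $k$-means objective on the current point set. The update time and recourse bounds are unaffected, as the algorithm maintains the same object $S$ and performs no additional work — we are simply re-interpreting the guarantee. This gives an algorithm for dynamic $k$-means with $O(1/\epsilon^2)$ approximation ratio, $\tilde{O}(k^{1+\epsilon}/\epsilon^3)$ update time, and $\tilde{O}(k^\epsilon/\epsilon)$ recourse, which is exactly the statement of Corollary~\ref{cor:lpnorm:1}.

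There is essentially no obstacle here: the only subtlety worth stating explicitly is that squaring the approximation ratio is the correct (and only) price paid when passing from the $\ell^2$-norm formulation to the sum-of-squares formulation, and that this does not interact with the amortized recourse or update time at all. If one wants the cleanest exposition, I would phrase it as a one-line observation that $\cl_2(\cdot,V)^2$ equals the $k$-means objective and that $\alpha$-approximation in the former is equivalent to $\alpha^2$-approximation in the latter, then chain it with the $p=2$ case of Theorem~\ref{thm:lpnorm:main}.
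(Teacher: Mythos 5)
Your proposal is correct and matches the paper's own route: the paper derives this corollary by instantiating Theorem~\ref{thm:lpnorm:main} with $p=2$ and invoking exactly the equivalence you state, formalized as Lemma~\ref{lem:unnorm to norm} ($\alpha$-approximation for the $(k,p)$-clustering objective iff $\alpha^p$-approximation for the unnormalized objective), which for $p=2$ turns the $O(1/\epsilon)$ guarantee into $O(1/\epsilon^2)$ for $k$-means while leaving recourse and update time untouched.
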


\begin{corollary}
    \label{cor:lpnorm:2}
    There is an  algorithm for dynamic $k$-center with $O(\log n \log k)$ approximation ratio, $\tilde{O}(k)$   update time and $\tilde{O}(1)$  recourse.
\end{corollary}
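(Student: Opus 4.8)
The plan is to derive \Cref{cor:lpnorm:2} from \Cref{thm:lpnorm:main} by picking $p$ large enough that the $\ell^p$-norm objective becomes an $O(1)$-factor relaxation of the $k$-center objective $\cl_\infty(S,V) := \max_{x\in V} d(x,S)$, and $\epsilon$ small enough that $k^{\epsilon} = O(1)$. First I would observe that the $k$-center objective does not depend on the weights $w$ at all; so whenever an update inserts a point $x$, we can discard its weight and feed the algorithm of \Cref{thm:lpnorm:main} the point $x$ with unit weight. Under unit weights $\cl_p(S,V) = \big(\sum_{x\in V} d(x,S)^p\big)^{1/p}$ is just the $\ell^p$-norm of the distance vector $(d(x,S))_{x\in V}$, so the elementary sandwich
\[
\cl_\infty(S,V) \;\le\; \cl_p(S,V) \;\le\; n^{1/p}\cdot \cl_\infty(S,V)
\]
holds for every $S\subseteq V$, where $n := |V|$ is the current number of points. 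Letting $S^{\star}$ be an optimal $k$-center solution and $S$ the solution maintained by the $\ell^p$-norm algorithm, chaining this with the approximation guarantee of \Cref{thm:lpnorm:main} gives
\[
\cl_\infty(S,V) \le \cl_p(S,V) \le O(p/\epsilon)\cdot\!\!\min_{|T|\le k}\cl_p(T,V) \le O(p/\epsilon)\cdot\cl_p(S^{\star},V) \le O\!\big(n^{1/p}\cdot p/\epsilon\big)\cdot\cl_\infty(S^{\star},V),
\]
i.e.\ $S$ is an $O(n^{1/p}\cdot p/\epsilon)$-approximation for $k$-center.

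Next I would fix the parameters. Take $p := \lceil \log_2 n\rceil = \Theta(\log n)$, so that $n^{1/p}\le 2$, and take $\epsilon := 1/\log_2 k$ (assuming $k\ge 2$; the case $k=1$ is trivial), so that $k^{\epsilon} = 2 = O(1)$. With these choices the approximation ratio is $O(n^{1/p}\cdot p/\epsilon) = O(\log n\cdot\log k)$; the update time is $\tilde O(k^{1+\epsilon}p/\epsilon^3) = \tilde O(k\cdot\log n\cdot\log^3 k) = \tilde O(k)$, since $k^{1+\epsilon} = k\cdot k^{\epsilon} = 2k$ and the remaining factors are polylogarithmic and hence absorbed by $\tilde O(\cdot)$; and the recourse is $\tilde O(k^{\epsilon}/\epsilon) = \tilde O(\log k) = \tilde O(1)$. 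This is exactly the statement of \Cref{cor:lpnorm:2}.

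The only mildly delicate point is that the "right" exponent $p$ depends on $n$, which changes over the update sequence, so the algorithm itself is parameter-dependent. I would handle this with the standard global-rebuilding trick: maintain a power-of-two upper bound $N$ on the number of points currently present, run the algorithm with $p = \Theta(\log N) = \Theta(\log n)$, and rebuild the data structure from scratch whenever $N$ doubles (or halves). Because the rebuild thresholds are geometrically spaced and a rebuild costs $\tilde O(k\cdot N)$ time and $O(k)$ recourse, this adds only amortized $\tilde O(k)$ to the update time and $\tilde O(1)$ to the recourse, preserving all bounds; the same trick also lets an adaptive adversary be accommodated since \Cref{thm:lpnorm:main} already handles one. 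I do not expect a genuine obstacle: the real content is entirely inside \Cref{thm:lpnorm:main}, and the reduction above is just the classical fact that $\ell^{\Theta(\log n)}$-clustering approximates $k$-center to within an $O(1)$ factor, combined with the substitution $\epsilon = 1/\log k$ to kill the $k^{\epsilon}$ factors.
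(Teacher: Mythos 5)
Your proposal is correct and follows essentially the same route as the paper: instantiate \Cref{thm:lpnorm:main} with $p = \Theta(\log n)$ and $\epsilon = 1/\log k$, using the standard unit-weight sandwich $\cl_\infty(S) \leq \cl_p(S) \leq O(1)\cdot\cl_\infty(S)$ for $p \geq \log n$ (the paper states exactly this reduction, together with the observation that $k$-center ignores weights). Your extra remark about rebuilding when $n$ doubles to keep $p = \Theta(\log n)$ is a reasonable way to handle a technicality the paper glosses over, and it does not change the bounds.
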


Corollary~\ref{thm:solution:main:1} brings us tantalizingly close towards resolving {\bf Q1} in the affirmative. But can we obtain a {\em truely} $O(1)$ approximation in $\tilde{O}(k)$ update time? Theorem~\ref{thm:value:main:2}, which is based on the first adaptation of the Lagrangain relaxation framework~\cite{JainV01} in the dynamic setting, provides further hope for us to be optimistic. In particular, it resolves {\bf Q1} for the {\em value version} of the problem.

\begin{theorem}
\label{thm:value:main:2}
There is a dynamic algorithm that maintains a $O(1)$ approximate estimate of the {\em value} of  the optimal $k$-median objective, in $\tilde{O}(k)$  update time.
\end{theorem}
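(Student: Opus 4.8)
\medskip
\noindent\textbf{Proof plan.} The plan is to combine three ingredients: the dynamic $O(1)$-approximate sparsifier of \cite{ourneurips2023}, the Lagrangian relaxation framework of \cite{JainV01}, and a Mettu--Plaxton-style estimator \cite{MettuP00,MettuP02} for the cost of uniform-cost facility location. First I would invoke \cite{ourneurips2023} to maintain, in $\tilde O(k)$ amortized update time, a weighted point set $D$ of size $\tilde O(k)$ whose optimal $k$-median value $\mathrm{OPT}_k(D)$ satisfies $\mathrm{OPT}_k(D) = \Theta(\mathrm{OPT}_k(V))$. It then suffices to maintain an $O(1)$-estimate of $\mathrm{OPT}_k(D)$, and since the sparsifier itself is maintained in $\tilde O(k)$ time we may afford to spend a further $\tilde O(k)$ time per update recomputing this estimate from scratch (or incrementally). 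The point is that we must \emph{not} compute an actual near-optimal $k$-median clustering of $D$: any constant-factor $k$-median algorithm requires $\Omega(k\cdot |D|) = \Omega(k^2)$ time in a general metric \cite{BadoiuCIS05}, which is exactly the bottleneck responsible for the $\tilde O(k^2)$ update time of prior work (cf.\ \cite{ourneurips2023}); so the estimate has to be produced without ever materializing a clustering.

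For a uniform opening cost $\lambda \ge 0$, let $\mathrm{FL}_\lambda(D) := \min_{F \subseteq D}\bigl( \lambda |F| + \sum_{x} w(x)\, d(x,F) \bigr)$ be the optimal facility location cost, and set $L := \max_{\lambda \ge 0}\bigl( \mathrm{FL}_\lambda(D) - \lambda k \bigr)$. On one hand, using an optimal $k$-median solution (which opens at most $k$ centers) as a feasible facility location solution gives $\mathrm{FL}_\lambda(D) \le \lambda k + \mathrm{OPT}_k(D)$ for every $\lambda$, so $\mathrm{OPT}_k(D) \ge L$. On the other hand, $L$ dominates the Lagrangian dual of the standard $k$-median LP (obtained by dualizing the cardinality constraint), which by LP strong duality equals the $k$-median LP value; since that LP has constant integrality gap we get $\mathrm{OPT}_k(D) = O(L)$ -- equivalently, one may take the two optimal facility location solutions straddling the critical value of $\lambda$ and apply the Jain--Vazirani rounding argument. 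The essential feature of the \emph{value} version is that we never need to actually round a fractional or Lagrangian solution to an integral one: the two-sided estimate $L \le \mathrm{OPT}_k(D) = O(L)$ is all we need, which is what brings the running time down to $\tilde O(k)$. So it remains to $O(1)$-estimate $L$.

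To estimate $L$: the map $\lambda \mapsto \mathrm{FL}_\lambda(D)$ is concave and non-decreasing, with slope equal to the number of opened facilities (non-increasing in $\lambda$), so $\lambda \mapsto \mathrm{FL}_\lambda(D) - \lambda k$ is concave and, up to constants, maximized at a value $\lambda^\star$ for which an optimal facility location solution opens $\Theta(k)$ facilities (or straddles $k$). I would binary-search for $\lambda^\star$ over polylogarithmically many geometrically spaced candidates (the relevant range of $\lambda$ is polynomially bounded in $n$, the aspect ratio and the total weight), evaluating an $O(1)$-approximate facility location estimator on $D$ -- which costs $\tilde O(k)$ per evaluation -- and then read off the estimate of $L$. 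Composing the three $O(1)$ factors (sparsifier, Lagrangian gap, facility location estimator) yields an $O(1)$-estimate of $\mathrm{OPT}_k(V)$; the degenerate cases $k=1$ and $|D| \le k$ (where $\mathrm{OPT}_k(D)=0$) are handled directly.

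The main obstacle is making the facility location estimator both fast and robust. It must run in $\tilde O(k)$ time on an arbitrary metric over $\tilde O(k)$ points, where even reading all pairwise distances would cost $\tilde O(k^2)$ -- so one has to adapt the Mettu--Plaxton machinery to a sublinear/dynamic setting, e.g.\ by associating to each point $x$ a ``charge'' $r_x(\lambda)$ (the radius at which the ball around $x$ locally amortizes an opening of cost $\lambda$), showing $\sum_x w(x)\, r_x(\lambda) = \Theta(\mathrm{FL}_\lambda(D))$, and estimating this sum from polylogarithmically many sampled points with importance weights chosen to control the variance (or by maintaining the charge structure on $D$ incrementally as the sparsifier changes). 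Moreover the estimate of $L$ cannot simply be ``$\widehat{\mathrm{FL}}_{\lambda^\star}(D) - \lambda^\star k$'', since a constant-factor error in $\mathrm{FL}_{\lambda^\star}(D)$ blows up under the subtraction of $\lambda^\star k$; the Mettu--Plaxton analysis is precisely what lets us estimate the \emph{connection-cost} part of the near-critical solution directly and robustly, sidestepping this cancellation. Finally, all of this post-processing must be deterministic given $D$ (or use fresh randomness at each step) so that the adaptive-adversary guarantee inherited from \cite{ourneurips2023} is preserved.
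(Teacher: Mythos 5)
You assume that the sparsifier of \cite{ourneurips2023} yields a weighted set $D$ of size $\tilde O(k)$ with $\Opt_k(D) = \Theta(\Opt_k(V))$, and you then reduce the problem to estimating $\Opt_k(D)$. But the sparsifier only preserves approximately optimal \emph{solutions}, not the optimal \emph{value}: the paper explicitly flags (Section~\ref{sec:prelim:new}) that the optimal $k$-median objective on the sparsified instance can be far smaller than $\Opt_k(V)$. Concretely, all of the original cost can be absorbed into the assignment $\sigma : V \to V'$ that maps each point to its representative, while the $\tilde O(k)$ weighted representatives themselves may admit a $k$-clustering of negligible cost; then your estimator undershoots $\Opt_k(V)$ by an unbounded factor. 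The paper repairs this in a white-box way (Section~\ref{sec:frac sparsifier}): it modifies $\Sparsifier$ so that it additionally maintains a value $\gamma$ with $\cl(\sigma, V) \le \gamma \le O(1) \cdot \Opt_k(V)$, and the reported estimate is $\gamma$ \emph{plus} the connection cost of the fractional solution computed on the sparsified space; only this sum is shown to be $\Theta(1) \cdot \Opt_k(V)$. Without some analogue of $\gamma$, your reduction does not give a two-sided bound.

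\noindent
\textbf{A second, smaller gap} is in how you estimate $L = \max_{\lambda}\bigl(\mathrm{FL}_\lambda(D) - \lambda k\bigr)$. You correctly identify that a constant-factor facility-location estimator does not survive the subtraction of $\lambda k$, but the fix you sketch (a sampled Mettu--Plaxton ``charge'' estimator with importance weights) is neither worked out nor needed, and its variance/adaptivity issues are nontrivial. The paper avoids the cancellation entirely: it modifies Mettu--Plaxton to output a \emph{fractional} LMP $4$-approximation ($y_i = w(i)r_i/\lambda$, etc.), binary-searches over $O(\log_{1+\epsilon}(n\Delta))$ geometrically spaced values of $\lambda$ for two adjacent values straddling $k$ fractionally opened facilities, takes a convex combination opening exactly $k$, and outputs the \emph{connection cost of that combined fractional solution}; by the LMP inequality, dual fitting, and the integrality gap of $3$, this connection cost is directly a $\Theta(1)$-estimate of $\Opt_k$ --- no subtraction ever occurs. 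Moreover, the $\tilde O(k)$ cost per $\lambda$-evaluation is obtained deterministically, not by sampling: the data structures $\RadiiMP$/$\FracLMP$ maintain, for every point of the (size-$\tilde O(k)$) sparsified space, an augmented balanced search tree of distances with $\tilde O(k)$ worst-case update time, so each radius and connection-cost query takes $O(\log n)$ time and a full query for one $\lambda$ takes $\tilde O(k)$.
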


\medskip
\noindent {\bf Related Work.} In addition to the  results summarized in Table~\ref{tab:kmedian} and Table~\ref{tab:kcenter}, there is a  related line of work   on  optimizing  recourse  (while ignoring any consideration regarding update time) in the ``incremental setting''. Specifically, this model  allows {\em only} for point insertions, as opposed to our model which allows for {\em both} insertions and deletions of points. The concerned line of work here has led to $O(1)$ approximation algorithms for $k$-center,  $k$-median and $k$-means that respective incur a ``total recourse'' of $\tilde{O}(k)$, $\tilde{O}(k)$ and $\tilde{O}(k^2)$, while handling a sequence of $n$ insertions~\cite{icml/LattanziV17,soda/FichtenbergerLN21}.

\subsection{Our Techniques}
\label{sec:techniques}


To achieve  Corollary~\ref{thm:solution:main:1}, our starting point is a  ``template algorithm'' that  constructs a hierarchy of nested subsets $V \supseteq S_0 \supseteq S_1 \supseteq \cdots \supseteq S_{\ell+1}$, for $\ell := 1/\epsilon$ (see Section~\ref{sec:building:new}). For each $i \in [0, \ell+1]$, we define a ``slack parameter'' $s_i := \lfloor k^{1-i\epsilon} \rfloor$ and ensure that $|S_i| = k+  s_i$. Each ``layer'' $S_i$  is computed by (approximately) solving the $(k+s_i)$-median problem on $V$, subject to the constraint that we are only allowed to pick our centers from within the set $S_{i-1}$ that defines the previous layer. Note that $|S_{\ell+1}| = k$ since $s_{\ell+1} = 0$, and so the set $S_{\ell+1}$ corresponds to the solution maintained by our algorithm. In the dynamic setting, we maintain this hierarchy using a lazy approach: After every $(s_i+1)$ updates to $V$, we ``rebuild'' the layers $S_i \supseteq S_{i+1} \supseteq \cdots \supseteq S_{\ell+1}$ in the hierarchy (see Section~\ref{sec:dyn:recouse:main}). We now have to overcome two major challenges. (1) We need to prevent the approximation ratio from blowing up exponentially with $1/\epsilon$ (the number of layers). (2) We need a subroutine that is fast enough to handle the ``rebuild'' step for layers $S_i \supseteq \cdots \supseteq S_{\ell+1}$, which occurs after every $(s_i+1)$ updates (see Section~\ref{sec:challenge:new}).

To overcome challenge (1), we show that to compute the layer $S_i$ we can perform a local search with only the points in $S_{i-1}$ being potential centers. With this ``restriction'' on the set of possible local moves, it is not feasible to expect that the solution $S_i$ we get is a $O(1)$ approximation to the global optimum. But surprisingly, we manage to show that the cost of $S_i$ is at most $O(1)$ times the global optimum, {\em plus} the cost of  $S_{i-1}$ (see Section~\ref{sec:localsearch:approx:new}).
This {\em fine-grained} analysis of the approximation guarantee of restricted local search helps us address challenge (1).   

Next, to overcome challenge (2), we obtain a deceptively simple implementation of the local search  procedure, which we refer to as ``randomized local search''  (see Algorithm~\ref{alg:local:new}). It  runs in iterations. In each iteration, it picks an eligible non-center uniformly at random, and performs the best possible  local move involving the sampled non-center. We show that if the number of iterations is only proportional to the number of eligible non-centers (up to polylogarithmic factors), then w.h.p.~the procedure still terminates with a sufficiently fine-grained approximation guarantee (see the discussion in the preceding paragraph) that we need to overcome challenge (1). We now address challenge (2)  by exploiting the  upper bound on the number of iterations (see Section~\ref{sec:random:new}). This concept of implementing local search via only a few {\em uniform random samples} should prove to be useful in other settings. In the full version (see Part~\ref{part:I}), we show that this entire framework generalizes to dynamic $k$-clustering with $\ell^p$-norm objective, leading to Theorem~\ref{thm:lpnorm:main}.

Finally, to achieve Theorem~\ref{thm:value:main:2}, we adapt an approach pioneered by Jain and Vazirani~\cite{JainV01}, who established a link between the $k$-median problem and a related problem called ``uniform facility location'' (UFL). Specifically, they showed that any ``Lagrange multiplier preserving'' (LMP) approximation algorithm for UFL  can be converted in a black-box manner into an algorithm for $k$-median, with negligible overheard in approximation ratio and running time. Our first contribution is to extend this reduction to the dynamic setting. Subsequently, we consider a known static UFL algorithm~\cite{MettuP00}, and show that a subtle modification to the algorithm makes it return a {\em fractional} LMP $O(1)$ approximation for UFL. We next  design a dynamic algorithm for maintaining this fractional solution with fast update time, and plugging it back into our  dynamic reduction, we are able to maintain a $O(1)$ approximate {\em fractional} solution for dynamic $k$-median. This leads us to Theorem~\ref{thm:value:main:2}, as the LP-relaxation for $k$-median has $O(1)$ integrality gap (see Section~\ref{sec:value:new}).

\section{Preliminaries}
\label{sec:prelim:new}

To highlight the essence of our technical contributions, for the rest of this extended abstract we will focus  on the dynamic $k$-median problem. But, before proceeding any further, we recall a recent dynamic algorithm by~\cite{ourneurips2023} that allows us to {\em sparsify} the input metric space $(V, w, d)$.

\begin{definition}[\texttt{Informal}]
\label{def:sparsify:new}
Consider a metric space $(V, w, d)$, a subset $V^{\star} \subseteq V$ and a weight-function $w^{\star} : V^{\star} \rightarrow \mathbb{R}^+$. We say that $(V^{\star}, w^{\star})$ is a $k$-median {\em sparsifier} of $(V, w, d)$ iff the following  holds for every subset $S^{\star} \subseteq V^{\star}$ of size $|S^{\star}| = k$: If $S^{\star}$ is a $O(1)$ approximate $k$-median solution w.r.t.~$(V^{\star}, w^{\star}, d)$, then it is also a $O(1)$ approximate $k$-median solution w.r.t.~$(V, w, d)$.
\end{definition}

An extension of the~\cite{ourneurips2023} algorithm can maintain a $k$-median sparsifier $(V^{\star}, w^{\star})$ of size $|V^{\star}| = \tilde{O}(k)$ in the input metric space $(V, w, d)$,  with $\tilde{O}(1)$ recourse and $\tilde{O}(k)$ update time (see Section~\ref{app:sparsification}). By Definition~\ref{def:sparsify:new}, it suffices to get a $O(1)$ approximate $k$-median clustering within this sparsifier. To summarize, because of this machinery, we can safely pretend that the dynamic input metric space consists of $n = \tilde{O}(k)$ points. Accordingly, Theorem~\ref{thm:solution:main}  implies Corollary~\ref{thm:solution:main:1}. 
 
For  Theorem~\ref{thm:value:main:2}, however,  we need to overcome an additional bottleneck before we can make use of this  dynamic sparsifier~\cite{ourneurips2023}. This is because the optimal $k$-median objective in  $(V^{\star}, w^{\star},d)$ need not necessarily be within a constant factor of the optimal $k$-median objective in the original input $(V, w, d)$. Indeed,  a closer inspection of Definition~\ref{def:sparsify:new} will convince the reader that a sparsifier need not approximately preserve the objective {\em value}, as opposed to preserving approximately optimal {\em solutions}. We show how to overcome this bottleneck in Part~\ref{part:II}. Skipping over this technicality,   the extended abstract outlines our approach for proving Theorem~\ref{thm:value:main}, which asks for an update time bound of $\tilde{O}(n)$ (instead of the $\tilde{O}(k)$ update time bound in Theorem~\ref{thm:value:main:2}).

\begin{theorem}
\label{thm:solution:main}
For any  $\epsilon > 0$, there is an  algorithm for dynamic $k$-median with $O(1/\epsilon)$ approximation ratio, $\tilde{O}(k^{\epsilon} \cdot n /\epsilon^3)$   update time and $\tilde{O}(k^{\epsilon}/\epsilon)$  recourse.
\end{theorem}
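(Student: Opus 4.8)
The plan is to carry out the ``hierarchy of restricted local searches'' sketched in Section~\ref{sec:techniques}. Fix $\ell := \lceil 1/\epsilon\rceil$ and, for $i \in [0,\ell+1]$, set the slack parameter $s_i := \lfloor k^{1-i\epsilon}\rfloor$, so that $s_0 = k$ and $s_{\ell+1} = 0$. We maintain nested sets $V \supseteq S_0 \supseteq S_1 \supseteq \cdots \supseteq S_{\ell+1}$ with $|S_i| = k+s_i$ (see Section~\ref{sec:building:new}) and output $S_{\ell+1}$, which has exactly $k$ centers. The top set $S_0$ is simply an $O(1)$-approximate $(2k)$-median solution for the current $V$ --- recomputed from scratch via the Mettu--Plaxton algorithm~\cite{MettuP02} in $\tilde{O}(kn)$ time --- once every $k+1$ updates; note that $\cl(S_0,V) = O(1)\cdot\Opt$, where $\Opt$ denotes the optimal $k$-median cost on $V$, since the optimal $(2k)$-median cost is at most $\Opt$. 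For $i \ge 1$, the set $S_i$ is obtained by (approximately) solving $(k+s_i)$-median on $V$ with the centers constrained to lie in $S_{i-1}$. In the dynamic algorithm we rebuild lazily (see Section~\ref{sec:dyn:recouse:main}): whenever the update counter hits a multiple of $s_i+1$, taking the smallest such $i$, we rebuild $S_i, S_{i+1}, \dots, S_{\ell+1}$ in this order; thus the innermost layer is refreshed on \emph{every} update and layer $i$ at least once per $s_i+1$ updates. It remains to show that (i) the approximation ratio stays $O(1/\epsilon)$ rather than blowing up like $2^{1/\epsilon}$, and (ii) the rebuilds are both cheap and stable.

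For (i), the crucial ingredient is a \emph{fine-grained} guarantee for \emph{restricted} local search (Section~\ref{sec:localsearch:approx:new}): if $S_i \subseteq S_{i-1}$ with $|S_i| = k+s_i$ is (approximately) locally optimal for $(k+s_i)$-median among subsets of $S_{i-1}$ under single swaps, then
\[
\cl(S_i, V) \;\le\; O(1)\cdot\Opt \;+\; \cl(S_{i-1},V),
\]
where --- crucially --- the coefficient on $\cl(S_{i-1},V)$ is exactly $1$, not $1+\Omega(1)$. I would prove this by a Gupta--Tangwongsan-style adaptation of the classical single-swap analysis of Arya et al.~\cite{AryaGKMMP04}: for each center $o$ of an optimal $k$-median solution, form a test swap that brings in the point of $S_{i-1}$ closest to $o$ --- this is precisely where $\cl(S_{i-1},V)$, rather than $\Opt$, enters the reassignment cost --- and evicts a suitably matched center of $S_i$; summing the local-optimality inequalities over these test swaps and applying the triangle inequality yields the bound, with the bookkeeping arranged so that the detour through $S_{i-1}$ is charged only once. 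Given this, running each local search to within a $(1+c\epsilon)$ factor of local optimality and unrolling over $i = 1,\dots,\ell+1$ gives $\cl(S_{\ell+1},V) \le O(\ell)\cdot\Opt + (1+c\epsilon)^{\ell}\cdot\cl(S_0,V) = O(1/\epsilon)\cdot\Opt$, using $(1+c\epsilon)^{1/\epsilon} = O(1)$. The additive (coefficient-$1$) recursion is exactly what tames challenge~(1).

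For (ii), I would implement every rebuild with the randomized local search of Algorithm~\ref{alg:local:new}: to rebuild $S_i$, start from the previous $S_i$ (evict points no longer in $V$ or in $S_{i-1}$, then top up to size $k+s_i$ with arbitrary members of $S_{i-1}$) and repeat $\tilde{O}(|S_{i-1}|-|S_i|) = \tilde{O}(s_{i-1}-s_i)$ times the step: \emph{sample a non-center $f \in S_{i-1}$ uniformly at random and perform the best single swap that brings in $f$}. Using the standard nearest/second-nearest data structures, each step --- including finding the best center to evict against the sampled $f$ --- costs $\tilde{O}(n)$, so a rebuild triggered at level $i$ costs $\tilde{O}((s_{i-1}-s_i)\cdot n)$ and changes the output by $\tilde{O}(s_{i-1}-s_i)$ points (the work on the lower layers $S_{i+1},\dots,S_{\ell+1}$ is dominated by this first term, since the slacks decay geometrically). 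As a rebuild triggered at level $i$ occurs at most once per $s_i+1$ updates and $(s_{i-1}-s_i)/(s_i+1) = O(k^\epsilon)$, level $i$ contributes $\tilde{O}(k^\epsilon n)$ amortized update time and $\tilde{O}(k^\epsilon)$ amortized recourse; summing over the $\ell+1 = O(1/\epsilon)$ layers (and adding the $\tilde{O}(n)$ amortized cost of recomputing $S_0$), together with the extra $\poly(1/\epsilon)$ factor incurred by running each local search to the $(1+c\epsilon)$-precision needed above, gives the claimed $\tilde{O}(k^\epsilon n/\epsilon^3)$ update time and $\tilde{O}(k^\epsilon/\epsilon)$ recourse. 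One finally checks that the maintained solution never goes too stale: $S_{\ell+1}$ is refreshed on every update, and the slack $s_i$ of layer $i$ absorbs the at most $s_i+1$ point changes occurring between its rebuilds, so the $O(1/\epsilon)$-approximation bound holds with respect to the current $V$ at all times.

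The step I expect to be the main obstacle is the analysis of randomized local search itself (Section~\ref{sec:random:new}): showing that a number of \emph{uniformly random} sampled swaps proportional, up to polylogarithmic factors, to the number $s_{i-1}-s_i$ of eligible non-centers --- rather than the $\tilde{\Theta}(k)$ swaps a worst-case local search might need --- already drives the solution to a state satisfying the fine-grained guarantee above, with high probability and against an adaptive adversary. The natural route is a potential/amortization argument: show that whenever the current solution still violates the target bound by more than a $(1+c\epsilon)$ factor, the \emph{average} improvement of the best swap over a uniformly random eligible non-center is an $\tilde{\Omega}(\epsilon/s_{i-1})$ fraction of the current cost, so after $\tilde{O}(s_{i-1}/\epsilon)$ samples the cost has decreased through enough geometric steps; the high-probability statement then follows from a martingale (Azuma-type) bound, which is legitimate because all of the algorithm's randomness is internal. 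Making this averaging claim precise --- and compatible both with the restriction that moves may only use points of $S_{i-1}$ and with the exact-coefficient-$1$ requirement of the fine-grained lemma --- is the technical heart of the proof; the surrounding lazy-rebuild accounting, though delicate, is routine.
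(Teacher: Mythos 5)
Your overall architecture is the paper's: the slack hierarchy $s_i=\lfloor k^{1-i\epsilon}\rfloor$ with lazy rebuilds, the fine-grained restricted local-search guarantee $\cl(S_i,V)\le \cl(S_{i-1},V)+O(1)\cdot\Opt$ with coefficient exactly $1$ (Theorem~\ref{thm:local:new}), the randomized local search with $\tilde O(s)$ uniform samples and $\tilde O(n)$ per best-swap step, and the same amortization giving $\tilde O(k^\epsilon n/\epsilon^3)$ time and $\tilde O(k^\epsilon/\epsilon)$ recourse. However, there is a genuine gap in how you treat deletions. You insist that $V\supseteq S_0\supseteq\cdots\supseteq S_{\ell+1}$ at all times and handle staleness by ``evicting points no longer in $V$'' at rebuild time. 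The paper deliberately does \emph{not} do this: it solves the \emph{improper} problem, never removes a deleted point from any $S_i$ between rebuilds and lazily adds every inserted point to all layers. That invariant is what makes $\cl(S_{i-1}^{i},V^{i})\le \cl(S_{i-1}^{i-1},V^{i-1})$ (inequality~(\ref{eq:lem:approx:key:new:3}), Claim~\ref{claim:inv 1}) true, and the Lazy Updates Lemma~\ref{lem:lazy:update:new} only controls $\Opt_{k+s_i}$ versus $\Opt_k$ --- it says nothing about the cost of a \emph{fixed} solution that loses a center. In your proper variant, if a center of $S_{i-1}$ is deleted and $S_{i-1}$ has no nearby replacement, $\cl(S_{i-1}\cap V,V)$ can jump far above $O(1/\epsilon)\cdot\Opt$ even though $\Opt$ stays small, so the recursion $\cl(S_i)\le \cl(S_{i-1})+O(1)\Opt$ no longer bounds anything useful; alternatively, if you do not evict deleted points from the restriction sets, your output $S_{\ell+1}$ need not lie in $V$ and the claimed nesting fails. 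The paper resolves this by keeping the hierarchy improper and only at the end projecting $S_{\ell+1}$ onto $V$ (Claim~\ref{cl:projection:new}, factor $2$ loss), and it needs a dedicated dynamic-projection procedure (Section~\ref{sec:proper}, Lemma~\ref{lem: proj rec}) because the naive projection can incur recourse $\Theta(k)$ per update. This improper-to-proper step, and the recourse control it requires, is missing from your proposal and is not a routine afterthought.

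Two smaller remarks. First, for the high-probability analysis of the randomized local search you propose an Azuma-type bound; as stated this does not apply, since the per-iteration changes of the cost have no useful two-sided bound. The paper's route is more elementary: because a trivial swap ($x=y$) is always allowed, the cost is monotone non-increasing, and one combines the expected multiplicative decrease $(1-\epsilon/s)$ per iteration (Lemma~\ref{lem:random:new}, Lemma~\ref{lem:rec}) with phase-wise Markov bounds whose failure probabilities multiply across phases by monotonicity. You flagged this step as the expected obstacle, but the fix is to replace the martingale argument by this monotonicity-plus-phases argument (or a genuine supermartingale drift bound), not bounded differences. Second, your per-rebuild accounting charges only the triggered layer and says lower layers are dominated ``since the slacks decay geometrically''; that is fine, but note the paper also charges the maintenance of the sorted-list auxiliary structures $\mathcal L^{(j)}$ under the $\tilde O(k^\epsilon/\epsilon)$ recourse of each $S_{j-1}$, which is where part of the $\tilde O_\epsilon(k^\epsilon n)$ update time actually comes from.
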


\begin{theorem}
\label{thm:value:main}
There is a dynamic algorithm that maintains a $O(1)$ approximate estimate of the {\em value} of the optimal $k$-median objective in $\tilde{O}(n)$ update time.
\end{theorem}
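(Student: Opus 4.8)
The plan is to follow the route sketched in Section~\ref{sec:techniques}: reduce the problem of estimating the optimal $k$-median value $\Opt := \cl(S^\star, V)$ to maintaining a good \emph{fractional} solution of an associated uniform facility location (UFL) instance, via a dynamic incarnation of the Lagrangian relaxation of Jain and Vazirani~\cite{JainV01}. Let $\Opt_{\mathsf{LP}}$ be the optimum of the standard LP relaxation of $k$-median on $(V, w, d)$; since this LP has $O(1)$ integrality gap, it suffices to maintain an $O(1)$-approximation of $\Opt_{\mathsf{LP}}$. For $\lambda \ge 0$, let $\Phi(\lambda)$ be the optimum of the \emph{fractional} UFL instance on $(V, w, d)$ in which each point may be opened (fractionally) as a facility at uniform cost $\lambda$. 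Relaxing the constraint $\sum_i y_i \le k$ of the $k$-median LP with multiplier $\lambda$ and using LP strong duality gives
\[
\Opt_{\mathsf{LP}} \;=\; \max_{\lambda \ge 0}\,\bigl(\Phi(\lambda) - \lambda k\bigr),
\]
and the concave map $\lambda \mapsto \Phi(\lambda) - \lambda k$ is maximized at the multiplier where the optimal UFL solution opens exactly $k$ units of facility. So we need two ingredients: (i) a fast dynamic algorithm that, for any fixed $\lambda$, maintains a near-optimal fractional UFL solution from which both its connection cost and its total opened facility mass can be read off cheaply; and (ii) a way to search over $\lambda$ that turns these solutions into an $O(1)$-estimate of $\Opt_{\mathsf{LP}}$.

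For ingredient (i), I would start from the near-linear-time static UFL algorithm of Mettu and Plaxton~\cite{MettuP00}, which assigns to each point $x$ a ``characteristic radius'' $r_x$ --- defined implicitly by a ball equation of the form $\sum_y \bigl(r_x - d(x, y)\bigr)^+ = \lambda$ --- and then performs greedy ball-carving. The crucial step is to show that a subtle modification, which opens points \emph{fractionally} in proportion to their characteristic radii rather than making integral greedy choices, outputs a feasible fractional UFL solution $(x^{(\lambda)}, y^{(\lambda)})$ that is \emph{Lagrange-multiplier preserving}: writing $\mathrm{conn}(x^{(\lambda)})$ for its connection cost and $|y^{(\lambda)}| := \sum_i y^{(\lambda)}_i$ for the total opened mass, there is an absolute constant $\alpha$ such that
\[
\mathrm{conn}\bigl(x^{(\lambda)}\bigr) + \alpha\,\lambda\,\bigl|y^{(\lambda)}\bigr| \;\le\; \alpha\,\Phi(\lambda).
\]
I would then make this dynamic: the radii $r_x$ depend on the point set only through local ball conditions and move monotonically under insertions and deletions, and --- using the random-sampling ideas behind MP's near-linear-time implementation, which are available here because $\tilde{O}(\cdot)$ absorbs factors polylogarithmic in $n$ and the aspect ratio --- the quantities $\mathrm{conn}(x^{(\lambda)})$ and $|y^{(\lambda)}|$ can be (re)computed in $\tilde{O}(n)$ time per update. (Even a from-scratch recomputation suffices, provided the static routine runs in $\tilde{O}(n)$ time.)

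For ingredient (ii), I would follow Jain--Vazirani's argument, now outputting a fractional object. Arrange that $\lambda \mapsto |y^{(\lambda)}|$ is monotone, and binary-search for the transition where $|y^{(\lambda)}|$ crosses $k$, continuing until we have $\lambda_1 < \lambda_2$ with $|y^{(\lambda_1)}| \ge k \ge |y^{(\lambda_2)}|$ and $\lambda_2 / \lambda_1 \le 1 + 1/\poly(n)$; since the transition point is at most $\poly(n) \cdot (\text{aspect ratio}) \cdot \Opt_{\mathsf{LP}}$ and $|y^{(\lambda)}| \le n$, this takes $\tilde{O}(1)$ steps, each an $\tilde{O}(n)$-time evaluation. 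Let $(\bar{x}, \bar{y})$ be the convex combination of $(x^{(\lambda_1)}, y^{(\lambda_1)})$ and $(x^{(\lambda_2)}, y^{(\lambda_2)})$ with the mixing weight chosen so that $|\bar{y}| = k$ exactly, and report $\widehat{\Opt} := \mathrm{conn}(\bar{x})$. On one hand, $(\bar{x}, \bar{y})$ is then feasible for the $k$-median LP, so $\widehat{\Opt} \ge \Opt_{\mathsf{LP}}$. On the other hand, applying the LMP inequality at $\lambda_1$ and $\lambda_2$, using $\lambda_1 \approx \lambda_2$ and $|\bar{y}| = k$, and absorbing a $1/\poly(n)$-scale error into $\Opt_{\mathsf{LP}}$, gives $\widehat{\Opt} \le \alpha\bigl(\Phi(\lambda_1) - \lambda_1 k\bigr) + o(\Opt_{\mathsf{LP}}) \le (\alpha + o(1))\,\Opt_{\mathsf{LP}}$. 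Thus $\widehat{\Opt} = \Theta(\Opt_{\mathsf{LP}})$, and combining with the integrality gap it is an $O(1)$-estimate of $\Opt$, maintained in $\tilde{O}(n)$ time per update.

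I expect the main obstacle to be two-fold. First, proving that the fractional modification of Mettu--Plaxton is genuinely Lagrange-multiplier preserving with an \emph{absolute} constant, and that its relevant aggregates can be maintained within the $\tilde{O}(n)$ budget (the original MP analysis and fast implementation are designed for the integral problem). Second --- and this is the delicate step inherited from the static Jain--Vazirani reduction --- making the $\lambda$-search lose only a constant factor: the facility mass can jump across $k$ at a breakpoint, so one must work with the two bracketing multipliers and the convex combination, driving $\lambda_2/\lambda_1$ down to $1 + 1/\poly(n)$ and quantitatively controlling the resulting error using concavity of $\Phi$ --- all while manipulating only our \emph{approximate, fractional} UFL solutions rather than exact optima.
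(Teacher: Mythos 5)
Your proposal follows essentially the same route as the paper's proof: the fractional LMP variant of Mettu--Plaxton with each point opened in proportion to its characteristic radius ($y_i = w(i)r_i/\lambda$, shown there to be LMP $4$-approximate), a binary search over facility costs to bracket the value of $\lambda$ at which the fractionally opened mass crosses $k$, a convex combination of the two bracketing solutions so that exactly $k$ facilities are opened, reporting its connection cost, and invoking the $O(1)$ integrality gap of the $k$-median LP. The only notable divergences are in presentation and implementation detail: the paper uses a fixed $(1+\epsilon)$-spaced grid together with a dual-scaling argument that yields a clean multiplicative $4(1+\epsilon)$ LMP guarantee (instead of driving $\lambda_2/\lambda_1$ down to $1+1/\poly(n)$ and controlling an additive error, which also forces you to handle the $\Opt=0$ and small-$|V|$ corner cases), and it obtains the $\tilde{O}(n)$-time evaluation of the opened mass and connection cost for each queried $\lambda$ by maintaining, across updates, an augmented balanced search tree of distances per point -- your parenthetical fallback of a from-scratch $\tilde{O}(n)$-time static routine is not available in a general metric, though the dynamic maintenance you also allude to is exactly what the paper does.
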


\noindent
{\bf Unweighted $k$-Median.} For simplicity of exposition, in the extended abstract we will further assume that the input metric space $(V, w, d)$ is {\em unweighted}, i.e., $w(x) = 1$ for all $x \in V$. Thus, from now on, in the rest of Part~\ref{part:0} we will omit mentioning the weight-function $w$ altogether. See the full version (i.e., Part~\ref{part:I} and Part~\ref{part:II}) for how we extend our dynamic algorithms to the weighted setting, and how we extend Theorem~\ref{thm:solution:main:1} to $\ell^p$-norm objective, which leads us to Theorem~\ref{thm:lpnorm:main}.

\subsection{Organization} We present an overview of the proof of Theorem~\ref{thm:solution:main}  (resp.~Theorem~\ref{thm:value:main}) in  Section~\ref{sec:solution:new}  (resp.~Section~\ref{sec:value:new}), for  {\em unweighted} $k$-median. The full version of our results appear in  Part~\ref{part:I} and Part~\ref{part:II}.

\section{Our Algorithm for Dynamic $k$-Median (Theorem~\ref{thm:solution:main})}
\label{sec:solution:new}

We start by considering the following setting, which we refer to as the {\bf improper dynamic $k$-median} problem. Let $\mathcal{V}$ be a collection of points defining a metric space that is given to the dynamic algorithm in advance, at preprocessing.\footnote{We assume that $\mathcal{V}$ is known in advance to simplify the presentation in the extended abstract; we can easily remove this assumption later.} Subsequently, each update consists of inserting/deleting a point $x \in \mathcal{V}$. Thus, throughout the sequence of updates we continue to have $V \subseteq \mathcal{V}$, where $V$ denotes the current input, and we let $n = |V|$. The algorithm can open a center at any point $x \in \mathcal{V}$, regardless of whether or not $x$ is part of  $V$. The goal is to maintain a set $S \subseteq \mathcal{V}$ of  $k$ centers, so as to minimize $\cl(S, V)$. We let $\Opt_k^{\mathcal{V}}(V) := \min_{S \subseteq \mathcal{V} : |S| = k} \cl(S, V)$ denote the optimal objective of the ``improper dynamic $k$-median'' problem.\footnote{Note that although we can open centers at points that are not currently present in $V$, the objective is measured by the sum of the distances of only the {\em existing} points in $V$ to their nearest open centers.} In contrast, we let $\Opt_k(V) := \min_{S \subseteq V : |S| = k} \cl(S, V)$ denote the optimal objective of the original problem, where each open center needs to be present in the current input. 

\medskip
\noindent {\bf Roadmap.}
We will first explain how to derive Theorem~\ref{thm:solution:main} for improper dynamic $k$-median, to showcase the main conceptual ideas behind our approach. Towards this end, we will start by introducing a few key building blocks in Section~\ref{sec:building:new}. This will be followed by an algorithm in Section~\ref{sec:dyn:recouse:main} that achieves the approximation and recourse guarantees of Theorem~\ref{thm:solution:main}. Section~\ref{sec:challenge:new} will explain how we intend to exploit the algorithmic framework developed in Section~\ref{sec:dyn:recouse:main}, so as to {\em simultaneously} achieve our goals: (a) good update time, (b) recourse and (c) constant approximation ratio. In Section~\ref{sec:localsearch:approx:new} and Section~\ref{sec:random:new}, we will show how to attain these three goals via a new ``randomized local search'' procedure. Finally, in Section~\ref{sec:assumption:new}, we will outline how to convert our algorithm for improper dynamic $k$-median into an algorithm for the original problem; with negligible overhead in the approximation ratio, recourse and update time. This will lead to the proof of Theorem~\ref{thm:solution:main}.

\subsection{Basic Building Blocks}
\label{sec:building:new}

\noindent {\bf Two Insights.} The first insight we will exploit is captured by Lemma~\ref{lem:proj lemma:new} and Corollary~\ref{cor:proj lemma:new}, and is well-known  in the literature~\cite{arxivGT2008, ChrobakKY06}. For any subset  $X \subseteq \mathcal{V}$, let $\pi_X : \mathcal{V} \rightarrow X$ be a mapping that assigns each point $y \in \mathcal{V}$ to its nearest neighbor in $X$ (breaking ties arbitrarily). Thus, we have $d(y, \pi_X(y)) = d(y, X)$ for all $y \in \mathcal{V}$. We refer to the point $\pi_X(y)$ as the ``projection'' of $y$ onto $X$. Further, for any  subset $S \subseteq \mathcal{V}$, we use the symbol $\pi_X \circ \pi_S : \mathcal{V} \rightarrow X$ to denote the ``composition'' of the two functions $\pi_X$ and $\pi_S$, so that $\pi_X \circ \pi_S (y) = \pi_X(\pi_S(y))$ for all points $y \in \mathcal{V}$. 

\begin{lemma}[Projection Lemma]
\label{lem:proj lemma:new}
$d(y, \pi_X \circ \pi_S(y)) \leq d(y, X) + 2 \cdot d(y, S)$ for all $y \in \mathcal{V}$, $S, X \subseteq \mathcal{V}$.
    \end{lemma}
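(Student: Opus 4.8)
The statement is a pure consequence of the triangle inequality together with the minimality defining the projection maps, so the plan is to chain two triangle inequalities and invoke these minimality properties twice. Fix $y \in \mathcal{V}$ and $S, X \subseteq \mathcal{V}$, and write $s := \pi_S(y)$ for brevity, so that $d(y, s) = d(y, S)$ by definition of $\pi_S$. The target point is $\pi_X(s) \in X$, and $d(s, \pi_X(s)) = d(s, X)$ by definition of $\pi_X$.

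First I would apply the triangle inequality along the path $y \to s \to \pi_X(s)$ to get
\[
d\big(y, \pi_X \circ \pi_S(y)\big) \;=\; d\big(y, \pi_X(s)\big) \;\leq\; d(y, s) + d\big(s, \pi_X(s)\big) \;=\; d(y, S) + d(s, X).
\]
It then remains to bound $d(s, X)$. Since $\pi_X(y) \in X$, minimality of the projection gives $d(s, X) \leq d(s, \pi_X(y))$, and a second application of the triangle inequality along $s \to y \to \pi_X(y)$ yields
\[
d(s, X) \;\leq\; d\big(s, \pi_X(y)\big) \;\leq\; d(s, y) + d\big(y, \pi_X(y)\big) \;=\; d(y, S) + d(y, X).
\]
Substituting this back into the first inequality gives $d(y, \pi_X \circ \pi_S(y)) \leq d(y, X) + 2 \cdot d(y, S)$, as required.

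There is essentially no obstacle here; the only thing to be careful about is the \emph{order} of composition (that it is $\pi_X$ applied to $\pi_S(y)$, not the other way around) and making sure each use of ``$d(\cdot, X) \le d(\cdot, x)$ for $x \in X$'' is applied to a point that genuinely lies in the relevant set. The asymmetry in the bound (coefficient $1$ on $d(y,X)$ versus $2$ on $d(y,S)$) is exactly what this argument produces, since $d(y,S)$ is traversed once in the first triangle inequality and once more inside the bound on $d(s,X)$.
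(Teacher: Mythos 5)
Your proof is correct and follows essentially the same argument as the paper: one triangle inequality through $\pi_S(y)$, the minimality of $\pi_X(\pi_S(y))$ compared against the candidate $\pi_X(y) \in X$, and a second triangle inequality back through $y$. The only difference is cosmetic bookkeeping (naming $s := \pi_S(y)$ and bounding $d(s,X)$ separately), so no further comment is needed.
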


\begin{proof}
Since $d(\pi_S(y), \pi_X \circ \pi_S(y)) \leq d(\pi_S(y), y')$ for every point $y' \in X$, we derive that: 
\begin{eqnarray*}
d(y, \pi_X \circ \pi_S(y)) & \leq & d(\pi_S(y), \pi_X \circ \pi_S(y)) + d(y,\pi_S(y))   \\
& \leq & d(\pi_S(y), \pi_X(y)) +  d(y,\pi_S(y))  \\
& \leq & d(y, \pi_X(y)) +  d(y, \pi_S(y)) + d(y, \pi_S(y)) =  d(y, X) + 2 \cdot d(y, S).
\end{eqnarray*}
This concludes the proof.
\end{proof}

For all  $S, X \subseteq \mathcal{V}$, let $\Proj(S, X) := \{ \pi_X(y) : y \in S \}$ denote the ``projection'' of $S$ onto $X$. Let $Q$ be a placeholder for a set of points. Corollary~\ref{cor:proj lemma:new} says that as switch from $Q = X$ to $Q = \Proj(S, X)$, the distance from  any $y \in \mathcal{V}$ to its nearest neighbor in $Q$ increases by  $\leq 2 \cdot d(y, S)$. 

\begin{corollary}
\label{cor:proj lemma:new}
We have $d(y, \Proj(S, X)) \leq d(y, X) + 2 \cdot d(y, S)$ for all $y \in \mathcal{V}$, $S, X \subseteq \mathcal{V}$.
\end{corollary}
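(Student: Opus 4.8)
The statement to prove is Corollary~\ref{cor:proj lemma:new}: for all $y \in \mathcal{V}$ and all $S, X \subseteq \mathcal{V}$, we have $d(y, \Proj(S,X)) \le d(y,X) + 2\cdot d(y,S)$. The plan is to deduce this directly from the Projection Lemma (Lemma~\ref{lem:proj lemma:new}), which has already been proved. The key observation is that the point $\pi_X \circ \pi_S(y) = \pi_X(\pi_S(y))$ is, by definition of $\Proj(S,X) = \{\pi_X(z) : z \in S\}$, a member of the set $\Proj(S,X)$: indeed $\pi_S(y) \in S$, so its projection onto $X$ lies in $\Proj(S,X)$.

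\textbf{Key step.} Since $\pi_X \circ \pi_S(y) \in \Proj(S,X)$, the quantity $d(y, \Proj(S,X))$, being the minimum distance from $y$ to any point of $\Proj(S,X)$, is at most $d(y, \pi_X \circ \pi_S(y))$. Combining this with Lemma~\ref{lem:proj lemma:new}, which gives $d(y, \pi_X \circ \pi_S(y)) \le d(y,X) + 2\cdot d(y,S)$, we immediately obtain $d(y, \Proj(S,X)) \le d(y,X) + 2 \cdot d(y,S)$, as required.

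\textbf{Main obstacle.} There is essentially no obstacle here — the corollary is a one-line consequence of the lemma once one notices the membership $\pi_X(\pi_S(y)) \in \Proj(S,X)$. The only thing to be slightly careful about is the edge case where $S$ or $X$ is empty (so that projections are undefined); presumably the paper implicitly assumes these sets are nonempty, or adopts the convention $d(y,\emptyset) = +\infty$, under which the inequality holds trivially. Beyond that, the proof is purely a matter of spelling out the definitions of $\Proj(\cdot,\cdot)$ and of the minimum, and then invoking the already-established Projection Lemma.
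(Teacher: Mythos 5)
Your proof is correct and follows exactly the paper's argument: observe that $\pi_X \circ \pi_S(y) \in \Proj(S,X)$, hence $d(y, \Proj(S,X)) \leq d(y, \pi_X \circ \pi_S(y))$, and conclude via Lemma~\ref{lem:proj lemma:new}. Nothing further is needed.
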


\begin{proof}
Since $d(y, \Proj(S, X)) \leq d(y, \pi_X \circ \pi_S(y))$, the corollary follows from Lemma~\ref{lem:proj lemma:new}.
\end{proof}

Our second insight is summarized in 
Lemma~\ref{lem:lazy:update:new}. To see what it signifies, suppose that we are  maintaining a set $S \subseteq \mathcal{V}$ of $k$ centers for an input point-set $V \subseteq \mathcal{V}$. We next deal with $\gamma$ updates to $V$. Consider a lazy procedure which deals with this update sequence as follows: Whenever a point $x \in \mathcal{V}$ gets inserted into $V$, it sets $S \leftarrow S \cup \{x\}$. In contrast, if a point $x \in \mathcal{V}$ gets deleted from $V$, then it leaves the status of $S$ unchanged.\footnote{We can afford to do this because we are dealing with the improper dynamic $k$-median problem.}  This ensures that throughout the update sequence, the cost $\cl(S, V)$ does not increase, whereas the size of $S$ grows by at most an additive $\gamma$. Lemma~\ref{lem:lazy:update:new} is an immediate corollary of this lazy procedure, and it will play a key role in our analysis.

\begin{lemma}[Lazy Updates Lemma]
\label{lem:lazy:update:new} Consider any two subsets  $V', V'' \subseteq \mathcal{V}$ with $|V' \oplus V''| \leq \gamma$, where $\oplus$ denotes the symmetric difference. Then we have $\Opt_{k+\gamma}^{\mathcal{V}}(V'') \leq \Opt_k^{\mathcal{V}}(V')$. 
\end{lemma}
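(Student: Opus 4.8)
\textbf{Proof plan for Lemma~\ref{lem:lazy:update:new}.}
The plan is to make rigorous exactly the ``lazy procedure'' described informally just before the statement. Let $S' \subseteq \mathcal{V}$ with $|S'| = k$ be an optimal solution witnessing $\Opt_k^{\mathcal{V}}(V')$, i.e.\ $\cl(S', V') = \Opt_k^{\mathcal{V}}(V')$. I want to produce from $S'$ a candidate center set $S''$ for the instance $V''$ with $|S''| \le k + \gamma$ and $\cl(S'', V'') \le \cl(S', V')$; since $\Opt_{k+\gamma}^{\mathcal{V}}(V'')$ is the minimum over all such candidate sets, this immediately gives the claim.

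First I would define $S'' := S' \cup (V'' \setminus V')$. The second term has size at most $|V' \oplus V''| \le \gamma$, so $|S''| \le k + \gamma$, which is the size budget we need. Next I would bound the cost. For every point $x \in V''$: if $x \in V'$ as well, then $x \in V' \cap V''$, and since $S' \subseteq S''$ we have $d(x, S'') \le d(x, S')$; if instead $x \in V'' \setminus V'$, then $x \in S''$ by construction, so $d(x, S'') = 0 \le d(x, S')$. Summing over $x \in V''$ gives
\[
\cl(S'', V'') \;=\; \sum_{x \in V''} d(x, S'') \;\le\; \sum_{x \in V'' \cap V'} d(x, S') \;\le\; \sum_{x \in V'} d(x, S') \;=\; \cl(S', V').
\]
Therefore $\Opt_{k+\gamma}^{\mathcal{V}}(V'') \le \cl(S'', V'') \le \cl(S', V') = \Opt_k^{\mathcal{V}}(V')$, as desired. (One should note the definition of $\Opt_{k+\gamma}^{\mathcal{V}}$ ranges over center sets of size \emph{exactly} $k+\gamma$; if $|S''| < k+\gamma$ we simply pad $S''$ with arbitrary extra points of $\mathcal{V}$, which cannot increase the cost.)

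Honestly, there is no real obstacle here: the lemma is a one-shot structural statement, not an iterative argument, so the ``lazy procedure'' narrative (inserting forces a new center, deleting changes nothing) is just intuition for why $\gamma$ extra centers suffice, and the clean proof bypasses the iteration entirely by directly exhibiting $S''$. The only mild subtlety worth stating explicitly is that we are in the \emph{improper} setting, so $S''$ is allowed to contain points of $V'' \setminus V'$ and even points that were deleted and are no longer in $V''$ — both are fine since centers may be placed anywhere in $\mathcal{V}$. If one wanted the symmetric-difference bound to be used tightly, note we only ever added the ``new'' points $V'' \setminus V'$ and never needed to remove the ``stale'' points $V' \setminus V''$ from $S'$, which is precisely why a budget of $\gamma$ (rather than, say, $|V'\setminus V''|+|V''\setminus V'|$) is enough.
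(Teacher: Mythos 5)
Your proof is correct and matches the paper's own argument essentially verbatim: both take an optimal $S'$ for $V'$, set $S'' = S' \cup (V'' \setminus V')$, bound $|S''| \le k+\gamma$, and observe $d(x,S'') = 0$ for new points and $d(x,S'') \le d(x,S')$ for the rest. Your added remark about padding to size exactly $k+\gamma$ is a harmless technicality the paper leaves implicit.
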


\begin{proof}
Let $S' \subseteq \mathcal{V}$ be a set of $k$ centers s.t.~$\cl(S', V') = \Opt_k^{\mathcal{V}}(V')$, and let $S'' = S' \cup (V'' \setminus V')$. Note that $S'' \subseteq \mathcal{V}$, and $|S''| \leq |S'| + \gamma \leq k + \gamma$. Since $d(x, S'') = 0$ for all  $x \in V'' \setminus V'$ and $d(x, S'') \leq d(x, S')$ for all  $x \in V'$, it follows that $\Opt_{k+\gamma}^{\mathcal{V}}(V'') \leq \cl(S'', V'') \leq  \cl(S', V') = \Opt_k^{\mathcal{V}}(V')$. 
\end{proof}

\noindent {\bf A Template Static Algorithm.} Given a (static) input $(V,d)$, we now  outline  a  procedure for computing a ``hierarchy''  of subsets $V  \supseteq S_0 \supseteq S_1 \supseteq \ldots \supseteq S_{\ell + 1}$, where $\ell = 1/\epsilon$. This procedure, described in Algorithm~\ref{alg:static}, will form the basis of our dynamic algorithm in subsequent sections. For each  $i \in \{0, \ldots, \ell+1\}$, we have $|S_i| = k+ s_i$, where $s_i := \lfloor k^{1-i  \epsilon} \rfloor$ is the ``slack'' at ``layer $i$''. (For simplicity,  assume that $|V| > 2k = k+s_0$.) The hierarchy is constructed in a bottom-up manner, starting with layer $0$.  At each layer $i$, the algorithm chooses $S_i$  to be a set of  $k+ s_i$ centers, nested within $S_{i-1}$, which minimizes the cost $\cl(S_i, V)$. Lemma~\ref{lem:static:new} and Corollary~\ref{cor:static:new} summarize the key properties of this hierarchy. We note that the proof of Lemma~\ref{lem:static:new} crucially relies on Corollary~\ref{cor:proj lemma:new}.

\medskip
\noindent {\bf Remark.} At this stage, it is not even clear that we can implement Algorithm~\ref{alg:static} in polynomial time (because of line 5). We will address this issue in Section~\ref{sec:challenge:new}. For now, we ignore any consideration regarding an efficient implementation of our algorithm.

\smallskip \ 

\begin{algorithm}[H]\label{alg:static}
    \SetAlgoLined
    \DontPrintSemicolon
    \texttt{// Let $S_{-1}$ denote the set $V$} \;
    \texttt{// Let $\epsilon > 0$ be a small constant (assume that $1/\epsilon$ is an integer)} \;
    \texttt{// Let $\ell := 1/\epsilon$ and $s_i := \lfloor k^{1-i  \epsilon} \rfloor$ for all integers $i \geq 0$}\;
    \For{$i = 0 \dots (\ell+1)$}{
        Let $S_i \leftarrow \arg\min_{S \subseteq S_{i-1} \, : \, |S| = k+  s_i} \cl(S, V)$\;
    }
    \Return the hierarchy of subsets $S_0 \supseteq S_1 \supseteq \ldots \supseteq S_{\ell + 1}$\;
    \caption{\textsc{StaticHierarchy}$((V, d), k)$}
\end{algorithm}

\begin{lemma}[Static Hierarchy Lemma]
\label{lem:static:new}
In Algorithm~\ref{alg:static} above, for all $i \in \{0, \ldots, \ell+1\}$, we have $$\cl(S_i, V) \leq \cl(S_{i-1}, V) + 2 \cdot \Opt_{k+s_i}(V).$$ 
\end{lemma}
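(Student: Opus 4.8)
The statement to prove is
\[
\cl(S_i, V) \leq \cl(S_{i-1}, V) + 2 \cdot \Opt_{k+s_i}(V),
\]
and the natural route is to exhibit a \emph{single feasible candidate} for the minimization in line 5 of Algorithm~\ref{alg:static} whose cost is at most the right-hand side, since $S_i$ is chosen to be the cost-minimizer over all size-$(k+s_i)$ subsets of $S_{i-1}$. The obvious candidate is the projection $\Proj(O_i, S_{i-1})$, where $O_i \subseteq V$ is an optimal $(k+s_i)$-median solution, i.e.\ $\cl(O_i, V) = \Opt_{k+s_i}(V)$ and $|O_i| = k+s_i$.

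\textbf{Key steps.} First, I would check feasibility of the candidate: $\Proj(O_i, S_{i-1}) = \{\pi_{S_{i-1}}(y) : y \in O_i\} \subseteq S_{i-1}$ by definition of the projection, and it has size at most $|O_i| = k + s_i$. If its size is strictly smaller than $k+s_i$ (because several points of $O_i$ project to the same point of $S_{i-1}$), pad it with arbitrary extra points of $S_{i-1}$ — this only decreases the cost and keeps it inside $S_{i-1}$; note $|S_{i-1}| = k + s_{i-1} \geq k + s_i$, so there is room. Thus $\Proj(O_i, S_{i-1})$ (suitably padded) is a valid competitor for $S_i$, giving $\cl(S_i, V) \leq \cl(\Proj(O_i, S_{i-1}), V)$.

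Second, I would bound $\cl(\Proj(O_i, S_{i-1}), V)$ using Corollary~\ref{cor:proj lemma:new} with $S = O_i$ and $X = S_{i-1}$: for every $x \in V$,
\[
d(x, \Proj(O_i, S_{i-1})) \leq d(x, S_{i-1}) + 2 \cdot d(x, O_i).
\]
Summing this over all $x \in V$ (recall we are in the unweighted setting, so $w(x)=1$) yields
\[
\cl(\Proj(O_i, S_{i-1}), V) \leq \cl(S_{i-1}, V) + 2 \cdot \cl(O_i, V) = \cl(S_{i-1}, V) + 2 \cdot \Opt_{k+s_i}(V),
\]
which combined with the first step completes the proof. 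For the base case $i = 0$, one takes $S_{-1} = V$, and the same argument goes through verbatim (with $d(x, S_{-1}) = d(x,V) = 0$), so no separate treatment is needed.

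\textbf{Main obstacle.} The argument is essentially a direct application of the two building blocks, so there is no deep obstacle; the only points requiring a modicum of care are (i) the feasibility/size bookkeeping — ensuring $|S_{i-1}| \geq k + s_i$ so that padding is possible, which follows from monotonicity of $i \mapsto s_i$, and the implicit assumption $|V| > k + s_0$ guaranteeing $S_0$ itself is well-defined; and (ii) making sure the definition of $\Opt_{k+s_i}(V)$ used here is the ``proper'' one (centers drawn from $V$), which is exactly what appears in the statement, so the optimal solution $O_i$ lies in $V$ and Corollary~\ref{cor:proj lemma:new} applies with $S = O_i \subseteq \mathcal{V}$. Everything else is a one-line summation.
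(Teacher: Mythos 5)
Your proposal is correct and takes essentially the same route as the paper's proof: project an optimal $(k+s_i)$-median solution onto $S_{i-1}$, invoke Corollary~\ref{cor:proj lemma:new}, sum over all $x \in V$, and conclude via the optimality of $S_i$ among subsets of $S_{i-1}$. The only difference is that you spell out the padding/size bookkeeping (to make the projected set an exact-size competitor), a detail the paper glosses over with $|S'| \leq k+s_i$.
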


\begin{proof}
Let $S \subseteq V$ be a set of $k+s_i$ centers s.t.~$\cl(S, V) = \Opt_{k+s_i}(V)$. Let $S' := \Proj(S, S_{i-1})$  be the projection of $S$ onto $S_{i-1}$. Note that $|S'| \leq  k+s_i$. By Corollary~\ref{cor:proj lemma:new}, we have $d(x, S')  \leq d(x, S_{i-1}) + 2 \cdot d(x, S)$ for all $x \in V$. Summing over all $x \in V$, we get $\cl(S', V) \leq \cl(S_{i-1}, V) + 2 \cdot \cl(S, V) = \cl(S_{i-1}, V) + 2 \cdot \Opt_{k+s_i}(V)$. Since $\cl(S_i, V) \leq \cl(S', V)$, the lemma follows.  
\end{proof}

\begin{corollary}
\label{cor:static:new}
In Algorithm~\ref{alg:static} above,  $S_{\ell+1}$ is a $O(1/\epsilon)$-approximate $k$-median solution on $(V, d)$.
\end{corollary}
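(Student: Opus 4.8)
The plan is to telescope Lemma~\ref{lem:static:new} across the layers $i = 0, 1, \ldots, \ell+1$ and then control the resulting sum of $\Opt_{k+s_i}(V)$ terms. Summing the inequality $\cl(S_i, V) \leq \cl(S_{i-1}, V) + 2 \cdot \Opt_{k+s_i}(V)$ for $i$ from $0$ to $\ell+1$ and cancelling the telescoping terms gives
\[
\cl(S_{\ell+1}, V) \leq \cl(S_{-1}, V) + 2 \sum_{i=0}^{\ell+1} \Opt_{k+s_i}(V) = \cl(V, V) + 2 \sum_{i=0}^{\ell+1} \Opt_{k+s_i}(V).
\]
Now $\cl(V, V) = 0$ since every point of $V$ is itself a center, so this reduces to $\cl(S_{\ell+1}, V) \leq 2 \sum_{i=0}^{\ell+1} \Opt_{k+s_i}(V)$.

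The main obstacle — and the only real content beyond the telescoping — is to argue that each term $\Opt_{k+s_i}(V)$ is $O(1) \cdot \Opt_k(V)$, so that the whole sum is $O(\ell) \cdot \Opt_k(V) = O(1/\epsilon) \cdot \Opt_k(V)$. Since $s_i \geq 0$, we have $k + s_i \geq k$, and allowing more centers can only decrease the optimal cost; hence $\Opt_{k+s_i}(V) \leq \Opt_k(V)$ for every $i$. (This is just the trivial monotonicity of the optimal $k'$-median cost in $k'$; one can also see it directly by taking any optimal $k$-center solution and padding it with arbitrary extra centers.) Combining, $\cl(S_{\ell+1}, V) \leq 2(\ell+2) \cdot \Opt_k(V) = O(1/\epsilon) \cdot \Opt_k(V)$, and since $|S_{\ell+1}| = k + s_{\ell+1} = k + \lfloor k^{1 - (\ell+1)\epsilon} \rfloor = k + \lfloor k^{-1} \rfloor = k$ (as $k \geq 1$), the set $S_{\ell+1}$ is a genuine $k$-median solution on $(V,d)$, which establishes the corollary.

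I would present this in three short steps: (1) telescope Lemma~\ref{lem:static:new} and use $\cl(V,V) = 0$; (2) bound $\Opt_{k+s_i}(V) \leq \Opt_k(V)$ by monotonicity; (3) check $|S_{\ell+1}| = k$ via $s_{\ell+1} = 0$. No step requires a delicate calculation; the only thing to be careful about is confirming that $s_{\ell+1} = \lfloor k^{1 - (\ell+1)\epsilon} \rfloor = \lfloor k^{-\epsilon} \rfloor = 0$ for $k \geq 2$ (or, if one wants to be safe about edge cases, noting that $s_{\ell+1} = \lfloor k^0 / k^\epsilon \rfloor$ and invoking the standing assumption that lets us treat small $k$ separately), and that the number of summands $\ell + 2 = 1/\epsilon + 2 = O(1/\epsilon)$ as claimed.
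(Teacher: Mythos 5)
Your proposal is correct and follows essentially the same route as the paper's own proof: telescope Lemma~\ref{lem:static:new}, use $\cl(V,V)=0$, bound each $\Opt_{k+s_i}(V)\leq \Opt_k(V)$ by monotonicity, and note $s_{\ell+1}=\lfloor k^{-\epsilon}\rfloor=0$ so that $|S_{\ell+1}|=k$. Only two cosmetic slips: mid-proof you write $\lfloor k^{-1}\rfloor$ where you mean $\lfloor k^{-\epsilon}\rfloor$ (you correct this at the end), and ``$k$-center solution'' should read ``$k$-median solution'' in the monotonicity aside.
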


\begin{proof}
First, we observe that $|S_{\ell+1}| = k+ s_{\ell+1}  = k$, since $s_{\ell+1} = \lfloor k^{-\epsilon} \rfloor = 0$. Next, summing the inequality from Lemma~\ref{lem:static:new} over all $i \in \{0, \ldots, \ell+1\}$, we get:  $\cl(S_{\ell+1}, V) \leq \cl(S_{-1}, V) + 2 \cdot \sum_{i=0}^{\ell+1} \Opt_{k+s_i}(V) = \cl(V, V) + 2 \cdot \sum_{i=0}^{\ell+1} \Opt_{k+s_i}(V) = 2 \cdot \sum_{i=0}^{\ell+1} \Opt_{k+s_i}(V) \leq 2 \cdot \sum_{i=0}^{\ell+1} \Opt_{k}(V)$ $=O(1/\epsilon) \cdot \Opt_k(V)$. This concludes the proof.
\end{proof}

\subsection{A Dynamic Algorithm with Small Recourse}
\label{sec:dyn:recouse:main}

We now outline how to use Algorithm~\ref{alg:static} to get a dynamic algorithm with small recourse. In this section, however, we do {\em not} pay any attention to bounding the update time of our algorithm  (we will address this issue in Section~\ref{sec:challenge:new}).  Our main result in this section is stated below.

\begin{theorem}
\label{thm:dyn:recourse:new} There is an algorithm for dynamic improper $k$-median with $O(1/\epsilon)$-approximation ratio and $\tilde{O}(k^{\epsilon}/\epsilon)$ amortized recourse.
\end{theorem}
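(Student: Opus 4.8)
## Proof Proposal for Theorem~\ref{thm:dyn:recourse:new}

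\textbf{The plan: lazy maintenance of the hierarchy of Algorithm~\ref{alg:static}.} I would keep an update counter $t$ and, after the $t$-th update, set $i^\star := \min\{i \ge 0 : (s_i+1) \mid t\}$ (well-defined since $s_{\ell+1}=0$). I then \emph{rebuild} layers $S_{i^\star} \supseteq \cdots \supseteq S_{\ell+1}$ from scratch: for $i = i^\star, \dots, \ell+1$ in this order, reset $S_i \leftarrow \arg\min_{S \subseteq S_{i-1} : |S| = k+s_i} \cl(S,V)$ on the current $V$ (by brute force --- update time is ignored here). For the higher layers $S_0, \dots, S_{i^\star-1}$ I only perform a \emph{lazy update}: on an insertion of $x$, add $x$ to each of $S_0, \dots, S_{i^\star-1}$; on a deletion, do nothing to these layers (legal in the improper setting). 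Note layer $\ell+1$ is rebuilt every update, so the maintained solution always has exactly $k$ centers and the nesting $S_0 \supseteq \cdots \supseteq S_{\ell+1}$ is preserved throughout (an inserted point is added to every layer it must belong to, and rebuilt layers are nested by construction). For each $i$, let $\tau_i$ be the last update at which layer $i$ was rebuilt, and $V^{(i)}$ the point set then. Two bookkeeping facts I will use: (a) since $i^\star \le i$ whenever $(s_i+1)\mid t$, layer $i$ is rebuilt at least once every $s_i+1$ updates, so $|V \oplus V^{(i)}| \le s_i$ and $|S_i| \le k + 2s_i$; and (b) a rebuild of layer $i-1$ forces a rebuild of layer $i$, hence $\tau_0 \le \tau_1 \le \cdots \le \tau_{\ell+1}=t$, and between $\tau_{i-1}$ and $\tau_i$ layer $i-1$ only receives lazy insertions.

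\textbf{Recourse.} For a rebuild with parameter $i^\star = 0$: this occurs once every $s_0+1 = k+1$ updates and changes at most all $k$ centers of $S_{\ell+1}$, so it contributes $O(1)$ amortized recourse. For $i^\star \ge 1$: layer $S_{i^\star-1}$ is not rebuilt (it only possibly gains the one newly inserted point), so \emph{both} the old and the new $S_{\ell+1}$ are subsets of the current $S_{i^\star-1}$. Since $|S_{\ell+1}| = k$ and $|S_{i^\star-1}| \le k + 2s_{i^\star-1} + 1$, the symmetric difference between the old and new $S_{\ell+1}$ is at most $2(|S_{i^\star-1}| - k) = O(s_{i^\star-1})$ --- \emph{regardless of how drastically the recomputation changes the intermediate layers}. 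Rebuilds with parameter $i^\star$ occur at most once every $s_{i^\star}+1$ updates, so they contribute $O\!\big(s_{i^\star-1}/(s_{i^\star}+1)\big) = O(k^{\epsilon})$ amortized recourse. Summing over the $O(1/\epsilon)$ values of $i^\star$ gives $O(k^{\epsilon}/\epsilon)$ amortized recourse, as required.

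\textbf{Approximation.} The idea is to chain Lemma~\ref{lem:static:new} along the \emph{rebuild times}, moving up the hierarchy and backward in time at once. Layer $\ell+1$ is always fresh, so (the improper-setting version of) Lemma~\ref{lem:static:new} gives $\cl(S_{\ell+1},V) \le \cl(S_\ell,V) + 2\,\Opt_{k}^{\mathcal{V}}(V)$; the proof is the same, projecting an optimal $k$-center subset of $\mathcal{V}$ onto $S_\ell$ and invoking Corollary~\ref{cor:proj lemma:new}. For a general layer $i$: since between $\tau_i$ and now layer $i$ gains exactly the insertions made to $V$ in that window, the lazy-procedure argument behind Lemma~\ref{lem:lazy:update:new} shows $\cl(S_i, V) \le \cl(S_i^{\tau_i}, V^{(i)})$, and at the rebuild at time $\tau_i$ we had $\cl(S_i^{\tau_i}, V^{(i)}) \le \cl(S_{i-1}^{\tau_i}, V^{(i)}) + 2\,\Opt_{k+s_i}^{\mathcal{V}}(V^{(i)})$. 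Crucially $\cl(S_{i-1}^{\tau_i}, V^{(i)})$ is ``the cost of layer $i-1$ at time $\tau_i$'', and by fact (b) layer $i-1$ is not rebuilt in $(\tau_{i-1},\tau_i]$, so I can recurse on it exactly. Unrolling down to layer $-1$, where $\cl(S_{-1}, \cdot) = \cl(V, V) = 0$, yields $\cl(S_{\ell+1}, V) \le 2\sum_{i=0}^{\ell+1}\Opt_{k+s_i}^{\mathcal{V}}(V^{(i)})$. Finally, since $|V \oplus V^{(i)}| \le s_i$, Lemma~\ref{lem:lazy:update:new} gives $\Opt_{k+s_i}^{\mathcal{V}}(V^{(i)}) \le \Opt_k^{\mathcal{V}}(V)$ for every $i$, so $\cl(S_{\ell+1}, V) \le 2(\ell+2)\,\Opt_k^{\mathcal{V}}(V) = O(1/\epsilon)\cdot\Opt_k^{\mathcal{V}}(V)$. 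Note the approximation stays \emph{linear} in $1/\epsilon$ because the coefficient of $\cl(S_{i-1})$ in Lemma~\ref{lem:static:new} is $1$, so each $\Opt$ term appears once in the telescoped sum rather than compounding.

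\textbf{Main obstacle.} The delicate part is the approximation analysis in the presence of \emph{staleness}: the maintained $S_i$ are optimal only for an old input, and the naive telescoping of the per-layer inequalities at the current time breaks, because $\cl(S_{i-1}, V)$ for the \emph{current} $V$ can be much smaller than what the rebuild of layer $i$ ``saw'' (deletions and cheap lazily-inserted points). The resolution is exactly the time-shifted chain above, which is legitimate precisely because rebuilds propagate downward ($\tau_0 \le \cdots \le \tau_{\ell+1}$) and because the Lazy Updates Lemma does double duty --- it shows lazy insertions do not increase cost, and it converts $\Opt$ on an old input into $\Opt$ on the current input at the cost of the slack $s_i$. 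Remaining routine points: checking the nesting invariant survives lazy insertions; handling the degenerate regime $n < 2k$ (trivial, or by padding); and absorbing a one-time $O(n)$ additive term into the amortized bound.
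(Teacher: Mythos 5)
Your proposal is correct and follows essentially the same route as the paper: lazy insertions, periodic rebuilds of a suffix of the hierarchy via the restricted optimum, the recourse bound from the fact that both the old and new $S_{\ell+1}$ sit inside the small, not-rebuilt layer $S_{i^\star-1}$, and the time-shifted telescoping of Lemma~\ref{lem:static:new} combined with the Lazy Updates Lemma for the $O(1/\epsilon)$ approximation. The only (immaterial) deviations are your divisibility-based rebuild schedule in place of the paper's reset counters $\tau_i$, and tracking recourse only for the output layer rather than for every $S_j$, which suffices for this theorem.
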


We start by describing our algorithm. For simplicity,    assume that we always have $|V| > 4k$.

\medskip
\noindent {\bf Preprocessing.} At this stage, we receive the set  $\mathcal{V}$ and the initial input $V \subseteq \mathcal{V}$, and we build the hierarchy $\mathcal{V}  \supseteq S_0  \supseteq \cdots \supseteq S_{\ell+1}$ by calling $\textsc{StaticHierarchy}((V, d), k)$, as described in Algorithm~\ref{alg:static}. In addition, we associate a counter $\tau_i$ with each layer $i \geq 0$ of the hierarchy.   At preprocessing, we initialize $\tau_i \leftarrow 0$ for all $i \in \{0, \ldots, \ell+1\}$. We next explain how the algorithm handles an update. 

\medskip
\noindent {\bf Handling the Insertion/Deletion of a Point $x \in \mathcal{V}$ in $V$.} If the point $x$ is being inserted, then we simply set $S_i \leftarrow S_i \cup \{x\}$ for all $i \in \{0, \ldots, \ell+1\}$. In contrast, if the point $x$ is being deleted, then we leave the sets $S_0, \ldots, S_{\ell+1}$ unchanged (even if $x$ appears in some of these sets). We are allowed to do this because we are dealing with the dynamic improper $k$-median problem. We next increment all the counters, by setting $\tau_i \leftarrow \tau_i+1$ for all $i \in \{0, \ldots, \ell+1\}$. We then find the smallest index  $i \in \{0, \ldots, \ell+1\}$ such that $\tau_i > s_i$. Let $i'$ be the {\em smallest} such index $i$ (it always exists, because $\tau_{\ell+1} >  0 = s_{\ell+1}$).  Next, we call the subroutine {\sc Rebuild}$(i')$, as explained in Algorithm~\ref{alg:rebuild:new}. This subroutine updates the sets $S_{i'}, S_{i'+1}, \ldots, S_{\ell+1}$ in the hierarchy, by following the same principle as in the $\textbf{for}$ loop of Algorithm~\ref{alg:static}. In addition, it resets the counters $\tau_{i'}, \tau_{i'+1}, \ldots, \tau_{\ell+1}$ to zero. 

\smallskip \

\begin{algorithm}[H]\label{alg:rebuild:new}
    \SetAlgoLined
    \DontPrintSemicolon
    \For{$j = i' \dots (\ell+1)$}{
        Let $S_j \leftarrow \arg\min_{S \subseteq S_{j-1} \, : \, |S| = k+s_j} \cl(S, V)$ \,  // \texttt{If $j=0$, then let $S_{j-1} = S_{-1} = V$.}\;
        $\tau_j \leftarrow 0$\; 
    }    \caption{\textsc{Rebuild}$(i')$}
\end{algorithm}

\smallskip \

In the lemma below, we summarize a few key properties of this dynamic algorithm. 

\begin{lemma}
\label{lem:key:new}
After the dynamic algorithm has finished processing any given  update, we have: 
\begin{itemize}
\item {\em (P1)} $\mathcal{V} \supseteq S_0 \supseteq S_1 \supseteq \ldots \supseteq S_{\ell + 1}$.
\item {\em (P2)} $k + s_i \leq |S_i| \leq k+2s_i$ for all $i \in \{0, \ldots, \ell+1\}$.
\end{itemize}
\end{lemma}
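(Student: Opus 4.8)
\textbf{Overall approach.} The plan is to prove both (P1) and (P2) by induction on the number of updates processed, treating the preprocessing step as the base case and each update (insertion or deletion, followed by the call to \textsc{Rebuild}$(i')$) as the inductive step. The key observation is that both properties are maintained by the preprocessing call to \textsc{StaticHierarchy}, and that every subsequent operation — inserting a point into all layers, doing nothing on a deletion, or running \textsc{Rebuild}$(i')$ — preserves them. The slack parameters $s_i = \lfloor k^{1-i\epsilon} \rfloor$ form a non-increasing sequence, which is what makes the nesting compatible with the size constraints.

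\textbf{Base case.} After preprocessing, the hierarchy is exactly the output of $\textsc{StaticHierarchy}((V,d),k)$. Property (P1) holds because line 5 of Algorithm~\ref{alg:static} chooses $S_i \subseteq S_{i-1}$ at every layer (with the convention $S_{-1} = V \subseteq \mathcal{V}$). Property (P2) holds in the stronger form $|S_i| = k + s_i$, since the argmin is taken over sets of size exactly $k + s_i$; this requires $|S_{i-1}| \geq k + s_i$, which follows inductively from $|S_{i-1}| = k + s_{i-1} \geq k + s_i$ because $s_{i-1} \geq s_i$ (and for $i = 0$ we use the standing assumption $|V| > 4k \geq k + s_0$).

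\textbf{Inductive step.} Assume (P1) and (P2) hold after update $t-1$; consider update $t$. If the update is a deletion, nothing changes, so both properties are trivially inherited. If it is an insertion of $x$, we add $x$ to every $S_i$; the nesting (P1) is preserved since $x$ is added everywhere, and each $|S_i|$ grows by at most one. The subtlety is that an insertion can push $|S_i|$ above $k + 2s_i$ temporarily, and it also increments every counter $\tau_i$. Here is where the counter mechanism does the work: we must argue that the layers whose size would otherwise violate (P2) are precisely those that get rebuilt. Since $\tau_i$ is reset to $0$ whenever $S_i$ is rebuilt and incremented by $1$ on each update, at any point in time $\tau_i$ equals the number of updates since $S_i$ was last rebuilt (or since preprocessing); each such update added at most one point to $S_i$, so $|S_i| \leq (k + s_i) + \tau_i$. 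When $\tau_i > s_i$ we have $i \geq i'$, so $S_i$ is rebuilt during this update via \textsc{Rebuild}$(i')$, which restores $|S_i| = k + s_i$ and $\tau_i = 0$; for $i < i'$ we have $\tau_i \leq s_i$, hence $|S_i| \leq k + 2s_i$. Combined with the trivial lower bound $|S_i| \geq k + s_i$ (the size only ever increases between rebuilds, starting from $k+s_i$), this gives (P2).

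\textbf{Rebuild preserves (P1).} Finally I would check that \textsc{Rebuild}$(i')$ restores the nesting: it recomputes $S_{i'} \subseteq S_{i'-1}$ (the set $S_{i'-1}$ is untouched and, by the inductive hypothesis plus the insertion bookkeeping, satisfies $|S_{i'-1}| \geq k + s_{i'-1} \geq k + s_{i'}$, so the argmin over size-$(k+s_{i'})$ subsets is nonempty), and then $S_{i'+1} \subseteq S_{i'}, \ldots, S_{\ell+1} \subseteq S_{\ell}$ in turn, exactly as in Algorithm~\ref{alg:static}; since $|S_j| = k + s_j \geq k + s_{j+1}$ at each rebuilt layer, every argmin in the loop is well-defined. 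The layers $S_0, \ldots, S_{i'-1}$ above the rebuild point are unchanged, and they were nested before, so the full chain $\mathcal{V} \supseteq S_0 \supseteq \cdots \supseteq S_{\ell+1}$ holds.

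\textbf{Main obstacle.} The only genuinely delicate point is the bookkeeping relating $\tau_i$ to $|S_i|$ — namely the claim that $|S_i| \leq k + s_i + \tau_i$ at all times, which hinges on the facts that (i) a rebuild of layer $i$ sets $|S_i| = k+s_i$ and $\tau_i = 0$ simultaneously, (ii) between rebuilds $|S_i|$ increases by at most $1$ per update while $\tau_i$ increases by exactly $1$, and (iii) layer $i$ is guaranteed to be rebuilt before $\tau_i$ can exceed $s_i$, because $i'$ is chosen as the \emph{smallest} index with $\tau_{i'} > s_{i'}$ and \textsc{Rebuild}$(i')$ touches all layers $\geq i'$. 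Making this invariant precise and checking it is the crux; everything else is routine.
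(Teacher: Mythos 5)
Your proof is correct and follows essentially the same route as the paper: the paper phrases the size bound via "critical events" (calls to \textsc{Rebuild}$(i')$ with $i' \leq i$), noting that at most $s_i$ updates occur between successive critical events and each update increases $|S_i|$ by at most one, which is exactly your invariant $|S_i| \leq k+s_i+\tau_i$ combined with the choice of $i'$ as the smallest index with $\tau_{i'} > s_{i'}$. Your additional checks (that the argmins in \textsc{Rebuild} are well-defined because $s_{i-1}\geq s_i$ and $|V|>4k$, and that sizes may transiently exceed $k+2s_i$ mid-update before the rebuild restores them) are fine elaborations of details the paper leaves implicit.
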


\begin{proof}
Clearly, both the properties hold at pre-processing. It is easy to check that the algorithm never violates (P1) while handling an update. 

For (P2), fix any index $i \geq 0$, and say a ``critical event'' occurs whenever we call the subroutine {\sc Rebuild}$(i')$ for some $i' \leq i$. Note that at the end of any critical event, we have $|S_{i}| = k+s_i$ and $\tau_i = 0$. It is easy to verify that there can be at most $s_i$ updates to $V$ in between any two successive critical events (for otherwise we would have $\tau_i > s_i$), and each such update can increase the size of $S_i$ additively by at most one. Hence, we always have $k+s_i \leq |S_i| \leq k+2s_i$.  
\end{proof}

It now remains to bound the approximation ratio and recourse of this dynamic algorithm. In particular, Theorem~\ref{thm:dyn:recourse:new} will follow from Corollary~\ref{cor:approx:key:new} and Corollary~\ref{cor:key:recourse:new}.

\medskip
\noindent {\bf Analyzing the Approximation Ratio.} Consider any  $i, j \in \{0, \ldots, \ell+1\}$.  We use the symbols $S_i^j$ and $V^j$ to respectively denote the contents of the sets $S_i$ and $V$, at the end of the last update during which the dynamic algorithm reset the counter $\tau_j$ to $0$. In contrast, if we don't use the superscript $j$, then the symbols $S_i$ and $V$ respectively denote the status of the concerned sets at the present moment. Armed with these notations, we next derive the following lemma.

\begin{lemma}
\label{lem:approx:key:new}
For any index $i \geq 1$, we always have $\cl(S_i^i, V^i) \leq \cl(S_{i-1}^{i-1}, V^{i-1}) + 2 \cdot \Opt_{k}^{\mathcal{V}}(V)$. 
\end{lemma}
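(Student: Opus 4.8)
The plan is to mimic the static argument of Lemma~\ref{lem:static:new}, but carefully account for the fact that the sets $S_i^i$ and $S_{i-1}^{i-1}$ are built at different times (and hence relative to different inputs $V^i$ and $V^{i-1}$), and that we work in the improper setting where centers may sit outside the current point set. The key observation is that, by the structure of the \textsc{Rebuild} subroutine and the counter discipline, whenever layer $i$ is last rebuilt (producing $S_i^i$ relative to $V^i$), the parent layer $S_{i-1}$ has also just been rebuilt or was rebuilt even more recently; in fact, from the ``smallest index'' rule in how $i'$ is chosen, resetting $\tau_i$ forces resetting $\tau_{i-1}$ as well, so at the moment $S_i^i$ is produced the parent set equals $S_{i-1}^{i-1}$ and the inputs satisfy $V^i = V^{i-1}$ at that instant. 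Moreover, between that instant and the present, at most $s_i$ updates have occurred to $V$ (again by the counter rule, since $\tau_i \le s_i$ currently), so $|V^i \oplus V| \le s_i$.

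First I would fix the moment $t^\star$ at which $\tau_i$ was last reset, so that $S_i = S_i^i$ and $V = V^i$ at time $t^\star$; by the counter rule this is also a moment at which $\tau_{i-1}$ was reset, hence $S_{i-1}^{i-1}$ and $V^{i-1}$ coincide with $S_{i-1}$ and $V^i$ at that same moment. Then, at time $t^\star$, the \textsc{Rebuild} subroutine sets $S_i^i = \arg\min_{S \subseteq S_{i-1}^{i-1},\, |S| = k+s_i} \cl(S, V^i)$. Now I would apply the projection argument exactly as in Lemma~\ref{lem:static:new}: let $S$ be an optimal improper $(k+s_i)$-median solution for $V^i$ with $\cl(S, V^i) = \Opt_{k+s_i}^{\mathcal V}(V^i)$, set $S' := \Proj(S, S_{i-1}^{i-1})$ with $|S'| \le k+s_i$, and invoke Corollary~\ref{cor:proj lemma:new} summed over $x \in V^i$ to get $\cl(S', V^i) \le \cl(S_{i-1}^{i-1}, V^i) + 2\,\Opt_{k+s_i}^{\mathcal V}(V^i)$. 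Since $S_i^i$ is the minimizer, $\cl(S_i^i, V^i) \le \cl(S', V^i) \le \cl(S_{i-1}^{i-1}, V^{i-1}) + 2\,\Opt_{k+s_i}^{\mathcal V}(V^i)$, using $V^i = V^{i-1}$ at time $t^\star$.

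It then remains to replace $\Opt_{k+s_i}^{\mathcal V}(V^i)$ by $\Opt_k^{\mathcal V}(V)$ (the current input). This is precisely where the Lazy Updates Lemma (Lemma~\ref{lem:lazy:update:new}) enters: since $|V^i \oplus V| \le s_i$, we have $\Opt_{k+s_i}^{\mathcal V}(V^i) \le \Opt_k^{\mathcal V}(V)$. Substituting this into the previous display yields $\cl(S_i^i, V^i) \le \cl(S_{i-1}^{i-1}, V^{i-1}) + 2\,\Opt_k^{\mathcal V}(V)$, as claimed.

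The main obstacle I anticipate is the bookkeeping around the timestamps: making fully rigorous the claim that a reset of $\tau_i$ is simultaneous with a reset of $\tau_{i-1}$ (so that the ``parent'' quantities $S_{i-1}^{i-1}$, $V^{i-1}$ genuinely coincide with the parent's state at the moment $S_i^i$ is formed), and that no more than $s_i$ updates separate $V^i$ from the current $V$. Both facts follow from the definition of $i'$ as the smallest index with $\tau_{i'} > s_{i'}$ together with the fact that \textsc{Rebuild}$(i')$ touches exactly the layers $i', i'+1, \dots, \ell+1$, but spelling out the induction on update steps to confirm the counter invariants cleanly is the delicate part; everything else is a direct transcription of the static projection argument.
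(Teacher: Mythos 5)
There is a genuine gap, and it sits exactly in the step you flagged as "the delicate part": your claim that a reset of $\tau_i$ is simultaneous with a reset of $\tau_{i-1}$ is backwards. The subroutine {\sc Rebuild}$(i')$ rebuilds the layers $i', i'+1, \ldots, \ell+1$ and resets only the counters $\tau_{i'}, \ldots, \tau_{\ell+1}$; it does \emph{not} touch $\tau_{i'-1}$. Since $i'$ is chosen as the \emph{smallest} index with $\tau_{i'} > s_{i'}$, the typical situation is precisely that layer $i$ is rebuilt (because $i'=i$) while layer $i-1$ is left alone. So the cascade goes the other way: resetting $\tau_{i-1}$ forces a reset of $\tau_i$, but not conversely. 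Consequently $t_{i-1} \leq t_i$ with strict inequality possible, and at the moment $S_i^i$ is computed the parent set is \emph{not} $S_{i-1}^{i-1}$ and the input is \emph{not} $V^{i-1}$: the parent is the lazily-grown set $S_{i-1}^{i} = S_{i-1}^{i-1} \cup (V^i \setminus V^{i-1})$, evaluated against $V^i$. Your projection/restricted-optimality step therefore only yields $\cl(S_i^i, V^i) \leq \cl(S_{i-1}^{i}, V^i) + 2\,\Opt_{k+s_i}^{\mathcal{V}}(V^i)$, and the identification $\cl(S_{i-1}^{i}, V^i) = \cl(S_{i-1}^{i-1}, V^{i-1})$ that you use is unjustified.

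The missing ingredient (which is how the paper closes this) is a monotonicity inequality for the lazy phase between $t_{i-1}$ and $t_i$: during this interval $S_{i-1}$ only gains points (each inserted point is added to $S_{i-1}$, so it contributes $0$ to the cost) and never loses any, while deletions from $V$ simply drop terms from the objective; hence
$$\cl\bigl(S_{i-1}^{i}, V^i\bigr) \;\leq\; \cl\bigl(S_{i-1}^{i-1}, V^{i-1}\bigr),$$
by the same reasoning as in the Lazy Updates Lemma. Plugging this inequality (rather than an equality of states) into your chain, together with your correct use of $|V^i \oplus V| \leq s_i$ and Lemma~\ref{lem:lazy:update:new} to get $\Opt_{k+s_i}^{\mathcal{V}}(V^i) \leq \Opt_k^{\mathcal{V}}(V)$, recovers the lemma. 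So the projection argument and the $\Opt$ comparison in your plan are fine; what must be repaired is the timestamp bookkeeping, replacing the false simultaneity claim by this explicit cost-monotonicity bound for the parent layer.
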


\begin{proof}
Suppose that we are currently at time $t$, and let $t_i$ (resp.~$t_{i-1}$) be the last time-step at which the counter $\tau_i$ (resp.~$\tau_{i-1}$) was reset to $0$. Clearly, we have $t_{i-1} \leq t_i \leq t$. From Lemma~\ref{lem:static:new}, we get:
\begin{equation}
\label{eq:lem:approx:key:new:1}
\cl(S_i^i, V^i) \leq \cl(S_{i-1}^i, V^i) + 2 \cdot \Opt_{k + s_i}^{\mathcal{V}}(V^i).
\end{equation}
At most $s_i$ updates have occurred to $V$ during the time-interval $(t_i, t)$; for otherwise we would have called the subroutine {\sc Rebuild}$(j)$, for some $j \leq i$, in between time $t_i$ and $t$, and this in turn would have reset the counter $\tau_i$ to $0$. Thus, we have $|V^i \oplus V| \leq s_i$, and  Lemma~\ref{lem:lazy:update:new} implies that:
\begin{equation}
\label{eq:lem:approx:key:new:2}
\Opt_{k+s_i}^{\mathcal{V}}(V^i) \leq \Opt_k^{\mathcal{V}}(V).
\end{equation}
Finally, the counter $\tau_{i-1}$ was {\em never} reset to $0$ during the time-interval $(t_{i-1}, t_i)$. Instead, during this time-interval, the set $S_{i-1}$ grew lazily by an additive one after each insertion in $V$, and remained unchanged after each deletion in $V$. Thus, we have $S_{i-1}^{i} = S_{i-1}^{i-1} \cup (V^i \setminus V^{i-1})$, and applying an argument similar to the proof of Lemma~\ref{lem:lazy:update:new}, we get:
\begin{equation}
\label{eq:lem:approx:key:new:3}
\cl(S_{i-1}^i, V^i) \leq \cl(S_{i-1}^{i-1}, V^{i-1}).
\end{equation}
The lemma follows from~(\ref{eq:lem:approx:key:new:1}),~(\ref{eq:lem:approx:key:new:2}) and~(\ref{eq:lem:approx:key:new:3}).
\end{proof}

\begin{corollary}
\label{cor:approx:key:new}
The set $S_{\ell+1}$  is a $O(1/\epsilon)$-approximate solution to the improper $k$-median problem. 
\end{corollary}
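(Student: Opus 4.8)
The plan is to chain the inequality from Lemma~\ref{lem:approx:key:new} across all layers, exactly mirroring the proof of Corollary~\ref{cor:static:new} but now in the dynamic setting. First I would unroll the recursion: for $i$ ranging from $1$ to $\ell+1$, Lemma~\ref{lem:approx:key:new} gives $\cl(S_i^i, V^i) \leq \cl(S_{i-1}^{i-1}, V^{i-1}) + 2 \cdot \Opt_k^{\mathcal{V}}(V)$. Summing these telescopes the left-hand terms against the right-hand terms, leaving $\cl(S_{\ell+1}^{\ell+1}, V^{\ell+1}) \leq \cl(S_0^0, V^0) + 2(\ell+1) \cdot \Opt_k^{\mathcal{V}}(V)$.

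Next I would handle the base term $\cl(S_0^0, V^0)$. This should be bounded analogously to how $\cl(S_{-1}, V) = \cl(V, V) = 0$ was used in Corollary~\ref{cor:static:new}, except now $S_0$ is computed by a $\textsc{Rebuild}(0)$ call. Applying Lemma~\ref{lem:static:new} (via the same projection argument) at layer $0$ with $S_{-1} = V$ gives $\cl(S_0^0, V^0) \leq \cl(V^0, V^0) + 2 \cdot \Opt_{k+s_0}(V^0) = 2 \cdot \Opt_{k+s_0}(V^0)$; I would then use Lemma~\ref{lem:lazy:update:new} to relate $\Opt_{k+s_0}^{\mathcal{V}}(V^0)$ to $\Opt_k^{\mathcal{V}}(V)$ at the current time, just as in the proof of Lemma~\ref{lem:approx:key:new}. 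So $\cl(S_0^0, V^0) = O(1) \cdot \Opt_k^{\mathcal{V}}(V)$, and the total becomes $O(\ell) \cdot \Opt_k^{\mathcal{V}}(V) = O(1/\epsilon) \cdot \Opt_k^{\mathcal{V}}(V)$ since $\ell = 1/\epsilon$.

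Finally I would connect $\cl(S_{\ell+1}^{\ell+1}, V^{\ell+1})$ to the quantity we actually care about, namely $\cl(S_{\ell+1}, V)$ where $S_{\ell+1}$ and $V$ denote the \emph{current} contents. Since $\tau_{\ell+1}$ satisfies $s_{\ell+1} = 0$, it is reset to $0$ on \emph{every} update, so $S_{\ell+1}^{\ell+1} = S_{\ell+1}$ and $V^{\ell+1} = V$ at the present moment — there is no lazy drift at the bottom layer. Hence $\cl(S_{\ell+1}, V) = O(1/\epsilon) \cdot \Opt_k^{\mathcal{V}}(V)$, and by property (P2) of Lemma~\ref{lem:key:new} we have $|S_{\ell+1}| = k + s_{\ell+1} = k$, so $S_{\ell+1}$ is a genuine $k$-median solution. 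This establishes the corollary.

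The main subtlety — and the step I would be most careful about — is making sure the superscript bookkeeping is consistent: the quantities $S_i^i$ and $V^i$ refer to snapshots at the last reset of $\tau_i$, and these are \emph{different} times for different $i$ (with $t_0 \leq t_1 \leq \cdots \leq t_{\ell+1} = t$), so the telescoping only works because Lemma~\ref{lem:approx:key:new} was stated precisely to bridge consecutive snapshots while absorbing the lazy-update slack via~\eqref{eq:lem:approx:key:new:2} and~\eqref{eq:lem:approx:key:new:3}. The right-hand side $\Opt_k^{\mathcal{V}}(V)$ in Lemma~\ref{lem:approx:key:new} is already evaluated at the current input, which is exactly what lets all $\ell+2$ error terms share a common $\Opt_k^{\mathcal{V}}(V)$ rather than blowing up; I would just need to double-check that the base-layer bound can also be phrased against the current $\Opt_k^{\mathcal{V}}(V)$ rather than a stale snapshot, which follows from the same $|V^0 \oplus V| \leq s_0$ argument used inside Lemma~\ref{lem:approx:key:new}.
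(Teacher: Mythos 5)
Your proposal is correct and follows essentially the same route as the paper: telescope Lemma~\ref{lem:approx:key:new} over $i = 1, \ldots, \ell+1$, bound the base term $\cl(S_0^0, V^0)$ by $O(1) \cdot \Opt_k^{\mathcal{V}}(V)$ via the layer-$0$ rebuild together with the Lazy Updates Lemma, pass from the snapshot $\cl(S_{\ell+1}^{\ell+1}, V^{\ell+1})$ to the current $\cl(S_{\ell+1}, V)$ (your observation that layer $\ell+1$ is rebuilt on every update, so the snapshot equals the current state, is a valid and slightly sharper variant of the paper's monotonicity inequality), and conclude with $|S_{\ell+1}| = k$ from Lemma~\ref{lem:key:new}. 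The only cosmetic difference is that you bound the base term with a factor-$2$ projection argument where the paper uses the exact optimality of the layer-$0$ recomputation, which is immaterial to the $O(1/\epsilon)$ bound.
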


\begin{proof}
Since $\ell = 1/\epsilon$, summing the inequality from Lemma~\ref{lem:approx:key:new} over all $i \in \{1, \ldots, \ell+1\}$, we get:
\begin{equation}
\label{eq:cor:approx:key:new:1}
\cl(S_{\ell+1}^{\ell+1}, V^{\ell+1}) \leq \cl(S_0^0, V^0) + 2 \cdot \sum_{i=1}^{\ell+1} \Opt_{k}^{\mathcal{V}}(V) \leq  \cl(S_0^0, V^0) + O(1/\epsilon) \cdot \Opt_k^{\mathcal{V}}(V).
\end{equation}
Since $S_{-1} = V$, line 2 of Algorithm~\ref{alg:rebuild:new} gives us: $\cl(S_0^0, V^0) = \Opt_{k+s_0}^{\mathcal{V}}(V^0)$. Thus, by \Cref{lem:lazy:update:new}, we get that:  $\cl(S_0^0, V^0) = \Opt_{k+s_0}^{\mathcal{V}}(V^0) \leq \Opt_k^{\mathcal{V}}(V)$. Combining this observation with~(\ref{eq:cor:approx:key:new:1}), we infer that: 
\begin{equation}
\label{eq:cor:approx:key:new:2}
\cl(S_{\ell+1}^{\ell+1}, V^{\ell+1}) \leq   O(1/\epsilon) \cdot \Opt_k^{\mathcal{V}}(V).
\end{equation}
Next, the argument used to derive inequality~(\ref{eq:lem:approx:key:new:3}) (in the proof of Lemma~\ref{lem:approx:key:new}) also implies that:
\begin{equation}
\label{eq:cor:approx:key:new:3}
\cl(S_{\ell+1}, V) \leq  \cl(S_{\ell+1}^{\ell+1}, V^{\ell+1}).
\end{equation}
Finally, by Lemma~\ref{lem:key:new}, we have $S_{\ell+1} \subseteq \mathcal{V}$ and $k+s_{\ell+1} \leq |S_{\ell+1}| \leq k+2s_{\ell+1}$. Since $s_{\ell+1} = \lfloor k^{-\epsilon} \rfloor = 0$,  this implies that $|S_{\ell+1}| =  k$. The corollary now follows from~(\ref{eq:cor:approx:key:new:2}) and~(\ref{eq:cor:approx:key:new:3}).
\end{proof}

\noindent {\bf Analyzing the Recourse.} We next bound the amortized recourse of this dynamic algorithm.

\begin{lemma}
\label{lem:key:recourse:new} Consider any call to  {\sc Rebuild}$(i)$, for some $i \in [1,\ell+1]$. Let $j \in [i, \ell+1]$, and let $S_{j}^-$ (resp.~$S_{j}^+$) denote the contents of  $S_{j}$ just before (resp.~after) the call. Then  $|S_{j}^- \oplus S_{j}^+| \leq 4 s_{i-1}$. 
\end{lemma}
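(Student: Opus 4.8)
The plan is to identify a single ambient set — the content of $S_{i-1}$ during the call to \textsc{Rebuild}$(i)$, which this call leaves untouched — that contains \emph{both} $S_j^-$ and $S_j^+$, to bound its size by $k+2s_{i-1}$, and then to exploit that both $S_j^-$ and $S_j^+$ have size at least $k$. A one-line inclusion--exclusion estimate will then cancel the additive $k$ and produce exactly $4 s_{i-1}$.

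First I would establish the two containments. Write $S_{i-1}$ for the (fixed, throughout this call) content of that set. Since the \textbf{for} loop of Algorithm~\ref{alg:rebuild:new} starts at index $i$, it never alters $S_{i-1}$, and a short induction on $j' \in \{i,\dots,j\}$ via line~2 of Algorithm~\ref{alg:rebuild:new} gives $S_j^+ \subseteq S_{j-1}^+ \subseteq \cdots \subseteq S_i^+ \subseteq S_{i-1}$, along with $|S_j^+| = k+s_j$ (line~2 produces a set of this exact size, because $s_\cdot$ is non-increasing in the index, so each $|S_{j'-1}^+| \ge k+s_{j'}$ and the minimization is over a nonempty family). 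For $S_j^-$, I would note that by the time \textsc{Rebuild}$(i)$ is triggered the algorithm has already applied the current point update and incremented the counters, and the point update either inserts one common point into all layers or changes none of them; hence the chain $S_0^- \supseteq S_1^- \supseteq \cdots \supseteq S_{\ell+1}^-$ of property~(P1) of Lemma~\ref{lem:key:new} is still intact at that moment, so $S_j^- \subseteq S_{i-1}^- = S_{i-1}$.

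Second, I would bound the cardinalities. The set $S_{i-1}$ was last (re)assigned — at preprocessing, or by line~2 of some earlier call \textsc{Rebuild}$(i'')$ with $i'' \le i-1$ — to have exactly $k + s_{i-1}$ points, at which moment the counter $\tau_{i-1}$ was (re)set to $0$; since then every update bumps $\tau_{i-1}$ by one and enlarges $S_{i-1}$ by at most one point, so currently $|S_{i-1}| \le k + s_{i-1} + \tau_{i-1}$. Since $i$ is the \emph{smallest} index with $\tau_i > s_i$, we have $\tau_{i-1} \le s_{i-1}$, whence $|S_{i-1}| \le k + 2s_{i-1}$. Symmetrically, $S_j$ was last assigned exactly $k+s_j$ points and has only grown afterwards (only lazy insertions touch it), so $|S_j^-| \ge k+s_j \ge k$; and $|S_j^+| = k+s_j \ge k$ from the first step.

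Finally, with $S_j^-, S_j^+ \subseteq S_{i-1}$, inclusion--exclusion yields $|S_j^- \cap S_j^+| \ge |S_j^-| + |S_j^+| - |S_{i-1}|$, so
\[ |S_j^- \oplus S_j^+| = |S_j^-| + |S_j^+| - 2\,|S_j^- \cap S_j^+| \le 2\,|S_{i-1}| - |S_j^-| - |S_j^+| \le 2(k + 2s_{i-1}) - k - k = 4 s_{i-1}. \]
The one step that needs genuine care — and the conceptual point of the lemma — is this cancellation of $k$: it relies on the nesting invariant forcing the two snapshots of $S_j$ to live inside the common set $S_{i-1}$, whose size exceeds $k$ only by the slack $2s_{i-1}$, so that the $k$ points each snapshot must already contain are ``charged'' to $S_{i-1}$ rather than counted twice in the symmetric difference. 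Bounding $|S_j^-|$ and $|S_j^+|$ in isolation, by contrast, would leave an additive $k$ and miss the stated bound.
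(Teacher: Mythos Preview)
Your proposal is correct and follows essentially the same route as the paper's own proof: both note that $S_j^-$ and $S_j^+$ sit inside the common ambient set $S_{i-1}$ of size at most $k+2s_{i-1}$ (via property~(P2) of Lemma~\ref{lem:key:new}), that each snapshot has size at least $k$, and then finish with the same inclusion--exclusion / pigeonhole step. The paper states these facts tersely (``we clearly have'', ``it is easy to verify''), whereas you spell out the nesting inductions and the counter argument explicitly, but the argument is the same.
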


\begin{proof}
We clearly have $S_{j}^{-}, S_{j}^+ \subseteq S_{i-1}$, and $|S_{j}^+|, |S_{j}^-| \geq k+s_j \geq k$. Furthermore, Lemma~\ref{lem:key:new} implies that $|S_{i-1}| \leq k + 2s_{i-1}$. In  words, we have a set $S_{i-1}$ of (at most) $k+2s_{i-1}$ points, and we also have two subsets $S_{j}^-, S_{j}^+$ of $S_{i-1}$, each of size $\geq k$. It is easy to verify that under such a scenario, the concerned  subsets can differ in at most $4s_{i-1}$ points.  Hence, we get $|S_{j}^- \oplus S_{j}^+| \leq 4 s_{i-1}$.
\end{proof}

\begin{corollary}
\label{cor:key:recourse:new}
For every $j \in \{0,  \ldots, \ell+1\}$, the set $S_{j}$ incurs $O(k^{\epsilon}/\epsilon)$  amortized recourse. 
\end{corollary}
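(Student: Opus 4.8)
The goal is to show that each set $S_j$ in the hierarchy incurs $O(k^\epsilon / \epsilon)$ amortized recourse. The natural strategy is to charge the recourse of $S_j$ to the number of times the various \textsc{Rebuild} calls touch layer $j$, weighted by how much each such call can change $S_j$, and then amortize over the update sequence. Fix a layer $j$. Insertions directly add a single point to $S_j$, contributing $O(1)$ recourse per update, so these are negligible. The bulk of the recourse comes from calls \textsc{Rebuild}$(i)$ with $i \leq j$, which rewrite $S_j$ entirely; by Lemma~\ref{lem:key:recourse:new}, each such call changes $S_j$ in at most $4 s_{i-1}$ points.

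\textbf{Key steps.} First I would observe that \textsc{Rebuild}$(i)$ is invoked precisely when $\tau_i$ exceeds $s_i$ and $i$ is the smallest such index. Since $\tau_i$ is incremented once per update and reset to $0$ whenever \textsc{Rebuild}$(i')$ is called for some $i' \leq i$, the counter $\tau_i$ reaches the threshold $s_i + 1$ only after at least $s_i + 1$ updates have elapsed since the last reset. Hence, over a sequence of $T$ updates, the number of calls \textsc{Rebuild}$(i')$ with $i' \leq i$ is at most $T / (s_i + 1) \leq T / s_i$ (using $s_i \geq 1$ for $i \leq \ell$; the case $i = \ell+1$ where $s_{\ell+1} = 0$ is handled separately — there \textsc{Rebuild}$(\ell+1)$ may fire every update, but it changes $S_{\ell+1}$ by $O(1)$ since only the lazily-inserted points and the reoptimization within $S_\ell$ are at play... actually more carefully, I'd bound it via the enclosing layer $\ell$). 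Second, for the layer $S_j$: every call \textsc{Rebuild}$(i)$ with $i \leq j$ touches $S_j$, changing it by at most $4 s_{i-1}$ points (Lemma~\ref{lem:key:recourse:new}), and also every call \textsc{Rebuild}$(i)$ with $i \leq j < i$ — wait, if $i > j$ then $S_j$ is not touched. So the total recourse of $S_j$ over $T$ updates is at most
\[
O(T) + \sum_{i=1}^{j} (\text{\# calls to \textsc{Rebuild}}(i)) \cdot 4 s_{i-1}.
\]
Since the number of times \textsc{Rebuild}$(i)$ is called (as opposed to \textsc{Rebuild}$(i')$ for $i' < i$) is at most $T / s_i$, and $s_{i-1} = \lfloor k^{1-(i-1)\epsilon}\rfloor \leq k^\epsilon \cdot k^{1 - i\epsilon}$ while $s_i = \lfloor k^{1-i\epsilon}\rfloor \geq \tfrac{1}{2} k^{1-i\epsilon}$, we get $s_{i-1}/s_i = O(k^\epsilon)$. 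Therefore each of the $j+1 = O(1/\epsilon)$ terms in the sum contributes $O(T k^\epsilon)$, giving total recourse $O(T k^\epsilon / \epsilon)$, i.e., amortized recourse $O(k^\epsilon / \epsilon)$ for $S_j$.

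\textbf{Main obstacle.} The delicate point is the bookkeeping of \emph{which} \textsc{Rebuild} call to charge a given modification of $S_j$ to, since a single update triggers exactly one \textsc{Rebuild}$(i')$ but that call rewrites all of $S_{i'}, \ldots, S_{\ell+1}$. One must be careful not to double-count: I would charge the change in $S_j$ caused by \textsc{Rebuild}$(i')$ to the event "\textsc{Rebuild}$(i')$ fired," and bound the number of such events by $T/s_{i'}$ (not $T/s_i$ for various $i$). Combined with the per-event bound $4 s_{i'-1}$ from Lemma~\ref{lem:key:recourse:new}, summing over $i' \in \{1, \ldots, j\}$ gives the claimed bound; the ratio $s_{i'-1}/s_{i'} = O(k^\epsilon)$ is what prevents any blow-up, and the number of layers $O(1/\epsilon)$ is what produces the $1/\epsilon$ factor. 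The edge cases ($s_i = 0$ for $i = \ell + 1$, and the assumption $|V| > 4k$ ensuring the hierarchy is well-defined) need a sentence each but are routine.
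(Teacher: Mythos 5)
Your overall accounting is the same as the paper's: charge each change of $S_j$ to the specific call {\sc Rebuild}$(i)$ that caused it, bound the per-call change by $4s_{i-1}$ via Lemma~\ref{lem:key:recourse:new}, note that successive calls to {\sc Rebuild}$(i)$ are at least $s_i$ updates apart, and sum the ratios $s_{i-1}/s_i = O(k^{\epsilon})$ over the $O(1/\epsilon)$ layers. Your self-correction in the last paragraph (charge per exact index $i'$, with frequency at most $T/s_{i'}$) is exactly the paper's argument, and it is needed: your earlier claim that all calls {\sc Rebuild}$(i')$ with $i' \leq i$ together occur at most once per $s_i+1$ updates is false, because a call to {\sc Rebuild}$(i')$ resets only $\tau_{i'},\ldots,\tau_{\ell+1}$, so a counter with index below $i'$ can trigger another rebuild immediately afterwards.

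Two loose ends remain. First, your sum starts at $i=1$ and never accounts for calls to {\sc Rebuild}$(0)$, which also rewrite $S_j$; Lemma~\ref{lem:key:recourse:new} is stated only for $i \geq 1$ (for $i=0$ the containing set is $S_{-1}=V$, which can be arbitrarily large), so the $4s_{i-1}$ bound is not available there. The paper closes this case with the trivial bound: each call to {\sc Rebuild}$(0)$ changes $S_j$ by at most $O(k+s_j)=O(k)$ and such calls occur at most once per $s_0 = k$ updates, contributing only $O(1)$ amortized. Second, for $i=j=\ell+1$ your first instinct (``changes $S_{\ell+1}$ by $O(1)$'') is wrong: both the old and new $S_{\ell+1}$ are $k$-subsets of $S_{\ell}$, which has size up to $k+2s_{\ell}$, so a single rebuild can change $S_{\ell+1}$ by $\Theta(s_{\ell})=\Theta(k^{\epsilon})$. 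Your fallback is the right one and matches the paper's footnote: {\sc Rebuild}$(\ell+1)$ fires at most once per update and costs at most $4s_{\ell}=O(k^{\epsilon})$, which is still $O(k^{\epsilon})$ amortized. With these two patches your proof coincides with the paper's.
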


\begin{proof}
Excluding the lazy insertions (which only lead to an amortized recourse of at most $1$), the set $S_{j}$ changes iff we call {\sc Rebuild}$(i)$ for some $i \in [0,  j]$. Let $R_i$ be the amortized recourse of $S_j$, incurred due to calls to {\sc Rebuild}$(i)$. We will show that $R_i = O(k^{\epsilon})$ for all $i \in [0, j]$. Since $j = O(1/\epsilon)$, this implies an overall amortized recourse of $\sum_{i=0}^{j} R_i = O(k^{\epsilon}/\epsilon)$ for the set $S_j$. 

First, consider any index $i \in [1,j]$. There are at least $s_i$ updates to $V$ in between any two successive calls to {\sc Rebuild}$(i)$, and any such call changes $S_{j}$ by at most an additive $4s_{i-1}$ (see Lemma~\ref{lem:key:recourse:new}).  Hence, we get: $R_i = O(s_{i-1}/s_i) = O(k^{\epsilon})$ for all $i \geq 1$.\footnote{If $i = j =  \ell+1$, then we  pay  $O(s_{\ell}) = O(k^{\epsilon})$ recourse per update. So, we still have $R_i = O(k^{\epsilon})$ although $s_{\ell+1} = 0$.}  Finally, set $i = 0$. There are at least $s_0$ updates to $V$ in between any two successive calls to {\sc Rebuild}$(0)$, and any such call changes the set $S_{j}$, trivially, by at most  $O(k+s_j) = O(k)$. Hence, we get: $R_0 = O(k/s_0) = O(1)$.
\end{proof}

\subsection{Towards Simultaneously Achieving Good Recourse and Update Time}
\label{sec:challenge:new}

Our next challenge is to efficiently {\em implement} (a variant of) the algorithmic framework developed in  Section~\ref{sec:dyn:recouse:main},  so that we can achieve the  guarantees of Theorem~\ref{thm:dyn:recourse:new} in $\tilde{O}_{\epsilon}(k^{\epsilon} \cdot n)$ update time. The essence of our approach to address this challenge is captured in the lemma below.


\begin{lemma}[\texttt{Informal}]
\label{lem:informal:new}
Consider a (static) algorithm $\textsc{Alg}(X, V, k')$ which takes as input: two sets of points $X, V \subseteq \mathcal{V}$ and an integer $k'$ s.t.~$|X| > k'$; and returns a set $S \subseteq X$ of $|S| = k'$ points. Suppose that this algorithm satisfies the following two properties.
\begin{itemize}
\item {\em (P1)} $\cl(S, V) \leq \alpha \cdot \cl(X, V) + \beta \cdot \Opt_{k'}(V)$; for some constants $\alpha, \beta \geq 1$. 
\item {\em (P2)} The algorithm runs in $\tilde{O}(|V| \cdot (|X| - k'))$ time.
\end{itemize}
Then we can obtain an algorithm for dynamic improper $k$-median that has approximation ratio $O(\beta \cdot \sum_{i=0}^{\ell+1} \alpha^i)$, amortized recourse $O(k^{\epsilon}/\epsilon)$, and amortized update time $\tilde{O}_{\epsilon}(k^{\epsilon} \cdot n)$.\footnote{The notation $O_{\epsilon}(\cdot)$ hides  $\text{poly}(\epsilon^{-1})$ terms, which we ignore in the extended abstract while bounding  update time.}
\end{lemma}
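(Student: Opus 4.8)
The plan is to take the dynamic algorithm of Section~\ref{sec:dyn:recouse:main} and make two changes: (i) replace every exact minimization ``$\arg\min_{S \subseteq S_{i-1}, |S| = k+s_i} \cl(S, V)$'' appearing in \textsc{StaticHierarchy} and \textsc{Rebuild} by a call to $\textsc{Alg}(S_{i-1}, V, k+s_i)$; and (ii) keep the lazy-insertion/lazy-deletion mechanism and the layer counters $\tau_i$ exactly as before. The structural invariants (P1)--(P2) of Lemma~\ref{lem:key:new} only used the counter mechanism and the nesting $S_j \subseteq S_{j-1}$, both of which are preserved since $\textsc{Alg}(S_{i-1},V,k+s_i)$ returns a subset of $S_{i-1}$ of exactly the prescribed size; so these invariants carry over verbatim. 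The recourse analysis (Lemma~\ref{lem:key:recourse:new} and Corollary~\ref{cor:key:recourse:new}) also only used the sizes of the sets and the nesting, not the optimality of the chosen subsets, so the $O(k^\epsilon/\epsilon)$ amortized recourse bound is unchanged.

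The first substantive step is to redo the approximation analysis with the exact minimization replaced by property (P1) of $\textsc{Alg}$. In the proof of Lemma~\ref{lem:static:new} (and hence Lemma~\ref{lem:approx:key:new}), the only place exact optimality was used is the final line ``$\cl(S_i,V) \le \cl(S',V)$''. With the approximate subroutine we instead have $\cl(S_i, V) \le \alpha \cdot \cl(S_{i-1}, V) + \beta \cdot \Opt_{k+s_i}(V)$, and after combining with the Lazy Updates Lemma (Lemma~\ref{lem:lazy:update:new}) exactly as in Lemma~\ref{lem:approx:key:new} this becomes
\[
\cl(S_i^i, V^i) \;\le\; \alpha \cdot \cl(S_{i-1}^{i-1}, V^{i-1}) + \beta \cdot \Opt_k^{\mathcal V}(V).
\]
Now one unrolls this recursion from layer $\ell+1$ down to layer $0$, using that $\cl(S_0^0, V^0) \le \alpha \cdot \cl(V, V) + \beta \cdot \Opt_{k+s_0}^{\mathcal V}(V^0) = \beta \cdot \Opt_{k+s_0}^{\mathcal V}(V^0) \le \beta \cdot \Opt_k^{\mathcal V}(V)$ (since $\cl(V,V)=0$ and by Lemma~\ref{lem:lazy:update:new}). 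Each of the $\ell+2$ terms picks up an extra factor of at most $\alpha^i$, giving $\cl(S_{\ell+1}, V) \le O\!\big(\beta \cdot \sum_{i=0}^{\ell+1}\alpha^i\big)\cdot \Opt_k^{\mathcal V}(V)$, which is the claimed approximation ratio; the final step $\cl(S_{\ell+1},V)\le \cl(S_{\ell+1}^{\ell+1},V^{\ell+1})$ is the same lazy-updates argument used for inequality~(\ref{eq:cor:approx:key:new:3}).

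The second substantive step is the update-time bound, and this is the part I expect to be the real obstacle. When $\textsc{Rebuild}(i')$ is called, it invokes $\textsc{Alg}(S_{j-1}, V, k+s_j)$ for each $j \in [i', \ell+1]$; by (P2) and Lemma~\ref{lem:key:new} each such call costs $\tilde O\big(|V|\cdot(|S_{j-1}|-(k+s_j))\big) = \tilde O\big(n \cdot (k+2s_{j-1} - k - s_j)\big) = \tilde O(n \cdot s_{j-1})$, so one full rebuild at level $i'$ costs $\tilde O\big(n\sum_{j \ge i'} s_{j-1}\big) = \tilde O(n \cdot s_{i'-1})$ (the geometric sum is dominated by its largest term, $s_{i'-1} \le k$). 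The key accounting point is that $\textsc{Rebuild}(i')$ is triggered only once every $\Omega(s_{i'})$ updates (otherwise $\tau_{i'}$ would not have exceeded $s_{i'}$), so the amortized cost charged to level $i'$ is $\tilde O\big(n \cdot s_{i'-1}/s_{i'}\big) = \tilde O(n \cdot k^\epsilon)$; for $i'=0$ it is $\tilde O(n\cdot k / s_0) = \tilde O(n)$, and for $i'=\ell+1$ one pays $\tilde O(n\cdot s_\ell)=\tilde O(n\cdot k^\epsilon)$ per update directly. Summing over the $\ell+2 = O(1/\epsilon)$ levels gives amortized update time $\tilde O_\epsilon(n \cdot k^\epsilon)$. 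The thing to be careful about is that a single update triggers $\textsc{Rebuild}(i')$ only for the \emph{smallest} overflowing index $i'$, and resets $\tau_{i'},\dots,\tau_{\ell+1}$, so each level is genuinely rebuilt at its own frequency and there is no double counting; formalizing this ``a level-$i'$ rebuild pays for itself over the $s_{i'}$ preceding updates'' argument cleanly — including the boundary cases $i'=0$ and $i'=\ell+1$ — is where the bookkeeping lives. Finally, since Lemma~\ref{lem:informal:new} is about \emph{improper} $k$-median, there is no issue with centers leaving $V$; the conversion to the proper problem is deferred to Section~\ref{sec:assumption:new}.
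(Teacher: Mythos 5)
Your treatment of layers $j \geq 1$, the recourse bound, and the unrolling of the recursion $\cl(S_i^i, V^i) \leq \alpha \cdot \cl(S_{i-1}^{i-1}, V^{i-1}) + \beta \cdot \Opt_k^{\mathcal V}(V)$ matches the paper's argument. The genuine gap is at layer $0$. You replace the exact minimization there by a call to $\textsc{Alg}(S_{-1}, V, k+s_0) = \textsc{Alg}(V, V, k+s_0)$, and then charge it $\tilde O(n \cdot s_{-1}) = \tilde O(nk)$ time by invoking the size bound $|S_{j-1}| \leq k + 2s_{j-1}$ from Lemma~\ref{lem:key:new}. But that bound only holds for $j-1 \geq 0$; for $j = 0$ we have $|S_{-1}| = |V| = n$, so property (P2) only guarantees a running time of $\tilde O\big(n \cdot (n - k - s_0)\big)$, i.e.\ potentially $\tilde O(n^2)$. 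Amortized over the $\geq s_0 = k$ updates between consecutive calls to {\sc Rebuild}$(0)$, this is $\tilde O(n^2/k)$ per update, which exceeds the claimed $\tilde O_{\epsilon}(k^{\epsilon} \cdot n)$ whenever $n \gg k^{1+\epsilon}$ — and Lemma~\ref{lem:informal:new} (feeding Theorem~\ref{thm:solution:main}) is stated for general $n$, with the reduction to $n = \tilde O(k)$ via sparsification applied only afterwards. Your step ``the geometric sum is dominated by its largest term, $s_{i'-1} \leq k$'' silently identifies $|S_{-1}| - (k+s_0)$ with the bookkeeping convention $s_{-1} = \Theta(k)$, which is exactly where the accounting breaks.

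The paper avoids this by treating layer $0$ as a special case: when {\sc Rebuild}$(0)$ recomputes $S_0$, it does not call $\textsc{Alg}$ at all but instead runs a static $O(1)$-approximate $(k+s_0)$-median algorithm (Mettu--Plaxton) on $(V,d)$ in $\tilde O(nk)$ time, which amortizes to $\tilde O(n)$ and directly supplies the base case $\cl(S_0^0, V^0) \leq O(1) \cdot \Opt_{k+s_0}(V^0) \leq O(1) \cdot \Opt_k^{\mathcal V}(V)$ of the recursion. Your approximation argument for layer $0$ (using $\cl(V,V) = 0$ in (P1)) is fine in spirit — note only that (P1) compares against the \emph{proper} optimum $\Opt_{k+s_0}(V^0)$, so passing to $\Opt_{k+s_0}^{\mathcal V}(V^0)$ needs the factor-$2$ projection argument of Claim~\ref{cl:projection:new} rather than a direct inequality — but the runtime for that layer must come from a different subroutine than $\textsc{Alg}$.
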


\noindent {\bf An Important Caveat.} Note that property (P2) is too good to be true. Indeed, suppose that $X = V$ and $|X| = k'+1 = n'$ (say). Under this scenario, the optimal solution will simply identify the closest pair of points $x, y$ in $X$, and return the set $S := X \setminus \{x\}$.  Thus, property (P2) is  asking for a  (static) constant approximation algorithm for the closest pair of points in a {\em general metric space} of size $n'$, that runs in $\tilde{O}(n')$ time. Unfortunately for us, such an algorithm cannot exist.\footnote{Suppose that there are two special points $x, y \in X$ with $d(x, y) = 0$, and $d(x', y') = \infty$ whenever $\{x', y'\} \neq \{x, y\}$. Given such an input, it is not possible to identify  one of these special points $\{x, y\}$ in $o((n')^2)$ time.}

We will circumvent this impossibility result by designing an ``auxiliary'' {\em dynamic} data structure $\mathcal{D}$  which will run ``in the background'', processing the insertions/deletions in  $V$ in $\tilde{O}_{\epsilon}(k^{\epsilon} \cdot n)$ update time. Whenever we  call $\textsc{Alg}(\cdot, \cdot, \cdot)$ as a subroutine, it will be able to make certain queries to $\mathcal{D}$ and get answers to these queries very quickly (because $\mathcal{D}$ will have very fast query time). The time spent on each call to $\textsc{Alg}(\cdot, \cdot, \cdot)$ will indeed be proportional to the bound in (P2), {\em provided we exclude the update time of $\mathcal{D}$ running in the background} (see Section~\ref{sec:random:new} for further details).

\medskip
\noindent {\bf Restricted $k$-Median.} Henceforth, we will refer to any algorithm satisfying (P1) as an $(\alpha, \beta)$-approximation algorithm for the ``restricted $k$-median'' problem. As $\ell = 1/\epsilon$ and we are shooting for a $O(1/\epsilon)$ approximation ratio (see Theorem~\ref{thm:dyn:recourse:new}), we need to ensure that $\alpha \leq 1+\epsilon$ and $\beta = \Theta(1)$. Otherwise, if $\alpha = \Theta(1)$, then the approximation guarantee will degrade {\em exponentially} with $1/\epsilon$.

\medskip
\noindent {\bf Proof (Sketch) of Lemma~\ref{lem:informal:new}.} We run the  algorithm from Section~\ref{sec:dyn:recouse:main}, with the following twists. First, whenever we have to  recompute some $S_j$ (for $j \geq 1$) in line 2 of Algorithm~\ref{alg:rebuild:new}, we set: 
$$S_j \leftarrow \textsc{Alg}(S_{j-1}, V, k+s_j). \ \ \ \   (\text{This step takes } \tilde{O}(n \cdot (s_{j-1}-s_{j})) = \tilde{O}(n \cdot s_{j-1}) \text{ time, by (P2)})$$
 Second, whenever we have to recompute the set $S_0$ in line 2 of Algorithm~\ref{alg:rebuild:new} (for $j = 0$), we run a $O(1)$-approximate static $(k+s_0)$-median algorithm, on input $(V, d)$, in $\tilde{O}(n k)$ time~\cite{MettuP02}. This gives us  the new (recomputed) set  $S_0 \subseteq V$ of size $|S_0| = k+s_0$, with  $\cl(S_0, V) \leq O(1) \cdot \Opt_{k+s_0}(V)$.

Since $\textsc{Alg}(\cdot,\cdot,\cdot)$ returns an $(\alpha, \beta)$-approximation for restricted $k$-median, we   retrace the argument in the proof of Lemma~\ref{lem:approx:key:new} to get: $\cl(S_i^i, V^i) \leq \alpha \cdot \cl(S_{i-1}^{i-1}, V^{i-1}) + \beta \cdot \Opt_k^{\mathcal{V}}(V)$, for all $i \geq 1$. Further, by the discussion in the preceding paragraph, we  have: $\cl(S_0^{0}, V^{0}) \leq O(1) \cdot \Opt_{k+s_0}(V^{0}) \leq O(1) \cdot \Opt_k^{\mathcal{V}}(V)$. Plugging these inequalities back in the proof of Corollary~\ref{cor:approx:key:new}, we obtain the desired approximation ratio. The recourse analysis stays the same as in Lemma~\ref{lem:key:recourse:new} and Corollary~\ref{cor:key:recourse:new}.

Finally, note that the update time is dominated by the calls to the subroutine $\{\textsc{Rebuild}(i)\}_{i \geq 0}$. The key observation is that for all $i \geq 0$,  there are at least $s_i$ updates to $V$ between any two successive calls to {\sc Rebuild}($i$). For consistency of notations, assume that $s_{-1} = \Theta(k)$. Accordingly, for all $i \geq 0$, a given call to {\sc Rebuild}$(i)$ takes time $\sum_{j=i}^{\ell+1} \tilde{O}(n \cdot s_{j-1}) =  \tilde{O}(n \cdot s_{i-1}/\epsilon) = \tilde{O}_{\epsilon}(n \cdot s_{i-1})$. This leads to an overall amortized update time of $\sum_{i=0}^{\ell+1} \tilde{O}_{\epsilon}(n \cdot s_{i-1}/s_i) = \tilde{O}_{\epsilon}(n \cdot k^{\epsilon})$.

\subsection{Approximation Guarantee of Local Search for Restricted $k$-Median}
\label{sec:localsearch:approx:new}

We will show that a specific variant of local search is able to satisfy the two properties summarized in Lemma~\ref{lem:informal:new} (subject to the ensuing caveat), with $\alpha = 1+\epsilon$ and $\beta = \Theta(1)$. For starters, we focus on studying the approximation guarantee of  local search for the restricted $k$-median problem, without paying any attention to its running time. Subsequently, in Section~\ref{sec:random:new}, we will explain how to achieve property (P2), by slightly modifying the local search procedure described in this section.

\medskip
\noindent {\bf The Set Up.} Recall that $V \subseteq \mathcal{V}$ is the set of input points for the  $k$-median problem. {\bf For the rest of Section~\ref{sec:solution:new}, we will write  $\cl(S)$ instead of $\cl(S, V)$, since the clustering cost of a set $S$ will always be measured w.r.t.~the points in $V$.} (Also, the rest of this section will deal with the static setting.)

Consider any two subsets $\emptyset \subset S \subset X \subseteq \mathcal{V}$, with $|S| = k$. A ``local move'' (or, equivalently, a ``swap'') w.r.t.~$S$ and $X$  is an ordered pair $(x, y) \in S \times X$. It corresponds to inserting  $y$ into $S$ and deleting $x$ from $S$. We define $\delta_S(x,y) := \cl(S + y - x) - \cl(S)$ to be the change in the clustering cost of $S$ due to this swap.  We say that $S$ is  ``locally optimal'',  w.r.t.~$X$ and $V$,  iff no local move can improve (i.e., reduce)  the clustering cost of $S$. Thus, $S$ is locally optimal iff  $\delta_S(x, y) \geq 0$ for all $(x, y) \in S \times X$. The main result in this section is summarized  below.

\begin{theorem}
\label{thm:local:new}
If $S$ is  locally optimal w.r.t.~$X$ and $V$, then  $\cl(S) \leq \cl(X) + 6 \cdot \Opt_k^{\mathcal{V}}(V)$. 
\end{theorem}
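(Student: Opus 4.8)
The plan is to run the standard $k$-median local search analysis of Arya et al.~\cite{AryaGKMMP04}, but carefully tracking the ``restriction'' that candidate centers must come from $X$. Fix an optimal solution $O \subseteq \mathcal{V}$ with $|O| = k$ and $\cl(O) = \Opt_k^{\mathcal{V}}(V)$. We cannot swap elements of $O$ directly into $S$, since $O \not\subseteq X$ in general; instead, the key trick is to work with the projection $O' := \Proj(O, X)$. By Corollary~\ref{cor:proj lemma:new}, for every point $y \in V$ we have $d(y, O') \leq d(y, X) + 2 d(y, O)$, and summing over $V$ this gives $\cl(O') \leq \cl(X) + 2 \Opt_k^{\mathcal{V}}(V)$. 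So it suffices to show that $\cl(S) \leq c_1 \cdot \cl(O') + c_2 \cdot \Opt_k^{\mathcal{V}}(V)$ for suitable constants; combining this with the projection bound and choosing the swap structure so that the ``$\cl(X)$'' term is not amplified will yield the $\cl(X) + 6\Opt$ bound. Note $|O'| \le k$, so we may pad $O'$ up to exactly $k$ points of $X$ (adding dummy centers never hurts), and relabel so $O' = \{o_1,\dots,o_k\} \subseteq X$.

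Next I would set up the usual bijection/combinatorial pairing between $S$ and $O'$ used in local search proofs. For each $o \in O'$, let $N_S(o) \subseteq S$ be the centers of $S$ that are ``closest'' to $o$ among $O'$ (i.e., capture $o$ in the natural reassignment), and build a collection of $k$ swaps $(s, o)$ with $s \in X$ (automatically, since $S \subseteq X$) such that: each $o \in O'$ appears in exactly one swap, each $s \in S$ appears in at most two swaps, and any $s$ appearing in a swap is ``lonely'' in the sense that $N_S(o) = \{s\}$ or $s$ is not the unique $S$-center capturing any point assigned to it in $O'$ — this is exactly the pairing lemma from \cite{AryaGKMMP04}. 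Since $S$ is locally optimal w.r.t.~$X$ and each candidate center $o \in O' \subseteq X$, every such swap satisfies $\delta_S(s, o) \geq 0$. For each swap $(s,o)$ I then upper-bound $\delta_S(s,o)$ by a reassignment cost: points assigned to $o$ in $O'$ get reassigned to $o$ (gaining $d(x,O') - d(x,S)$ each), and points assigned to $s$ in $S$ but not to $o$ in $O'$ get reassigned to (the $S$-center of) their $O'$-center's captor — the triangle inequality bounds this by $O(d(x,S) + d(x,O'))$. Summing the inequalities $0 \le \delta_S(s,o)$ over all $k$ swaps, and using that each point of $V$ is charged a bounded number of times (once as an ``$O'$-point'', at most twice as an ``$S$-point''), yields $0 \le \sum (\text{reassignment terms}) \le -\, \cl(S) + 3\,\cl(O') + 2\,\big(\text{cross terms}\big)$, and the cross terms telescope into more copies of $\cl(O')$ and $\cl(S)$; solving gives $\cl(S) \le 5\,\cl(O')$ (the classic $5$-approximation bound for single swaps).

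Finally I would combine the two pieces: $\cl(S) \le 5\,\cl(O') \le 5\big(\cl(X) + 2\Opt_k^{\mathcal{V}}(V)\big) = 5\,\cl(X) + 10\,\Opt_k^{\mathcal{V}}(V)$ — but this has the wrong coefficient on $\cl(X)$, so the real work is to avoid paying $5\,\cl(X)$. The fix is to not project all of $O$ at once but to interleave the projection with the local-search reassignment: when reassigning a point $x$ whose optimal center is $o \in O$, route it as $x \to \pi_S(o) \to \pi_X(\pi_S(o))$ or directly to $\pi_X(o)$, and bound $d(x, \pi_X(o)) \le d(x,o) + d(o, \pi_X(o)) \le d(x,o) + d(o, \pi_X(\pi_S(x)))$, so that the ``$X$-cost'' enters each reassignment bound with coefficient $1$ rather than being multiplied through the whole telescoping sum. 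Carrying the $\cl(X)$ term additively through the swap inequalities (it appears once per point, not once per swap, if we reassign via the point's own current $S$-center's $X$-projection), the summation yields $\cl(S) \le \cl(X) + 6\,\Opt_k^{\mathcal{V}}(V)$. \textbf{The main obstacle} is precisely this bookkeeping: making sure the $\cl(X)$ contribution is additive with coefficient exactly $1$ (not $3$ or $5$) while the $\Opt$ contribution stays a small constant — this requires choosing the reassignment routes and the swap pairing in a coordinated way, and is the one place where a naive application of \cite{AryaGKMMP04} would blow up the bound. I expect the constant $6$ to come out as $2$ (from the projection lemma) plus $4$ (from four single-swap-type charges of $\Opt$ in the reassignment), with everything else cancelling.
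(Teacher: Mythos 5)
Your overall architecture is exactly the paper's: instead of swapping in the projected optimal centers $\Proj(S^{\star},X)$ wholesale (which gives the $(5,10)$ bound you correctly identify as too lossy), you swap $\sigma(o)$ with $\pi_X(o)$ for each optimal center $o$, reuse the standard pairing $\sigma$ with $|\sigma^{-1}(\cdot)|\le 2$, and arrange the reassignment so that the $\cl(X)$ contribution is charged once per point with coefficient $1$; the paper does this by working with the benchmark $\cl(\pi)$ for $\pi=\pi_X\circ\pi_{S^{\star}}$ (Lemmas~\ref{lem:key:100:new} and~\ref{lem:key:101:new}), and your predicted decomposition $6=2+4$ is the correct one.

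There is, however, one concrete slip in the routing you wrote down, and as stated it breaks the bookkeeping. You bound $d(x,\pi_X(o)) \le d(x,o)+d(o,\pi_X(o)) \le d(x,o)+d\bigl(o,\pi_X(\pi_S(x))\bigr)$, i.e.\ you route the projection bound through the $X$-projection of the point's current $S$-center. But $S\subseteq X$, so $\pi_X(\pi_S(x))=\pi_S(x)$, and the chain collapses to $d(x,\pi_X(o)) \le 2d(x,o)+d(x,S)$. The $+d(x,S)$ exactly cancels the $-d(x,\pi_S(x))$ term in the reassignment cost, so summing the swap inequalities yields only $0 \le 2\Opt_k^{\mathcal{V}}(V)+4\Opt_k^{\mathcal{V}}(V)$, which is vacuous: you have lost the $-\cl(S)$ that the conclusion needs, and no $\cl(X)$ term ever appears. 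The correct routing (this is precisely Lemma~\ref{lem:proj lemma:new} as used in Lemma~\ref{lem:key:100:new}) goes through the point's \emph{own} projection onto $X$: $d(x,\pi_X(o)) \le d(x,o)+d(o,\pi_X(x)) \le d(x,X)+2d(x,o)$. With that bound, summing the type-1 terms over all points gives $\cl(\pi)-\cl(S) \le \cl(X)+2\Opt_k^{\mathcal{V}}(V)-\cl(S)$, the type-2 terms contribute at most $4\Opt_k^{\mathcal{V}}(V)$ (each point charged at most twice, each charge $2d(x,S^{\star})$, exactly as in the paper), and local optimality then yields $\cl(S)\le \cl(X)+6\Opt_k^{\mathcal{V}}(V)$. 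So your plan is right, but the parenthetical ``reassign via the point's own current $S$-center's $X$-projection'' must be replaced by ``via the point's own $X$-projection'' for the coefficient-$1$ accounting of $\cl(X)$ to materialize.
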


\noindent {\bf The Plan.} Henceforth, we focus on proving Theorem~\ref{thm:local:new}. The theorem  follows from Lemma~\ref{lem:key:100:new} and Lemma~\ref{lem:key:101:new}. At the end of the proof of Lemma~\ref{lem:key:101:new}, we conclude this section by deriving Corollary~\ref{cor:local:search:new},  which will be useful later on in Section~\ref{sec:random:new}.

\medskip
\noindent {\bf Notations.} Let $S^{\star} \subseteq \mathcal{V}$ be a set of $|S^{\star}| = k$ centers  that forms the optimal solution to the $k$-median problem on the given input, i.e., $\cl(S^{\star}) = \Opt_k^{\mathcal{V}}(V)$. Recall the notations introduced just before Lemma~\ref{lem:proj lemma:new}, and let $\pi := \pi_X \circ \pi_{S^{\star}}$. 
We slightly abuse notation to define $\cl(\pi) := \sum_{z \in V} d(z, \pi(z))$.

\medskip
\noindent {\bf The Main Technical Challenge.} It is relatively straightforward to show that
\begin{equation}
\label{eq:easy:new}\cl(S) \leq 5  \cdot \cl(X) + 10 \cdot \Opt_k^{\mathcal{V}}(V).
\end{equation}
Indeed, let $Y = \Proj(S^{\star}, X)$ denote the projection of $S^{\star}$ onto $X$. By Corollary~\ref{cor:proj lemma:new}, we have $\cl(Y) \leq \cl(X) + 2 \cdot \cl(S^{\star}) = \cl(X) + 2 \cdot \Opt_k^{\mathcal{V}}(V)$. Furthermore, since $Y \subseteq X$ and has size $|Y| \leq |S^{\star}| = k$,  and since $S \subseteq X$ is a locally optimal set of $k$ centers w.r.t.~$X$ and $V$, the standard analysis of local search for $k$-median~\cite{AryaGKMMP04} yields: $\cl(S) \leq 5 \cdot \cl(Y) \leq 5 \cdot \cl(X) + 10 \cdot \Opt_k^{\mathcal{V}}(V)$. Thus, this simple analysis already implies that $S$ is a $(5, 10)$-approximate solution to the restricted $k$-median problem, on input $(X, V)$. As discussed in Section~\ref{sec:challenge:new}, however, this gives an approximation guarantee for {\em our} problem that degrades exponentially with $(1/\epsilon)$, because here $\alpha = 5$. In contrast, Theorem~\ref{thm:local:new} implies that $S$ is a $(1, 6)$-approximation for restricted $k$-median, with $\alpha = 1$. Such a bound is much more amenable towards getting  an $O(1/\epsilon)$-approximation ratio for dynamic improper $k$-median.  Unsurprisingly, proving Theorem~\ref{thm:local:new} requires a lot more extra care and subtle analysis.

The first key  ingredient in the proof (of Theorem~\ref{thm:local:new}) is to consider the mapping $\pi := \pi_X \circ \pi_{S^{\star}}$, and work with the benchmark $\cl(\pi)$ as opposed to $\cl(Y)$. In particular, Theorem~\ref{thm:local:new}  follows if we add the two inequalities from  Lemma~\ref{lem:key:100:new} and Lemma~\ref{lem:key:101:new}. The second ingredient refers to our strategy for proving Lemma~\ref{lem:key:101:new}. We carry out the analysis of local search, as presented in Section 2.1 and Section 2.2 of~\cite{arxivGT2008}, with one crucial twist:  The analysis in~\cite{arxivGT2008} constructs a specific mapping $\sigma : S^{\star} \rightarrow S$,  upper bounds the change in the cost of the clustering $S$ due to the local move $(\sigma(x^{\star}), x^{\star})$, for all $x^{\star} \in S^{\star}$, and then adds up the resulting inequalities. In the proof of Lemma~\ref{lem:key:101:new}, we  construct the same mapping $\sigma : S^{\star} \rightarrow S$ and follow the same framework, but we  instead consider local moves of the form $(\sigma(x^{\star}), \pi_X(x^{\star}))$,  for all $x^{\star} \in S^{\star}$.

\begin{lemma}
\label{lem:key:100:new}
$\cl(\pi) - \cl(X) \leq 2 \cdot \Opt_k^{\mathcal{V}}(V)$.
\end{lemma}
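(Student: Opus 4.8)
The goal of Lemma~\ref{lem:key:100:new} is to bound $\cl(\pi) - \cl(X)$, where $\pi = \pi_X \circ \pi_{S^\star}$ and $\cl(\pi) := \sum_{z \in V} d(z, \pi(z))$.

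\textbf{The plan.} The key observation is that for each point $z \in V$, the quantity $d(z, \pi(z))$ is exactly $d(z, \pi_X \circ \pi_{S^\star}(z))$, and we already have a pointwise bound on this from the Projection Lemma (Lemma~\ref{lem:proj lemma:new}), applied with the role of ``$S$'' played by $S^\star$. Specifically, Lemma~\ref{lem:proj lemma:new} gives $d(z, \pi_X \circ \pi_{S^\star}(z)) \leq d(z, X) + 2 \cdot d(z, S^\star)$ for every $z \in \mathcal{V}$, and in particular for every $z \in V$. So the first (and essentially only) step is to sum this inequality over all $z \in V$:
$$\cl(\pi) = \sum_{z \in V} d(z, \pi(z)) \leq \sum_{z \in V} d(z, X) + 2 \sum_{z \in V} d(z, S^\star) = \cl(X) + 2 \cdot \cl(S^\star).$$
Then I would invoke the definition of $S^\star$ as the optimal solution, i.e.\ $\cl(S^\star) = \Opt_k^{\mathcal{V}}(V)$, to conclude $\cl(\pi) - \cl(X) \leq 2 \cdot \Opt_k^{\mathcal{V}}(V)$, which is exactly the statement.

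\textbf{On the obstacles.} Honestly, I expect there to be no real obstacle here — this lemma is a bookkeeping step that repackages the Projection Lemma in the summed-up form needed for the later, harder argument (Lemma~\ref{lem:key:101:new}). The only thing to be slightly careful about is the bookkeeping of which set plays which role in the Projection Lemma: here $X$ is the ``target'' set we project onto, and $S^\star$ is the intermediate set, so the composition is $\pi_X \circ \pi_{S^\star}$, matching $\pi_X \circ \pi_S$ in the lemma statement with $S = S^\star$. One should also note that the sum is over $z \in V$ (not $z \in \mathcal{V}$), which is fine since the pointwise inequality from Lemma~\ref{lem:proj lemma:new} holds for all $z \in \mathcal{V} \supseteq V$. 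The real technical weight of proving Theorem~\ref{thm:local:new} lives entirely in Lemma~\ref{lem:key:101:new}, where one must run the Gupta--Tangwongsan-style local search analysis but with the twisted swaps $(\sigma(x^\star), \pi_X(x^\star))$ instead of $(\sigma(x^\star), x^\star)$; Lemma~\ref{lem:key:100:new} is just the clean half that feeds into the final addition.
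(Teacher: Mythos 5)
Your proof is correct and matches the paper's argument exactly: apply Lemma~\ref{lem:proj lemma:new} with $S = S^{\star}$ to get the pointwise bound $d(z,\pi(z)) \leq d(z,X) + 2\cdot d(z,S^{\star})$, sum over $z \in V$, and use $\cl(S^{\star}) = \Opt_k^{\mathcal{V}}(V)$. Nothing further is needed.
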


\begin{proof}
Fix any $z \in V$. Lemma~\ref{lem:proj lemma:new} implies that $d(z, \pi(z)) \leq d(z, X) + 2 \cdot d(z, S^{\star})$, and so $d(z, \pi(z)) - d(z, X) \leq 2 \cdot d(z, S^{\star})$. Summing this inequality over all $z \in V$ concludes the lemma.
\end{proof}

\begin{lemma}
\label{lem:key:101:new}
$\cl(S) - \cl(\pi) \leq 4 \cdot \Opt_k^{\mathcal{V}}(V)$.
\end{lemma}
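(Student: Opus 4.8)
The plan is to follow the standard local search analysis for $k$-median (as in~\cite{AryaGKMMP04} and in the streamlined version of Section~2 of~\cite{arxivGT2008}), but with the crucial modification anticipated in the discussion above: instead of testing local moves that bring in an \emph{optimal} center $x^\star \in S^\star$, we test moves that bring in its \emph{projection} $\pi_X(x^\star) \in X$. This is what keeps the analysis ``restricted'' to $X$ while paying for the distortion only once, against the benchmark $\cl(\pi)$ rather than $\cl(S^\star)$.

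\textbf{Step 1: Build the mapping $\sigma$.} First I would recall the combinatorial setup from~\cite{arxivGT2008}. Each $x^\star \in S^\star$ is ``captured'' by the center $\pi_S(x^\star) \in S$ closest to it; a center $s \in S$ is \emph{lonely} if it captures no optimal center, and \emph{good}/\emph{bad} otherwise. One constructs a mapping $\sigma : S^\star \to S$ such that (i) each $x^\star$ is mapped either to the center of $S$ that captures it or to a lonely center, (ii) no lonely center is the image of more than one $x^\star$, and (iii) if $\sigma(x^\star) = s$ and $s$ captures $x^\star$, then $s$ is not the image of any other optimal center either. This is exactly the mapping from~\cite{arxivGT2008}; I would reuse it verbatim, since its construction does not depend on whether the swapped-in point is $x^\star$ or $\pi_X(x^\star)$.

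\textbf{Step 2: Use local optimality on the moves $(\sigma(x^\star), \pi_X(x^\star))$.} For each $x^\star \in S^\star$, local optimality of $S$ w.r.t.\ $X$ gives $\delta_S(\sigma(x^\star), \pi_X(x^\star)) \geq 0$. Now I upper-bound $\delta_S(\sigma(x^\star), \pi_X(x^\star))$ by a reassignment argument: after the swap, reassign every point $z$ with $\pi_{S^\star}(z) = x^\star$ to $\pi_X(x^\star)$ (this costs $d(z,\pi_X(x^\star)) = d(z,\pi_X\circ\pi_{S^\star}(z)) = d(z,\pi(z))$ for those points), and reassign every point $z$ currently served by $\sigma(x^\star)$ but with $\pi_{S^\star}(z) \neq x^\star$ to its second-nearest center in $S$, which by property (iii) of $\sigma$ and the triangle inequality costs at most $d(z,S) + 2 d(z,S^\star)$ — here one uses that such a $z$ is routed to $\pi_S(\pi_{S^\star}(z))$, whose distance to $z$ is bounded via $d(z,\pi_{S^\star}(z)) + d(\pi_{S^\star}(z), \pi_S(\pi_{S^\star}(z))) \le d(z,\pi_{S^\star}(z)) + d(\pi_{S^\star}(z), \sigma(x^\star)) \le \cdots$. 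Summing the resulting inequalities over all $x^\star \in S^\star$, every point $z\in V$ is charged: $+\,d(z,\pi(z)) - d(z,S)$ from the reassignment to its optimal center's projection, and $+\,2d(z,S^\star)$ at most twice (once as a point displaced from a good center, and the lonely-center accounting contributes the standard extra factor), giving something of the shape $0 \le \sum_{x^\star}\delta_S(\sigma(x^\star),\pi_X(x^\star)) \le \cl(\pi) - \cl(S) + c\cdot\Opt_k^{\mathcal V}(V)$ for a small constant $c$. Rearranging yields $\cl(S) - \cl(\pi) \le c\cdot\Opt_k^{\mathcal V}(V)$, and a careful accounting — exactly mirroring~\cite{arxivGT2008} but with $d(z,\pi(z))$ in place of $d(z,S^\star)$ for the ``in-swap'' term — gives the constant $4$.

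\textbf{Main obstacle.} The delicate point is the bookkeeping of the $2d(z,S^\star)$ terms: I must verify that replacing the swapped-in point $x^\star$ by $\pi_X(x^\star)$ does \emph{not} inflate the displacement-cost bounds for points previously served by $\sigma(x^\star)$ (since those are rerouted \emph{within} $S$, not to the new point, this should go through unchanged), and that the ``capture'' distances used in the triangle-inequality chains still only ever invoke $d(x^\star, S) \le d(x^\star, \sigma(x^\star))$ — which holds because $\sigma(x^\star)$ either captures $x^\star$ or is lonely, and in the latter case one reroutes through $\pi_S(x^\star)$ instead. I expect the entire argument to yield the clean constant $4$ precisely because the expensive $d(z,X)$-type term has been absorbed into $\cl(\pi)$ already in Lemma~\ref{lem:key:100:new}, so Lemma~\ref{lem:key:101:new} only needs to pay the ``pure'' local-search overhead measured against $\cl(\pi)$. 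Combining Lemma~\ref{lem:key:100:new} and Lemma~\ref{lem:key:101:new} then gives $\cl(S) \le \cl(X) + 6\cdot\Opt_k^{\mathcal V}(V)$, which is Theorem~\ref{thm:local:new}.
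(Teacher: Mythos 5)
Your plan is essentially the paper's proof: the same mapping $\sigma : S^{\star} \to S$ from~\cite{arxivGT2008}, the same family of swaps $(\sigma(x^{\star}), \pi_X(x^{\star}))$, the same reassignment of $\pi_{S^{\star}}^{-1}(x^{\star})$ to $\pi(z)$ and of the displaced points to $\pi_S \circ \pi_{S^{\star}}(z)$ (validity guaranteed because $\sigma$ only maps into centers capturing at most the one optimal center being swapped for), and the same accounting $0 \leq \cl(\pi) - \cl(S) + 4 \cdot \Opt_k^{\mathcal{V}}(V)$. One minor slip: your property (ii) — that no lonely center is the image of more than one optimal center — is stronger than what the construction guarantees (only $|\sigma^{-1}(x)| \leq 2$ holds, since $|S_0| \geq \tfrac{1}{2}|S^{\star} \setminus S_1^{\star}|$), but your final bookkeeping charges each point's $2d(z,S^{\star})$ term at most twice, which is exactly what the constant $4$ requires.
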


\begin{proof}
We first partition  the set $S$ into three subsets:\footnote{Here, $\pi_S^{-1}(x)=\{z \in V \mid \pi_S(z) = x\}$.} 
$$S_0 := \{ x \in S : |\pi^{-1}_S(x) \cap S^{\star}| = 0\}, \ S_1 := \{ x \in S : |\pi^{-1}_S(x) \cap S^{\star}| = 1\}, \text{ and } S_{\geq 2} :=  S \setminus (S_0 \cup S_1).$$
Let $S_1^{\star} := \{ x^{\star} \in S^{\star} : \pi_S(x^{\star}) \in S_1\}$. Observe that $S^{\star}_1 \subseteq S^{\star}$, $|S| = |S^{\star}| = k$ and $|S_1| = |S^{\star}_1|$. Thus, we get $|S^{\star} \setminus S^{\star}_1| = |S_0| + |S_{\geq 2}|$. Since $\pi_S$ assigns at least two points from $S^{\star} \setminus S^{\star}_1$ to each point in $S_{\geq 2}$, we also have $|S_{\geq 2}| \leq (1/2) \cdot |S^{\star} \setminus S^{\star}_1|$, and hence $|S_0| \geq (1/2)  \cdot |S^{\star} \setminus S^{\star}_1|$. This observation implies that we can construct a mapping $\sigma : S^{\star} \rightarrow S$ which satisfies the following properties.
\begin{eqnarray}
\label{eq:prop:1:new}
\sigma(x^{\star}) & = & \pi_S(x^{\star}) \in S_1 \  \text{ for all } x^{\star} \in S^{\star}_1. \\
\label{eq:prop:2:new}
\sigma(x^{\star}) & \in & S_0 \qquad  \qquad \ \, \text{ for all } x^{\star} \in S^{\star} \setminus S^{\star}_1. \\
\label{eq:prop:3:new}
|\sigma^{-1}(x)| & \leq & 2 \qquad \qquad \ \ \ \text{ for all } x \in S.
\end{eqnarray}
W.l.o.g., suppose that $S^{\star} = \{x^{\star}_1, \ldots, x^{\star}_k\}$; and $x_i = \sigma(x^{\star}_i) \in S$  and $x'_i = \pi_X(x^{\star}_i)$ for all $i \in [k]$. Consider the following collection of local moves $\mathcal{M} := \{(x_1, x'_1), \ldots, (x_k, x'_k) \}$. 

\begin{corollary}
    \label{cor:swap:new}
    For any two distinct indices $i, j \in [k]$, we have $\pi_S(x^{\star}_j) \neq x_i$. 
\end{corollary}

\begin{proof}
Since $\sigma(x^{\star}_i) = x_i$, from~(\ref{eq:prop:1:new}) and~(\ref{eq:prop:2:new}) we infer that $x_i \in S_0 \cup S_1$. If $x_i \in S_1$, then $x^{\star}_i$ is the unique point  $x^{\star} \in S^{\star}$ with $\pi_S(x^{\star}) = x_i$, and hence $\pi_S(x^{\star}_j) \neq x_i$. Otherwise, if $x_i \in S_0$, then there is no point $x^{\star} \in S^{\star}$ with $\pi_S(x^{\star}) = x_i$, and hence $\pi_S(x^{\star}_j) \neq x_i$.
\end{proof}

Since $S$ is locally optimal w.r.t.~$X$ and $V$, we have:
\begin{equation}
\label{eq:swap:1:new}
\delta_S(x_i, x'_i) \geq 0 \text{ for all } i \in [k], \text{ and hence } 0 \leq \sum_{i=1}^k \delta_S(x_i, x'_i).
\end{equation}
Fix any $i \in [k]$. We will  next derive an upper bound on $\delta_S(x_i, x'_i)$. Towards this end, let $S^+ := S+x'_i - x_i$ be the set of centers {\em after} we make this local move. In the clustering given by $S$ (resp.~$S^{\star}$), every point $z \in V$ is assigned to a center $\pi_S(z) \in S$ (resp.~$\pi_{S^{\star}}(z) \in S^{\star}$).  Suppose that as we switch from $S$ to $S^{+}$, we change the assignment of points in $V$ to the centers as follows. 
\begin{wrapper}
\begin{enumerate}
\item Each point $z \in \pi_{S^{\star}}^{-1}(x^{\star}_i) \cap V$ gets reassigned to the center $x'_i$. 

This increases the distance between $z$ and its assigned center by $d(z, x'_i) - d(z, \pi_S(z))$. 
\item Each point $z \in (\pi_{S}^{-1}(x_i) \cap V) \setminus (\pi_{S^{\star}}^{-1}(x^{\star}_i) \cap V)$ gets reassigned to the center $\pi_S \circ \pi_{S^{\star}}(z)$.

(In this case, Corollary~\ref{cor:swap:new} guarantees that $\pi_S \circ \pi_{S^{\star}}(z) \in S \setminus \{x_i\}$.) 

This increases the distance between $z$ and its assigned center by: 
$$d(z, \pi_S \circ \pi_{S^{\star}}(z)) - d(z, \pi_S(z)) \leq 2 \cdot d(z, \pi_{S^{\star}}(z)),$$
where the last inequality follows from the proof of Lemma~\ref{lem:proj lemma:new}.
\item Every other point $z \in V \setminus (\pi_{S}^{-1}(x_i) \cup \pi_{S^{\star}}^{-1}(x^{\star}_i))$ remains assigned to its old center $\pi_S(z)$.

This increases the distance between $z$ and its assigned center by $0$.
\end{enumerate}
\end{wrapper}

\noindent
Thus, due to the above reassignments, the sum of the distances from the points in $V$ to their assigned centers increases by at most: 
$$\sum_{z \in \pi_{S^{\star}}^{-1}(x^{\star}_i) \cap V} \left\{ d(z, x'_i) - d(z, \pi_S(z)) \right\} + \sum_{z \in \pi_{S}^{-1}(x_i) \cap V} 2 \cdot d(z, \pi_{S^{\star}}(z)) = T_i \text{ (say)}.$$ 
It is easy to see that $\delta_S(x_i, x'_i) \leq T_i$. Summing this inequality over all $i \in [k]$, we now get:
\begin{eqnarray}
\nonumber 
\sum_{i=1}^k \delta_S(x_i, x'_i) & \leq & \sum_{i=1}^k \sum_{z \in \pi_{S^{\star}}^{-1}(x^{\star}_i) \cap V} \left\{ d(z, x'_i) - d(z, \pi_S(z)) \right\} + \sum_{i=1}^k \sum_{z \in \pi_{S}^{-1}(x_i) \cap V} 2 \cdot d(z, \pi_{S^{\star}}(z)) \\
& = &  \cl(\pi) - \cl(S) + \sum_{i=1}^k \sum_{z \in \pi_{S}^{-1}(x_i) \cap V} 2 \cdot d(z, \pi_{S^{\star}}(z))  \label{eq:derive:1:new} \\
& \leq & \cl(\pi) - \cl(S) + 2 \cdot 2 \cdot \cl(S^{\star}) \label{eq:derive:2:new} \\
& = & \cl(\pi) - \cl(S) + 4 \cdot \Opt_k^{\mathcal{V}}(V). \label{eq:derive:3:new}
\end{eqnarray}
Equality~(\ref{eq:derive:1:new}) holds because the sets $\{ \pi^{-1}_{S^{\star}}(x^{\star}_i)\cap V\}_{i \in [k]}$ partition  $V$, and because $\pi(z) = \pi_X \circ \pi_{S^{\star}}(z) = \pi_X(x^{\star}_i) = x'_i$ for all $z \in \pi_{S^{\star}}^{-1}(x^{\star}_i) \cap V$. Inequality~(\ref{eq:derive:2:new}) follows from~(\ref{eq:prop:3:new}), because the latter implies that for every $x \in S$ there are {\em at most} two indices $i \in [k]$ such that $x_i = x$. Combining~(\ref{eq:derive:3:new}) with~(\ref{eq:swap:1:new}), we get $\cl(\pi) - \cl(S) + 4 \cdot \Opt_k^{\mathcal{V}}(V) \geq 0$, and hence $\cl(\pi) - \cl(S) \leq  4 \cdot \Opt_k^{\mathcal{V}}(V)$. 
\end{proof}


\begin{corollary}
\label{cor:local:search:new}
Consider any two sets $\emptyset \subset S \subset X \subseteq \mathcal{V}$, with $|S| = k$. If $(1-\epsilon) \cdot \cl(S) \geq \cl(X) + 6  \cdot \Opt_k^{\mathcal{V}}(V)$, then there is a collection  $\mathcal{M} = \{(x_1, x'_1), \ldots, (x_k, x'_k) \}$ of local moves w.r.t.~$S$ and $X$, such that: $\sum_{i=1}^k \delta_S(x_i, x'_i) \leq - \epsilon \cdot \cl(S)$. 
\end{corollary}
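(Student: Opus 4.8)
The plan is to derive Corollary~\ref{cor:local:search:new} by revisiting the proof of Lemma~\ref{lem:key:101:new}, but now without assuming that $S$ is locally optimal. The key observation is that the proof of Lemma~\ref{lem:key:101:new} never used local optimality to \emph{construct} the collection of local moves $\mathcal{M} = \{(x_1, x_1'), \ldots, (x_k, x_k')\}$ — it used the combinatorial structure of the map $\sigma: S^\star \to S$ (built purely from the projections $\pi_S$ restricted to $S^\star$) together with $x_i' = \pi_X(x_i^\star)$. Local optimality was only invoked at the very end, in~(\ref{eq:swap:1:new}), to assert $\sum_i \delta_S(x_i, x_i') \geq 0$. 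So I would simply extract the \emph{unconditional} inequality that the rest of that proof establishes, namely
\begin{equation*}
\sum_{i=1}^k \delta_S(x_i, x_i') \leq \cl(\pi) - \cl(S) + 4 \cdot \Opt_k^{\mathcal{V}}(V),
\end{equation*}
which is exactly the chain~(\ref{eq:derive:1:new})--(\ref{eq:derive:3:new}) and holds for \emph{any} $\emptyset \subset S \subset X \subseteq \mathcal{V}$ with $|S| = k$, with $\mathcal{M}$ the collection defined there.

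Next I would combine this with the bound on $\cl(\pi)$ from Lemma~\ref{lem:key:100:new}, which says $\cl(\pi) \leq \cl(X) + 2 \cdot \Opt_k^{\mathcal{V}}(V)$ and is also completely unconditional. Substituting, we get
\begin{equation*}
\sum_{i=1}^k \delta_S(x_i, x_i') \leq \cl(X) + 6 \cdot \Opt_k^{\mathcal{V}}(V) - \cl(S).
\end{equation*}
Now I invoke the hypothesis of the corollary: $(1-\epsilon) \cdot \cl(S) \geq \cl(X) + 6 \cdot \Opt_k^{\mathcal{V}}(V)$, i.e.\ $\cl(X) + 6 \cdot \Opt_k^{\mathcal{V}}(V) - \cl(S) \leq (1-\epsilon)\cl(S) - \cl(S) = -\epsilon \cdot \cl(S)$. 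Chaining the two displays yields $\sum_{i=1}^k \delta_S(x_i, x_i') \leq -\epsilon \cdot \cl(S)$, which is precisely the claim.

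The main thing to be careful about — and really the only subtle point — is making sure the construction of $\mathcal{M}$ in the proof of Lemma~\ref{lem:key:101:new} genuinely does not secretly rely on $S$ being locally optimal, and that all the reassignment-cost estimates (steps 1--3 in the boxed display, the use of Corollary~\ref{cor:swap:new}, the partition property giving~(\ref{eq:derive:1:new}), and the factor-$2$ overlap bound~(\ref{eq:prop:3:new}) giving~(\ref{eq:derive:2:new})) are purely structural facts about $\pi_S$, $\pi_{S^\star}$, $\pi_X$, $\sigma$, and the triangle inequality. A quick audit confirms this: $\sigma$ is built solely from cardinalities of $\pi_S^{-1}(x) \cap S^\star$, Corollary~\ref{cor:swap:new} follows from the defining properties~(\ref{eq:prop:1:new})--(\ref{eq:prop:2:new}) of $\sigma$, and the cost bound $\delta_S(x_i, x_i') \leq T_i$ holds because exhibiting \emph{any} valid reassignment of points to the center set $S^+ = S + x_i' - x_i$ upper-bounds $\cl(S^+)$, hence $\delta_S(x_i,x_i')$. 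So no obstacle remains; the corollary is essentially a repackaging of the mechanics already developed, with the final inequality direction flipped by the hypothesis instead of by local optimality. I would therefore state it as: ``Define $\mathcal{M}$ exactly as in the proof of Lemma~\ref{lem:key:101:new}; that proof shows (without using local optimality) that $\sum_i \delta_S(x_i,x_i') \leq \cl(\pi) - \cl(S) + 4\Opt_k^{\mathcal{V}}(V)$, and combined with Lemma~\ref{lem:key:100:new} and the hypothesis, the claim follows.''
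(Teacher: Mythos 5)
Your proposal is correct and follows exactly the paper's own argument: define $\mathcal{M}$ as in the proof of Lemma~\ref{lem:key:101:new}, observe that the chain up to inequality~(\ref{eq:derive:3:new}) never uses local optimality, and combine it with Lemma~\ref{lem:key:100:new} and the hypothesis to obtain $\sum_{i=1}^k \delta_S(x_i, x'_i) \leq -\epsilon \cdot \cl(S)$. Your extra audit that the construction of $\sigma$ and the reassignment bounds are purely structural is a sensible sanity check but adds nothing beyond what the paper's proof already relies on.
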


\begin{proof}
We define the collection $\mathcal{M}$  as in the proof of Lemma~\ref{lem:key:101:new}. From Lemma~\ref{lem:key:100:new} and the proof of Lemma~\ref{lem:key:101:new} (in particular, from inequality~(\ref{eq:derive:3:new})), we get:
$$\sum_{i=1}^k \delta_S(x_i, x'_i) \leq \cl(\pi) - \cl(S) + 4 \cdot \Opt_k^{\mathcal{V}}(V) 
 \leq \cl(X) + 6 \cdot \Opt_k^{\mathcal{V}}(V) - \cl(S) \leq -\epsilon \cdot \cl(S).$$
 This concludes the proof.
\end{proof}

\subsection{Achieving Fast Update Time via Randomized Local Search}
\label{sec:random:new}

In this section, we derive a fast implementation of the algorithm from Section~\ref{sec:localsearch:approx:new}. The procedure is summarized in Algorithm~\ref{alg:local:new}, and we refer to it as ``randomized local search''. The set up is  the same as in Section~\ref{sec:localsearch:approx:new} (recall the notations introduced just before the statement of Theorem~\ref{thm:local:new}). We will show how to efficiently compute a set of points $S \subseteq X$ of size $|S| = k$, such that $\cl(S) \leq (1-\epsilon)^{-1} \cdot \cl(X) + 6(1-\epsilon)^{-1} \cdot \Opt_k^{\mathcal{V}}(V)$. 

The randomized local search starts with an arbitrary subset $S \subseteq X$ of size $|S| = k$, and runs for $\tilde{\Theta}_{\epsilon}(s)$ iterations.\footnote{The notation $\tilde{\Theta}_{\epsilon}(\cdot)$ hides $\text{poly}(1/\epsilon)$ and $\text{polylog}(n, \Delta)$ factors, where $\Delta$ is the aspect ratio of the metric space.} In each iteration, we simply pick a point $y \in X \setminus S$ uniformly at random, and perform the best possible local move (w.r.t.~$X$ and $V$) involving $y$, with one important caveat: We ensure that the objective $\cl(S)$ never increases during an iteration. This means that we pick an $x \in S \cup \{y\}$ which minimizes $\cl(S+y-x) - \cl(S)$, and set $S \leftarrow S +y - x$. Since we could potentially set $x = y$, this ensures that line 5 of Algorithm~\ref{alg:local:new} returns an $x^{\star}$ such that $\cl(S+y-x^{\star}) - \cl(S) \leq 0$. 

\medskip
\noindent {\bf The Plan.} The approximation ratio of randomized local search  is summarized in Lemma~\ref{lem:random:new}. We analyze its runtime in Lemma~\ref{lem:local:runtime:new}. We conclude this section by putting everything back together, and by pointing out how this leads us to  Theorem~\ref{thm:solution:main} for the dynamic improper $k$-median problem.

\smallskip \

\begin{algorithm}[H]\label{alg:local:new}
    \SetAlgoLined
    \DontPrintSemicolon
    Consider any arbitrary subset $S \subset X$ of points, of size $|S| = k$.\;
    Let $s := |X| - |S|$. \;
    \For{$\tilde{\Theta}_{\epsilon}(s)$ \textup{\textbf{iterations}}}{
        Sample a point $y \sim X \setminus S$ independently and u.a.r.\;
        $x^{\star} \leftarrow \arg\min_{x \in S+y} \{\cl(S+y-x) - \cl(S) \}$\;
        $S \leftarrow S + y - x^{\star}$\;
    }
    \Return{$S$}
    \caption{\RandLoc$(X, V, k)$, where $X, V \subseteq \mathcal{V}$ and $|X| > k$}
\end{algorithm}


 \begin{lemma}
\label{lem:random:new}
W.h.p.,  $\cl(S) \leq (1-\epsilon)^{-1} \cdot \cl(X) + 6(1-\epsilon)^{-1} \cdot \Opt_k^{\mathcal{V}}(V)$ when Algorithm~\ref{alg:local:new} terminates.
 \end{lemma}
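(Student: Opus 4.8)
Write $\Phi^\star := (1-\epsilon)^{-1}\bigl(\cl(X) + 6\cdot\Opt_k^{\mathcal{V}}(V)\bigr)$; crucially, $\Phi^\star$ depends only on the (fixed) input $(X,V,k)$ and not on the random choices of Algorithm~\ref{alg:local:new}. Because line~5 is allowed to choose $x^\star = y$, we have $\cl(S+y-x^\star)-\cl(S)\le 0$ in every iteration, so $\cl(S)$ is non-increasing over time. Hence it suffices to show that, w.h.p., some iterate $S$ with $\cl(S)\le\Phi^\star$ is reached within $\tilde\Theta_\epsilon(s)$ iterations. The plan is to establish a \emph{per-step progress bound}: if at the start of an iteration $\cl(S)>\Phi^\star$, and $S'$ denotes the set at the end of that iteration, then
\[
\mathbb{E}\bigl[\cl(S)-\cl(S')\bigr]\;\ge\;\frac{\epsilon^{O(1)}}{s}\cdot\cl(S),
\]
and then to boost this to a high-probability statement by concentration over many iterations.

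\textbf{Per-step progress.} Suppose $\cl(S)>\Phi^\star$, i.e.~$(1-\epsilon)\cl(S)>\cl(X)+6\cdot\Opt_k^{\mathcal{V}}(V)$. Then Corollary~\ref{cor:local:search:new} hands us a collection $\mathcal{M}=\{(x_1,x_1'),\dots,(x_k,x_k')\}$ of local moves w.r.t.~$S$ and $X$ with $\sum_{i=1}^k\delta_S(x_i,x_i')\le-\epsilon\cdot\cl(S)$. Any move with $x_i'\in S$ satisfies $\delta_S(x_i,x_i')=\cl(S-x_i)-\cl(S)\ge 0$, so discarding such moves only decreases the sum; letting $I:=\{\,i:x_i'\in X\setminus S\,\}$ we keep $\sum_{i\in I}\delta_S(x_i,x_i')\le-\epsilon\cdot\cl(S)$. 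For $v\in X\setminus S$ let $b(v):=\max\bigl(0,\ \max_{x\in S}(\cl(S)-\cl(S+v-x))\bigr)\ge 0$ be the improvement realised when the point sampled in the iteration is $v$; since $x_i\in S$ we have $b(x_i')\ge-\delta_S(x_i,x_i')$ for every $i\in I$, and therefore $\sum_{i\in I}b(x_i')\ge\epsilon\cdot\cl(S)$. As the sampled point is uniform over the $s$ elements of $X\setminus S$, $\mathbb{E}[\cl(S)-\cl(S')]=\tfrac{1}{s}\sum_{v\in X\setminus S}b(v)$, so it remains to prove $\sum_{v\in X\setminus S}b(v)\ge\epsilon^{O(1)}\cdot\cl(S)$. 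This is \emph{not} immediate from $\sum_{i\in I}b(x_i')\ge\epsilon\cl(S)$, because many of the $k$ beneficial moves may share a common target $x_i'=v$, in which case one sample of $v$ only realises $b(v)$, not the sum over those moves. The resolution is to split $I$ by the multiplicity $m_v:=|\{i\in I:x_i'=v\}|$. For targets of bounded multiplicity the total gain $\epsilon\cl(S)$ is spread over at most $s$ distinct values with only $O(1)$ repetition each, so a trivial averaging gives a constant-fraction bound there. For a target $v$ hit with large multiplicity, many optimal centers project (under $\pi_X$) onto the single point $v$; repeating the Projection-Lemma bookkeeping of the proof of Lemma~\ref{lem:key:101:new} for this cluster of optimal centers shows that opening $v$ (and removing a suitably chosen center) already pushes the cost below a constant multiple of $\cl(X)+\Opt_k^{\mathcal{V}}(V)$, so that $b(v)=\Omega\bigl(\cl(S)-O(\cl(X)+\Opt_k^{\mathcal{V}}(V))\bigr)=\Omega(\epsilon\cl(S))$ on its own. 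Combining the two cases yields $\sum_{v\in X\setminus S}b(v)\ge\epsilon^{O(1)}\cdot\cl(S)$, and hence the per-step bound. (When $|X|=O(1/\epsilon)$ we can afford to inspect all swaps exhaustively, which fits within the $\tilde\Theta_\epsilon(\cdot)$ iteration budget, so this case needs no separate argument.)

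\textbf{From per-step progress to a high-probability guarantee.} Run the process and track the gap $g_t:=\max(0,\cl(S_t)-\Phi^\star)$; it is non-increasing, and once it hits $0$ we are done for good. Partition the $\tilde\Theta_\epsilon(s)$ iterations into $\Theta(\log(\Delta n))$ blocks of length $T_0=\Theta\bigl(s\cdot\epsilon^{-O(1)}\log(\Delta n)\bigr)$. We claim that in each block, w.h.p., either $g$ reaches $0$ or it at least halves. Indeed, as long as $g$ has not yet halved within the block we have $\cl(S_t)=\Theta(g_t+\Phi^\star)\ge\Theta(g_t)$, so each step decreases the cost by at least $\epsilon^{O(1)}g_{\mathrm{block-start}}/s$ in expectation; the per-step decreases are non-negative and bounded, so an Azuma/Chernoff bound on their sum over the $T_0$ steps of the block shows the cumulative decrease is $\Omega(g_{\mathrm{block-start}})$ with probability $1-1/\mathrm{poly}(\Delta n)$. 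Since all pairwise distances lie in a range of width $\mathrm{poly}(\Delta)$, the initial gap is at most $\mathrm{poly}(\Delta n)$ and any nonzero gap is at least $1/\mathrm{poly}(\Delta n)$, so after $\Theta(\log(\Delta n))$ halvings the gap is forced to $0$. A union bound over the $\Theta(\log(\Delta n))$ blocks completes the proof: w.h.p.~at termination $\cl(S)\le\Phi^\star=(1-\epsilon)^{-1}\cl(X)+6(1-\epsilon)^{-1}\Opt_k^{\mathcal{V}}(V)$.

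\textbf{Main obstacle.} The delicate part is the per-step bound --- precisely, controlling the loss when several of the $k$ beneficial swaps of Corollary~\ref{cor:local:search:new} point at the same non-center. This is exactly where the fine-grained analysis of Section~\ref{sec:localsearch:approx:new} has to be pushed further (via the projection bookkeeping of Lemma~\ref{lem:key:101:new} applied to a cluster of optimal centers), and where the $\mathrm{poly}(1/\epsilon)$ factor hidden inside $\tilde\Theta_\epsilon(\cdot)$ is pinned down. Everything else --- monotonicity of $\cl(S)$, the blockwise Chernoff argument, and the aspect-ratio-based bound on the number of halvings --- is routine.
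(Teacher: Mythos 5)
Your skeleton matches the paper's: while $(1-\epsilon)\cdot\cl(S) \ge \cl(X) + 6\cdot\Opt_k^{\mathcal V}(V)$, the sampled swap decreases $\cl(S)$ by a $(1-\Theta(\epsilon)/s)$ factor in expectation, and monotonicity plus iteration finishes the job. But both places where you depart from that outline have genuine problems. In the per-step bound, your case~(b) is false as stated: a target $v\in X\setminus S$ hit with large multiplicity need not satisfy $b(v)=\Omega(\epsilon\cdot\cl(S))$, because the optimal centers projecting onto $v$ may serve only a negligible part of $V$ (say, many co-located optimal centers each serving a single nearby point); a single swap inserting $v$ leaves the cost of every other point under $S$ untouched, so opening $v$ does not push the cost anywhere near $O(\cl(X)+\Opt_k^{\mathcal V}(V))$. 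Since such targets can nonetheless carry most of the prescribed gain, your two cases do not cover all configurations. You are right that the summation needs care when several moves share a target, but the repair is not a multiplicity dichotomy; it is a grouping argument on the \emph{upper bounds} $T_i$ from the proof of Lemma~\ref{lem:key:101:new}: for a fixed target $v$, the single swap that inserts $v$ and deletes one $x_{i_0}$ with $\pi_X(x^\star_{i_0})=v$ can simultaneously reassign every cluster $\pi_{S^\star}^{-1}(x^\star_i)$ with $\pi_X(x^\star_i)=v$ to $v$, while paying the ``orphan'' term of only the one deleted center; as the orphan terms are nonnegative, $\delta_S(v) \le \sum_{i\,:\,x'_i=v} T_i$, and summing over distinct targets (targets lying in $S$ contribute nonnegatively) gives $\sum_{y\in X\setminus S}\delta_S(y) \le \sum_{i=1}^k T_i \le -\epsilon\cdot\cl(S)$ with no loss. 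This is exactly the inequality the paper's proof uses, and it yields the clean $\epsilon/s$ expected decay rather than your $\epsilon^{O(1)}/s$.

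Your high-probability step also fails as written. An Azuma/Chernoff bound over a block controls deviations at scale (maximum per-step decrease) times $\sqrt{T_0}$, and a single step can decrease the cost by an amount comparable to the entire gap, so this deviation swamps the expected block progress and no ``halve w.h.p.''~conclusion follows. Moreover, the claim that ``any nonzero gap is at least $1/\mathrm{poly}(\Delta n)$'' is false: the gap is a difference of sums of real distances and can be arbitrarily small, so you cannot force it to exactly $0$ by counting halvings (and your stated target is exactly $\Phi^\star$). The paper instead works with the cost itself: conditioned on the history, $\mathbb E[\cl(S_t)] \le (1-\epsilon/s)\cdot\cl(S_{t-1})$ whenever the threshold is exceeded; unrolling this over a phase of length $\Theta\bigl((s/\epsilon)\log(n\Delta/\epsilon)\bigr)$ and using the $\mathrm{poly}(n\Delta)$ bound on the initial cost shows the expected cost at the end of a phase is at most $(1+\epsilon)$ times the threshold, Markov then gives a constant failure probability per phase, and monotonicity of $\cl(S)$ chains $\Theta(\log n/\epsilon)$ phases so the overall failure probability is $n^{-c}$. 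Note this only lands within a $(1+O(\epsilon))$ factor of $\bigl(\cl(X)+6\cdot\Opt_k^{\mathcal V}(V)\bigr)/(1-\epsilon)$, which the lemma's constants absorb; your block argument should likewise target a multiplicative slack rather than a zero gap.
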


 \begin{proof}(Sketch)
Given any point $y \in \mathcal{V}$, define $\delta_S(y) := \min_{x \in S+y} \{\cl(S+y-x) - \cl(S)\}$. Next, suppose that at the start of a  given iteration of the {\bf for} loop in Algorithm~\ref{alg:local:new}, we have:   
\begin{equation}
\label{eq:random:1}
(1-\epsilon) \cdot \cl(S) \geq  \cl(X) + 6  \cdot \Opt_k^{\mathcal{V}}(V).
\end{equation}
Then, by Corollary~\ref{cor:local:search:new}, we  have: $\sum_{i=1}^k \delta_S(x_i, x'_i) \leq -\epsilon \cdot \cl(S)$. Further, note that $\sum_{y \in X \setminus S} \delta_S(y) = \sum_{y \in X} \delta_S(y) \leq \sum_{i=1}^k \delta_S(x'_i) \leq \sum_{i=1}^k \delta_S(x_i, x'_i)$; where the first step holds because $\delta_S(y) = 0$ for all $y \in S$, and the second step holds because $\delta_S(y) \leq 0$ for all $y \in X$. Putting all these observations together, we conclude that $\sum_{y \in X \setminus S} \delta_S(y) \leq -\epsilon \cdot \cl(S)$, and hence:
\begin{equation}
\label{eq:random:2}
\mathbb{E}_{y \sim X \setminus S}[\delta_S(y)] \leq -(\epsilon/s) \cdot \cl(S).
\end{equation}
The LHS of~(\ref{eq:random:2}) is the expected (additive) change in the clustering cost of $S$ due to the concerned iteration of the {\bf for} loop. Thus,  as long as~(\ref{eq:random:1}) holds, the next iteration of the {\bf for} loop in Algorithm~\ref{alg:local:new} decreases the value of $\cl(S)$ by a multiplicative factor of $(1-\epsilon/s)$, in expectation. From here, it is relatively straightforward to argue that after $\tilde{\Theta}_{\epsilon}(s)$ iterations of the {\bf for} loop, the value of $\cl(S)$ drops by such an extent that~(\ref{eq:random:1}) no longer holds, w.h.p. This concludes the proof.
 \end{proof}

 We next turn towards analyzing the runtime of randomized local search. As per the caveat  after the statement of Lemma~\ref{lem:informal:new}, we first describe a relevant auxiliary data structure.

 \medskip
 \noindent {\bf Auxiliary Data Structure.} For every $z \in V$, consider an ordered list $L_z$ of the points $x \in X$, sorted in non-decreasing order of their distances from $z$. For $i \in [n]$, let $L_z(i) \in X$ denote the point in the $i^{th}$ position of the list $L_z$ (i.e., the $i^{th}$ closest neighbor of $z$ in $X$).   These lists $\mathcal{L} = \{ L_z\}_{z \in V}$ are stored as balanced search trees, and we are given $\mathcal{L}$ for {\em free}, just before calling Algorithm~\ref{alg:local:new}. Under this assumption, the lemma below summarizes the runtime of the algorithm.

 \begin{lemma}
\label{lem:local:runtime:new}
Excluding the time taken to prepare the auxiliary data structure $\mathcal{L}$, Algorithm~\ref{alg:local:new} can be implemented in $\tilde{O}_{\epsilon}(n s)$ time. The algorithm modifies the lists in $\mathcal{L}$ during its execution. Before it terminates, however, the algorithm restores the data structure $\mathcal{L}$ to its initial state. 
 \end{lemma}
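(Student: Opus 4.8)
\textbf{Proof plan for Lemma~\ref{lem:local:runtime:new}.}

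The plan is to maintain, for every point $z \in V$, its nearest and second-nearest centers in the current solution $S$ together with the corresponding distances $d_1(z) := d(z,S)$ and $d_2(z)$, and to keep these up to date under the single-swap changes made in each iteration of Algorithm~\ref{alg:local:new}. The data-structural trick that makes this fast is to \emph{trim} each list $L_z$ down so that it contains exactly the points of the current set $S$ (still sorted by distance from $z$): then $d_1(z),d_2(z)$ are simply the distances to the first two elements of $L_z$, and a swap $S \leftarrow S + y - x^{\star}$ updates $L_z$ by a single deletion of $x^{\star}$ and a single insertion of $y$. Before the main loop, we delete from every $L_z$ the $s = |X|-k$ points of $X \setminus S$; this costs $O(\log |X|)$ per (list, point) pair, i.e.\ $\tilde{O}(n s)$ in total. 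We also store $S$ and $X \setminus S$ in structures supporting $O(\log n)$-time membership queries, insertions, deletions and uniform sampling, and we cache $c_1(z),d_1(z),d_2(z)$ explicitly.

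For a single iteration with a fixed sampled point $y \in X \setminus S$, the key observation is that \emph{all} $k$ swaps $(x,y)$ with $x \in S$ can be evaluated in $O(n)$ time. Writing $G := \sum_{z \in V} \min\bigl(0,\, d(z,y) - d_1(z)\bigr)$ and, for each $x \in S$, $T[x] := \sum_{z\, :\, c_1(z) = x} \max\bigl(0,\, \min(d_2(z), d(z,y)) - d_1(z)\bigr)$, a short case analysis (depending on whether $d(z,y) \ge d_1(z)$) shows that $\cl(S + y - x) - \cl(S) = G + T[x]$ for every $x \in S$, while the null swap $x = y$ contributes $0$. Both $G$ and all the values $\{T[x]\}_{x \in S}$ are computed in a single pass over $V$ using the cached quantities (distances $d(z,y)$ come from the oracle in $O(1)$ time), so line 5 of Algorithm~\ref{alg:local:new} takes $O(n + k) = O(n)$ time. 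Once $x^{\star}$ is found and the swap is performed, we update the state by, for each $z \in V$, deleting $x^{\star}$ from $L_z$, inserting $y$ into $L_z$, and re-reading its top two elements to refresh $c_1(z),d_1(z),d_2(z)$; this costs $O(\log n)$ per point, hence $O(n \log n)$ per iteration. We also update $S$ and $X \setminus S$ in $O(\log n)$ time.

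Summing over the $\tilde{\Theta}_{\epsilon}(s)$ iterations gives $\tilde{O}_{\epsilon}(n s)$ for the main loop, which dominates the $\tilde{O}(ns)$ setup cost. Finally, when the loop ends, each $L_z$ holds exactly the final set $S$; to restore $\mathcal{L}$ to its initial state we re-insert into every $L_z$ the points of $X \setminus S$ (still only $s$ of them, and the same set for all $z$), again at a cost of $\tilde{O}(ns)$. The total is $\tilde{O}_{\epsilon}(ns)$, as claimed. The main obstacle in this argument is the efficient maintenance step: a naive implementation that keeps $L_z$ over all of $X$ and walks past non-centers to recompute nearest/second-nearest centers after removing $x^{\star}$ would cost $\Theta(ns)$ per iteration (one walk of length up to $s$ for each of the up to $\Theta(n)$ points whose nearest or second-nearest center was $x^{\star}$), blowing the bound up to $\tilde{O}_{\epsilon}(ns^2)$; it is precisely the trimming of $L_z$ to the current $S$ — so that each swap is a single delete/insert pair per list — that removes this extra factor of $s$, and one must also check that the evaluation of all swaps for a fixed $y$ genuinely reduces to the $O(n)$-time computation of $G$ and $\{T[x]\}_{x\in S}$ above.
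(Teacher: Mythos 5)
Your proof is correct and follows essentially the same route as the paper's: trim every list $L_z$ down to the current $S$ in $\tilde{O}(ns)$ time, evaluate all swaps involving the sampled $y$ in $\tilde{O}(n)$ time per iteration using nearest/second-nearest center information, maintain the lists with one insertion/deletion per list per swap, and re-insert $X \setminus S$ at the end to restore $\mathcal{L}$. The only cosmetic difference is that the paper temporarily inserts $y$ into every list and measures $\lambda_x = \cl(S+y-x)-\cl(S+y)$ via the sets $\Gamma(x)$, whereas you keep the lists at $S$ and use the closed-form $G + T[x]$ decomposition of $\cl(S+y-x)-\cl(S)$; both yield the same $\tilde{O}(n)$ cost per iteration and the same overall bound.
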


 \begin{proof}(Sketch)
After implementing line 1 (in Algorithm~\ref{alg:local:new}), we delete every point $x \in X \setminus S$ from every list in $\mathcal{L}$. Since $|X \setminus S| = s$ and there are $n = |V|$ lists in $\mathcal{L}$, this takes $\tilde{O}(ns)$ time. Accordingly, now the point $L_z(i)$  is  the $i^{th}$ closest neighbor of $z \in V$ {\em in $S$} (as opposed to in $X$).  

We next outline how to implement lines 5, 6 (within the {\bf for} loop in Algorithm~\ref{alg:local:new}). We start by inserting the point $y$ into appropriate positions of all the lists in $\mathcal{L}$. This takes $\tilde{O}(n)$ time. Next, we calculate the value  $\cl(S+y) = \sum_{z \in V} d(z, L_z(1))$ in $\tilde{O}(n)$ time, by scanning through all the points $z \in V$ and querying the list $L_z$ to obtain the point $L_z(1)$. Our next task is to calculate the value  $\cl(S+y-x) - \cl(S+y) = \lambda_x$ (say), for all $x \in S+y$. In the paragraph below, we show how to calculate these values $\{\lambda_x\}_{x \in S+y}$, altogether in $\tilde{O}(|S+y| + n) = \tilde{O}(k+n) = \tilde{O}(n)$ time.

For all $x \in S+y$, define $\Gamma(x) := \{ z \in V : L_z(1) = x\}$. We  compute these sets $\{ \Gamma(x)\}_{x \in S+y}$, altogether in $\tilde{O}(n)$ time, by scanning through all $z \in V$ and adding $z$ to  $\Gamma(L_z(1))$. Now, note that for all $x \in S+y$, we have $\lambda_x := \sum_{z \in \Gamma(x)} \left( d(z, L_z(2)) - d(z, x)\right)$. This is because if we delete the center $x$ from $S$, then the point $z$ will get reassigned to its $2^{nd}$ closest neighbor in $S$, given by $L_z(2)$. Since the sets $\{\Gamma(x)\}_{x \in S+y}$ form a partition of $V$, it is easy to verify that we can compute all the $\{\lambda_x\}_{x \in S+y}$ values in $\tilde{O}(|S+y| + n) = \tilde{O}(n)$ time.

At this stage, we identify  an $x \in S+y$ with minimum $\lambda_{x}$ value, set $x^{\star} \leftarrow x$, and delete $x^{\star}$ from all the lists in $\mathcal{L}$. This works because  a minimizer for $\lambda_x = \cl(S+y-x) - \cl(S+y)$ is also a minimizer for $\cl(S+y-x) - \cl(S)$. All these operations can be implemented in $\tilde{O}(n)$ time. 

To summarize,  lines 5, 6 (in Algorithm~\ref{alg:local:new}) can be implemented in $\tilde{O}(n)$ time, and hence the entire {\bf for} loop takes $\tilde{O}_{\epsilon}(ns)$ time. At the end of the {\bf for} loop, we insert all the points $x \in X \setminus S$ back in their appropriate positions in each of the $n$ lists $\{L_z\}_{z\in V}$, so as to return the data structure $\mathcal{L}$ to its initial state. This last step  also takes $\tilde{O}(|X \setminus S| \cdot n) = \tilde{O}(ns)$ time.
\end{proof}

\medskip
\noindent {\bf Our Overall Algorithm for Dynamic Improper $k$-Median.}
Lemma~\ref{lem:random:new} essentially implies that Algorithm~\ref{alg:local:new} satisfies property (P1) of Lemma~\ref{lem:informal:new} with $\alpha = (1-\epsilon)^{-1}$ and $\beta = \Theta(1)$. This translates into an approximation ratio of $O(\beta \cdot \sum_{i=0}^{\ell+1} \alpha^i)= O(\alpha^{1/\epsilon}) = O(1)$ for the improper $k$-median problem. Lemma~\ref{lem:local:runtime:new}, in contrast, guarantees the desired property (P2) of Lemma~\ref{lem:informal:new}. 
 
 As far as dynamically maintaining the  data structure $\mathcal{L}$ is concerned (see the caveat after the statement of Lemma~\ref{lem:informal:new}), we recall the proof (sketch) of Lemma~\ref{lem:informal:new} in Section~\ref{sec:challenge:new}, which implies that we need to invoke Algorithm~\ref{alg:local:new} only with $X = S_{j-1}$ and $k = k+s_j$, for  $j \in [\ell+1]$. Thus, we maintain $\ell+1 = O(1/\epsilon)$  different ``versions'' of the data structure $\mathcal{L}$, where version  $j \in [\ell+1]$ corresponds to  $X = S_{j-1}$ and we refer to the concerned data structure (i.e., the $j^{th}$ version)  as $\mathcal{L}^{(j)}$. 
 
 Fix any index $j \in [\ell+1]$. From Lemma~\ref{lem:key:recourse:new}, Corollary~\ref{cor:key:recourse:new} and the proof (sketch) of Lemma~\ref{lem:informal:new}, we infer that the set $S_{j-1}$ incurs an amortized recourse of $O(k^{\epsilon}/\epsilon)$. Furthermore, it is easy to verify that after each insertion/deletion in $S_{j-1}$, we can update the data structure  $\mathcal{L}^{(j)}$ in $\tilde{O}(n)$ time. Indeed, once a point $x$ gets inserted/deleted in $S_{j-1}$, all we need to do is update the positions of $x$ in the list $L_z^{(j)}$ for all $z \in V$, and it takes $\tilde{O}(1)$ time per list to perform this task. Using a similar procedure, after each insertion/deletion in  $V$, we can also update the data structure $\mathcal{L}^{(j)}$ in $\tilde{O}(|S_{j-1}|) = \tilde{O}(k+s_{j-1}) = \tilde{O}(n)$ time. Thus, we can maintain the data structure $\mathcal{L}^{(j)}$ in the dynamic setting, by paying an amortized update time of $\tilde{O}(n) \cdot O(k^{\epsilon}/\epsilon) = \tilde{O}_{\epsilon}(k^{\epsilon} \cdot n)$. 

 It follows that we can maintain all the data structures $\{\mathcal{L}^{(j)}\}_{j \in [\ell+1]}$ in the dynamic setting, for an overall amortized update time of $\sum_{j=1}^{\ell+1} \tilde{O}_{\epsilon}(k^{\epsilon} \cdot n) = \tilde{O}_{\epsilon}(k^{\epsilon} \cdot n)$. Referring  back to the discussion  after the statement of Lemma~\ref{lem:informal:new}, this collection $\{\mathcal{L}^{(j)}\}_{j \in [\ell+1]}$ plays the role of the auxiliary dynamic data structure $\mathcal{D}$. To summarize, Algorithm~\ref{alg:local:new} satisfies the desired properties of Lemma~\ref{lem:informal:new}. This gives us all the guarantees of Theorem~\ref{thm:solution:main}, albeit for the dynamic {\em improper} $k$-median problem.

 \subsection{Getting Rid of the Assumption Regarding Improper Dynamic $k$-Median}
 \label{sec:assumption:new}

\newcommand{\Al}{\texttt{Alg}}

So far, we have obtained a dynamic algorithm (say) $\Al$ for the improper $k$-median problem (see the discussion at the start of Section~\ref{sec:solution:new}), which achieves the guarantees of Theorem~\ref{thm:solution:main}. We now outline how to convert $\Al$ into an appropriate dynamic algorithm $\Al^{\star}$ for the original problem.  

To begin with, we note that while presenting our algorithm for improper dynamic $k$-median, we never used the fact that the algorithm needs to know the set $\mathcal{V}$ in advance. 
Moving on, let $S \subseteq \mathcal{V}$ be the clustering maintained by $\Al$, and let $S^{\star} := \{ \pi_V(x) : x \in S\}$ denote the projection of $S$ onto $V$. Since $|S| \leq k$, it follows that $|S^{\star}| \leq k$, and so $S^{\star}$ is a  solution to the original $k$-median problem. 

\begin{claim}[\cite{focs/GuhaMMO00}, Theorem 2.1]
\label{cl:projection:new}
We have $\cl(S^{\star}) \leq 2 \cdot \cl(S)$.
\end{claim}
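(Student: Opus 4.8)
This is the standard ``snapping centers to the input doubles the cost'' argument, so the plan is short. Fix any point $z \in V$. Since $\cl$ is measured with respect to $V$, it suffices to exhibit, for each such $z$, a center in $S^{\star}$ within distance $2 \cdot d(z, S)$, and then sum over $z \in V$. The natural candidate is $\pi_V(\pi_S(z)) \in S^{\star}$, i.e.\ first send $z$ to its nearest center $\pi_S(z) \in S$, then project that center back onto the current input $V$.

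The key estimate is a single triangle inequality combined with one observation. By the triangle inequality,
\[
d(z, S^{\star}) \;\leq\; d\bigl(z, \pi_V(\pi_S(z))\bigr) \;\leq\; d(z, \pi_S(z)) + d\bigl(\pi_S(z), \pi_V(\pi_S(z))\bigr).
\]
The observation is that $\pi_V(\pi_S(z))$ is by definition the point of $V$ closest to $\pi_S(z)$, and since $z \in V$ is itself a candidate, we have $d(\pi_S(z), \pi_V(\pi_S(z))) = d(\pi_S(z), V) \leq d(\pi_S(z), z)$. Plugging this in gives $d(z, S^{\star}) \leq 2 \cdot d(z, \pi_S(z)) = 2 \cdot d(z, S)$.

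Summing this inequality over all $z \in V$ yields $\cl(S^{\star}) = \cl(S^{\star}, V) = \sum_{z \in V} d(z, S^{\star}) \leq 2 \sum_{z \in V} d(z, S) = 2 \cdot \cl(S)$, as claimed. There is no real obstacle here; the only point worth stating carefully is that the bound $d(\pi_S(z), V) \leq d(\pi_S(z), z)$ crucially uses $z \in V$, and that $|S^{\star}| \leq |S| \leq k$ so that $S^{\star}$ is indeed a feasible $k$-median solution for the original (non-improper) problem.
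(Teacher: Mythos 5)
Your proof is correct and is essentially identical to the paper's argument: both bound $d(z, S^{\star})$ by routing through $\pi_V \circ \pi_S(z) \in S^{\star}$, use the triangle inequality, and exploit that $z \in V$ witnesses $d(\pi_S(z), \pi_V(\pi_S(z))) \leq d(\pi_S(z), z)$, then sum over $z \in V$. No differences worth noting.
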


\begin{proof}
Consider any point $x \in V$. The claim follows if we sum inequality~(\ref{eq:projection:new:1}) over all $x \in V$. 
\begin{eqnarray}
d(x, S^{\star}) & \leq & d(x, \pi_{V} \circ \pi_S(x))
 \leq  d(x, \pi_{S}(x)) + d(\pi_S(x), \pi_V \circ \pi_{S}(x)) 
 \leq   d(x, \pi_S(x)) + d(\pi_S(x), x) \nonumber \\
& = & 2 \cdot d(x, \pi_S(x)). \label{eq:projection:new:1}
\end{eqnarray}
In the above derivation, the first inequality holds because $\pi_V \circ \pi_S(x) \in S^{\star}$, whereas the third inequality holds because $\pi_S(x) = y \text{ (say)} \in S$ and $d(y, \pi_V(y)) \leq d(y, x)$ (since $y \in S$ and  $x \in V$).
\end{proof}

$\Al$  maintains the set $S$ with  $R = \tilde{O}(k^{\epsilon}/\epsilon)$ recourse (see Theorem~\ref{thm:solution:main}). It turns out that using $\Al$, we can  also maintain the set $S^{\star}$ with an  $R \cdot O(n) = \tilde{O}(k^{\epsilon} \cdot n/\epsilon)$ overhead in  update time: We keep an auxiliary data structure similar to the one defined just before Lemma~\ref{lem:local:runtime:new}. Specifically, for every center $z \in S$, we keep a sorted list (as a balanced search tree) $L_z$ which contains all the points in $V$, in non-decreasing order of their distances from $z$. For $i \in [n]$, let $L_z(i) \in V$ denote the point in the $i^{th}$ position of this list. It is easy to verify that after every point insertion/deletion in $S$ or $V$, all these lists can be updated in $\tilde{O}(n)$ total time. Further, we  have $S^{\star} := \{ L_z(1) : z \in S\}$.

Because of Claim~\ref{cl:projection:new}, the above reduction already gives us the desired approximation ratio and update time bound of Theorem~\ref{thm:solution:main}. The only bottleneck is that the recourse of the maintained clustering $S^{\star}$ can be as large as  $k$. In Section~\ref{sec:proper} (in the full version of the paper), we show how to get rid of this remaining bottleneck, which leads us to Theorem~\ref{thm:solution:main} for dynamic $k$-median.

\section{Our  Algorithm for Dynamic $k$-Median Value (Theorem~\ref{thm:value:main})}\label{sec:value:new}

Due to space constraints in the extended abstract, we only give a very brief overview of our dynamic algorithm for $k$-median value. The full algorithm and its analysis can be found in \Cref{part:II}.

Our algorithm from \Cref{sec:solution:new} maintains a $O(1/\epsilon)$-approximate solution to the $k$-median problem under insertions and deletions with $\tilde O(k^{1+\epsilon})$ update time. If we want an optimal update time of $\tilde O(k)$, the approximation ratio of this algorithm becomes $O(\log k)$.
However, if we are only interested in the \emph{value} of the optimal $k$-median solution, then we will show  that we can maintain a $O(1)$-approximation to this quantity in $\tilde O(k)$ update time.
The key to our procedure is the relationship between a \emph{Lagrange multiplier preserving} (LMP) algorithms for the \emph{uniform facility location} (UFL) problem and approximation algorithms for $k$-median.

\medskip
\noindent
\textbf{UFL and LMP Algorithms.}
In the (uncapacitated) facility location problem, we are given a metric space $(V,d)$ and a facility opening cost $f_i$ for each point $i \in V$. The objective is to find a subset of points $S\subseteq V$ on which to open facilities such that the sum of the facility opening costs $F(S)=\sum_{i\in V} f_i$ and the total service costs $C(S)=\sum _{j\in V} d(j, S)$ is minimized; here, $d(j, S)$ is the distance from $j$ to the nearest point in $S$, i.e.~the service cost of the client $j$. Facility location can be formulated as an integer program as follows. 
\begin{eqnarray*}
\text{Minimize }   \sum_{i \in V} f_i \cdot y_i + \sum_{i, j \in V} d(i, j) \cdot x_{j \to i}  \\
\text{s.t.~} \sum_{i \in V} x_{j \to i}  \geq  1 & & \text{ for all } j \in V \\
x_{j \to i} \leq y_i & & \text{ for all } i, j \in V \\
y_i, x_{j \to i} \in\{0,1\} & & \text{ for all  } i, j \in V
\end{eqnarray*}
The variable $y_i$ is 1 if a facility is opened at $i$ and is 0 otherwise. Similarly, $x_{j\to i}$ is 1 if $j$ is served by the facility opened at $i$, and is 0 otherwise. In an LP-relaxation, the integrality constraints on these variables are replaced by non-negativity constraints. An $\alpha$-approximation algorithm for  location  is called ``Lagrange multiplier preserving'' (LMP) iff it produces a solution $S$ that satisfies $F(S)+\alpha\cdot C(S)\le \alpha\cdot \Opt$, where $\Opt$ denotes the cost of the optimal solution. In the \emph{uniform} facility location (UFL) problem, we assume that all facilities have the same opening cost, $\lambda$.

\medskip
\noindent
\textbf{From LMP Algorithms to $k$-Median.}
If all facilities have a uniform cost $\lambda$, then decreasing (resp.~increasing) $\lambda$ would increase (resp.~decrease) the number of facilities we open in any solution that approximates the optimal solution. If there is a value of $\lambda$, say $\lambda^\star$ at which the number of facilities opened by an $\alpha$-approximate LMP algorithm is exactly $k$, then this set of facilities is also an $\alpha$-approximation to the optimal $k$-median solution. However, such a value of $\lambda$ might not exist. Jain-Vazirani~\cite{JainV01} used sufficiently close values of $\lambda$, say $\lambda^-,\lambda+$, such that the number of facilities opened at $\lambda^-$ (resp.~$\lambda^+$) are more (resp.~less) than $k$, and constructed a convex combination of the solutions to obtain a solution that opens exactly $k$ facilities. 

This approach to the $k$-median problem requires an LMP algorithm for UFL, and Jain and Vazirani~\cite{JainV01} gave such an algorithm that is a 3-approximation. However, their 2-phase primal-dual procedure picks facilities in a manner unsuitable for a dynamic setting. In contrast,  Mettu and Plaxton~\cite{MettuP02} gave a different take on the Jain-Vazirani algorithm that naturally combines the 2-phases, but it continues to pick facilities ``sequentially", making it challenging to get fast update times in a dynamic setting. While the original algorithm of Mettu-Plaxton was not LMP, Archer et al.~\cite{esa/ArcherRS03} showed that a  variation of it yields a 3-approximate LMP algorithm. 

In this paper, we also modify the Mettu-Plaxton algorithm, but to obtain a {\em fractional solution} that is 4-approximate and LMP. In a sense, this fractional solution can be computed ``in parallel'', and this helps us get around the bottleneck of ``sequentially'' picking facilities, as was the case in the original version of the algorithm. In particular, we prove Theorem~\ref{th:MP:main:new} stated below.


\medskip
\noindent
\textbf{(Our Modification of) the Mettu-Plaxton Algorithm.}
For all $i \in V$ and $r\geq 0$, let $B(i, r) := \{ j \in V \; | \; d(i, j) \leq 4r\}$ denote the (closed) ball of radius $r$ around the point $i$. Define
\begin{equation}
\label{MAIN:eq r}
r_i := \min \left\{ r \geq 0 \; \middle| \; \sum_{j \in B(i, r)} \left(r - \frac{d(i, j)}{4} \right) \geq \lambda\right\}.
\end{equation}
We use these $r_i$ values to define a fractional solution to UFL.
\begin{enumerate}
\item For $i \in V$, let $y_i :=  r_i/\lambda$, and
\item for $i,j \in V$, let $x_{j \to i} := \max\left(0, r_j - d(i, j)/4  \right)/ \lambda$.
\end{enumerate}
\begin{theorem}\label{th:MP:main:new}
The fractional solution $\{y_i\}_{i\in V}$,$\{x_{j \to i}\}_{i,j\in V}$ is an LMP $4$-approximation for UFL. 
\end{theorem}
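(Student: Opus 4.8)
The plan is to verify the two parts of the LMP guarantee separately: feasibility of the fractional solution, and the cost inequality $F(\{y_i\}) + 4 \cdot C(\{x_{j \to i}\}) \le 4 \cdot \textsc{Opt}$. For feasibility, I would first check the covering constraint $\sum_{i \in V} x_{j \to i} \ge 1$ for each client $j$. By definition of $r_j$ as a minimum, we have $\sum_{i \in B(j, r_j)} (r_j - d(i,j)/4) \ge \lambda$; since $x_{j \to i} = \max(0, r_j - d(i,j)/4)/\lambda$ picks up exactly these positive terms (points $i$ with $d(i,j) \le 4 r_j$, i.e.\ $i \in B(j, r_j)$), summing gives $\sum_i x_{j \to i} \ge 1$. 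One subtlety to address: the minimum in~\eqref{MAIN:eq r} is attained — the left-hand side is a continuous, nondecreasing function of $r$ that is $0$ at $r = 0$ and $\to \infty$, so $r_j$ is well-defined and the inequality holds with equality at $r = r_j$ by continuity (this will matter for tightness later). Next, the constraint $x_{j \to i} \le y_i$: we need $\max(0, r_j - d(i,j)/4)/\lambda \le r_i/\lambda$, i.e.\ $r_j - d(i,j)/4 \le r_i$. This is the key structural claim and I expect it to be the main obstacle.

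To prove $r_j \le r_i + d(i,j)/4$, I would argue by contradiction / by a direct monotonicity argument: set $r := r_i + d(i,j)/4$ and show $\sum_{h \in B(j, r)} (r - d(j,h)/4) \ge \lambda$, which forces $r_j \le r$. The idea is that every point $h$ contributing to $i$'s ball at radius $r_i$ also contributes to $j$'s ball at radius $r = r_i + d(i,j)/4$: if $d(i,h) \le 4 r_i$ then $d(j,h) \le d(j,i) + d(i,h) \le 4(r_i + d(i,j)/4) = 4r$, so $h \in B(j,r)$, and moreover $r - d(j,h)/4 \ge r_i + d(i,j)/4 - (d(j,i) + d(i,h))/4 = r_i - d(i,h)/4$. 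Summing over $h \in B(i, r_i)$ gives $\sum_{h \in B(j,r)} (r - d(j,h)/4) \ge \sum_{h \in B(i, r_i)} (r_i - d(i,h)/4) \ge \lambda$, hence $r_j \le r = r_i + d(i,j)/4$, as needed. (This is precisely where the factor of $4$ in the definition of $B(i,r)$ is calibrated.) Non-negativity of all variables is immediate.

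For the cost bound, the natural approach is LP duality: exhibit a feasible solution to the dual of the UFL LP whose value certifies the bound, or equivalently run the standard dual-fitting / Lagrangian argument à la Mettu--Plaxton and Jain--Vazirani. Concretely, I would set dual variables $\alpha_j := r_j$ (the "time" client $j$ stops growing) and show: (i) $\sum_j \alpha_j = \sum_j r_j$ is a lower bound on $\textsc{Opt}$ up to the LMP factor — specifically that $\{\alpha_j\}$, after the usual splitting $\alpha_j = \alpha_j^f + \alpha_j^s$ with $\alpha_j^s = d(j, \cdot)/4$ contributions and $\alpha_j^f$ the surplus paid toward facility opening, is dual-feasible, so $\sum_j \alpha_j \le \textsc{Opt}$ (with the $1/4$ scaling folded in); and (ii) the primal fractional cost satisfies $F(\{y_i\}) + 4 C(\{x_{j\to i}\}) \le 4 \sum_j r_j$. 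Part (ii) is where the equality $\sum_{i \in B(j,r_j)}(r_j - d(i,j)/4) = \lambda$ is used: it says $\lambda y_{\text{(mass around }j)} $... more carefully, $\sum_i \lambda x_{j \to i} = \sum_{i \in B(j,r_j)} (r_j - d(i,j)/4) = \lambda$, so $4 \sum_{i,j} d(i,j) x_{j \to i} = \sum_j \big(4 r_j \sum_i x_{j \to i} - 4 \sum_i x_{j\to i} \cdot (r_j - d(i,j)/4)\cdot \tfrac{?}{}\big)$ — I would rearrange $d(i,j) = 4 r_j - 4(r_j - d(i,j)/4)$ inside the service-cost sum and combine with the facility cost $F = \sum_i \lambda y_i = \sum_i r_i$, using the constraint $r_j \le r_i + d(i,j)/4$ to bound $\sum_i r_i \cdot (\text{local mass})$ against $\sum_j 4 r_j \cdot(\ldots)$.

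The main obstacle, as flagged, is the structural inequality $r_j - d(i,j)/4 \le r_i$ (the "ball-growth" Lipschitz property), and secondarily the bookkeeping in the dual-fitting step that converts the identity $\sum_i \lambda x_{j\to i} = \lambda$ together with $F = \sum_i r_i$ into the clean bound $F + 4C \le 4\sum_j r_j \le 4\,\textsc{Opt}$; getting the constant exactly $4$ (rather than $3$, as in the integral Archer--Rajagopalan--Shmoys version, or worse) hinges on the fractional relaxation being slightly lossier, and I would double-check the arithmetic there against the definition of $B(i,r)$ with the $4r$ radius. Everything else — continuity/attainment of the min, non-negativity, the covering constraint — is routine.
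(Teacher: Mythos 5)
Your feasibility argument is correct and is essentially the paper's: the covering constraint follows from the definition of $r_j$ (with equality at $r=r_j$ by continuity, cf.\ Observation~\ref{obs: MP anal 1}), and the Lipschitz property $r_j \le r_i + d(i,j)/4$, which you prove exactly as the paper does (Observation~\ref{lem: r <= r + d}), gives $x_{j\to i}\le y_i$.

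The gap is in the cost bound, and it is not just bookkeeping. First, the inequality you target is oriented the wrong way: what this solution satisfies (and what the Jain--Vazirani reduction needs, and what the full version proves) is $4F + C \le 4\cdot\foptf_\lambda$, i.e.\ the factor $4$ multiplies the \emph{facility} cost; the version $F+4C\le 4\cdot\Opt$ that you state (taken from the extended abstract's inline definition, which swaps the roles of $F$ and $C$) is in fact false for this solution. More importantly, your dual-fitting plan charges the primal cost against $\sum_j r_j$, a dual built from the \emph{same} radii that define the primal, and no arrangement of the arithmetic can make that work. Since $F=\lambda\sum_i y_i=\sum_i r_i$, charging $4F+C$ against $4\sum_j r_j$ is impossible the moment the connection cost is positive; and your step (ii), $F+4C\le 4\sum_j r_j$, also fails: per client one only has $\sum_i d(i,j)\,x_{j\to i}\le 4r_j$, and this is essentially tight (e.g.\ a star where a client's entire $\lambda$-mass sits at distance just below $4r_j$), so on such instances $4F+C$ approaches $8\sum_j r_j$ and $F+4C$ approaches $17\sum_j r_j$ --- single-scale dual fitting can only yield an LMP $8$ guarantee. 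The paper's proof hinges on a two-scale idea that is absent from your outline: the primal uses the radii $r^{(1/4)}_j$ (your $r_j$), but the dual solution is $\tilde v_j = w(j)\,r^{(1/2)}_j$, the radii computed in the metric scaled by $1/2$ rather than $1/4$ (its feasibility again uses the Lipschitz property). The crux is then the per-client inequality $w(j)\,r^{(1/4)}_j + C^{(1/4)}_j \le w(j)\,r^{(1/2)}_j$ (Lemma~\ref{lm:dual:fitting:2}), which absorbs each client's facility share \emph{plus} its (quarter-scaled) connection cost into the slack between the two radii; proving it requires a monotonicity lemma (Lemma~\ref{lm:monotone}, restricting attention to the ball around $j$) followed by a Cauchy--Schwarz computation. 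Without this ingredient your proposal cannot be completed to give the constant $4$.
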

Using some appropriate auxiliary data structures, we show that this fractional solution can be maintained very efficiently, in only $\tilde{O}(n)$ time, under deletions and insertions of points in $V$. Specifically, we build a data structure $\FracLMP$ which, when given a facility opening cost $\lambda$, returns the fractional opening values $\{y_i\}_{i\in V}$ and the total connection cost $\sum_{i,j\in V} d(i,j) \cdot x_{j\to i}$ in $\tilde{O}(n)$ time. Thus, the worst-case update time for $\FracLMP$ is $\tilde{O}(n)$.

To construct a fractional $k$-median solution, we need to determine a value of the facility opening cost $\lambda$ at which the 4-approximate LMP fractional solution opens ``close" to $k$ facilities. This can be achieved by querying $\FracLMP$ for the $\{y_i\}$ values for different inputs $\lambda$. By doing a binary search we determine two values of $\lambda$ which are within $(1+\epsilon)$factors of each other and then take a convex combination of the fractional solutions corresponding to these values, obtaining a fractional solution to the $k$-median problem.
This gives us the following theorem.
\begin{theorem}
There is a deterministic dynamic algorithm with  update time $\tilde{O}(n)$, which given any $\epsilon > 0$, $k \in \mathbb N$, returns a $4(1 + \epsilon)$-approximate fractional solution for $k$-median in $\tilde{O}(n)$ time.
\end{theorem}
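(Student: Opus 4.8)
The plan is to turn the fractional UFL solution of \Cref{th:MP:main:new} into a fractional $k$-median solution via a dynamic version of the Jain--Vazirani / Lagrangian reduction. The only ``dynamic'' object we maintain is the data structure $\FracLMP$: it has worst-case update time $\tilde O(n)$ and, queried with any uniform opening cost $\lambda>0$, returns in $\tilde O(n)$ time the fractional openings $\{y^{\lambda}_i\}_{i\in V}$ and the total connection cost $C(x^{\lambda}):=\sum_{i,j\in V}d(i,j)\,x^{\lambda}_{j\to i}$ of the solution of \Cref{th:MP:main:new}. Write $n(\lambda):=\sum_{i\in V}y^{\lambda}_i$ for the fractional facility count and $F(y^{\lambda}):=\lambda\,n(\lambda)$ for the facility cost; a single $\FracLMP$ query lets us read off both $n(\lambda)$ and $C(x^{\lambda})$.

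First I would establish two structural facts. \emph{(Monotonicity)} $n(\lambda)$ is non-increasing in $\lambda$: from \eqref{MAIN:eq r} we have $y^{\lambda}_i=r^{\lambda}_i/\lambda$ with $r^{\lambda}_i=g_i^{-1}(\lambda)$, where $g_i(r)=\sum_{j}\max(0,\,r-d(i,j)/4)$ is convex, non-decreasing and $g_i(0)=0$; hence $r/g_i(r)$ is non-increasing in $r$, so $y^{\lambda}_i=r^{\lambda}_i/g_i(r^{\lambda}_i)$ is non-increasing in $\lambda$, and so is their sum. \emph{(Range)} $n(\lambda)$ tends to the number of distinct points of $V$ as $\lambda\to 0^{+}$ and to $1$ as $\lambda\to\infty$; thus, outside the trivial regime ($k=1$, or $V$ has fewer than $k$ distinct points, in which case $\Opt_k=0$ and the answer is produced directly), the interesting range of $\lambda$ has aspect ratio $\mathrm{poly}(n,\Delta)$. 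A binary search over a geometric grid $\{\lambda_{\min}(1+\epsilon)^{t}\}_{t}$ of $O\!\big(\epsilon^{-1}\log(n\Delta)\big)$ scales, using $\tilde O(1)$ queries to $\FracLMP$ and hence $\tilde O(n)$ time in total, returns two consecutive grid values $\lambda^{-}\le\lambda^{+}\le(1+\epsilon)\lambda^{-}$ with $n(\lambda^{-})\ge k\ge n(\lambda^{+})$. All of this is deterministic.

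Now choose $\theta\in[0,1]$ with $\theta\,n(\lambda^{-})+(1-\theta)\,n(\lambda^{+})=k$ and output the fractional solution $(\hat y,\hat x):=\theta\,(y^{\lambda^{-}},x^{\lambda^{-}})+(1-\theta)\,(y^{\lambda^{+}},x^{\lambda^{+}})$. Since every $k$-median LP constraint other than $\sum_i y_i\le k$ holds for each $(y^{\lambda},x^{\lambda})$ and is linear, and $\sum_i\hat y_i=k$ by the choice of $\theta$, the pair $(\hat y,\hat x)$ is a feasible fractional $k$-median solution with objective $C(\hat x)=\theta\,C(x^{\lambda^{-}})+(1-\theta)\,C(x^{\lambda^{+}})$. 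To bound it I would use the LMP guarantee of \Cref{th:MP:main:new} — in the form $4\,F(y^{\lambda})+C(x^{\lambda})\le 4\,\Opt_{\mathrm{UFL}}(\lambda)$ for every $\lambda$, where $\Opt_{\mathrm{UFL}}(\lambda)$ is the optimal UFL cost at uniform opening cost $\lambda$ — together with $\Opt_{\mathrm{UFL}}(\lambda)\le \lambda k+\Opt_k$ (interpret the optimal integral $k$-median solution as a UFL solution). Taking the $\theta$-weighted sum of these inequalities at $\lambda^{-}$ and $\lambda^{+}$ and simplifying with the identity $\theta\,(k-n(\lambda^{-}))=-(1-\theta)\,(k-n(\lambda^{+}))=:-\eta$ (so $\eta\ge0$) gives
\[
C(\hat x)\ \le\ 4\,\Opt_k+4\,\eta\,(\lambda^{+}-\lambda^{-})\ \le\ 4\,\Opt_k+4\,\epsilon\,\eta\,\lambda^{-}.
\]
Finally, the single inequality $4\,F(y^{\lambda^{-}})+C(x^{\lambda^{-}})\le 4\,\Opt_{\mathrm{UFL}}(\lambda^{-})\le 4\lambda^{-}k+4\,\Opt_k$ (using $C\ge0$) yields $\lambda^{-}\big(n(\lambda^{-})-k\big)\le\Opt_k$, hence $\eta\,\lambda^{-}=\theta\,\lambda^{-}\big(n(\lambda^{-})-k\big)\le\Opt_k$; plugging in gives $C(\hat x)\le 4(1+\epsilon)\,\Opt_k$. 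The solution is reported implicitly through the $\tilde O(n)$-size description $(\lambda^{-},\lambda^{+},\theta)$ (from which $\hat y$ and $C(\hat x)$ are immediate), so the query time is $\tilde O(n)$ and the update time is that of $\FracLMP$, namely $\tilde O(n)$.

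The data-structure side ($\FracLMP$ as a black box) and the binary search are routine. The step I expect to need the most care is the approximation analysis of the convex combination: one must set up the combination so that the $\lambda$-dependent facility-cost contributions cancel, leaving an error of the form $(\lambda^{+}-\lambda^{-})\cdot\eta$ with $\eta$ the ``fractional over-opening'' at $\lambda^{-}$, and then exploit the crude-looking but crucial bound $\lambda^{-}\eta\le\Opt_k$ to absorb this error into a $(1+\epsilon)$ factor. A secondary nuisance is making the range/monotonicity claims rigorous at the boundary and verifying that $n(\cdot)$ genuinely straddles $k$ within the chosen grid, which is why the algorithm special-cases tiny $k$ and inputs with fewer than $k$ distinct points.
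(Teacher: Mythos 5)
Your algorithm is the same as the paper's (maintain $\FracLMP$; binary-search a geometric grid of $O(\log_{1+\epsilon}(n\Delta))$ opening costs to find consecutive $\lambda^-\le\lambda^+\le(1+\epsilon)\lambda^-$ whose fractional facility counts straddle $k$; output the convex combination that opens exactly $k$), but your approximation analysis takes a different route. The paper proves the bound by dual fitting: it scales the dual at $\lambda^+$ by $(1+\epsilon)^{-1}$ to make it feasible for opening cost $\lambda^-$ (\Cref{lem:convex}), combines the two duals, concludes that the combined primal is an LMP $4(1+\epsilon)$-approximation for UFL at cost $\lambda^-$ opening exactly $k$ facilities, and invokes \Cref{lem:simple LMP case}. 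You instead do the primal bookkeeping directly: sum the two LMP inequalities with weights $\theta,1-\theta$, observe that the facility terms cancel up to an error $\eta(\lambda^+-\lambda^-)$ with $\eta=\theta(n(\lambda^-)-k)$, and absorb it using $\lambda^-(n(\lambda^-)-k)\le\Opt_k$, which follows from the LMP inequality at $\lambda^-$ alone. Your calculation is correct and arguably more transparent; note that it bounds the cost against the \emph{integral} optimum $\Opt_k$, whereas the paper's dual-fitting bound is against the fractional optimum $\foptc_k$ (replace $\Opt_k$ by $\foptc_k$ in your step $\foptf_\lambda\le\lambda k+\Opt_k$ if you want the stronger benchmark, which is what the paper later uses for the value corollary). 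Your monotonicity observation for $n(\lambda)$ is also correct but not needed: the paper only establishes the two endpoints (\Cref{obs:discrete lambda 1} and \Cref{obs:discrete lambda 2}) and a crossing pair exists by a discrete intermediate-value argument, which binary search can locate even without monotonicity.

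The one place where you are too quick is $k=1$. For $k\ge 2$ the straddle exists because at the top of the grid fewer than $2\le k$ facilities are opened, but $\sum_i y_i\ge 1$ for every $\lambda$ (since $\sum_i x_{j\to i}=1$ and $x_{j\to i}\le y_i$), so for $k=1$ there may be no $\lambda^+$ with $n(\lambda^+)<k$ and the interpolation step breaks down. This case cannot be dispatched ``directly'': computing even a constant-factor $1$-median deterministically by brute force needs $\Omega(n^2)$ distance queries, which exceeds the claimed $\tilde O(n)$ query time. The paper handles it separately (Step~IV of $\FracKMed$): query $\FracLMP$ once with $\lambda=n\cdot d_{\max}w_{\max}/4$, rescale the returned $\{y^\star_i\}$ so that they sum to exactly $1$, and show that the extra connection cost incurred by the rescaling is at most $(\sum_i y^\star_i-1)\cdot n\, d_{\max}w_{\max}\le 4\lambda(\sum_i y^\star_i-1)$, so the normalized solution remains LMP $4$-approximate and hence $4$-approximate for $1$-median. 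You flagged that small $k$ needs special-casing, but some such argument is required to make the theorem hold for all $k\in\mathbb N$.
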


\noindent
Our algorithm is also capable of maintaining the cost of this fractional solution. Since the LP-relaxation for $k$-median has a constant integrality gap \cite{esa/ArcherRS03}, this gives a $O(1)$-approximation to the cost of the optimal integral solution, leading us to Theorem~\ref{thm:value:main}.



\newpage

\part{Our Algorithm for Dynamic $k$-Clustering}\label{part:I}

\section{Preliminaries}\label{sec:prelim}

We begin by formally describing the notation that we use throughout \Cref{part:I}. See \Cref{sec:P2 org} below for a summary of how \Cref{part:I} is organized and the main result of this part of the paper.

\subsection{$k$-Clustering}\label{sec:k-clus def}
In this paper, we consider the \emph{metric $k$-clustering problem}. As the input to this problem, we are given some (weighted) metric space $(V, w, d)$ and an integer $k \in \mathbb N$, where $d : V \times V \longrightarrow \mathbb R_{\geq 0}$ is a metric satisfying the triangle inequality and $w : V \longrightarrow \mathbb R_{> 0}$ assigns a non-negative weight $w(x)$ to each point $x \in V$ in the space. Let $\Delta := d_{\max}w_{\max}/(d_{\min}w_{\min})$ denote the aspect ratio of the metric space.\footnote{Here, $d_{\min}$ (resp.~$d_{\max}$) is the smallest non-zero (resp.~largest) distance between any two points in $V$, and $w_{\min}$ (resp.~$w_{\max}$) is the minimum (resp.~maximum) weight of a point in $V$. We sometimes write $\Delta_d := d_{\max}/d_{\min}$ and $\Delta_w := w_{\max}/w_{\min}$ to denote the aspect ratio of the distances and the weights respectively.}
Our objective is to find a subset $S \subseteq V$ of at most $k$ points which minimises the \emph{clustering cost} $\cl(S)$ of the solution $S$, where $\cl(\cdot)$ is the objective function which defines the $k$-clustering problem. Specifically, we consider the \emph{$(k,p)$-clustering objective}
\begin{equation}\label{eq:cl_p}
    \cl_p(S, V, w) := \left( \sum_{x \in V} w(x) d(x, S)^p \right)^{1/p}
\end{equation}
where $p \in \mathbb N$ and $d(x, S) := \min_{y \in S} d(x, y)$. Furthermore, for $p = \infty$, we define $\cl_\infty (S, V, w) := \max_{x \in V} d(x, S)$. 
We refer to the points in $S$ as \emph{centers} and denote the cost of the optimal solution to the $(k,p)$-clustering problem in the space $(V, w, d)$ by $\Opt^p_k(V,w)$.
To ease notations, we often omit arguments that are clear from the context. In particular, we often abbreviate $\Opt^p_k(V, w)$ and $\cl_p(S, V, w)$ by $\Opt_k$ and $\cl(S)$ respectively.

\subsection{Relation to $k$-Median, $k$-Means, and $k$-Center}

The $(k,p)$-clustering problem can be seen as a generalization of the other well known $k$-clustering problems, such as $k$-median, $k$-means, and $k$-center. The $(k,p)$-clustering objective is often defined as $\cl_p(S)^p = \sum_{x \in V} w(x) d(x,S)^p$
(such as in \cite{ourneurips2023}). For technical reasons, we instead define it as $\cl_p(S) = ( \sum_{x \in V} w(x) d(x,S)^p)^{1/p}$
(such as in \cite{arxivGT2008}). Thus, with our definition, the $(k,p)$-clustering objective is precisely the $\ell^p$-norm of the distances from each point in $V$ to $S$ when the metric space is unweighted. We refer to the problem of $k$-clustering with respect to objective $\cl_p(\cdot)^p$ as the \emph{unnormalized $(k,p)$-clustering problem}. The following lemma relates the approximation ratio of solutions to these two clustering problems.

\begin{lemma}\label{lem:unnorm to norm}
    Let $p \in \mathbb N$. A set $S \subseteq V$ is an $\alpha$-approximation to the $(k,p)$-clustering problem if and only if it is an $\alpha^p$-approximation to the unnormalized $(k,p)$-clustering problem.
\end{lemma}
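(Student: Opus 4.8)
The statement is essentially a change-of-variables observation, so the plan is to reduce everything to the monotonicity of the map $t \mapsto t^p$ on $\mathbb{R}_{\geq 0}$. First I would note that both the $(k,p)$-clustering problem and the unnormalized $(k,p)$-clustering problem have exactly the same feasible region, namely $\{S \subseteq V : |S| \leq k\}$, and that since $V$ is finite the optima are attained. Write $\Opt_k := \Opt^p_k(V,w)$ for the optimal value of the (normalized) $(k,p)$-clustering problem. Because $t \mapsto t^p$ is a strictly increasing bijection of $\mathbb{R}_{\geq 0}$ onto itself (here we use $p \in \mathbb{N}$, so $p \geq 1$), we have $\cl_p(S) \leq \cl_p(S')$ iff $\cl_p(S)^p \leq \cl_p(S')^p$ for any two feasible $S, S'$. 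Consequently $S$ minimizes $\cl_p(\cdot)$ over feasible sets iff it minimizes $\cl_p(\cdot)^p$ over feasible sets, and the optimal value of the unnormalized problem is precisely $\Opt_k^{\,p}$.

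Given this, the two directions of the ``if and only if'' follow by applying the same monotone map to an inequality. If $S$ is an $\alpha$-approximation for the $(k,p)$-clustering problem, i.e.\ $\cl_p(S) \leq \alpha \cdot \Opt_k$, then raising both sides to the $p$-th power (monotone, both sides nonnegative) gives $\cl_p(S)^p \leq \alpha^p \cdot \Opt_k^{\,p}$, which by the previous paragraph says exactly that $S$ is an $\alpha^p$-approximation for the unnormalized problem. Conversely, if $\cl_p(S)^p \leq \alpha^p \cdot \Opt_k^{\,p}$, taking $p$-th roots (again monotone) yields $\cl_p(S) \leq \alpha \cdot \Opt_k$.

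\textbf{Anticipated obstacle.} There is essentially no technical obstacle here; the only points that need a word of care are (i) confirming that the feasible sets and hence the argmin coincide so that the unnormalized optimum really is $\Opt_k^{\,p}$ rather than something unrelated, and (ii) making sure we only invoke the $p$-th power/root equivalence for $p \in \mathbb{N}$ (the lemma's hypothesis), which sidesteps the separately-defined $p = \infty$ case. Everything else is immediate from monotonicity of $t \mapsto t^p$.
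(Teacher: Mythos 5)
Your proposal is correct and follows essentially the same route as the paper: both observe that the two objectives share the same optimizer (by monotonicity of $t \mapsto t^p$) and then translate the approximation inequality by raising to the $p$-th power or taking $p$-th roots. Your write-up just spells out the monotonicity and the identification of the unnormalized optimum with $\Opt_k^{\,p}$ a bit more explicitly.
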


\begin{proof}
    Since the optimal solutions with respect to clustering objectives $\cl_p(\cdot)$ and $\cl_p(\cdot)^p$ coincide, let $S^\star$ be an optimal solution to the $(k,p)$-clustering and the unnormalized $(k,p)$-clustering problems. It follows that $\cl_p(S) \leq \alpha \cdot \cl_p(S^\star)$ if and only if $\cl_p(S)^p \leq \alpha^p \cdot \cl_p(S^\star)^p$ for any $\alpha \geq 1$.
\end{proof}

\noindent
We note that the $(k,1)$-clustering and $(k, \infty)$-clustering problems correspond to the $k$-median and $k$-center problems respectively, and that the unnormalized $(k,2)$-clustering problem corresponds to the $k$-means problem.
Furthermore, using standard inequalities, it can be shown that \emph{as long as the metric space has unit weights} (i.e.~$w(x) = 1$ for all $x \in V$) $\cl_\infty(S) \leq \cl_p(S) \leq O(1) \cdot \cl_\infty(S)$ for any $p \geq \log n$.\footnote{Note that the $(k,\infty)$-clustering/$k$-center objective does not depend on the weights of points. Thus, when dealing with these objectives, we can assume the metric space has unit weights.} It follows that any $\alpha$-approximate solution to the $(k,\log n)$-clustering problem is a $O(\alpha)$-approximate solution to the $k$-center problem.

\subsection{Dynamic $k$-Clustering}

In the dynamic setting, we are given a metric space $(V, w, d)$ and a stream of \emph{updates} $\sigma_1, \dots, \sigma_T$ corresponding to (weighted) point insertions and deletions from the metric space $(V, w, d)$.
We assume that, at each point in time, $(V,d)$ is a metric subspace of some fixed (possibly infinite) underlying metric space $(\mathcal V, d)$. 
Thus, the metric space is defined by the subset of points $V \subseteq \mathcal V$. 
Furthermore, we assume constant time query access to the distances between pairs of points in $\mathcal V$.

Our objective is to maintain a set $S \subseteq V$ of size at most $k$ that minimizes $\cl_p(S,V,w)$, i.e.~a good solution to the $(k,p)$-clustering problem in the current metric space $(V,w,d)$. 
We will also be concerned with the \emph{amortized recourse} of our solution and the \emph{amortized update time} of our algorithm. We define the \emph{total} recourse to be the number of changes that we make to the solution $S$ throughout the sequence of updates. More precisely, we define it as
$$\sum_{t = 1}^{T} \left|S^{t-1} \oplus S^{t}\right|$$
where $S^{t}$ denotes the set $S$ at time $t$, i.e.~after the $t^{th}$ update. The amortized recourse is the total recourse divided by the length of the update sequence, $T$. The total update time of a dynamic clustering algorithm is the total time the algorithm spends handling updates, and the amortized update time is the total update time divided by $T$.

\subsection{A Projection Lemma}
A key component in many of our proofs throughout this paper is the following well-known \emph{projection lemma} \cite{arxivGT2008, ChrobakKY06}. Given sets $X, S \subseteq \mathcal V$, this simple but very useful lemma bounds the cost of the solution $S' \subseteq X$ obtained by \emph{projecting} $S$ onto the set $X$. More precisely, let $\Proj(S,X)$ the set of points that, for each $y \in S$, contains the point $x \in X$ that is closest to $y$, i.e.~$\Proj(S,X) = \{y \in S \, | \, \arg \min_{x \in X} d(x, y) \}$.\footnote{The set $\Proj(S,X)$ is not necessarily unique. Thus, when computing $\arg \min_{x \in X} d(x, y)$ for some $y \in S$, we assume that we break ties arbitrarily and consistently so that $\Proj(S,X)$ is well defined.} Then we have the following lemma.

\begin{lemma}[Projection Lemma]\label{lem:proj lemma}
    For all $x \in \mathcal V$, $d(x, \Proj(S,X)) \leq d(x, X) + 2d(x, S)$.
\end{lemma}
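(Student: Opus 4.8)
The plan is to unwind the definition of $\Proj(S,X)$ and apply the triangle inequality twice. Fix an arbitrary point $x \in \mathcal V$. Let $y \in S$ be a closest point to $x$ in $S$, so that $d(x,y) = d(x,S)$, and let $z \in X$ be a closest point to $y$ in $X$, so that $d(y,z) = d(y,X)$. The key observation is that $z$ is exactly the point of $X$ that the projection of $y$ contributes to $\Proj(S,X)$ (up to the tie-breaking convention), and hence $z \in \Proj(S,X)$; in particular $d(x, \Proj(S,X)) \le d(x,z)$.

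Next I would bound $d(x,z)$. By the triangle inequality,
$$ d(x,z) \le d(x,y) + d(y,z) = d(x,S) + d(y,X). $$
It remains to control $d(y,X)$ in terms of $d(x,X)$ and $d(x,S)$. Let $w \in X$ be a closest point to $x$ in $X$, so $d(x,w) = d(x,X)$. Since $w \in X$, we have $d(y,X) \le d(y,w) \le d(y,x) + d(x,w) = d(x,S) + d(x,X)$. Substituting this into the previous display gives
$$ d(x,\Proj(S,X)) \le d(x,z) \le d(x,S) + \bigl(d(x,S) + d(x,X)\bigr) = d(x,X) + 2\,d(x,S), $$
which is the claimed inequality. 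Since $x$ was arbitrary, this proves the lemma.

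I do not anticipate a genuine obstacle here: the statement is a standard two-hop triangle-inequality argument (the same one already carried out for Lemma~\ref{lem:proj lemma:new} and Corollary~\ref{cor:proj lemma:new} in the extended abstract, via the composition $\pi_X \circ \pi_S$). The only point requiring a line of care is the claim that the point $z = \arg\min_{x' \in X} d(y, x')$ actually lies in $\Proj(S,X)$; this follows immediately from the definition $\Proj(S,X) = \{\,\arg\min_{x' \in X} d(x',y) : y \in S\,\}$ together with the fixed, consistent tie-breaking rule, so that the nearest-neighbour-in-$X$ of our chosen $y \in S$ is precisely the representative placed into $\Proj(S,X)$ for $y$.
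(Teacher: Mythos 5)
Your proof is correct and follows essentially the same route as the paper's: both pass through the nearest point of $S$ to $x$, use that its projection onto $X$ belongs to $\Proj(S,X)$, and apply the triangle inequality twice. The only cosmetic difference is the order in which the two triangle inequalities are chained, which does not change the argument.
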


\begin{proof}
    Let $S'$ denote the set $\Proj(S,X)$. Let $x \in \mathcal V$, $y$ and $y^\star$ be the points that are closest to $x$ in $X$ and $S$ respectively, and $y'$ be the point in $S'$ that is closest to $y^\star$.
    Then we have that
    $$ d(x,S') \leq d(x,y') \leq d(x,y^\star) + d(y^\star, y') \leq d(x,y^\star) + d(y^\star, y) \leq d(x, y) + 2d(x,y^\star).$$
\end{proof}

\noindent
The following lemma allows us to use \Cref{lem:proj lemma} to bound the cost of the projected solution.
\begin{lemma}\label{lem:minkow}
    For all $p \in \mathbb N$, $A, B \subseteq \mathcal V$ and $\alpha, \beta \geq 0$, we have that
    $$ \sum_{x \in V} w(x) (\alpha d(x,A) + \beta d(x, B))^p \leq \left( \alpha \cl_p(A) + \beta \cl_p(B) \right)^p. $$
\end{lemma}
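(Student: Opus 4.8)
The statement is Minkowski's inequality in disguise, so the plan is to reduce it to the triangle inequality for the $\ell^p$-norm. First I would introduce the finite index set $V$ and, for $p \in \mathbb N$, define two nonnegative vectors $u, v \in \mathbb R^{V}$ by $u_x := w(x)^{1/p}\,\alpha\, d(x,A)$ and $v_x := w(x)^{1/p}\,\beta\, d(x,B)$ for each $x \in V$; this is legitimate since $w(x) > 0$ and $\alpha,\beta \geq 0$. With this choice,
\begin{equation*}
\|u\|_p = \Big( \sum_{x \in V} w(x)\, \alpha^p d(x,A)^p \Big)^{1/p} = \alpha\, \cl_p(A),
\qquad
\|v\|_p = \beta\, \cl_p(B),
\end{equation*}
and, since $u_x, v_x \geq 0$,
\begin{equation*}
\|u+v\|_p^p = \sum_{x \in V} \big( w(x)^{1/p}(\alpha d(x,A) + \beta d(x,B)) \big)^p = \sum_{x \in V} w(x)\,(\alpha d(x,A) + \beta d(x,B))^p,
\end{equation*}
which is exactly the left-hand side of the claimed inequality.

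Next I would invoke Minkowski's inequality, $\|u+v\|_p \leq \|u\|_p + \|v\|_p$, valid for $p \geq 1$ (and $p \in \mathbb N$ certainly satisfies this), and raise both sides to the $p$-th power, using that both sides are nonnegative, to obtain
\begin{equation*}
\sum_{x \in V} w(x)\,(\alpha d(x,A) + \beta d(x,B))^p = \|u+v\|_p^p \leq \big(\|u\|_p + \|v\|_p\big)^p = \big(\alpha\,\cl_p(A) + \beta\,\cl_p(B)\big)^p,
\end{equation*}
which is the desired bound.

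There is essentially no obstacle here: the only points requiring a moment's care are that the substitution $u_x = w(x)^{1/p}\alpha d(x,A)$ is well defined (which holds because $w(x)>0$, so $w(x)^{1/p}$ makes sense and $w(x)^{1/p\cdot p} = w(x)$), and that the sum is finite so that the standard statement of Minkowski's inequality applies directly. If one wanted to avoid citing Minkowski as a black box, the same conclusion follows from H\"older's inequality applied twice after the elementary expansion $(\alpha d(x,A)+\beta d(x,B))^p = (\alpha d(x,A)+\beta d(x,B))(\alpha d(x,A)+\beta d(x,B))^{p-1}$, but I would simply quote Minkowski since it is completely standard.
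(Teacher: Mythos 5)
Your proposal is correct and matches the paper's proof, which likewise just applies Minkowski's inequality to the left-hand side; you have simply spelled out the substitution $u_x = w(x)^{1/p}\alpha\, d(x,A)$, $v_x = w(x)^{1/p}\beta\, d(x,B)$ that makes the reduction explicit. No gaps.
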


\begin{proof}
    This follows immediately from applying Minkowski's inequality to the left-hand side.
\end{proof}



\noindent
As an immediate corollary of \Cref{lem:minkow}, we obtain the following bound on the cost of the projected solution for all $p \in \mathbb N \cup \{ \infty\}$.

\begin{corollary}\label{lem:proj cor}
    $\cl_p(\Proj(S,X)) \leq \cl_p(X) + 2 \cl_p(S)$.
\end{corollary}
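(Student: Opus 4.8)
The plan is to combine the pointwise bound of \Cref{lem:proj lemma} with the norm inequality of \Cref{lem:minkow}, treating the finite $p$ case and the $p = \infty$ case separately since the latter is excluded from \Cref{lem:minkow}.

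First, suppose $p \in \mathbb N$. \Cref{lem:proj lemma} gives $d(x, \Proj(S,X)) \leq d(x,X) + 2 d(x,S)$ for every $x \in \mathcal V$, and in particular for every $x \in V$. Raising to the $p$-th power (both sides being nonnegative) and multiplying by $w(x) \geq 0$, then summing over $x \in V$, I would get
$$ \cl_p(\Proj(S,X))^p = \sum_{x \in V} w(x)\, d(x, \Proj(S,X))^p \leq \sum_{x \in V} w(x)\,\bigl(d(x,X) + 2 d(x,S)\bigr)^p. $$
Now apply \Cref{lem:minkow} with $A = X$, $B = S$, $\alpha = 1$, $\beta = 2$ to bound the right-hand side by $\bigl(\cl_p(X) + 2\,\cl_p(S)\bigr)^p$. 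Taking $p$-th roots (the map $t \mapsto t^{1/p}$ is monotone on $\mathbb R_{\geq 0}$) yields $\cl_p(\Proj(S,X)) \leq \cl_p(X) + 2\,\cl_p(S)$, as desired.

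For $p = \infty$, I would argue directly from the definition $\cl_\infty(\cdot) = \max_{x \in V} d(x, \cdot)$: again by \Cref{lem:proj lemma}, for each $x \in V$ we have $d(x, \Proj(S,X)) \leq d(x,X) + 2 d(x,S) \leq \max_{y \in V} d(y,X) + 2 \max_{y \in V} d(y,S) = \cl_\infty(X) + 2\,\cl_\infty(S)$, and taking the maximum over $x \in V$ gives the claim.

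There is no real obstacle here; the only thing to be careful about is that \Cref{lem:minkow} is stated only for $p \in \mathbb N$, so the $p = \infty$ endpoint must be handled by hand as above (which is immediate), and that the summation/maximum is taken over the current point set $V$ rather than all of $\mathcal V$, which is exactly the index set appearing in the definition of $\cl_p$.
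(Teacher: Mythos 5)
Your proposal is correct and follows essentially the same route as the paper: the corollary is obtained by combining the pointwise bound of \Cref{lem:proj lemma} with \Cref{lem:minkow} (applied with $\alpha = 1$, $\beta = 2$) and taking $p$-th roots. Your explicit treatment of the $p = \infty$ endpoint, which \Cref{lem:minkow} does not cover, is a small and welcome addition of rigor but does not change the argument.
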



\subsection{Proper vs. Improper Solutions}\label{sec: proper improper def}
Even though we define a solution $S$ to the $(k,p)$-clustering problem in the metric space $(V,w,d)$ to be a subset of $V$, it will be useful to consider solutions that are not necessarily subsets of $V$.
Thus, we define an \emph{improper} (resp.~\emph{proper}) solution to the $(k,p)$-clustering problem in the metric space $(V,w,d)$ as a set $S \subseteq \mathcal V$ (resp.~$S \subseteq V$) of size at most $k$.

Our main dynamic algorithm, which we describe in \Cref{sec:consistency}, maintains an improper solution $S$ to the $(k,p)$-clustering problem. In \Cref{sec:proper}, we show how to dynamically maintain $\Proj(S, V)$ (the projection of $S$ onto the current metric space $V$), which is a subset of $V$ and hence a proper solution. By \Cref{lem:proj cor}, we have that $\cl_p(\Proj(S, V)) \leq 2 \cl_p(S)$.
Thus, we only incur a small constant loss in approximation as we convert our improper solution to a proper solution.

We note that, whenever we use the notation $\Opt_k(\cdot)$ in \Cref{part:I}, we always refer to the cost of the optimal \emph{proper} solution. 

\subsection{Restricted $k$-Clustering}\label{sec:restricted def}
Let $X \subseteq \mathcal V$ be some subset in the underlying metric space $(\mathcal V,d)$.
We will often need to find good solutions to the $(k,p)$-clustering problem in the space $(V,w,d)$ while also enforcing that the solutions we obtain are a subset of $X$. We refer to such solutions as $X$-restricted.
More precisely, we define an $(\alpha, \beta, X)$-restricted solution $S$ to the $(k,p)$-clustering problem to be a set $S \subseteq X$ of size $k$ such that $\cl_p(S, V, w) \leq \alpha \cdot \cl_p(X, V, w) + \beta \cdot \Opt^p_k(V,w)$. Note that $X$ might not necessarily be a subset of $V$, so $X$-restricted solutions are not necessarily proper (see \Cref{sec: proper improper def}).

\begin{lemma}\label{lem:restricted}
    For all $X \subseteq \mathcal V$, there is a $(1, 2, X)$-restricted solution to the $(k,p)$-clustering problem.
\end{lemma}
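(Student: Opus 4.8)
For all $X \subseteq \mathcal{V}$, there is a $(1, 2, X)$-restricted solution to the $(k,p)$-clustering problem.

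**Plan.** The plan is to exhibit an explicit set $S \subseteq X$ of size $k$ and bound its cost using the projection machinery already established, namely \Cref{lem:proj lemma} and \Cref{lem:proj cor}. The natural candidate is $S := \Proj(S^\star, X)$, the projection onto $X$ of an optimal (proper) solution $S^\star$ to the $(k,p)$-clustering problem in $(V,w,d)$, so that $\cl_p(S^\star) = \Opt^p_k(V,w)$. First I would observe that $|\Proj(S^\star, X)| \le |S^\star| \le k$, since projection maps each of the (at most) $k$ points of $S^\star$ to a single point of $X$; if the resulting set has fewer than $k$ points we can pad it with arbitrary points of $X$ (this only decreases the clustering cost), so we may assume $|S| = k$ exactly. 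Note $X$ is assumed nonempty and, implicitly, $|X| \ge k$ for an $X$-restricted solution of size $k$ to exist; this is the standing assumption whenever we invoke restricted solutions.

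**Key step.** The heart of the argument is the cost bound. Applying \Cref{lem:proj cor} with $S = S^\star$, we get $\cl_p(\Proj(S^\star, X)) \le \cl_p(X) + 2\,\cl_p(S^\star) = \cl_p(X, V, w) + 2 \cdot \Opt^p_k(V, w)$. Since padding $\Proj(S^\star, X)$ up to size $k$ does not increase $\cl_p$, the padded set $S$ satisfies $\cl_p(S, V, w) \le \cl_p(X, V, w) + 2 \cdot \Opt^p_k(V, w)$, which is exactly the definition of a $(1, 2, X)$-restricted solution. This handles all $p \in \mathbb{N} \cup \{\infty\}$ uniformly, since \Cref{lem:proj cor} is stated for that full range.

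**Main obstacle.** There is essentially no deep obstacle here — the lemma is a direct packaging of the projection lemma — so the only things to be careful about are the bookkeeping details: (i) justifying that $\Proj$ can only shrink the cardinality and that padding to exactly $k$ points is harmless for the objective (monotonicity of $\cl_p$ under adding centers, which is immediate from $d(x,S)$ being a $\min$ over a larger set); and (ii) making sure the implicit size assumption $|X| \ge k$ is in force, which it is whenever the notion of an $X$-restricted solution of size $k$ is used. Given these, the proof is two lines invoking \Cref{lem:proj cor}.
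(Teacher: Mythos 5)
Your proposal is correct and matches the paper's proof, which likewise takes an optimal solution $S^\star$ and applies \Cref{lem:proj cor} to the projection $\Proj(S^\star, X)$; your extra remarks about padding to size exactly $k$ and the implicit assumption $|X|\ge k$ are harmless bookkeeping the paper leaves implicit.
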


\begin{proof}
    Follows immediately from applying \Cref{lem:proj cor} with the set $X$ and an optimal solution $S^\star$ to the $(k,p)$-clustering problem.
\end{proof}

\subsection{Organization of \Cref{part:I}}\label{sec:P2 org}

In \Cref{part:I}, we describe our dynamic algorithm for $k$-clustering and prove the following theorem.

\begin{theorem}
    There exists a
    fully dynamic algorithm that can maintain a $O(p/\epsilon)$-approximate solution to the $(k,p)$-clustering problem with $\tilde O(k^\epsilon/\epsilon)$ amortized recourse and $\tilde O(k^{1 + \epsilon}p/\epsilon^3)$ amortized update time, for any $\epsilon > 1/\log k$.
\end{theorem}

\noindent In \Cref{sec:consistency}, we describe our algorithm for the consistent clustering problem. In \Cref{sec:RLS}, we describe our randomized local search algorithm. In \Cref{sec:imp consistency}, we show how to use the randomized local search in order to give an efficient implementation of our consistent clustering algorithm. In \Cref{sec:proper}, we show how to efficiently convert the improper solution maintained by our algorithm into a proper solution. Finally, in \Cref{app:sparsification}, we describe the dynamic algorithm that we use in order to sparsify the input to our algorithm.

\noindent

\section{The Consistency Hierarchy}\label{sec:consistency}

In this section, we describe and analyze our algorithm for the consistent clustering problem. In particular, we prove the following theorem.

\begin{theorem}
There exists a deterministic fully dynamic algorithm that can maintain a $O(1/\epsilon)$-approximate solution to the $(k,p)$-clustering problem with $\tilde O(k^\epsilon/\epsilon)$ amortized recourse, for any $\epsilon > 1/ \log k$ and any $p \in \mathbb N \cup \{ \infty\}$.
\end{theorem}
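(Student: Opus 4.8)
The plan is to combine the dynamic framework of Section~\ref{sec:dyn:recouse:main} (which was presented there for $k$-median, i.e.~$p=1$, in the improper setting) with the $\ell^p$-norm machinery from Section~\ref{sec:prelim}. The key conceptual point is that every ingredient used in the $k$-median analysis has an $\ell^p$-analogue already available: the Projection Lemma (Lemma~\ref{lem:proj lemma}) and its corollary $\cl_p(\Proj(S,X)) \le \cl_p(X) + 2\cl_p(S)$ (Corollary~\ref{lem:proj cor}) replace Lemma~\ref{lem:proj lemma:new}/Corollary~\ref{cor:proj lemma:new}, and the Lazy Updates Lemma (Lemma~\ref{lem:lazy:update:new}) goes through verbatim for any $p$ since inserting a point can only decrease $d(x,S)^p$ and deleting a point only removes nonnegative terms. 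So I would first set up the \emph{improper} dynamic $(k,p)$-clustering problem exactly as in Section~\ref{sec:solution:new}, maintain the nested hierarchy $\mathcal V \supseteq S_0 \supseteq \cdots \supseteq S_{\ell+1}$ with $\ell = 1/\epsilon$, slack parameters $s_i = \lfloor k^{1-i\epsilon}\rfloor$, counters $\tau_i$, and the lazy-insertion / rebuild-after-$s_i$-updates scheme, and the \textsc{Rebuild}$(i')$ subroutine recomputing $S_j \leftarrow \arg\min_{S \subseteq S_{j-1}, |S|=k+s_j} \cl_p(S)$.

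Next I would redo the static hierarchy analysis. The $\ell^p$-version of Lemma~\ref{lem:static:new} reads $\cl_p(S_i) \le \cl_p(S_{i-1}) + 2\,\Opt^p_{k+s_i}(V)$: take an optimal $(k+s_i)$-center set $S$, project it onto $S_{i-1}$ to get $S'$ with $|S'| \le k+s_i$, apply Corollary~\ref{lem:proj cor} to bound $\cl_p(S') \le \cl_p(S_{i-1}) + 2\cl_p(S) = \cl_p(S_{i-1}) + 2\,\Opt^p_{k+s_i}(V)$, and use $\cl_p(S_i) \le \cl_p(S')$ by optimality. Summing over $i = 0,\dots,\ell+1$ and using $\cl_p(V) = 0$ together with $\Opt^p_{k+s_i}(V) \le \Opt^p_k(V)$ gives $\cl_p(S_{\ell+1}) \le 2(\ell+2)\,\Opt^p_k(V) = O(1/\epsilon)\cdot\Opt^p_k(V)$, since $s_{\ell+1}=0$ forces $|S_{\ell+1}|=k$. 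The dynamic refinement (Lemma~\ref{lem:approx:key:new}, Corollary~\ref{cor:approx:key:new}) transfers with the same superscript-$j$ bookkeeping: the only three inputs are the static hierarchy inequality, the Lazy Updates Lemma, and the observation that between two resets of $\tau_i$ there are at most $s_i$ updates — all of which hold for general $p$. So in the superscript notation one gets $\cl_p(S_i^i, V^i) \le \cl_p(S_{i-1}^{i-1}, V^{i-1}) + 2\,\Opt^p_k(V)$ for $i \ge 1$, and after telescoping and handling $S_0$ via a static $O(1)$-approximate $(k+s_0,p)$-clustering algorithm, $S_{\ell+1}$ is an $O(1/\epsilon)$-approximate improper solution. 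The recourse analysis (Lemma~\ref{lem:key:recourse:new}, Corollary~\ref{cor:key:recourse:new}) is purely combinatorial — it only uses $|S_i| \le k+2s_i$ and the ``two subsets of a size-$(k+2s_{i-1})$ set, each of size $\ge k$, differ in $\le 4s_{i-1}$ points'' argument — hence it is completely independent of $p$ and gives $\tilde O(k^\epsilon/\epsilon)$ amortized recourse. Finally I would invoke Section~\ref{sec:proper}: the maintained solution is improper, but $\Proj(S_{\ell+1}, V)$ is proper and by Corollary~\ref{lem:proj cor} costs $\cl_p(\Proj(S_{\ell+1},V)) \le \cl_p(S_{\ell+1}) + 2\cl_p(S_{\ell+1})$-type bound (more precisely $\le 2\cl_p(S_{\ell+1})$ using that $V \supseteq$ the relevant points), losing only a constant factor, and it can be maintained with $\tilde O(1)$ recourse overhead.

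One subtlety worth flagging, and where I'd be most careful: this particular theorem (the one in Section~\ref{sec:consistency}) claims only the \emph{recourse} guarantee, not the update-time guarantee — it is explicitly the "deterministic, small-recourse, ignore running time" statement analogous to Theorem~\ref{thm:dyn:recourse:new}. So in the proof I would \emph{not} need the randomized local search or the $\ell^p$-analogue of the fine-grained $(1,\beta)$-restricted analysis; the crude $(1,2,X)$-restricted solution from Lemma~\ref{lem:restricted} (which says $\arg\min_{S\subseteq X,|S|=k}$ attains $\cl_p(S) \le \cl_p(X) + 2\Opt^p_k$) is all that is used to recompute each layer, and the exponential-in-$1/\epsilon$ blowup is avoided precisely because the static-hierarchy inequality has the \emph{additive} form $\cl_p(S_i) \le \cl_p(S_{i-1}) + 2\Opt^p_{k+s_i}$ with coefficient $1$ on $\cl_p(S_{i-1})$, not a multiplicative constant $>1$. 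The main (minor) obstacle is therefore just careful bookkeeping: checking that every place where the $p=1$ proof wrote $\cl(\cdot,\cdot)$ as a sum of distances and used linearity (summing pointwise inequalities over $x\in V$) still works when the objective is an $\ell^p$-norm — and it does, because the only nonlinear step, combining $d(x,\Proj(S,X)) \le d(x,X)+2d(x,S)$ pointwise into a cost bound, is exactly what Lemma~\ref{lem:minkow} / Corollary~\ref{lem:proj cor} handle via Minkowski's inequality. The range constraint $\epsilon > 1/\log k$ enters only to keep $\ell = 1/\epsilon = O(\log k)$, which controls the $O(1/\epsilon)$ approximation factor and the number of hierarchy layers.
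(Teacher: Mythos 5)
Your proposal is correct and follows essentially the same route as the paper's proof in \Cref{sec:consistency}: the nested hierarchy with slacks $s_i$, lazy insertions, counter-triggered rebuilds, $(1,2,S_{i-1})$-restricted solutions obtained from the Minkowski-based projection corollary, the additive key invariant with coefficient $1$ on $\cl_p(S_{i-1})$, and the purely combinatorial symmetric-difference recourse bound. The only cosmetic difference is that you benchmark against the improper optimum (extended-abstract style) and then project via \Cref{sec:proper}, whereas the paper compares directly against the proper optimum and absorbs a factor $2$ via its claim that $\Opt_{k+s_i}(V^{i}) \leq 2\,\Opt_k(V)$ --- both yield the same $O(1/\epsilon)$ guarantee.
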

\noindent
This algorithm works by dynamically maintaining a hierarchy of nested solutions to the $(k,p)$-clustering problem, which it reconstructs periodically in a way that leads to good amortized recourse. For now, we describe our algorithm without dealing with data structures or implementation details and only analyse the recourse and approximation ratio of the algorithm.


\subsection{Our Dynamic Algorithm}





Let $\epsilon > 0$, $k \in \mathbb N$ and $p \in \mathbb N \cup \{\infty\}$ be parameters such that $\ell := 1 /\epsilon \in \mathbb N$ and $\epsilon \geq 1/\log k$. The parameters $\alpha, \beta \geq 1$ depend on how our algorithm constructs restricted solutions (see \Cref{sec:approx}).
Let $s_i := \lfloor k^{(\ell - i)\epsilon} \rfloor$ be the \emph{slack} at \emph{layer} $i$.
We initialize our algorithm by calling \textsc{ReconstructFromLayer}$(0)$. 
After an update occurs, we first update the metric space $(V, w, d)$ by inserting or deleting the appropriate point.
We then proceed to call Algorithm~\ref{alg:update} which updates the sets $S_i$ and counters $\tau_i$ that are maintained by our algorithm. The pseudocode below gives the full formal description of the algorithm.

\begin{algorithm}[H]\label{alg:update}
    \SetAlgoLined
    \DontPrintSemicolon
    \texttt{// Perform lazy updates:} \;
    \If{\textnormal{a point $x$ as been inserted}}{
        		$S_i \leftarrow S_i + x$ for all $i$\;
        }
    \texttt{// Check if reconstruction is necessary:} \;
    \For{$i = 0 \dots \ell + 1$}{
    	$\tau_i \leftarrow \tau_i + 1$\;
        \If{$\tau_i > s_i$}{
        		\Return \textsc{ReconstructFromLayer}$(i)$\;
        }
    }
   \Return $S_{\ell + 1}$\;
    \caption{\textsc{UpdateHierarchy}}
\end{algorithm}

\begin{algorithm}[H]\label{alg:recon}
    \SetAlgoLined
    \DontPrintSemicolon
    \texttt{// Let $S_{-1}$ denote the set $V$} \;
    \For{$i = j \dots \ell + 1$}{
        Let $S_i$ be an $(\alpha, \beta, S_{i-1})$-restricted solution to the $(k + s_i, p)$-clustering problem in the metric space $(V,w,d)$\label{line:compute}\;
        $\tau_i \leftarrow 0$\;
    }
    \Return $S_{\ell + 1}$\;
    \caption{\textsc{ReconstructFromLayer}$(j)$}
\end{algorithm}

\medskip
\noindent\textbf{Lazy Updates.}
After an update, we lazily update each set $S_i$. 
When a point $x$ is inserted, we add $x$ to each $S_i$ for all $0 \leq i \leq \ell + 1$. When a point $x$ is deleted, we do not modify any of the sets $S_i$. Thus, the sets $S_i$ are not necessarily subsets of the current metric space $V$, since points in $S_i$ might have been deleted from $V$.

\medskip
\noindent\textbf{Computing $S_i$.}
We assume that the set $S_i$ computed in \Cref{line:compute} of Algorithm~\ref{alg:recon} satisfies the condition that $|S_{i}| = k + s_i$ as long as $|S_{i-1}| \geq k + s_i$. Otherwise, we set $S_i$ to be the whole set $S_{i-1}$. In other words, when computing an $X$-restricted solution to the $(k,p)$-clustering problem, we always return a solution of size exactly $k$---unless $X$ has size less than $k$, in which case we simply return all the points in $X$ as the solution.

\subsection{Basic Properties of the Hierarchy}\label{sec:structural}

We now describe some of the basic properties of the hierarchy.
We begin by observing that the sets maintained at each of the layers are nested.

\begin{observation}\label{lem:nested}
    We have that $V \supseteq S_0 \supseteq S_1 \supseteq \dots \supseteq S_{\ell + 1}$.
\end{observation}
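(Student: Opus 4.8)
The plan is to prove \Cref{lem:nested} by a straightforward induction on the sequence of updates, establishing that $V \supseteq S_0 \supseteq S_1 \supseteq \cdots \supseteq S_{\ell+1}$ is an invariant maintained by the algorithm. Adopting the convention $S_{-1} := V$ from \Cref{alg:recon}, this chain is equivalent to the conjunction of the containments $S_i \subseteq S_{i-1}$ for $i \in \{0, 1, \dots, \ell+1\}$, so it suffices to check that each of the primitive operations the algorithm ever performs preserves every one of these containments: (i) a lazy insertion, (ii) a lazy deletion, and (iii) a call to \textsc{ReconstructFromLayer}$(j)$ (which is also how the algorithm is initialized, via $j = 0$).

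The heart of the argument is operation (iii), and it is the one place where a non-trivial fact is invoked. A call to \textsc{ReconstructFromLayer}$(j)$ leaves $S_0, \dots, S_{j-1}$ untouched, so every containment $S_i \subseteq S_{i-1}$ with $i \le j-1$ survives from the inductive hypothesis. For $i = j, j+1, \dots, \ell+1$, processed in this increasing order, the algorithm overwrites $S_i$ with an $(\alpha, \beta, S_{i-1})$-restricted solution to the $(k+s_i, p)$-clustering problem in $(V,w,d)$ --- or, in the degenerate case $|S_{i-1}| < k + s_i$, with $S_{i-1}$ itself. By the definition of a restricted solution in \Cref{sec:restricted def}, the new $S_i$ is in either case a subset of the set $S_{i-1}$ against which it is computed; and since layers are processed top-down, that $S_{i-1}$ is the already-recomputed one. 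Hence, immediately after the call, $S_j \subseteq S_{j-1}$, $S_{j+1} \subseteq S_j$, \dots, $S_{\ell+1} \subseteq S_\ell$ all hold, and together with the untouched lower layers the full chain is intact. Specializing to $j = 0$ at initialization gives the base case of the induction.

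For the lazy updates the checks are immediate. When a point $x$ is inserted, the algorithm sets $V \leftarrow V \cup \{x\}$ and $S_i \leftarrow S_i \cup \{x\}$ for every $i$; adding the same point to both sides of a containment $S_i \subseteq S_{i-1}$ preserves it, so the whole chain is preserved. When a point $x$ is deleted, no set $S_i$ is modified, so the containments $S_0 \supseteq S_1 \supseteq \cdots \supseteq S_{\ell+1}$ among the hierarchy layers are trivially unaffected; the remaining containment $V \supseteq S_0$ is read in the improper sense of \Cref{sec: proper improper def}, i.e.\ the $S_i$ are allowed to retain points that have since been deleted from $V$ and are only purged of them at the next reconstruction (which happens within $s_0$ updates). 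Finally, after any of these primitives \textsc{UpdateHierarchy} may additionally trigger a single call to \textsc{ReconstructFromLayer}$(i')$, which is already covered by operation (iii). Since $V$ and the $S_i$ are never altered in any other way, the invariant holds after every update, which is exactly \Cref{lem:nested}. There is no real obstacle here: the only point requiring attention is the bookkeeping --- reading reconstructions as processing layers in increasing index order so that each $S_i$ is compared with the freshly recomputed $S_{i-1}$, and noting that a lazy insertion touches $V$ and all layers simultaneously --- and the sole substantive ingredient is the trivial observation that an $X$-restricted solution is, by definition, contained in $X$.
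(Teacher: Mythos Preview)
Your proof is correct and follows essentially the same approach as the paper's: verify that reconstruction produces $S_i \subseteq S_{i-1}$ by definition of a restricted solution, and that lazy updates preserve each containment (insertions add the same point to both sides, deletions touch neither). Your treatment is in fact more careful than the paper's own proof in explicitly flagging that the containment $V \supseteq S_0$ must be read in the improper sense after a deletion, since the $S_i$ may retain deleted points until layer~$0$ is next rebuilt.
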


\begin{proof}
    Let $0 \leq i \leq \ell + 1$. When the set $S_i$ is reconstructed, we clearly have that $S_i \subseteq S_{i -1}$. Lazily updating the sets $S_{i-1}$ and $S_{i}$ after an update maintains this invariant since lazy deletions have no effect on these sets and lazy insertions will add the same point to both.
\end{proof}

\noindent
The following lemma describes the structure of the sets $S_i$ maintained by the algorithm.

\begin{lemma}\label{lem:size Si}
    For all $0 \leq i \leq \ell + 1$, we have that $|S_i| \leq k + 2s_i$. Furthermore, either $|S_i| \geq k + s_i$ or $S_i = \dots = S_0 \supseteq V$.
\end{lemma}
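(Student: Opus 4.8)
I will prove the two assertions separately, in both cases by tracking the counters $\tau_i$. For the \textbf{size upper bound}, the first step is to record the invariant $0 \le \tau_i \le s_i$ at the end of every update: this is an easy induction on the update sequence, using that \textsc{ReconstructFromLayer}$(j)$ resets $\tau_j, \dots, \tau_{\ell+1}$ to $0$, while the \textbf{for} loop of \textsc{UpdateHierarchy} increments $\tau_i$ only after checking that it stays $\le s_i$, and triggers a reconstruction at index $\le i$ whenever that check fails. Call an update a \emph{layer-$i$ reconstruction} if it invokes \textsc{ReconstructFromLayer}$(j)$ for some $j \le i$; these are exactly the updates at which $S_i$ is recomputed and $\tau_i$ reset to $0$. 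Since every update that is not a layer-$i$ reconstruction increments $\tau_i$ by one, the current value of $\tau_i$ equals the number of updates since the last layer-$i$ reconstruction, hence is at most $s_i$. Immediately after that reconstruction $|S_i| \le k + s_i$, because the rule in \Cref{line:compute} returns either a set of size exactly $k + s_i$ or all of $S_{i-1}$ (the latter case forcing $|S_{i-1}| < k + s_i$). Between then and now $S_i$ changes only through lazy updates, which never remove a point from $S_i$ and add at most one point per insertion; so $|S_i|$ has grown by at most $\tau_i \le s_i$, giving $|S_i| \le k + 2 s_i$.

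For the \textbf{dichotomy}, the plan is an induction on $i = 0, 1, \dots, \ell + 1$ of the statement: $|S_i| < k + s_i$ implies $S_i = S_{i-1} = \dots = S_0$ and $S_0 \supseteq V$, where $S_{-1} := V$ and we use that $s_i$ is non-increasing in $i$. Assume $|S_i| < k + s_i$, and let $t^\star$ be the most recent layer-$i$ reconstruction, which exists because initialization calls \textsc{ReconstructFromLayer}$(0)$. Lazy updates never shrink $S_i$, so the size of $S_i$ just after $t^\star$ is also $< k + s_i$; hence the computation rule must have set $S_i := S_{i-1}$ at $t^\star$, with $|S_{i-1}| < k + s_i$ at that moment. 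Furthermore $S_{i-1}$ is not reconstructed anywhere between $t^\star$ and the present, since a reconstruction of $S_{i-1}$ is a call \textsc{ReconstructFromLayer}$(j)$ with $j \le i-1$, which would also rebuild $S_i$ and contradict the maximality of $t^\star$. Therefore, from $t^\star$ onward $S_{i-1}$ and $S_i$ are modified only by the same lazy insertions (lazy deletions leaving both unchanged), so $S_i = S_{i-1}$ persists to the present and $|S_{i-1}| = |S_i| < k + s_i \le k + s_{i-1}$. For the base case $i = 0$ this finishes the argument: there $S_{i-1} = V$, so $S_0$ equals the current point set $V$ as of time $t^\star$, and thereafter each lazy insertion adds its point to both $S_0$ and $V$ while each lazy deletion removes a point from $V$ only, so the invariant $S_0 \supseteq V$ is preserved up to the present. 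For $i \ge 1$ we have just shown $S_i = S_{i-1}$ and $|S_{i-1}| < k + s_{i-1}$, so the inductive hypothesis gives $S_{i-1} = \dots = S_0$ with $S_0 \supseteq V$, and prepending $S_i = S_{i-1}$ closes the induction.

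The \textbf{main obstacle} I anticipate is purely bookkeeping: getting the invariant $\tau_i \le s_i$ exactly right in the first part, and, in the second part, exploiting that \textsc{ReconstructFromLayer}$(j)$ touches \emph{all} layers of index $\ge j$ --- this is what makes ``the last layer-$i$ reconstruction'' the correct reference point and what rules out any intervening change to $S_{i-1}$ relative to $S_i$. Once these are pinned down, both parts reduce to the elementary facts that a lazy insertion enlarges a layer by at most one point and a lazy deletion leaves it unchanged.
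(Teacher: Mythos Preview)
Your proposal is correct and follows essentially the same approach as the paper's proof: both parts track the time of the last reconstruction of $S_i$, use that $|S_i| \le k + s_i$ immediately after reconstruction and that at most $s_i$ lazy insertions occur before the next one, and prove the dichotomy by induction on $i$, invoking that the computation rule sets $S_i := S_{i-1}$ when $|S_{i-1}| < k + s_i$. The only cosmetic difference is that you apply the inductive hypothesis at the current time (after first establishing $S_i = S_{i-1}$ persists), whereas the paper applies it at the reconstruction time $t^\star$ and then propagates forward via lazy updates; both orderings work.
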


\begin{proof}
To see that $|S_i| \leq k + 2 s_i$ for all $0 \leq i \leq \ell + 1$, note that $|S_i| \leq k + s_i$ every time $S_i$ is reconstructed and that its size can only increase by at most $1$ with each lazy update.
Since there are at most $s_i$ lazy updates before $S_i$ is reconstructed again, the bound follows.

We prove the rest by induction on $i$. For $i = 0$, consider the last time that $S_0$ was reconstructed. If $|V| < k + s_0 = 2k$ at this point, then the algorithm would have set $S_0 = V$. It follows from the lazy updates performed by our algorithm that $S_0 \supseteq V$ after each update. Otherwise, the algorithm sets $S_0$ to be a solution to the $2k$-median problem in the metric space $(V, w, d)$, where $|S_0| = 2k$. Since lazy updates cannot decrease the size of $S_0$, we have that $|S_0| \geq 2k$.

Now, let $1 \leq i \leq \ell + 1$. For the inductive step, suppose that this holds for all $0 \leq j < i$. Consider the last time that $S_i$ was reconstructed. If $|S_{i-1}| < k + s_i$ at this point, then the algorithm sets $S_i$ to be $S_{i-1}$. Since $|S_{i-1}| < k + s_i \leq k + s_{i-1}$, we have that $S_{i-1} = \dots = S_0 \supseteq V$ at this point.
Thus, by the lazy updates performed by our algorithm, we have that $S_i = \dots = S_0 \supseteq V$ after each update.
Otherwise, $|S_i| \geq k + s_i$ by an analogous argument to the base case.
\end{proof}

\noindent
Based on this lemma, we derive the following corollary, which describes the set $S_{\ell + 1}$, which our algorithm maintains as its output.

\begin{corollary}\label{cor:S_l size}
    We either have that $|S_{\ell + 1}| = k$, or $S_{\ell + 1} \supseteq V$ and $|V| < k$.
\end{corollary}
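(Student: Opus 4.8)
The plan is to read the corollary straight off \Cref{lem:size Si}, specialized to the top layer $i = \ell+1$. The first thing I would record is the one numerical fact that makes the top layer special: since $\ell = 1/\epsilon$, the slack there is $s_{\ell+1} = \lfloor k^{(\ell-(\ell+1))\epsilon}\rfloor = \lfloor k^{-\epsilon}\rfloor$, and as $\epsilon \geq 1/\log k$ forces $k \geq 2$ (so that $0 < k^{-\epsilon} < 1$), this equals $0$. Substituting $s_{\ell+1} = 0$ into \Cref{lem:size Si} then gives $|S_{\ell+1}| \leq k + 2s_{\ell+1} = k$, together with the dichotomy that either $|S_{\ell+1}| \geq k + s_{\ell+1} = k$, or $S_{\ell+1} = S_\ell = \dots = S_0 \supseteq V$.

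From this the two alternatives of the corollary fall out by a short case analysis. In the first branch of the dichotomy, $|S_{\ell+1}| \geq k$ combined with $|S_{\ell+1}| \leq k$ forces $|S_{\ell+1}| = k$. In the second branch we have $V \subseteq S_{\ell+1}$, hence $|V| \leq |S_{\ell+1}| \leq k$; if $|V| = k$ this again sandwiches $|S_{\ell+1}|$ between $k$ and $k$, so $|S_{\ell+1}| = k$, while if $|V| < k$ we are precisely in the situation ``$S_{\ell+1} \supseteq V$ and $|V| < k$''. Thus in every case one of the two stated alternatives holds.

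I do not expect a genuine obstacle here: all of the substance is already packaged in \Cref{lem:size Si}, and the only things requiring care are boundary bookkeeping — namely explicitly noting that $s_{\ell+1} = 0$ so that the generic upper bound $k + 2s_i$ collapses to $k$, and handling the borderline value $|V| = k$ so that the two cases of the corollary are genuinely exhaustive (one could alternatively fold that borderline instance into the ``$|S_{\ell+1}| = k$'' alternative and not mention it at all).
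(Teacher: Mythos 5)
Your proof is correct and follows the paper's own argument: both set $s_{\ell+1} = \lfloor k^{-\epsilon}\rfloor = 0$ and read the conclusion directly off \Cref{lem:size Si}, with your extra handling of the borderline $|V| = k$ case being a minor elaboration of the same route.
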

\begin{proof}
Since $k^{(\ell - (\ell + 1))\epsilon} = k^{-\epsilon} < 1$, we have that $s_{\ell + 1} = \lfloor k^{(\ell - (\ell + 1))\epsilon} \rfloor = 0$.
It follows from \Cref{lem:size Si} that $|S_{\ell + 1}| \leq k$. Furthermore, if $|S_{\ell + 1}| < k$, then we have that $S_{\ell + 1} \supseteq V$.
\end{proof}

\subsection{Recourse Analysis}\label{sec:recourse}

We now proceed to analyse the recourse of the sets maintained at each layer of the hierarchy.
Suppose that we run our algorithm on a sequence of updates $\sigma_1,\dots,\sigma_T$.
For all $0 \leq i \leq \ell + 1$, let $S_i^{t}$ denote the state of the set $S_i$ after the end of the $t^{th}$ update.
We now prove following lemma, which immediately implies the recourse bound of our algorithm.
\begin{lemma}\label{lem:rec bound}
    For all $0 \leq i \leq \ell + 1$, we have that
    $$\sum_{t=1}^T \left|S_{i}^{t-1} \oplus S_{i}^{t}\right| = O\!\left(\frac{k^\epsilon}{\epsilon} \right).$$
\end{lemma}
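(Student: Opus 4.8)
The plan is to account for every modification of $S_i$ over the update sequence $\sigma_1,\dots,\sigma_T$ by classifying it as either a \emph{lazy insertion} or part of a \emph{rebuild} of $S_i$ caused by a call $\textsc{ReconstructFromLayer}(j)$ with $j\le i$; these are the only two ways $S_i$ can change, since $\textsc{ReconstructFromLayer}(j)$ only touches $S_j,\dots,S_{\ell+1}$. The lazy-insertion contribution is immediate: each update changes $S_i$ by at most one this way, so it totals at most $T$. For the rebuild contribution I would group the rebuilds by their originating layer $j\in\{0,\dots,i\}$ and, for each $j$, bound separately (i) the number of calls $\textsc{ReconstructFromLayer}(j)$ over the whole sequence, and (ii) the recourse that a single such call inflicts on $S_i$.

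For (i): right after any call $\textsc{ReconstructFromLayer}(j')$ with $j'\le j$ the counter $\tau_j$ is reset to $0$, and at each later update $\tau_j$ ends up either incremented by one or reset to $0$; since $\textsc{ReconstructFromLayer}(j)$ is invoked only when $\tau_j>s_j$, at least $s_j$ updates separate any two consecutive such invocations, so there are at most $T/s_j+1$ of them (for $j=\ell+1$, where $s_{\ell+1}=0$, use the trivial bound that at most one call happens per update). For (ii) I would establish the analogue of \Cref{lem:key:recourse:new}: during $\textsc{ReconstructFromLayer}(j)$ the set $S_{j-1}$ is left unchanged, so by \Cref{lem:nested} the pre- and post-rebuild copies of $S_i$ both sit inside $S_{j-1}$; by \Cref{lem:size Si} we have $|S_{j-1}|\le k+2s_{j-1}$ and $|S_i|\ge k$ before and after (for the top layer use \Cref{cor:S_l size} and $s_{\ell+1}=0$), and two size-$\ge k$ subsets of a ground set of size $\le k+2s_{j-1}$ differ in at most $4s_{j-1}$ points; for $j=0$ the crude bound $|S_i^-|+|S_i^+|=O(k+s_i)=O(k)$ works, and in the degenerate branch of \Cref{lem:size Si} (where $S_0=\dots=S_i\supseteq V$) the relevant sets coincide before and after, contributing $0$.

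Multiplying (i) and (ii): rebuilds originating at a layer $j\ge1$ contribute to $S_i$ a total of at most $(T/s_j+1)\cdot 4s_{j-1}=O(T\,s_{j-1}/s_j)+O(k)=O(Tk^{\epsilon})+O(k)$, using $s_{j-1}/s_j=\lfloor k^{(\ell-j+1)\epsilon}\rfloor/\lfloor k^{(\ell-j)\epsilon}\rfloor=O(k^{\epsilon})$; layer $j=0$ contributes $(T/s_0+1)\cdot O(k)=O(T+k)$ since $s_0=k$. Summing over the $O(1/\epsilon)$ values of $j$ and adding the $O(T)$ from lazy insertions yields $\sum_{t=1}^T|S_{i}^{t-1}\oplus S_{i}^{t}|=O(Tk^{\epsilon}/\epsilon)$ (the additive $O(k/\epsilon)$ coming from the ``$+1$'' terms is the one-time cost of the initial round of reconstructions and is charged to preprocessing), i.e.\ the amortized recourse of $S_i$ is $O(k^{\epsilon}/\epsilon)$, which is the bound the lemma asserts. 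The main obstacle, and the reason a cruder estimate fails, is that one cannot lump all rebuilds together: the frequently triggered ones (large $j$, small $s_j$) cost little while the rare ones (small $j$) are expensive, and it is precisely the per-layer product $(T/s_j)\cdot s_{j-1}$ combined with the telescoping ratio $s_{j-1}/s_j=O(k^{\epsilon})$ that keeps the bound at $\tilde O(k^{\epsilon}/\epsilon)$ rather than polynomial in $k$; a minor additional point is to dispatch the degenerate small-$|V|$ case and the top layer $s_{\ell+1}=0$ via the trivial size bounds.
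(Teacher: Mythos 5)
Your proposal is correct and follows essentially the same route as the paper: the per-rebuild bound of $4s_{r^t-1}$ via nesting and the size bound $|S_{j-1}|\le k+2s_{j-1}$ (the paper's \Cref{lem:rec bound t}), the frequency bound of at least $s_j$ updates between consecutive calls to $\textsc{ReconstructFromLayer}(j)$, and the per-layer product $s_{j-1}/s_j=O(k^\epsilon)$ summed over $O(1/\epsilon)$ layers. Your handling of the degenerate cases (the $s_{\ell+1}=0$ top layer, the small-$|V|$ branch, and the $j=0$ crude bound) also matches the paper's treatment.
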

\noindent
Fix some $0 \leq i \leq \ell + 1$.
Let $r^{t} \in \{0,\dots,\ell + 1\}$ 
be the layer that we reconstruct from
while handling the $t^{th}$ update.\footnote{i.e. the value such that we call \textsc{ReconstructFromLayer}$(r^{t})$.} The following lemma bounds the recourse of the $t^{th}$ update.

\begin{lemma}\label{lem:rec bound t} For all $t \in [T]$, we have that
    $\left|S_{i}^{t-1} \oplus S_{i}^{t}\right| \leq 4 s_{r^{t} - 1}$.
\end{lemma}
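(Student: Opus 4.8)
The goal is to bound $|S_i^{t-1} \oplus S_i^t|$ when the $t^{th}$ update triggers a call to \textsc{ReconstructFromLayer}$(r^t)$. First I would dispose of the easy case: if $r^t > i$, then the set $S_i$ is only touched by the lazy update at the start of Algorithm~\ref{alg:update} (at most one insertion), so $|S_i^{t-1} \oplus S_i^t| \leq 1 \leq 4 s_{r^t - 1}$ trivially (recall $s_j \geq 1$ for $j \leq \ell$, and the only way $r^t - 1 = \ell + 1$ would require $i \geq \ell+2$, which is impossible). So the interesting case is $r^t \leq i$, where $S_i$ gets recomputed inside the reconstruction loop.

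In that case, let $j := r^t$ and write $X^- $ for the contents of $S_{j-1}$ just before the reconstruction begins. When line~\ref{line:compute} runs for index $i$, the new $S_i$ is a subset of the (possibly already-rebuilt) set $S_{i-1}$, which by \Cref{lem:nested} is itself a subset of $X^-$ — so both $S_i^{t-1}$ and $S_i^t$ are subsets of a common set of size at most $|S_{j-1}^{t-1}| \leq k + 2 s_{j-1}$, using \Cref{lem:size Si}. (I should double-check the off-by-one here: the lazy insertion at the start of the update may have added one point to $S_{j-1}$, so the bound is really $k + 2 s_{j-1} + 1$; this is still $\leq k + 2s_{j-1} + s_{j-1}$, absorbed into the constant.) On the other hand, both $S_i^{t-1}$ and $S_i^t$ have size at least... here I need the lower bound. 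By \Cref{lem:size Si}, each of $S_i^{t-1}$, $S_i^t$ either has size $\geq k + s_i \geq k$, or equals $S_0 \supseteq V$ with $|V| < k$; in the latter degenerate situation $S_i = S_{i-1} = \dots = S_{j-1}$ and the reconstruction changes nothing at layer $i$ (everything just copies down), so $|S_i^{t-1} \oplus S_i^t| \leq 1$ again. So we may assume $|S_i^{t-1}|, |S_i^t| \geq k$.

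Now it is a pure counting fact: two subsets $A, B$ of a ground set of size $\leq k + 3 s_{j-1}$, each of size $\geq k$, satisfy $|A \oplus B| = |A \setminus B| + |B \setminus A| \leq (|{\rm ground}| - |B|) + (|{\rm ground}| - |A|) \leq 2(k + 3s_{j-1} - k) = 6 s_{j-1}$ — wait, that gives $6s_{j-1}$ not $4 s_{j-1}$, so I should be more careful with the size bound on the ground set; using the cleaner bound $|S_{j-1}| \leq k + 2 s_{j-1}$ (treating the lazy insertion's single point separately or noting $r^t \le i$ forces $j-1 < \ell+1$ so $s_{j-1}\ge 1$ and the $+1$ is absorbed) yields $|A \oplus B| \leq 4 s_{j-1} = 4 s_{r^t - 1}$, as claimed. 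This is exactly the argument already used in \Cref{lem:key:recourse:new} of the extended abstract, so I would cite that pattern. The main obstacle — really the only subtlety — is bookkeeping the degenerate small-$|V|$ regime and the single extra point from the lazy insertion; neither is deep, but both must be handled to land precisely on the constant $4$ rather than a larger constant.
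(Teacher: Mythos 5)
Your argument is essentially the paper's proof of this lemma: dispose of $r^{t} > i$ via the lazy update, observe that $S_i^{t-1}$ and $S_i^{t}$ are both contained in the layer $S_{r^{t}-1}$ (which is not rebuilt during this update), invoke \Cref{lem:size Si} to bound that common superset by $k+2s_{r^{t}-1}$ and to lower-bound the two sets by $k+s_i$ unless the degenerate "$S_i=\dots=S_0\supseteq V$" clause holds (in which case everything collapses onto $S_{r^{t}-1}$ and the symmetric difference is at most $1$), and finish with the counting bound $|A\oplus B|\leq 4s_{r^{t}-1}$. Two loose ends in your write-up are worth fixing. First, your main counting step silently covers $r^{t}=0$, where the "ground set" is $S_{-1}=V$: \Cref{lem:size Si} says nothing about $V$, and $|V|$ can be far larger than $k+2s_{-1}$, so the containment argument gives nothing there. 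The paper carves this case out separately and settles it trivially using $s_{-1}\geq k$ together with the crude bound $|S_i^{t-1}\oplus S_i^{t}|\leq |S_i^{t-1}|+|S_i^{t}|\leq 2(k+2s_i)$; you need some such one-line disposal of $r^{t}=0$ before restricting to $i\geq r^{t}>0$.

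Second, the off-by-one you flag does not actually exist, and your proposed repairs would not recover the stated constant: "absorbing" a $+1$ into the ground-set size gives $2(2s_{r^{t}-1}+1)=4s_{r^{t}-1}+2$, not $4s_{r^{t}-1}$, and peeling off the newly inserted point separately gives $4s_{r^{t}-1}+1$. The clean observation (which is what the paper uses) is that \Cref{lem:size Si} already bounds the post-lazy-insertion state: since $r^{t}$ is the smallest index whose counter overflows, $\tau_{r^{t}-1}\leq s_{r^{t}-1}$, so at most $s_{r^{t}-1}$ lazy updates (including the current one) have hit $S_{r^{t}-1}$ since its last rebuild, whence $|S_{r^{t}-1}^{t}|\leq k+2s_{r^{t}-1}$ exactly as the lemma states; taking both $S_i^{t-1}$ and $S_i^{t}$ as subsets of $S_{r^{t}-1}^{t}$ then lands precisely on $4s_{r^{t}-1}$. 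With those two points tightened, your proof coincides with the paper's.
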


\begin{proof}
    We first note that, since $s_{-1} \geq k$, this bound is trivial when $r^{t} = 0$. Furthermore, if $i < r^{t}$, then we have that $\left|S_{i}^{t-1} \oplus S_{i}^{t}\right| \leq 1$ by the definition of the lazy updates. Thus, we assume that $i \geq r^{t} > 0$.

    Applying \Cref{lem:nested}
    and noting that $S_{r^{t} - 1}$ is not reconstructed during the $t^{th}$ update, it follows that $S^{t-1}_{i}$ and $S^{t}_{i}$ are subsets of $S^{t}_{r^{t} - 1}$. 
    Now, suppose that both $S^{t-1}_{i}$ and $S^{t}_{i}$ have size at least $k + s_i$. Then, by \Cref{lem:size Si}, $\left|S^{t}_{r^{t} - 1} \right| \leq k + 2s_{r^{t} - 1}$, and hence $\left|S_{i}^{t-1} \oplus S_{i}^{t}\right| \leq 4s_{r^{t} - 1}$. Otherwise, one of $S^{t-1}_{i}$ and $S^{t}_{i}$ has size less than $k + s_i$. In this case, by \Cref{lem:size Si}, we have that $S^{t-1}_{i} = S^{t-1}_{r^{t} - 1}$ and $S^{t}_{i} = S^{t}_{r^{t} - 1}$, and so
    $$ \left|S_{i}^{t-1} \oplus S_{i}^{t}\right| = \left|S_{r^{t} - 1}^{t-1} \oplus S_{r^{t} - 1}^{t}\right| \leq 1. $$
    Since $s_{r^{t} - 1} \leq s_\ell = 1$, the bound follows.
\end{proof}

\noindent Applying \Cref{lem:rec bound t}, we can derive the following upper bound on the total recourse of the set $S_i$ maintained by our algorithm across the sequence of updates
$$ \sum_{t = 1}^T \left|S_{i}^{t-1} \oplus S_{i}^{t}\right| 
\leq 4 \cdot \sum_{t = 1}^T \left\lfloor k^{(\ell - r^{t} + 1)\epsilon} \right\rfloor
= O(1) \cdot \sum_{t = 1}^T \sum_{j = 0}^{\ell + 1} \1 \! \left[r^{t} = j\right] \cdot \left\lfloor k^{(\ell - j + 1)\epsilon} \right\rfloor$$
$$ \leq O(1) \cdot \sum_{j = 0}^{\ell + 1} k^{(\ell - j + 1)\epsilon} \cdot \sum_{t = 1}^T  \1 \! \left[r^{t} = j\right]. $$
For $0 \leq j \leq \ell$, we note that every time the set $S_{j}$ is reconstructed we initialize a counter that will then call $\textsc{ReconstructFromLayer}(j)$ after more than $\lfloor k^{(\ell - i) \epsilon} \rfloor$ updates have occurred---unless some set $S_{j'}$ for $j' <  j$ is reconstructed before this can happen and causes this counter to be reset to $0$. Hence, we get that 
$$\sum_{t = 1}^T  \1 \! \left[r^{t} = j\right] \leq \frac{T}{\left\lfloor k^{(\ell - j) \epsilon} \right\rfloor + 1}.$$
It follows that the amortized recourse of the set $S_i$ is upper bounded by
$$ O(1) \cdot \sum_{j = 0}^{\ell + 1} \frac{k^{(\ell - j + 1)\epsilon}}{\left\lfloor k^{(\ell - j) \epsilon} \right\rfloor + 1} \leq O(1) \cdot \sum_{j = 0}^{\ell + 1} k^\epsilon = O\!\left(\frac{k^\epsilon}{\epsilon} \right). $$






\subsection{Approximation Analysis}\label{sec:approx}

We now analyze the approximation ratio of our algorithm for the $(k,p)$-clustering problem.

Consider the state of our algorithm after handling some arbitrary sequence of updates. For all $0 \leq i,j \leq \ell + 1$, let $S_{i}^{j}$ and $V^{j}$ respectively denote the states of the sets $S_i$ and $V$ at the end of the last update during which the set $S_{j}$ was reconstructed. The following lemma captures the key invariant maintained by our dynamic algorithm.

\begin{lemma}\label{lem:key inv}
    For all $0 \leq i \leq \ell + 1$, we have that 
    \begin{equation}
        \cl \! \left(S_i^{i}, V^{i}\right) \leq \alpha \cdot \cl \! \left(S_{i-1}^{i-1}, V^{i-1}\right) + 2\beta \cdot \Opt_{k} \!\left(V \right).
    \end{equation}
\end{lemma}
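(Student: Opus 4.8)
The plan is to mirror the three-step analysis behind \Cref{lem:approx:key:new} of the extended abstract, now carried out for the $\ell^p$-norm objective and with the extra wrinkle that all solutions here must be \emph{proper}. Fix the current time and an index $1 \le i \le \ell+1$; the case $i = 0$ is handled separately at the end using the convention $S_{-1} = V$. Let $t_i$ (resp.\ $t_{i-1}$) be the last update, up to and including the present one, during which $S_i$ (resp.\ $S_{i-1}$) was reconstructed. Since every reconstruction from a layer $\le i-1$ is also a reconstruction from a layer $\le i$, we have $t_{i-1} \le t_i$, and $V^{i-1}, V^i$ are exactly the instances present at those times. I would establish the three inequalities
$$\cl(S_i^i, V^i) \le \alpha \cdot \cl(S_{i-1}^i, V^i) + \beta \cdot \Opt_{k+s_i}(V^i), \qquad \Opt_{k+s_i}(V^i) \le 2 \cdot \Opt_k(V), \qquad \cl(S_{i-1}^i, V^i) \le \cl(S_{i-1}^{i-1}, V^{i-1}),$$
and then compose them, which gives exactly the claimed bound with constant $2\beta$.

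The first inequality is just the defining property of the set $S_i$ computed in \textsc{ReconstructFromLayer}: at time $t_i$ the algorithm set $S_i$ to an $(\alpha,\beta,S_{i-1})$-restricted solution of the $(k+s_i,p)$-clustering problem on the instance present then, using as ground set the current state of $S_{i-1}$, which is exactly $S_{i-1}^i$ (reconstructions of higher layers during the same update leave $S_{i-1}$ untouched). In the degenerate case $|S_{i-1}^i| < k+s_i$ we have $S_i^i = S_{i-1}^i$, and the inequality still holds since $\alpha \ge 1$. For the third inequality I would invoke the lazy-update structure: by maximality of $t_{i-1}$, the set $S_{i-1}$ is never reconstructed during $(t_{i-1}, t_i]$, so over that window it only gains the points inserted into $V$ and loses nothing. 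Hence $S_{i-1}^i \supseteq S_{i-1}^{i-1}$, every point of $V^i \setminus V^{i-1}$ lies in $S_{i-1}^i$, and $d(x, S_{i-1}^i) \le d(x, S_{i-1}^{i-1})$ for every $x \in V^i \cap V^{i-1}$; summing $p$-th powers over $V^i$ (or taking a maximum when $p = \infty$) yields $\cl(S_{i-1}^i, V^i) \le \cl(S_{i-1}^{i-1}, V^{i-1})$.

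The middle inequality is the one genuinely new ingredient, and it is where properness forces a factor of $2$ (this explains the $2\beta$ in the statement, as opposed to the $\beta$ that suffices in the improper setting of the extended abstract). Since $S_i$ was last reconstructed at $t_i$, the counter $\tau_i$ has not exceeded $s_i$ since then, so at most $s_i$ updates separate $V^i$ from the current instance $V$; in particular $|V^i \oplus V| \le s_i$. Let $S^\star \subseteq V$ be an optimal proper $k$-clustering of $V$ and put $S'' := \Proj(S^\star, V^i) \cup (V^i \setminus V)$. Then $S'' \subseteq V^i$ and $|S''| \le k + s_i$, so $S''$ is feasible for the $(k+s_i)$-clustering problem on $V^i$ and $\Opt_{k+s_i}(V^i) \le \cl(S'', V^i)$. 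Every $x \in V^i \setminus V$ lies in $S''$ and contributes $0$, while for $x \in V^i \cap V$ the Projection Lemma (\Cref{lem:proj lemma}, with ground set $V^i$, using $d(x, V^i) = 0$) gives $d(x, S'') \le d(x, \Proj(S^\star, V^i)) \le 2\, d(x, S^\star)$; summing these pointwise bounds over $V^i$ yields $\cl(S'', V^i) \le 2\, \cl(S^\star, V) = 2\, \Opt_k(V)$.

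Composing the three inequalities proves the lemma for $i \ge 1$. For $i = 0$, the convention $S_{-1} = V$ makes the first term $\cl(V, V) = 0$; the restricted-solution guarantee at the last reconstruction of $S_0$ reads $\cl(S_0^0, V^0) \le \alpha \cdot \cl(V^0, V^0) + \beta \cdot \Opt_{k+s_0}(V^0) = \beta \cdot \Opt_{k+s_0}(V^0)$, and the middle inequality with $i = 0$ bounds $\Opt_{k+s_0}(V^0)$ by $2\, \Opt_k(V)$, which gives the claim. I expect the main obstacle to be purely bookkeeping: carefully pairing each time-indexed state of a set with the instance against which it was computed (in particular, that $S_{i-1}^i$ really is the ground set that produced $S_i^i$, and that $\tau_i \le s_i$ forces $|V^i \oplus V| \le s_i$); the projection step itself is routine given \Cref{lem:proj lemma}.
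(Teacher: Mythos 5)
Your proof is correct and follows essentially the same route as the paper: the identical three-inequality decomposition, namely the restricted-solution guarantee at the last rebuild of layer $i$, the lazy-update monotonicity $\cl(S_{i-1}^{i},V^{i})\le\cl(S_{i-1}^{i-1},V^{i-1})$, and the bound $\Opt_{k+s_i}(V^{i})\le 2\cdot\Opt_k(V)$ obtained from $|V^{i}\oplus V|\le s_i$. The only cosmetic difference is in the last bound, where you exhibit the feasible set $\Proj(S^\star,V^{i})\cup(V^{i}\setminus V)$ directly instead of passing through the auxiliary union space $V^{i}\cup V$ as the paper does; both arguments rest on the same projection-lemma factor of $2$ and the slack $s_i$ absorbing the changed points.
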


\begin{proof}
    It follows immediately from our algorithm that
    \begin{equation}\label{eq:key inv}
        \cl \!\left(S_i^{i}, V^{i} \right) \leq \alpha \cdot \cl \! \left(S_{i-1}^{i}, V^{i} \right)+ \beta \cdot \Opt_{k + s_i} \! \left(V^{i} \right).
    \end{equation}
    The following claim shows that the cost of the solution $S_{i-1}$ can only decrease as we perform lazy updates, and is used to upper bound the first term in Equation~(\ref{eq:key inv}).
    \begin{claim}\label{claim:inv 1}
        $\cl \! \left(S_{i-1}^{i}, V^{i} \right) \leq \cl \! \left(S_{i-1}^{i-1}, V^{i-1}\right)$.
    \end{claim}
    \begin{proof}
        We first note that $\cl(S_{i-1}, V)$ cannot increase as we perform lazy updates. This follows from the fact that we do not remove any points from $S_{i-1}$ 
        and that every time a point is added to $V$ we also add it to $S_{i-1}$. Thus, the connection costs of points in $V$ cannot increase after a lazy update and points that are inserted into $V$ have a connection cost of $0$. By noting that $S_{i}$ is reconstructed whenever $S_{i-1}$ is reconstructed, the claim follows
    \end{proof}
    \noindent
    We use the following claim to upper bound the second term in Equation~(\ref{eq:key inv}).
    \begin{claim}\label{claim:inv 2}
        $\Opt_{k + s_i}\!\left (V^{i} \right ) \leq 2 \cdot \Opt_k\! \left (V \right )$.
    \end{claim}
    \begin{proof}
    Let $V^{i} \cup V$ denote the metric space obtained by taking the union of the (weighted) sets $V^{i}$ and $V$.
    We first note that $\Opt_{k + s_i} \!\left(V\right) \leq 2 \cdot \Opt_{k + s_i}\!\left(V^{i} \cup V\right)$. This follows from the fact that, for any $X \subseteq V^{i} \cup V$, there exists some $X' \subseteq V$ such that $|X'| \leq |X|$ and $d(x, X') \leq 2d(x, X)$ for all $x \in V$.
    Thus, we have that $\cl(X', V) \leq 2 \cl(X, V) \leq 2 \cl\!\left(X, V^{i} \cup V \right)$ and so $\Opt_{k + s_i}(V) \leq 2 \cdot \Opt_{k + s_i}\!\left(V^{i} \cup V\right)$.

    We now show that $\Opt_{k + s_i}\!\left(V^{i} \cup V\right) \leq \Opt_{k}(V)$. Note that, for any $X \subseteq V^{i}$, we have
    $$ \cl\!\left(X \cup \left(V \setminus V^{i}\right), V^{i} \cup V\right) = \cl\!\left(X \cup \left(V \setminus V^{i}\right), V^{i} \right) \leq \cl\!\left(X, V^{i}\right). $$
    Since $|V \setminus V^{i}| \leq s_i$, it follows that
    $$\Opt_{k + s_i}\! \left(V^{i} \cup V\right) = \min_{X \subseteq V^{i} \cup V, \, |X| \leq k + s_i} \cl\!\left(X, V^{i} \cup V \right) \leq \min_{X \subseteq V^{i}, \, |X| \leq k} \cl\!\left(X \cup \left(V \setminus V^{i}\right), V^{i} \cup V\right)$$
    $$ \leq \min_{X \subseteq V^{i}, \, |X| \leq k} \cl\!\left(X, V^{i}\right) = \Opt_{k}\!\left(V^{i}\right). $$
The claim follows from combining both inequalities.
    \end{proof}
    \noindent
    The lemma follows directly from combining the two preceding claims with Equation~(\ref{eq:key inv}).
\end{proof}

\noindent
We now use the key invariant described in \Cref{lem:key inv} in order to analyze the approximation ratio of our algorithm.

\begin{lemma}\label{lem:approx}
The solution $S_{\ell + 1}$ maintained by our algorithm satisfies
    $$\cl(S_{\ell + 1}, V) \leq \left( 2\beta \cdot \sum_{i = 0}^{\ell + 1} \alpha^i \right) \cdot \Opt_k(V).$$
\end{lemma}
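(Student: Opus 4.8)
The plan is to simply unroll the recursion supplied by the key invariant, \Cref{lem:key inv}, and then transfer the resulting bound from $\cl(S_{\ell+1}^{\ell+1},V^{\ell+1})$ to $\cl(S_{\ell+1},V)$.

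First I would set up the telescoping. For each $-1 \le i \le \ell+1$ write $a_i := \cl(S_i^{i}, V^{i})$, where by convention $S_{-1}$ always denotes $V$ (as in \Cref{alg:recon}), so that $a_{-1} = \cl(V^{-1},V^{-1}) = 0$ since every point of $V^{-1}$ is itself a center. \Cref{lem:key inv}, applied for every $i \in \{0,\dots,\ell+1\}$, then reads
\begin{equation*}
a_i \;\le\; \alpha \cdot a_{i-1} \;+\; 2\beta \cdot \Opt_k(V).
\end{equation*}
A routine induction on $i$ (peeling off one factor of $\alpha$ at a time and using $a_{-1}=0$) yields
\begin{equation*}
a_{\ell+1} \;\le\; 2\beta \cdot \Opt_k(V) \cdot \sum_{i=0}^{\ell+1} \alpha^{i}.
\end{equation*}
I would remark here that one should double-check that \Cref{lem:key inv} is genuinely valid at the base index $i=0$: its proof (via \Cref{claim:inv 1} and \Cref{claim:inv 2}) goes through verbatim with $S_{-1}=V$, since \Cref{claim:inv 1} becomes the trivial identity $\cl(V^0,V^0)\le\cl(V^{-1},V^{-1})$ and \Cref{claim:inv 2} reads $\Opt_{k+s_0}(V^0)\le 2\Opt_k(V)$, which is exactly what the claim asserts.

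Next I would connect $a_{\ell+1}=\cl(S_{\ell+1}^{\ell+1},V^{\ell+1})$ back to the quantity we actually care about, $\cl(S_{\ell+1},V)$. The cleanest route uses that $s_{\ell+1} = \lfloor k^{-\epsilon}\rfloor = 0$, so in \Cref{alg:update} the counter $\tau_{\ell+1}$ satisfies $\tau_{\ell+1}>s_{\ell+1}$ after \emph{every} update; hence some layer $r^{t}\le\ell+1$ is reconstructed at every step and $S_{\ell+1}$ is rebuilt during the current update. Consequently the ``last update during which $S_{\ell+1}$ was reconstructed'' is the current one, so $S_{\ell+1}^{\ell+1}=S_{\ell+1}$ and $V^{\ell+1}=V$, giving $\cl(S_{\ell+1},V) = a_{\ell+1}$ directly. (Alternatively, if one prefers not to rely on this, the monotonicity argument of \Cref{claim:inv 1} shows $\cl(S_{\ell+1},V)$ cannot increase under the lazy updates applied to $S_{\ell+1}$ between its last reconstruction and the present, yielding $\cl(S_{\ell+1},V)\le a_{\ell+1}$, which also suffices.) Combining this with the displayed bound on $a_{\ell+1}$ gives
\begin{equation*}
\cl(S_{\ell+1},V) \;\le\; \Bigl(2\beta\cdot\textstyle\sum_{i=0}^{\ell+1}\alpha^{i}\Bigr)\cdot\Opt_k(V),
\end{equation*}
as required.

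I do not expect any serious obstacle here: the argument is a one-line geometric-series unrolling of \Cref{lem:key inv}. The only points demanding minor care are the two bookkeeping issues flagged above — verifying that the invariant applies at $i=0$ with the convention $S_{-1}=V$ (so that the recursion bottoms out at cost $0$ rather than at $\Opt$), and justifying the passage from $S_{\ell+1}^{\ell+1}$ to $S_{\ell+1}$ — both of which are immediate once the bottom layer has zero slack. Everything else is a clean induction, and there are no hidden dependencies on the internal structure of how the $(\alpha,\beta,S_{i-1})$-restricted solutions are computed, only on the defining inequality in Equation~(\ref{eq:key inv}).
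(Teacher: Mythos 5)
Your proposal is correct and follows essentially the same route as the paper: unroll the recursion from \Cref{lem:key inv} to bound $\cl(S_{\ell+1}^{\ell+1},V^{\ell+1})$ by a geometric series times $\Opt_k(V)$, and then use that $S_{\ell+1}$ is reconstructed after every update (since $s_{\ell+1}=0$) to conclude $\cl(S_{\ell+1},V)=\cl(S_{\ell+1}^{\ell+1},V^{\ell+1})$. The only cosmetic difference is that you base the induction at $a_{-1}=0$ (checking the invariant at $i=0$ with $S_{-1}=V$), whereas the paper bases it at $C_0\le 2\beta\cdot\Opt_k(V)$ — these are equivalent.
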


\begin{proof}
    For each $0 \leq i \leq \ell + 1$, let $C_i$ denote the value of $\cl\!\left(S_i^{i}, V^{i}\right)$. By repeatedly applying \Cref{lem:key inv}, we get that, for all $0 \leq i \leq \ell + 1$
    $$ C_{\ell + 1} \leq \alpha^{i} \cdot C_{\ell + 1 - i} + \left( 2\beta \cdot \sum_{j = 0}^{i-1} \alpha^j \right) \cdot \Opt_k(V). $$
    Thus, setting $i = \ell + 1$ and noting that $C_{0} \leq 2\beta \cdot \Opt_k(V)$, we get that 
    $$C_{\ell + 1} \leq \left( 2\beta \cdot \sum_{i = 0}^{\ell + 1} \alpha^i \right) \cdot \Opt_k(V).$$
    Finally, since the solution $S_{\ell + 1}$ is reconstructed after every update, we have that $\cl(S_{\ell + 1}, V) = C_{\ell + 1}$ and the claim follows.
\end{proof}

\noindent
By plugging in the appropriate values of $\alpha$ and $\beta$ obtained from \Cref{lem:restricted}, we get the following theorem which summarises the behaviour of our consistent clustering algorithm.

\begin{theorem}\label{thm:main}
There exists a deterministic fully dynamic algorithm that can maintain a $O(1/\epsilon)$-approximate solution to the $(k,p)$-clustering problem with $\tilde O(k^\epsilon/\epsilon)$ amortized recourse, for any $\epsilon > 1/ \log k$ and any $p \in \mathbb N \cup \{ \infty\}$.
\end{theorem}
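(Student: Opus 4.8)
The plan is to obtain \Cref{thm:main} by assembling the pieces already established --- the structural facts of \Cref{sec:structural}, the recourse analysis of \Cref{sec:recourse}, and the approximation analysis of \Cref{lem:approx} --- after fixing the parameters $\alpha,\beta$ that govern how the restricted solutions in \Cref{line:compute} of Algorithm~\ref{alg:recon} are computed. Concretely, I would instantiate the dynamic algorithm of \Cref{sec:consistency} so that each reconstruction uses the restricted solution guaranteed by \Cref{lem:restricted}: for any $X \subseteq \mathcal V$, that lemma produces a $(1,2,X)$-restricted solution to the $(k+s_i,p)$-clustering problem, namely $\Proj(S^\star,X)$ for an optimal $(k+s_i,p)$-clustering solution $S^\star$ in $(V,w,d)$, of cost at most $\cl_p(X)+2\,\Opt_{k+s_i}(V)$ by \Cref{lem:proj cor}. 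Hence we may take $\alpha=1$ and $\beta=2$. This subroutine performs no random choices --- it only needs an optimal clustering solution, which we are free to compute here since this section ignores running time and defers the efficient implementation to the later sections of this part --- and the remaining ingredients (the lazy updates, the counters $\tau_i$, the reconstruction trigger, and the convention of returning all of $S_{i-1}$ when $|S_{i-1}|<k+s_i$) are deterministic, so the overall algorithm is deterministic, as the statement requires.

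For the approximation ratio, I would plug $\alpha=1$, $\beta=2$ into \Cref{lem:approx}, which gives
$$\cl(S_{\ell+1},V)\;\le\;\Bigl(2\beta\sum_{i=0}^{\ell+1}\alpha^i\Bigr)\cdot\Opt_k(V)\;=\;4(\ell+2)\cdot\Opt_k(V)\;=\;O(1/\epsilon)\cdot\Opt_k(V),$$
using $\ell=1/\epsilon$. One must also check that $S_{\ell+1}$ is a legitimate output, i.e.\ $|S_{\ell+1}|\le k$: this is exactly \Cref{cor:S_l size}, which additionally handles the degenerate regime $|V|<k$, where $S_{\ell+1}\supseteq V$ forces $\cl(S_{\ell+1},V)=0$ and the bound is trivial. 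For the recourse, the analysis of \Cref{sec:recourse} --- \Cref{lem:rec bound t} for the per-update bound, followed by the counting argument over reconstruction events, i.e.\ \Cref{lem:rec bound} --- shows that every layer $S_i$, and in particular the output layer $S_{\ell+1}$, incurs amortized recourse $O(k^\epsilon/\epsilon)=\tilde O(k^\epsilon/\epsilon)$. Combining the two bounds proves the theorem.

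Since the heavy lifting is already done --- in \Cref{lem:key inv} (which rests on the cost-monotonicity of $S_{i-1}$ under lazy updates, \Cref{claim:inv 1}, and on \Cref{claim:inv 2}, which bounds $\Opt_{k+s_i}(V^i)$ by $2\,\Opt_k(V)$ by passing through the auxiliary metric space $V^i\cup V$) and in the per-update recourse estimate \Cref{lem:rec bound t} --- there is no real obstacle left at this stage. The two points that deserve a careful sentence in the write-up are: (i) confirming that the edge case $|V|<k$ is absorbed by \Cref{cor:S_l size} rather than breaking the approximation guarantee; and (ii) recording where the hypothesis $\epsilon>1/\log k$ enters --- it ensures $\ell=1/\epsilon<\log k$, so the hierarchy has only $O(1/\epsilon)$ layers and the slack schedule $s_i=\lfloor k^{(\ell-i)\epsilon}\rfloor$ is non-degenerate, running from $s_0=k$ down to $s_{\ell+1}=0$ with consecutive ratios of order $k^\epsilon$, which is precisely what the recourse counting of \Cref{sec:recourse} invokes.
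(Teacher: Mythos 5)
Your proposal is correct and follows essentially the same route as the paper: the paper proves \Cref{thm:main} precisely by instantiating the hierarchy with the $(1,2,X)$-restricted solutions of \Cref{lem:restricted} (i.e.\ $\alpha=1$, $\beta=2$) and combining the recourse bound of \Cref{lem:rec bound} with the approximation bound of \Cref{lem:approx}. Your additional remarks on determinism, the $|V|<k$ edge case via \Cref{cor:S_l size}, and the role of $\epsilon>1/\log k$ are consistent with the paper's treatment.
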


\section{Randomized Local Search}\label{sec:RLS}

In this section, we describe a new randomized variant of the local search algorithm for the $(k,p)$-clustering problem \cite{AryaGKMMP04, arxivGT2008}, which we then use to to give an efficient implementation of our dynamic algorithm from \Cref{sec:consistency}. Algorithm~\ref{alg:local fast full} contains the pseudocode for this algorithm, and its running time and approximation guarantees are summarized in Lemmas~\ref{lem: supp local 2} and \ref{lem: supp local 1} respectively.

\subsection{Algorithm Description}

Let $(V,w,d)$ be a metric space of size $n$ with aspect ratio $\Delta$, $X \subseteq V$ be a subset of size $m$, $k \leq m$ be an integer, $0 < \epsilon \leq 1/2$, $s = m - k$, and $c \geq 1$. Throughout this section, we work with the $(k,p)$-clustering objective for some fixed $p \in \mathbb N$ and abbreviate $\cl_p(\cdot)$ by $\cl(\cdot)$ and $\Opt_k^p$ by $\Opt_k$ accordingly.

The classic $k$-median algorithm of \cite{AryaGKMMP04} works by starting with an arbitrary set $S$ of size $k$ and repeatedly performing iterations of local search. At each iteration, the algorithm checks all possible swaps and attempts to find one that improves the cost of the solution by a multiplicative $1 - \epsilon/s$ factor every iteration.
As long as the current set $S$ is not already a good approximation, it can be shown that such a swap must necessarily exist.
In order to improve the running time of the search, our \emph{randomized local search} algorithm works by randomly choosing a point $x \in X \setminus S$ and performing the `best' swap involving the point $x$ instead of searching through all possible swaps. Note that, since we want to return an $X$-restricted solution, we start with a set $S \subseteq X$ and do not consider swaps involving points outside of $X$. By arguing that the cost of the solution improves by a multiplicative $1 - \epsilon/s$ factor \emph{in expectation}, we can significantly speed up the running time of the algorithm. By then incorporating parts of the analysis of \cite{arxivGT2008}, which extends the analysis of \cite{AryaGKMMP04} to the $(k,p)$-clustering problem, we can show that our algorithm efficiently produces $X$-restricted solutions to the $(k,p)$-clustering problem.
The pseudocode in Algorithm~\ref{alg:local fast full} gives a formal description of our randomized local search algorithm.

\begin{algorithm}[H]\label{alg:local fast full}
    \SetAlgoLined
    \DontPrintSemicolon
    Let $S \subseteq X$ be any set of size $k$\;
    \For{$2c \cdot ps \log(n) \log(n\Delta/\epsilon)/\epsilon^2$ \textup{\textbf{iterations}}}{
        Sample $x \sim X \setminus S$ independently and u.a.r.\;
        $y \leftarrow \textsf{BestSwap}(S, x)$\;
        $S \leftarrow S + x - y$\;
    }
    \Return{$S$}
    \caption{\RandLoc$(S)$}
\end{algorithm}

\medskip
\noindent\textbf{The \textnormal{\textsf{BestSwap}} Oracle.}
We assume that our algorithm has access to an oracle $\textsf{BestSwap}(S, x)$. As input, this oracle takes the current set $S$ and some point $x \in V$, and returns the point $y \in S + x$ that minimizes the value of $\cl(S + x - y)^p$.
We later show how to implement this oracle to handle queries in $\tilde O(n)$ time.

\medskip
\noindent\textbf{Properties of Algorithm~\ref{alg:local fast full}.}
The following two lemmas describe the approximation and running time guarantees of the randomized local search algorithm. We prove these lemmas in Sections~\ref{sec:anal fast} and \ref{sec:supp local 1} respectively.

\begin{lemma}\label{lem: supp local 2}
    Algorithm~\ref{alg:local fast full} returns a solution $S$ such that
    $$\cl(S) \leq (1 + 7\epsilon) \cdot (\cl(X) + 6p \cdot \Opt_k(V))$$
    with probability at least $1 - 1/n^{c}$.
\end{lemma}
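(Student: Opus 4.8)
The plan is to combine a "one-step expected decrease" argument for the randomized local search with the fine-grained approximation guarantee of local search for restricted $(k,p)$-clustering (the $(k,p)$-analogue of Theorem~\ref{thm:local:new} and Corollary~\ref{cor:local:search:new}). First I would establish the structural claim: if $S \subseteq X$ with $|S| = k$ satisfies $\cl(S) > (1+\epsilon')\big(\cl(X) + 6p \cdot \Opt_k(V)\big)$ for a suitable $\epsilon' = \Theta(\epsilon)$, then there is a collection $\mathcal M = \{(x_1, y_1), \dots, (x_k, y_k)\}$ of swaps (with the $x_i \in X$ and $y_i \in S$, constructed via the mapping $\sigma : S^\star \to S$ and the projections $\pi_X(x_i^\star)$ exactly as in the proof of Lemma~\ref{lem:key:101:new}, but now carrying the $p$-th power costs and invoking the $\arxivGT$-style analysis) such that $\sum_{i=1}^k \big(\cl(S + x_i - y_i)^p - \cl(S)^p\big) \leq -\epsilon' \cdot \cl(S)^p$. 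The factor $p$ in the $6p$ and the $(1+\Theta(\epsilon))$ slack both enter here through the Gupta--Tangwongsan extension of local search to $\ell^p$ objectives; I would cite Section~2 of~\cite{arxivGT2008} for the bound on each swap's effect on the unnormalized cost and for why a constant-size (actually size-$2$) preimage mapping suffices.

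Next I would run the sampling argument. Given the above, since $\textsf{BestSwap}(S, x)$ picks the cost-minimizing swap involving $x$ (and never increases the cost, as $y = x$ is allowed), sampling $x \sim X \setminus S$ uniformly yields
\[
\mathbb E_{x}\big[\cl(S + x - \textsf{BestSwap}(S,x))^p - \cl(S)^p\big] \leq \frac{1}{s}\sum_{y \in X \setminus S}\delta_S(y) \leq \frac{1}{s}\sum_{i=1}^k \big(\cl(S+x_i-y_i)^p - \cl(S)^p\big) \leq -\frac{\epsilon'}{s}\cl(S)^p,
\]
where the middle inequality uses that the $x_i$ range over a subset of $X \setminus S$ (after discarding those already in $S$, whose $\delta_S$ is $0$) and that $\delta_S(y) \leq 0$ for all $y$. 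Hence, conditioned on the "bad" event that the current solution is not yet a good approximation, one iteration multiplies $\cl(S)^p$ by a factor of $(1 - \epsilon'/s)$ in expectation. I would then argue that the potential $\Phi = \cl(S)^p$ can only shrink (never grow) over iterations, so it suffices to show that within the allotted $T = 2c\, p s \log(n)\log(n\Delta/\epsilon)/\epsilon^2$ iterations the potential drops below the target threshold with probability $\geq 1 - 1/n^c$. Since $\Phi$ starts at most $\Opt_k(V)^p \cdot \mathrm{poly}(n, \Delta)$ (any starting $S \subseteq X$ of size $k$ has cost at most $n \cdot d_{\max}^p w_{\max}$, which is $\mathrm{poly}(n,\Delta)$ times the smallest nonzero cost scale), we need to kill a $\log(\mathrm{poly}(n,\Delta)) = O(p\log(n\Delta))$-factor multiplicative gap; each "good" iteration (one where the bad event holds) gives expected multiplicative progress $1 - \epsilon'/s$, so $O(ps\log(n\Delta)/\epsilon)$ good iterations suffice in expectation, and a standard supermartingale/Azuma concentration argument over $T = \tilde\Theta(ps/\epsilon^2)$ iterations boosts this to the claimed high-probability bound. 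The clean way to do this: track $\log \Phi$, note that in each step where the bad event holds $\mathbb E[\log\Phi_{t+1} - \log\Phi_t \mid \Phi_t] \leq -\Theta(\epsilon/s)$ (using $\log(1-z) \leq -z$ and that $\log$ is concave so Jensen only helps in the right direction after a small correction), apply a Chernoff-type bound for sums of bounded-below (here they're $\leq 0$, and bounded below by $-O(p\log(n\Delta))$ per step since a single swap changes $\log\Phi$ by at most that much) increments. Finally, once the bad event fails even once, $\Phi$ stays below threshold forever (monotonicity), so the output $S$ satisfies $\cl(S)^p \leq (1+\epsilon')^p(\cl(X) + 6p\Opt_k(V))^p$; taking $p$-th roots and choosing $\epsilon'$ so that $(1+\epsilon') \leq (1+7\epsilon)$ — e.g. $\epsilon' = \epsilon$ works since we may crudely bound, or one absorbs the $(1+\epsilon')$ vs.\ $1+7\epsilon$ slack in the same way~\cite{arxivGT2008} does when converting multiplicative-per-swap improvement into an additive threshold — gives the statement.

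The main obstacle I expect is the structural claim, i.e.\ the $(k,p)$-generalization of Corollary~\ref{cor:local:search:new} with the correct constants: getting $\alpha = 1 + \Theta(\epsilon)$ (rather than the naive $\alpha = \Theta(1)$ from the black-box $\ell^p$ local-search analysis) while keeping $\beta = O(p)$ requires redoing the Gupta--Tangwongsan swap accounting with the projection-based test swaps $(\sigma(x^\star), \pi_X(x^\star))$ instead of $(\sigma(x^\star), x^\star)$, and carefully checking that the "reassignment cost" bound (the $p$-th power triangle-inequality expansion, step 2 in the wrapper of Lemma~\ref{lem:key:101:new}) still telescopes against $\cl(\pi)^p$ rather than blowing up by a factor depending on the number of layers. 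A secondary technical point is making the concentration argument fully rigorous despite the increments being unbounded above by an amount that depends on $\Delta$ — this is handled by the observation that a single swap changes $\cl(S)^p$ by a multiplicative factor in $[\,(n\Delta)^{-O(p)}, 1\,]$, so $\log\Phi$ has increments in a bounded interval of width $O(p\log(n\Delta))$, and the number of iterations $\tilde\Theta(ps/\epsilon^2)$ is chosen precisely to absorb this width in the Azuma bound. Both of these are "careful bookkeeping" obstacles rather than conceptual ones, and the first one is presumably already carried out in the referenced analysis of Section~\ref{sec:anal fast}.
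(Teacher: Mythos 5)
Your skeleton matches the paper's proof: a structural lemma stating that while $\cl(S)$ exceeds roughly $\cl(X)+6p\cdot\Opt_k$, the projected test swaps $(\sigma(y^\star),\pi_X(y^\star))$ certify $\sum_{x\in X\setminus S}\delta_S(x)\leq-\epsilon\cdot\cl(S)^p$ (this is Lemma~\ref{lem:local search core}), then the expected multiplicative decrease $(1-\epsilon/s)$ per sampled iteration, monotonicity of the cost, and an amplification step. Where you diverge is the amplification, and your primary route has a quantitative gap. The paper never needs concentration of $\log\cl(S)^p$: it splits the run into $\tau=2c\log n/\epsilon$ phases of length $\eta=ps\log(n\Delta/\epsilon)/\epsilon$, uses the recurrence of Lemma~\ref{lem:rec} to show the expected cost at the end of each phase is at most $(1+\epsilon)\cdot(\Psi/(1-\epsilon))^p$, applies Markov once per phase, and uses monotonicity to nest the bad events, giving failure probability $(1+\epsilon)^{-\tau}\leq n^{-c}$. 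Plain Azuma on $\log\Phi$, as you state it, does not close with the given iteration budget: the increments lie in an interval of width $B=\Theta(p\log(n\Delta))$, so the symmetric bounded-difference deviation over $T=\tilde\Theta(ps/\epsilon^2)$ steps is of order $B\sqrt{T\log n}$, which exceeds the total drift $T\epsilon/s$ by roughly a $\sqrt{ps\log(n\Delta)}$ factor; the assertion that $T$ was ``chosen precisely to absorb this width in the Azuma bound'' is incorrect. Your fallback phrasing --- a one-sided Chernoff-type bound exploiting that the increments are nonpositive and bounded below by $-B$ --- can be made to work (bound $\mathbb{E}[e^{\lambda Y}]$ by the chord over $[-B,0]$ and take $\lambda=1/B$, giving failure probability $\exp\!\big(G/B-\Omega(T\epsilon/(sB))\big)\leq n^{-c}$ at the stated $T$), but that is a genuinely different and more delicate argument than Azuma, and the paper's phase-plus-Markov argument sidesteps it entirely.

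The second issue is that the structural claim with the specific constants $\alpha=1+O(\epsilon)$ and $\beta=6p$ is the real content of the lemma, and you do not prove it. After the Minkowski step one is left with showing
$(\cl(X)+2\cl(S^\star))^p+2(\cl(S)+2\cl(S^\star))^p-3\cl(S)^p\leq-\epsilon\cdot\cl(S)^p$
under the threshold assumption, which is exactly Claim~\ref{cl:headache} --- a nontrivial coefficient-by-coefficient polynomial inequality, and the place where the factor $6p$ and the $(1-\epsilon)$ slack are actually forced. Deferring this to ``the referenced analysis of Section~\ref{sec:anal fast}'' is circular, since that section \emph{is} the proof you are being asked to supply, and citing \cite{arxivGT2008} alone does not give it: their test swaps are $(\sigma(y^\star),y^\star)$ and their guarantee carries a multiplicative constant $\alpha=\Theta(1)$ on $\cl(X)$, which is precisely what must be avoided here. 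So the proposal is right in outline, but the two quantitatively delicate steps --- the polynomial inequality behind the $(1,6p)$-type restricted guarantee, and a correct one-sided concentration (or the simpler phase/Markov) argument --- are left unproven.
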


\begin{lemma}\label{lem: supp local 1}
    Given the appropriate auxiliary data structures (see Section~\ref{sec:supp local 1}), Algorithm~\ref{alg:local fast full} can be implemented to run in $\tilde O(nps/\epsilon^2)$ time.
\end{lemma}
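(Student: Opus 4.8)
The plan is to show that, after an $\tilde O(ns)$ preprocessing step, every iteration of the \textbf{for} loop in Algorithm~\ref{alg:local fast full} can be carried out in $\tilde O(n)$ time, and that an $\tilde O(ns)$ clean-up step at the end restores the auxiliary data structures. Since the loop runs for $N := 2c\cdot ps\log(n)\log(n\Delta/\epsilon)/\epsilon^2 = \tilde O(ps/\epsilon^2)$ iterations and $ns \le nps/\epsilon^2$ (using $p\ge 1$ and $\epsilon\le 1/2$), this yields a total running time of $\tilde O(ns) + N\cdot\tilde O(n) = \tilde O(nps/\epsilon^2)$, as claimed.

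The auxiliary data structure (the ``appropriate'' one alluded to in the statement, and spelled out in \Cref{sec:supp local 1}) is the following: for every $z\in V$ we keep a balanced search tree $L_z$ storing the points of $X$, keyed by the value $w(z)\,d(z,\cdot)^p$ and augmented so that it supports \emph{select} queries -- given $i$, return the point $L_z(i)\in X$ that is the $i$-th closest to $z$ -- in $\tilde O(1)$ time, along with $\tilde O(1)$-time insertions and deletions. I would assume these $n$ trees (which depend only on $X$ and $V$, both fixed throughout a single call to \RandLoc) are handed to the procedure for free; maintaining them in the dynamic setting is the caller's responsibility (one ``version'' of the structure per $X = S_{j-1}$, as in the discussion following Lemma~\ref{lem:informal:new}). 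As preprocessing, right after line~1 of Algorithm~\ref{alg:local fast full} I delete the $s = |X\setminus S|$ points of $X\setminus S$ from all $n$ trees, in $\tilde O(ns)$ time, so that from then on $L_z(i)$ is the $i$-th closest point of the \emph{current} $S$ to $z$. I also keep $X\setminus S$ in a dynamic array with a position hash map, so that a uniform sample can be drawn in $\tilde O(1)$ time.

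Now consider one iteration. Having sampled $x\sim X\setminus S$, I insert $x$ into the $n$ trees (in $\tilde O(n)$ time), so that the trees now encode distances to $S+x$. I implement $\textsf{BestSwap}(S,x)$ -- which must return the $y\in S+x$ minimising $\cl(S+x-y)^p = \sum_{z\in V} w(z)\,d(z,S+x-y)^p$ -- as follows. First compute the buckets $\Gamma(u) := \{z\in V : L_z(1)=u\}$ for $u\in S+x$ by a single scan over $V$ (in $\tilde O(n)$ time, with the $\Gamma(u)$ forming a partition of $V$). Then for each $u\in S+x$ set $\delta_u := \sum_{z\in\Gamma(u)} w(z)\big(d(z,L_z(2))^p - d(z,u)^p\big)$, which equals $\cl(S+x-u)^p - \cl(S+x)^p$ because deleting $u$ reassigns precisely the clients in $\Gamma(u)$ to their second-closest center $L_z(2)$ and leaves all others untouched; since $\cl(S+x)^p$ is independent of $u$, the minimiser $y := \arg\min_u \delta_u$ is exactly the point $\textsf{BestSwap}$ should return, and all the $\delta_u$ together cost $\tilde O(n)$. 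Finally I delete $y$ from the $n$ trees ($\tilde O(n)$), update the array storing $X\setminus S$ accordingly, and set $S\leftarrow S+x-y$. The net effect on the trees is ``insert $x$, delete $y$'', which keeps $|S|=k$ (hence $|X\setminus S|=s$) and preserves the invariant that each $L_z$ encodes distances to the current $S$; thus the whole iteration is $\tilde O(n)$. After the loop I reinsert the $s$ points of $X\setminus S$ into all $n$ trees, in $\tilde O(ns)$ time, restoring the data structure to its initial state.

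The routine parts here are the bookkeeping with the $\Gamma$-buckets and the consistency of the trees. The one point that deserves a remark is arithmetic: the quantities $d(z,\cdot)^p$ may have bit-length $\Theta(p\log\Delta)$, so additions and comparisons of the $\delta_u$'s are not literally $O(1)$ operations; under the standard convention -- and since $\tilde O$ already absorbs $\mathrm{polylog}(n,\Delta)$ factors and we are allowing a factor of $p$ anyway -- this contributes only an extra $\tilde O(1)$ factor and does not affect the stated bound. I expect this to be the only genuinely delicate issue; everything else is a direct adaptation of the $p=1$ implementation sketched around Lemma~\ref{lem:local:runtime:new}, obtained by replacing $d(z,\cdot)$ with $w(z)\,d(z,\cdot)^p$ in every cost computation.
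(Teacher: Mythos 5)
Your proposal is correct and follows essentially the same route as the paper: an $\tilde O(ns)$ preprocessing/restore step on the per-point sorted lists, plus an $\tilde O(n)$ implementation of each iteration via the $\Gamma(\cdot)$-buckets and second-nearest centers, exactly as in the paper's \textsf{BestSwap} (Algorithm~\ref{alg:local iteration} and Lemma~\ref{lem:best swap}). The only differences are cosmetic (you skip the explicit computation of $\cl(S+x)^p$ since it is constant over the candidate swaps, and you note the bit-length issue for $p$-th powers), neither of which changes the argument.
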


\subsection{Analysis of \textnormal{\RandLoc} (Proof of \Cref{lem: supp local 2})}\label{sec:anal fast}

Given the set $S \subseteq X$, $y \in S$, and $x \notin S$ we define 
$$\delta_S(y,x) := \cl(S+x-y)^p - \cl(S)^p.$$
That is, $\delta_S(y,x)$ is the change in $\cl(S)^p$ caused by swapping the point $x$ with the center $y$.
We let $\delta_S(x) = \min_{y \in S+x} \delta_S(y,x)$.
We begin with the following lemma, which follows from a modified version of the analysis of the local search algorithm in \cite{arxivGT2008}.

\begin{lemma}\label{lem:local search core}
    Let $S$ be a solution to the $(k,p)$-clustering problem such that $\cl(S) > ( \cl(X) + 6p \cdot \Opt_k) / (1 - \epsilon)$. Then we have that
    $$ \sum_{x \in X \setminus S} \delta_S(x) \leq - \epsilon  \cdot\cl(S)^p.$$
\end{lemma}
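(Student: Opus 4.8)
The statement is the $(k,p)$-clustering generalization of Corollary~\ref{cor:local:search:new}, and the natural route is to adapt the swap-based local search analysis of Arya et al.~\cite{AryaGKMMP04} and its $\ell^p$ extension \cite{arxivGT2008}, carried out in the \emph{restricted} setting where all swapped-in centers must come from $X$. Fix an optimal (proper) solution $S^\star \subseteq V$ with $|S^\star| = k$ and $\cl(S^\star) = \Opt_k$, let $\pi := \pi_X \circ \pi_{S^\star}$ be the composed projection, and write $\cl(\pi)^p := \sum_{z \in V} w(z)\, d(z,\pi(z))^p$. The first step is a reduction: since taking $y = x$ shows $\delta_S(x) \le 0$ for all $x$, and $\delta_S(x) = 0$ for $x \in S$, we have $\sum_{x \in X \setminus S} \delta_S(x) = \sum_{x \in X} \delta_S(x) \le \sum_{x \in \Proj(S^\star,X)} \delta_S(x)$. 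Then, exactly as in \cite{AryaGKMMP04, arxivGT2008}, partition $S$ by the number of points of $S^\star$ each center of $S$ captures under $\pi_S$ and build a map $\sigma : S^\star \to S$ with $\sigma(x^\star) = \pi_S(x^\star)$ when $\pi_S(x^\star)$ captures exactly one point of $S^\star$, $\sigma(x^\star)$ capturing no point of $S^\star$ otherwise, and $|\sigma^{-1}(y)| \le 2$ for all $y \in S$. Writing $S^\star = \{x_1^\star,\dots,x_k^\star\}$, $y_i := \sigma(x_i^\star)$, and $x_i' := \pi_X(x_i^\star)$, and using $\delta_S(x_i') \le \delta_S(y_i, x_i')$, it suffices to bound $\sum_{i=1}^k \delta_S(y_i, x_i')$ from above.

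The second step bounds each term $\delta_S(y_i,x_i')$ by exhibiting an explicit (not necessarily optimal) assignment of every $z \in V$ to a center of $S + x_i' - y_i$, so that $\cl(S+x_i'-y_i)^p \le \sum_z w(z)\, c_i(z)^p$ for the resulting assignment-cost vector $c_i$. Concretely: each $z$ with $\pi_{S^\star}(z) = x_i^\star$ is assigned to $x_i'$, incurring distance $d(z,x_i') = d(z,\pi(z))$; each $z$ with $\pi_S(z) = y_i$ but $\pi_{S^\star}(z) \ne x_i^\star$ is assigned to $\pi_S(\pi_{S^\star}(z))$, which the structure of $\sigma$ forces into $S \setminus \{y_i\}$ and which, by the triangle-inequality argument behind \Cref{lem:proj lemma}, satisfies $d(z,\pi_S(\pi_{S^\star}(z))) \le d(z,S) + 2 d(z,S^\star)$; every other $z$ keeps its old center $\pi_S(z)$. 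Hence $\delta_S(y_i,x_i') \le \sum_z w(z)\big(c_i(z)^p - d(z,S)^p\big)$ with $c_i$ as above.

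The third step aggregates over $i$. Exchanging the order of summation, each $z$ is ``captured'' (first case) in exactly one swap, is in the second case in at most $|\sigma^{-1}(\pi_S(z))| \le 2$ swaps, and is untouched otherwise; telescoping the captured contributions over the partition $\{\pi_{S^\star}^{-1}(x_i^\star)\}_i$ of $V$ gives
\[
\sum_{i=1}^k \delta_S(y_i,x_i') \;\le\; \cl(\pi)^p - \cl(S)^p + 2 \sum_{z \in V} w(z)\Big( \big(d(z,S) + 2 d(z,S^\star)\big)^p - d(z,S)^p \Big).
\]
The crucial point is that the error term must be controlled with a loss \emph{linear} in $p$ (to match the target $6p$), rather than the exponential-in-$p$ loss that a crude pointwise expansion of $(d(z,S)+2d(z,S^\star))^p$ produces. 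This is exactly where the $\ell^p$ machinery of \cite{arxivGT2008} is imported: bound the difference of $p$-th powers via convexity of $t \mapsto t^p$ (so $a^p - b^p \le p\, a^{p-1}(a-b)$), then apply H\"older's inequality and \Cref{lem:minkow} to the resulting sum, obtaining an error term of order $p \cdot \Opt_k \cdot (\cl(S) + \Opt_k)^{p-1}$. Finally, by \Cref{lem:proj cor} we have $\cl(\pi) \le \cl(X) + 2\Opt_k$, so $\cl(\pi)^p \le (\cl(X)+2\Opt_k)^p$; substituting this and the error bound, and using that the hypothesis $\cl(S) > (\cl(X) + 6p\cdot\Opt_k)/(1-\epsilon)$ yields both $\cl(X) + 2\Opt_k < (1-\epsilon)\cl(S)$ and $\Opt_k < \cl(S)/(6p)$, a short computation shows the whole right-hand side is at most $-\epsilon \cdot \cl(S)^p$.

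\textbf{Expected main obstacle.} The combinatorial skeleton (the $\sigma$-map, the three-case reassignment, the factor-$2$ from $|\sigma^{-1}| \le 2$) is routine and essentially identical to the $p=1$ case in the extended abstract. The delicate part is the third step: choosing precisely which $\ell^p$ inequalities to apply where so that the error incurred by reassigning through $\pi$ (rather than directly through $S^\star$, which is forbidden by the $X$-restriction) degrades the guarantee only by a factor $\Theta(p)$ and so that all the constants line up to give exactly ``$6p \cdot \Opt_k$''. I expect this to require the careful, norm-level (Minkowski/H\"older) bookkeeping of \cite{arxivGT2008}, adapted to the restricted benchmark $\cl(\pi)$, rather than any new idea.
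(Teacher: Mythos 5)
Your plan follows the paper's proof step for step through the combinatorial core: the same reduction $\sum_{x \in X \setminus S} \delta_S(x) \le \sum_{i} \delta_S(y_i, x_i')$, the same map $\sigma$ with properties (\ref{MAIN:eq:prop:1:new})--(\ref{MAIN:eq:prop:3:new}), the same three-case reassignment, and the same aggregated bound $\sum_i \delta_S(y_i,x_i') \le \cl(\pi)^p - \cl(S)^p + 2\sum_{z} w(z)\bigl((d(z,S)+2d(z,S^\star))^p - d(z,S)^p\bigr)$. Where you genuinely diverge is the final analytic step. The paper applies \Cref{lem:minkow} to both error sums, arrives at $(\cl(X)+2\Opt_k)^p + 2(\cl(S)+2\Opt_k)^p - 3\cl(S)^p$, and disposes of this via Claim~\ref{cl:headache}, a polynomial inequality in $x = \Opt_k/\cl(X)$ verified coefficient-by-coefficient; the factor $p$ in the ``$6p$'' emerges from that comparison. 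You instead linearize pointwise by convexity ($a^p - b^p \le p\,a^{p-1}(a-b)$) and then apply H\"older plus \Cref{lem:minkow}, turning the second sum into an error of the form $4p\,\Opt_k(\cl(S)+2\Opt_k)^{p-1}$, so the factor $p$ appears explicitly up front; this is a legitimately different, and arguably more transparent, way to see why the loss is linear in $p$. Two caveats. First, the ``short computation'' you defer is precisely the work that Claim~\ref{cl:headache} does, and it is not a one-liner: one must show $(\cl(X)+2\Opt_k)^p + 4p\,\Opt_k(\cl(S)+2\Opt_k)^{p-1} \le (1-\epsilon)\cl(S)^p$ under the hypothesis; this does go through (e.g.\ with $u := (1-\epsilon)\cl(S)$ the left side is at most $g(O) := (u-(6p-2)O)^p + 4pO(\cl(S)+2O)^{p-1}$ with $O := \Opt_k \in [0, u/(6p)]$, which is convex in $O$ and hence maximized at an endpoint, and both endpoints are at most $(1-\epsilon)\cl(S)^p$ using $p \ge 2$ and $\epsilon \le 1/2$, the case $p=1$ being immediate), but you should carry it out, since all the delicacy of the lemma lives there. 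Second, a small citation slip: $\cl(\pi)^p \le (\cl(X)+2\Opt_k)^p$ does not follow from \Cref{lem:proj cor}, which bounds the cost of the set $\Proj(S^\star,X)$ and is a lower bound on $\cl(\pi)$; it follows from the pointwise bound of \Cref{lem:proj lemma} combined with \Cref{lem:minkow}, exactly as in the paper's derivation of (\ref{eq:minkow 1}).
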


\begin{proof}
Let $S$ be a subset of $X$ of size $k$ and let $S^\star$ denote an optimal solution to the $(k,p)$-clustering problem in the metric space $V$. For any set $A \subseteq V$ and $x \in V$, let $\pi_A(x)$ denote the point $y \in A$ that is closest to $x$, i.e.~the projection of the point $x$ onto $A$.\footnote{If there are multiple closest points to $x$ in $A$, we break ties arbitrarily but consistently.} Finally, the $\pi = \pi_X \circ \pi_{S^\star}$.

We have the following useful claim, which follows \Cref{lem:proj lemma}.

\begin{claim}\label{lem:1}
    For all $x \in V$, $A \subseteq V$, we have that $d(x, (\pi_A \circ \pi_{S^\star})(x)) \leq d(x, A) + 2d(x, S^\star)$.
\end{claim}

\begin{proof}
    Let $y^\star = \pi_{S^\star}(x)$ and let $y = \pi_A(x)$ be the closest point to $x$. Then we have that
    $$d(x, \pi(x)) \leq d(\pi_A(y^\star), y^\star) + d(y^\star, x) \leq d(y, y^\star) + d(y^\star, x) \leq d(y, x) + 2d(y^\star, x).$$
\end{proof}

\medskip
\noindent\textbf{Constructing a Set of Swaps.}
For all $y \in S$, define $S^\star_y := \pi_S^{-1}(y) \cap S^\star$.
Note that the $S^\star_y$ are disjoint for distinct $y$.
Let
$$S_0 := \{y \in S \; | \; |S^\star_y| = 0 \}, \qquad S_1 := \{y \in S \; | \; |S^\star_y| = 1 \}, \qquad S_{\geq 2} := \{y \in S \; | \; |S^\star_y| > 1 \}.$$
Let $S_1^{\star} := \{ y^{\star} \in S^{\star} \mid \pi_S(y^{\star}) \in S_1\}$. Observe that $S^{\star}_1 \subseteq S^{\star}$, $|S| = |S^{\star}| = k$ and $|S_1| = |S^{\star}_1|$. Thus, we get $|S^{\star} \setminus S^{\star}_1| = |S_0| + |S_{\geq 2}|$. Since $\pi_S$ assigns at least two points from $S^{\star} \setminus S^{\star}_1$ to each point in $S_{\geq 2}$, we also have $|S_{\geq 2}| \leq (1/2) \cdot |S^{\star} \setminus S^{\star}_1|$, and hence $|S_0| \geq (1/2)  \cdot |S^{\star} \setminus S^{\star}_1|$. This observation implies that we can construct a mapping $\sigma : S^{\star} \longrightarrow S$ which satisfies the following properties.
\begin{eqnarray}
\label{MAIN:eq:prop:1:new}
\sigma(y^{\star}) & = & \pi_S(y^{\star}) \in S_1 \  \text{ for all } y^{\star} \in S^{\star}_1. \\
\label{MAIN:eq:prop:2:new}
\sigma(y^{\star}) & \in & S_0 \qquad  \qquad \ \, \text{ for all } y^{\star} \in S^{\star} \setminus S^{\star}_1. \\
\label{MAIN:eq:prop:3:new}
|\sigma^{-1}(y)| & \leq & 2 \qquad \qquad \ \ \ \text{ for all } y \in S.
\end{eqnarray}
W.l.o.g., suppose that $S^{\star} = \{y^{\star}_1, \ldots, y^{\star}_k\}$; and $y_i = \sigma(y^{\star}_i) \in S$  and $y'_i = \pi_X(y^{\star}_i)$ for all $i \in [k]$. Consider the following collection of local \emph{moves} $\mathcal{M} := \{\langle y_1, y'_1 \rangle, \ldots, \langle y_k, y'_k \rangle \}$. 

\begin{corollary}
    \label{MAIN:cor:swap:new}
    For any two distinct indices $i, j \in [k]$, we have $\pi_S(y^{\star}_j) \neq y_i$. 
\end{corollary}

\begin{proof}
Since $\sigma(y^{\star}_i) = y_i$, from~(\ref{MAIN:eq:prop:1:new}) and~(\ref{MAIN:eq:prop:2:new}) we infer that $y_i \in S_0 \cup S_1$. If $y_i \in S_1$, then $y^{\star}_i$ is the unique point  $y^{\star} \in S^{\star}$ with $\pi_S(y^{\star}) = y_i$, and hence $\pi_S(y^{\star}_j) \neq y_i$. Otherwise, if $y_i \in S_0$, then there is no point $y^{\star} \in S^{\star}$ with $\pi_S(y^{\star}) = y_i$, and hence $\pi_S(y^{\star}_j) \neq y_i$.
\end{proof}

\medskip
\noindent\textbf{Bounding the Cost of Swaps.}
We now bound the sum of the costs of the swaps considered by the algorithm.
We do this by considering the moves in $\mathcal M$.
Consider the swap $\langle y_i, y_i' \rangle \in \mathcal M$.
We upper bound the change in the cost of the solution $S$ caused by swapping $y_i$ with $y'_i$ (i.e.~removing $y_i$ and adding $y'_i$ to $S$) as follows:

\begin{enumerate}
    \item \textit{For each point $x \in \pi^{-1}_{S^\star}(y_i^\star)$, reassign $x$ to $y'_i$}
    \item \textit{For each point $x \in \pi_S^{-1}(y_i) \setminus \pi_{S^\star}^{-1}(y^\star_i)$, reassign $x$ to $(\pi_S \circ \pi_{S^\star})(x)$}
\end{enumerate}
\Cref{MAIN:cor:swap:new} guarantees that, for each point $x \in \pi_S^{-1}(y_i) \setminus \pi_{S^\star}^{-1}(y^\star_i)$, $(\pi_S \circ \pi_{S^\star})(x) \in S \setminus \{y_i\}$. Thus, this reassignment is valid.
This yields the following upper bound on $\delta_S(y_i, y'_i)$ for each $\langle y_i, y'_i \rangle \in \mathcal M$,
\begin{multline}\label{eq:ineq 1}
    \delta_S(y_i, y'_i) \leq  \sum_{x \in \pi^{-1}_{S^\star}(y_i^\star)} w(x)(d(x, (\pi_X \circ \pi_{S^\star})(x))^p - d(x,S)^p) \\ + \sum_{x \in \pi_S^{-1}(y_i) \setminus \pi_{S^\star}^{-1}(y^\star_i)} w(x) (d(x, (\pi_S \circ \pi_{S^\star})(x))^p - d(x,S)^p).
\end{multline}
Summing over all the moves in $\mathcal M$, we get that
\begin{multline}\label{eq:ineq 2}
    \sum_{i = 1}^k \delta_S(y_i,y'_i) \leq  \sum_{i = 1}^k \sum_{x \in \pi^{-1}_{S^\star}(y_i^\star)} w(x)(d(x, (\pi_X \circ \pi_{S^\star})(x))^p - d(x,S)^p) \\ + \sum_{i = 1}^k \sum_{x \in \pi_S^{-1}(y_i) \setminus \pi_{S^\star}^{-1}(y^\star_i)} w(x) (d(x, (\pi_S \circ \pi_{S^\star})(x))^p - d(x,S)^p).
\end{multline}
Since each $x \in V$ appears in exactly one $\{\pi^{-1}_{S^\star}(y_i^\star)\}_{i \in [k]}$, the first sum on the RHS of Equation (\ref{eq:ineq 2}) is precisely
$$\sum_{x \in V} w(x)(d(x, (\pi_X \circ \pi_{S^\star})(x))^p - d(x,S)^p).$$
Similarly, by noting that $\pi_S^{-1}(y_i) \setminus \pi_{S^\star}^{-1}(y^\star_i) \subseteq \pi_S^{-1}(y_i)$, that
each $y \in S$ appears in at most $2$ moves in $\mathcal M$, and that each $x \in V$ appears in exactly one $\{\pi_S^{-1}(y)\}_{y \in S}$, the second sum on the RHS of Equation (\ref{eq:ineq 2}) is at most
$$2 \cdot \sum_{x \in V} w(x)(d(x, (\pi_S \circ \pi_{S^\star})(x))^p - d(x,S)^p).$$
By applying \Cref{lem:1}, we get that $d(x, (\pi_X \circ \pi_{S^\star})(x)) \leq d(x,X) + 2d(x,S^\star)$ and similarly that
$d(x, (\pi_S \circ \pi_{S^\star})(x)) \leq d(x,S) + 2d(x,S^\star)$. By then applying \Cref{lem:minkow}, it follows that
\begin{equation}\label{eq:minkow 1}
    \sum_{x \in V} w(x)d(x, (\pi_X \circ \pi_{S^\star})(x))^p  \leq (\cl(X) + 2 \cl(S^\star))^p,
\end{equation}
and similarly,
\begin{equation}\label{eq:minkow 2}
    \sum_{x \in V} w(x)d(x, (\pi_S \circ \pi_{S^\star})(x))^p  \leq (\cl(S) + 2 \cl(S^\star))^p.
\end{equation}
Putting everything together, we get that
\begin{equation}\label{eq:final}
\sum_{i=1}^k \delta_S(y_i,y'_i) \leq (\cl(X) + 2 \cl(S^\star))^p + 2(\cl(S) + 2 \cl(S^\star))^p - 3\cl(S)^p.
\end{equation}
We now have the following claim, which we prove below.
\begin{claim}\label{cl:headache}
    $(\cl(X) + 2 \cl(S^\star))^p + 2(\cl(S) + 2 \cl(S^\star))^p - 3\cl(S)^p \leq -\epsilon \cdot \cl(S)^p$.
\end{claim}
\noindent
Applying \Cref{cl:headache} to Equation~(\ref{eq:final}), we get that
$$ \sum_{i=1}^k \delta_S(y_i,y'_i) \leq - \epsilon \cdot \cl(S)^p. $$
Clearly, $\delta_S(x) \leq 0$ for any $x \in V$ and $\delta_S(x) = 0$ for any $x \in S$. It follows that
$$ \sum_{x \in X \setminus S} \delta_S(x) \leq \sum_{y' \in \pi_X(S^\star)} \delta_S(y') \leq \sum_{i=1}^k \delta_S(y_i,y'_i) \leq - \epsilon \cdot \cl(S)^p$$
which concludes the proof.

\begin{proof}[Proof of \Cref{cl:headache}.]
Recall that from the statement of Lemma~\ref{lem: supp local 2} we assume that
$\cl(S) > (\cl(X) + 6p \cdot \cl(S^\star)/(1-\epsilon)$. We want to show that 
$$(\cl(X) + 2 \cl(S^\star))^p + 2(\cl(S) + 2 \cl(S^\star))^p - 3\cl(S)^p \le -\epsilon\cdot\cl(S)^p.$$
Consider the expression,
$$ \left(\frac{\cl(X)+2\cl(S^\star)}{\cl(S)}\right)^p + 2 \left(1+\frac{2\cl(S^\star)}{\cl(S)}\right)^p $$
$$ \le \left(\frac{(1-\epsilon)(\cl(X)+2\cl(S^\star)}{\cl(X) + 6p \cdot \cl(S^\star)}\right)^p + 2\left(1+\frac{2(1-\epsilon)\cl(S^\star)}{\cl(X) + 6p \cdot \cl(S^\star)}\right)^p $$
$$ = \left(\frac{(1-\epsilon)(1+2x)}{1+6px}\right)^p + 2\left(1+\frac{2(1-\epsilon)x}{1+6px}\right)^p, $$
where $x=\cl(S^\star)/\cl(X)$. It suffices to show that this expression is at most $(3-\epsilon)$, which in turn can be established by showing  that
\begin{equation}\label{eq:poly}
    (1-\epsilon)^p(1+2x)^p+2(1+6px+2(1-\epsilon)x)^p \le (3-\epsilon)(1+6px)^p
\end{equation}
holds for all $x\ge 0$ and $p\ge 1$. 
Equation~(\ref{eq:poly}) trivially holds for $p=1$. We now show that it holds when $p\ge 2$. We do this by arguing that, for every integer $i$ such that $0\le i\le p$, the coefficient of $x^i$ on the LHS of Equation~(\ref{eq:poly}) is at most the coefficient of $x^i$ on the RHS:
$$ (1-\epsilon)^p2^i+2(6p+2(1-\epsilon))^i \leq (3-\epsilon)(6p)^i. $$
Note that we have factored out the term $\binom{p}{i}$, which appears in each coefficient.
It is easy to check that, for $i=0$ and $i=1$, the inequality above holds.
We can rewrite this inequality as 
$$(1-\epsilon)^p \cdot \left(\frac{1}{3p}\right)^i + 2 \left(1 + \frac{1-\epsilon}{3p} \right)^i\le 3-\epsilon$$ 
and observe that for $i,p\ge 2$, $(1/3p)^i\le 1/36$. Since $i\le p$, $(1+(1-\epsilon)/3p)^i\le (1+(1-\epsilon)/3p)^p\le e^{(1-\epsilon)/3}$.
We now need to identify the range of $\epsilon$ for which $(1-\epsilon)^p/36+2e^{(1-\epsilon)/3} < 3-\epsilon$. Since $p\ge 1$ and $0\le \epsilon\le 1$, we have $(1-\epsilon)^p\le 1-\epsilon$. By considering the Taylor series expansion of $e^x$, one can argue that $e^x\le 1+5x/4$ for $x\le 1/3$. Thus,
\begin{align*}
\frac{(1-\epsilon)^p}{36}+2e^{(1-\epsilon)/3} \le \frac{1-\epsilon}{36}+ 2 \left(1+ \frac{5(1 - \epsilon)}{12} \right) = 2+ \frac{31(1-\epsilon)}{36} \le 2+(1-\epsilon)=3-\epsilon,
\end{align*}
for all $\epsilon< 1$.

\end{proof}
\end{proof}

\noindent Now, suppose we run Algorithm~\ref{alg:local fast full} and
let $S_i$ denote the set $S$ at the end of the $i^{th}$ iteration of the algorithm.
We now use \Cref{lem:local search core} to derive a recurrence on the expected costs of the sets $S_i$. For notational convenience, let $C_i$ and $C_i^p$ denote $\cl(S_i)$ and $\cl(S_i)^p$ respectively, and let $\Psi := \cl(X) + 6p \cdot \Opt_k$. Note that the $C_i$ are random variables.
\begin{lemma}\label{lem:rec}
    For all $0 \leq j < i \leq T$, we have that 
    $$\mathbb E[C_i^p] \leq \left(1 - \frac{\epsilon}{s} \right)^{i - j} \cdot \mathbb E[C_{j}^p] + \left( \frac{\Psi}{1 - \epsilon} \right)^p.$$
\end{lemma}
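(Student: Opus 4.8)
The plan is to establish a one-step multiplicative contraction for a suitably shifted potential and then iterate it. Set up the natural filtration $\mathcal F_0 \subseteq \mathcal F_1 \subseteq \cdots$, where $\mathcal F_t$ is generated by the first $t$ samples drawn in Algorithm~\ref{alg:local fast full}; then $S_t$ is $\mathcal F_t$-measurable, and conditioned on $\mathcal F_t$ the point $x$ sampled in iteration $t+1$ is uniform on $X \setminus S_t$. Since $|S_0| = k$ and every update $S \leftarrow S + x - y$ with $x \in X \setminus S$ and $y \in S+x$ preserves both $|S| = k$ and $S \subseteq X$, we always have $|X \setminus S_t| = s$. Write $b := (\Psi/(1-\epsilon))^p$ and $\Phi_t := C_t^p - b$ (crucially, $\Phi_t$ is \emph{not} truncated at $0$, so it may be negative). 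I will prove the one-step bound $\mathbb E[\Phi_{t+1} \mid \mathcal F_t] \le (1 - \epsilon/s)\,\Phi_t$ for every $t$, and the lemma follows by iteration.

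For the one-step bound, first note that by the definition of the \textsf{BestSwap} oracle we have $\cl(S_{t+1})^p = \cl(S_t)^p + \delta_{S_t}(x)$ for the sampled point $x$, and since the no-op swap $y = x$ is always available to \textsf{BestSwap}, $\delta_{S_t}(x) \le \delta_{S_t}(x,x) = 0$; in particular the cost is deterministically non-increasing, so all the $C_t$ (hence all $\Phi_t$) have finite expectation. Taking the conditional expectation over the uniform choice of $x$ gives $\mathbb E[C_{t+1}^p \mid \mathcal F_t] = C_t^p + \tfrac1s \sum_{x \in X \setminus S_t} \delta_{S_t}(x)$. Now split into two cases. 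If $\cl(S_t) > \Psi/(1-\epsilon)$, then \Cref{lem:local search core} applies with $S = S_t$ and yields $\sum_{x \in X \setminus S_t} \delta_{S_t}(x) \le -\epsilon \cdot \cl(S_t)^p$, so $\mathbb E[C_{t+1}^p \mid \mathcal F_t] \le (1-\epsilon/s) C_t^p$; subtracting $b$ and using $b \ge (1-\epsilon/s) b$ gives $\mathbb E[\Phi_{t+1} \mid \mathcal F_t] \le (1-\epsilon/s) C_t^p - b \le (1-\epsilon/s)\Phi_t$. Otherwise $C_t^p \le b$, i.e.\ $\Phi_t \le 0$; here monotonicity of the cost gives $C_{t+1}^p \le C_t^p$ deterministically, hence $\Phi_{t+1} \le \Phi_t \le 0 \le (1-\epsilon/s)\Phi_t$, the last step using $\Phi_t \le 0$ and $1 - \epsilon/s \le 1$. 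Thus $\mathbb E[\Phi_{t+1} \mid \mathcal F_t] \le (1-\epsilon/s)\Phi_t$ in all cases.

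Finally, taking expectations and applying the tower property gives $\mathbb E[\Phi_{t+1}] \le (1-\epsilon/s)\,\mathbb E[\Phi_t]$, and iterating from index $j$ up to index $i$ yields $\mathbb E[\Phi_i] \le (1-\epsilon/s)^{i-j}\,\mathbb E[\Phi_j]$. Unfolding $\Phi_t = C_t^p - b$ and discarding the nonpositive term $-(1-\epsilon/s)^{i-j} b$ on the right-hand side gives exactly $\mathbb E[C_i^p] \le (1-\epsilon/s)^{i-j} \mathbb E[C_j^p] + b$, as claimed. The one genuinely delicate point — and the reason I would track the untruncated shifted potential $\Phi_t$ rather than $\mathbb E[C_t^p]$ directly — is to keep the additive error term equal to $b$ rather than the $\Theta(s/\epsilon)$-times-larger quantity $b \sum_t (1-\epsilon/s)^t$ that a naive telescoping of the additive term would produce. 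This is exactly what the case analysis buys us: in the "already good" regime $C_t^p \le b$ the cost is deterministically non-increasing, so $\Phi_t$, which is already nonpositive there, cannot drift upward, which is precisely what makes the clean multiplicative recursion hold uniformly over all $t$.
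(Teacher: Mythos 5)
Your proof is correct and rests on exactly the same ingredients as the paper's: the case split on whether $C_t^p$ exceeds $b=(\Psi/(1-\epsilon))^p$, \Cref{lem:local search core} to obtain the $(1-\epsilon/s)$ contraction in the above-threshold case (using $|X\setminus S_t|=s$, which you rightly justify), and the deterministic non-increase of the cost via the no-op swap in the below-threshold case. The difference is only in the bookkeeping: the paper conditions on the event $\mathcal E_{i-1}=\{C_{i-1}\le \Psi/(1-\epsilon)\}$, derives the per-step recursion $\mathbb E[C_i^p]\le (1-\epsilon/s)\,\mathbb E[C_{i-1}^p]+\tfrac{\epsilon}{s}\bigl(\tfrac{\Psi}{1-\epsilon}\bigr)^p$, and sums the geometric series, whereas you fold the threshold into the shifted potential $\Phi_t=C_t^p-b$ and prove a pointwise conditional contraction, avoiding the series; note, though, that your closing remark slightly mischaracterizes the direct route --- the paper's per-step additive term already carries the factor $\epsilon/s$, so its telescoping also yields additive error $b$, not the $\Theta(s/\epsilon)\cdot b$ you warn against. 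One small slip in the below-threshold case: the displayed chain $\Phi_{t+1}\le \Phi_t\le 0\le (1-\epsilon/s)\Phi_t$ cannot be right as written, since $(1-\epsilon/s)\Phi_t\le 0$ when $\Phi_t\le 0$; the correct chain is $\Phi_{t+1}\le \Phi_t\le (1-\epsilon/s)\Phi_t$, where the last inequality is equivalent to $\tfrac{\epsilon}{s}\Phi_t\le 0$ and is exactly what your stated justification ($\Phi_t\le 0$ and $0<1-\epsilon/s\le 1$) proves, so the argument goes through unchanged.
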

\begin{proof}
    Let $\mathcal E_\ell$ denote the event that $C_\ell \leq \Psi/ (1 - \epsilon)$.
    Now, suppose that we fix all of the random bits used by the algorithm during the first $i-1$ iterations so that the event $\mathcal E_{i-1}$ does not occur. Then $C_{i-1}$ is not a random variable w.r.t.~the remaining random bits and $C_{i-1} > \Psi / (1 - \epsilon)$. 
    Conditioned on these fixed random bits, it follows from \Cref{lem:local search core} that
    $$ \mathbb E [ C_i^p ] = \mathop{\mathbb{E}}_{x \sim X \setminus S_{i-1}} \left[ \delta_{S_{i-1}}(x) \right] + C_{i-1}^p = \frac{1}{s} \cdot \sum_{x \in X \setminus S_{i-1}} \delta_{S_{i-1}}(x) + C_{i-1}^p $$
    $$ \leq - \frac{\epsilon}{s} \cdot  C_{i-1}^p - C_{i-1}^p = \left(1 - \frac{\epsilon}{s} \right) \cdot C_{i-1}^p. $$
    Taking expectation on both sides, we get that $\mathbb E [C_i^p] \leq (1 - \epsilon / s) \cdot \mathbb E [C_{i-1}^p]$. Since this holds as long as the random bits in the first $i-1$ rounds are fixed so that the event $\mathcal E_{i-1}$ does not occur, it follows that
    $$\mathbb E [C_i^p | \overline{\mathcal E_{i-1}} ] \leq \left(1 - \frac{\epsilon}{s} \right) \cdot \mathbb E [C_{i-1}^p | \overline{\mathcal E_{i-1}} ],$$
    where the randomness is now being taken over all the random bits used by the algorithm. We can now observe that
    \begin{align*}
        \mathbb E [ C_i^p ] &= \mathbb E[C_i^p | \mathcal E_{i-1} ] \cdot \Pr[\mathcal E_{i-1}] + \mathbb E[C_i^p |  \overline{\mathcal E_{i-1}} ] \cdot \Pr[ \overline{\mathcal E_{i-1}} ]\\
        &\leq \mathbb E[C_{i-1}^p | \mathcal E_{i-1} ] \cdot \Pr[\mathcal E_{i-1}] + \left(1 - \frac{\epsilon}{s} \right) \cdot \mathbb E [C_{i-1}^p |  \overline{\mathcal E_{i-1}} ] \cdot \Pr[ \overline{\mathcal E_{i-1}} ]\\
        &= \left(1 - \frac{\epsilon}{s} \right) \cdot \mathbb E [C_{i-1}^p] + \frac{\epsilon}{s} \cdot \mathbb E[C_{i-1}^p | \mathcal E_{i-1} ] \cdot \Pr[\mathcal E_{i-1}]\\
        &\leq\left(1 - \frac{\epsilon}{s} \right) \cdot \mathbb E [C_{i-1}^p] + \frac{\epsilon}{s} \cdot \left(\frac{\Psi}{1-\epsilon}\right)^p,
    \end{align*}
    where in the first inequality we further used that the cost of $C_{i}^p$ is decreasing with $i$, and in the last inequality we used the conditioning on $\mathcal E_{i-1}$. Hence, we just showed
    \begin{equation}\label{eq:rec 1}
        \mathbb E[C_i^p] \leq \left(1 - \frac{\epsilon}{s} \right) \cdot \mathbb E[C_{i-1}^p] + \frac{\epsilon}{s} \cdot \left( \frac{\Psi}{1 - \epsilon}\right)^p.
    \end{equation}
    Repeatedly applying Equation~(\ref{eq:rec 1}), we obtain
    $$ \mathbb E[C_i^p] \leq \left(1 - \frac{\epsilon}{s} \right)^{i-j} \cdot \mathbb E[C_{j}^p] + \frac{\epsilon}{s} \cdot \sum_{\ell = 0}^{i - j -1} \left(1 - \frac{\epsilon}{s} \right)^{\ell} \cdot \left( \frac{\Psi}{1 - \epsilon} \right)^p \leq \left(1 - \frac{\epsilon}{s} \right)^{i - j} \cdot \mathbb E[C_{j}^p] + \left(\frac{\Psi}{1 - \epsilon}\right)^p $$
    and the lemma follows.
\end{proof}

\noindent
Now, let $\eta := ps \log(n\Delta/\epsilon)/\epsilon$ and $\tau := 2c \log n / \epsilon$, and note that the algorithm runs for exactly $\tau\eta$ iterations. We split the iterations of local search performed by our algorithm into $\tau$ \emph{phases} of length $\eta$. By \Cref{lem:rec}, for each $0 < i \leq \tau$ we have that 
$$ \mathbb E[C_{i \eta}^p] \leq \left(1 - \frac{\epsilon}{s} \right)^\eta \cdot \mathbb E[C_{(i-1)\eta}^p] + \left( \frac{\Psi}{1 - \epsilon} \right)^p.$$
Since the cost of any solution it at most $n \cdot d_{\max} w_{\max}$ and at least $d_{\min} w_{\min}$ (unless $\Opt_k = 0$), it follows that
$$ \mathbb E[C^p_{i \eta}] \leq \exp \left( - \frac{\epsilon}{s} \cdot \eta \right) \cdot ( n \cdot d_{\max} w_{\max})^p + \left( \frac{\Psi}{1 - \epsilon}\right)^p = \frac{\epsilon^p}{(n\Delta)^p} \cdot (n \cdot d_{\max} w_{\max})^p+ \left(\frac{\Psi}{1 - \epsilon}\right)^p$$
$$ \leq \epsilon \cdot \Psi^p + \left(\frac{\Psi}{1 - \epsilon}\right)^p \leq (1 + \epsilon) \cdot \left(\frac{\Psi}{1 - \epsilon}\right)^p.  $$
By Markov's inequality, we get that
$$ \Pr \left[ C_{i \eta}^p \geq (1 + \epsilon)^2 \cdot\left( \frac{\Psi}{1 - \epsilon} \right)^p \right] \leq \frac{1}{1 + \epsilon}. $$
Now, let $\mathcal B_i$ denote the event that $C_{i \eta}^p$ is at least this quantity. 
Since $C_{0}^p \geq C_1^p \geq \dots \geq C_{\tau \eta}^p$, the event $\mathcal B_i$ does not occur if the event $\mathcal B_{i-1}$ does not occur. Thus
$$ \Pr [ \mathcal B_i ] = \Pr [\mathcal B_i | \mathcal B_{i-1} ] \cdot \Pr[\mathcal B_{i-1}] + \Pr[\mathcal B_i | \overline{\mathcal B}_{i-1} ] \cdot \Pr[\overline{\mathcal B}_{i-1} ] \leq \frac{1}{1 + \epsilon} \cdot \Pr [ \mathcal B_{i-1} ].$$
It follows that 
$$\Pr [\mathcal B_{\tau}] \leq \left(\frac{1}{1 + \epsilon} \right)^\tau \leq \exp \left( -\frac{\epsilon}{1 + \epsilon} \cdot \tau \right) \leq \frac{1}{n^c}.$$
Hence, the algorithm outputs a solution with cost at most
$$(1 + \epsilon)^{2/p} \cdot \frac{\Psi}{1 - \epsilon} \leq (1 + 7\epsilon) \cdot (\cl(X) + 6p \cdot \Opt_k) $$
with probability at least $1 - 1/n^c$.

\subsection{Implementation of $\RandLoc$ (Proof of \Cref{lem: supp local 1})}\label{sec:supp local 1}

We begin by describing the auxiliary data structures that are necessary for the algorithm to admit an efficient implementation.

\medskip
\noindent\textbf{Auxiliary Data Structures.}
We assume that, as part of our input, we are given a collection of lists $\mathcal L = \{L_x\}_{x \in V}$ where the list $L_x$ contains all of the points in $X$ sorted in increasing distance from $x \in V$. We assume that these lists are implemented using balanced binary search trees, allowing for $O(\log n)$ time insertions, deletions, and membership queries.
The algorithm can modify the lists in $\mathcal L$ while performing the computation but must return them to their original state before the algorithm terminates.

\medskip
\noindent\textbf{The Algorithm.}
The algorithm begins by selecting an arbitrary subset $S \subseteq V$ of $k$ points and removing all of the points in $X \setminus S$ from each of the lists in $\mathcal L$. This can be done in $O(ns\log n)$ time. The algorithm then performs $T = 2c \cdot ps \log(n) \log(n\Delta/\epsilon)/\epsilon^2$ iterations of randomized local search. During each iteration, the algorithm samples some $x \sim X \setminus S$ independently and uniformly at random and makes a call to $\textsf{BestSwap}(S,x)$ to obtain the point $y \in S + x$ which minimizes $\cl(S +x-y)^p$. The algorithm then swaps $y$ and $x$ by setting $S \leftarrow S + x - y$, adds $x$ and removes $y$ from all of the lists in $\mathcal L$, and proceeds to the next iteration. Excluding the time taken to handle the call to the \textsf{BestSwap} oracle, all these operations can easily be implemented in $O(n \log n)$ time. Finally, after running for $T$ iterations, the algorithm adds all of the points in $S$ to each of the lists in $\mathcal L$, again taking $O(ns\log n)$ time.

\medskip
\noindent \textbf{Implementing The \textnormal{\textsf{BestSwap}} Oracle.}
The pseudocode in Algorithm~\ref{alg:local iteration} gives a formal description of the \textsf{BestSwap} oracle. We assume that the points in the list $L_x$ are always sorted in increasing distance from $x$.
We let $\Gamma(y)$ denote the set $\{x \in V \; | \; L_x(1) = y\}$ for the current state of the lists in $\mathcal L$.
Note that if each of the lists $L_x$ contains the same set of points $S' \subseteq V$, then $x \mapsto L_x(1)$ is the map that describes the optimal assignment of the points in $V$ to the solution $S'$. Furthermore, for any $y \in S'$, $\Gamma(y)$ denotes the set of points in $V$ that are assigned to $y$ by this optimal assignment.

\begin{algorithm}[H]\label{alg:local iteration}
    \SetAlgoLined
    \DontPrintSemicolon
    $L_{z} \leftarrow L_{z} + x$ for all $z \in V$\label{line add L}\;
        $\texttt{cost}^p[S + x] \leftarrow \sum_{z \in V} w(z) d(z, L_{z}(1))^p$\;
        \For{$y \in S + x$\label{for looop}}{
            $\texttt{cost}^p[S + x - y] \leftarrow \texttt{cost}^p[S + x] + \sum_{z \in \Gamma(y)} w(z) (d(z, L_{z}(2))^p - d(z, y)^p)$\;
        }
        $L_{z} \leftarrow L_{z} - x$ for all $z \in V$\;\label{line rem L}
    \Return{$\arg \min_{y \in S + x} \textup{\texttt{cost}}^p[S + x - y]$}
    \caption{\textsf{BestSwap}$(S, x)$}
\end{algorithm}

\begin{lemma}\label{lem:best swap}
    The \textnormal{\textsf{BestSwap}} oracle runs in $O(n \log n)$ time and correctly computes the point $y \in S + x$ which minimizes $\cl(S + x- y)^p$.
\end{lemma}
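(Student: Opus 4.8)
The plan is to prove the two claims of the lemma — correctness of the returned point and the $O(n\log n)$ running time — separately, using one structural observation as a common starting point. That observation is the invariant maintained by the outer procedure of Section~\ref{sec:supp local 1}: at the instant $\textsf{BestSwap}(S,x)$ is called, every list $L_z\in\mathcal L$ contains exactly the points of the current center set $S$, sorted by distance from $z$. This holds immediately after the outer algorithm deletes $X\setminus S$ from $\mathcal L$, and it persists across iterations: $\textsf{BestSwap}$ inserts $x$ on line~\ref{line add L} and removes it again on line~\ref{line rem L}, so it returns $\mathcal L$ to its entry state, after which the caller applies the swap by adding $x$ and removing $y$ from every list. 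Thus I may assume this invariant when reasoning about a single call.

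For correctness I would argue by direct accounting. After line~\ref{line add L}, each $L_z$ lists $S+x$ in nondecreasing distance from $z$, so $L_z(1)$ is a nearest point of $S+x$ to $z$ and $d(z,L_z(1))=d(z,S+x)$; summing with weights gives $\texttt{cost}^p[S+x]=\sum_{z\in V}w(z)\,d(z,S+x)^p=\cl(S+x)^p$. Now fix $y\in S+x$ and note that $\Gamma(y)=\{z : L_z(1)=y\}$ is exactly the set of points whose (tie-broken) nearest center in $S+x$ is $y$; for $z\in\Gamma(y)$ the nearest point of $S+x-y$ becomes $L_z(2)$, which is defined since $|S+x|=k+1\ge 2$ and which lies in $S+x-y$, whereas for $z\notin\Gamma(y)$ we have $d(z,S+x-y)=d(z,S+x)=d(z,L_z(1))$. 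Using $L_z(1)=y$ on $\Gamma(y)$ and that $\{\Gamma(y)\}_{y\in S+x}$ partitions $V$, one obtains
\[
\cl(S+x-y)^p=\texttt{cost}^p[S+x]+\sum_{z\in\Gamma(y)}w(z)\bigl(d(z,L_z(2))^p-d(z,y)^p\bigr),
\]
which is precisely the quantity stored as $\texttt{cost}^p[S+x-y]$ inside the \textbf{for} loop; hence the returned $\arg\min$ minimizes $\cl(S+x-y)^p$ over $y\in S+x$, as claimed.

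For the running time I would simply tally the work. Lines~\ref{line add L} and~\ref{line rem L} are $n$ insertions and $n$ deletions in balanced search trees, $O(n\log n)$ in total; computing $\texttt{cost}^p[S+x]$ scans the $n$ points and makes one order-statistic (here, minimum) query per list, again $O(n\log n)$; materializing the buckets $\{\Gamma(y)\}_{y\in S+x}$ takes $O(n\log n)$ by scanning $z\in V$ and appending $z$ to $\Gamma(L_z(1))$. Because the sets $\Gamma(y)$ partition $V$, the \textbf{for} loop performs, over all $y$ together, exactly $n$ second-nearest queries ($O(\log n)$ each) plus $O(1)$ arithmetic per point; and the concluding $\arg\min$ ranges over $|S+x|=k+1=O(n)$ candidates since $k\le m\le n$. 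Summing gives $O(n\log n)$.

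I expect the only delicate point to be making the reassignment step of the correctness argument airtight in the presence of ties and of the consistent-but-arbitrary tie-breaking rule defining $L_z(\cdot)$ (and hence $\Gamma(\cdot)$): if $y$ is merely tied for nearest to $z$, then $d(z,L_z(2))=d(z,y)$, so both the correction term and the true change in $z$'s service cost vanish, and the displayed identity still holds; one should also check $z\notin\Gamma(y)$ implies $d(z,S+x-y)=d(z,S+x)$ because $L_z(1)\ne y$ remains available. Everything else is routine bookkeeping; if one wants to be fully precise about treating an evaluation of $d^p$ as $O(1)$ time, the $O(\log p)$ multiplications it actually costs are absorbed into the $\tilde O(\cdot)$ of Lemma~\ref{lem: supp local 1}.
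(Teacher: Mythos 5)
Your proof is correct and follows essentially the same route as the paper's: compute $\cl(S+x)^p$ from the first entries of the lists, observe that removing a center $y$ only changes the service cost of the points $z$ with $L_z(1)=y$ (from $d(z,y)^p$ to $d(z,L_z(2))^p$), and charge the loop's work to the partition $\{\Gamma(y)\}_{y\in S+x}$ of $V$, with balanced-BST operations giving the $O(n\log n)$ bound. Your extra care about tie-breaking and about $\mathcal L$ being restored to its entry state is consistent with (and slightly more explicit than) the paper's argument, but introduces no new ideas.
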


\begin{proof}
We first note that we can add or remove $x$ from all the lists in $\mathcal L$ in $O(n \log n)$ time.
Let $\pi : V \longrightarrow S+x$ denote the map that assigns each point $z \in V$ to the closest point in $S+x$.
Then we have that 
$$\cl(S+x)^p = \sum_{z \in V }w(z) d(z, \pi(z))^p = \sum_{z \in V }w(z) d(z, L_{z}(1))^p$$
since $L_{z}$ contains exactly the points in $S+x$.
Hence, we can compute $\texttt{cost}^p[S + j]$ by computing the sum on the right-hand side on $O(n)$ time.
In order to show that we can efficiently implement the floor loop in \Cref{for looop}, we note that the change $\cl(S + x - y)^p - \cl(S+x)^p$ caused by closing a center $y \in S + x$ is precisely
$$\sum_{z \in \Gamma(y)} w(z) (d(z, L_{z}(2))^p - d(z, y)^p). $$
This is because only the connection costs of the points whose closest center is $y$ will change, and in particular, decrease from $d(z, y)^p$ to $d(z, L_{z}(2))^p$.
Assuming that we have access to the set $\Gamma(y)$, we can compute this quantity in $O(|\Gamma(y)|)$ time.
Since each $z \in V$ appears in exactly one $\Gamma(y)$ across all $y \in S+x$, it takes $O(n)$ time to compute $\texttt{cost}^p[S + x - y]$ for all $y \in S + x$.

Finally, we now show how to construct the sets $\Gamma(y) = \{x \in V \; | \; L_x(1) = y\}$ in $O(n)$ time. When the algorithm enters the floor loop in \Cref{for looop},
we can construct an empty list $\Gamma(y)$ for each $y \in S + x$ in $O(n)$ time. We can then iterate over all $x \in V$ and place each $x$ into $\Gamma(L_x(1))$ in $O(n)$ total time. Hence, we can scan over all of the points in the set $\Gamma(y)$ in $O(|\Gamma(y)|)$ time.
\end{proof}

\section{Implementing the Consistency Hierarchy}\label{sec:imp consistency}

We now show how to efficiently implement the consistency hierarchy from \Cref{sec:consistency}. We do this by implementing a data structure that maintains the nested sets in the hierarchy and supports efficient reconstruction operations. The critical observation is that, by using the randomized local search algorithm from \Cref{sec:RLS}, we can perform the reconstructions fast enough to ensure that the amortized reconstruction time in each layer of the hierarchy is only $\tilde O(k^{1 + \epsilon}p /\epsilon)$.
In particular, we prove the following theorem.

\begin{theorem}
    There exists a
    fully dynamic algorithm that can maintain a $O(p/\epsilon)$-approximate solution to the $(k,p)$-clustering problem with $\tilde O(k^\epsilon/\epsilon)$ amortized recourse and $\tilde O(k^{1 + \epsilon}p/\epsilon^3)$ amortized update time, for any $\epsilon > 1/\log k$.
\end{theorem}

\subsection{Dynamic Sparsification}\label{sec:sparse para}

In order to ensure that the reconstruction time at the bottom level of the hierarchy is small,
we use the dynamic algorithm of \cite{ourneurips2023} to dynamically sparsify the underlying metric space before feeding it to our algorithm.
We do this by feeding the metric space $(V,w,d)$ of size $n$ to the sparsifier,
which then efficiently maintains a weighed metric subspace $(V', w', d)$ of size $\tilde O (k)$, which we in turn feed to our dynamic algorithm. It follows from the properties of the algorithm and the sparsified space that (i) any $\alpha$-approximate solution to the $(k,p)$-clustering problem in the space $(V', w', d)$ is also a $O(\alpha)$-approximate solution in the unsparsified space $(V,w,d)$, and (ii) the total number of updates in $V'$ is at most a $\tilde O(1)$ factor larger than the total number of updates in $V$. We describe the sparsification process in more detail in \Cref{app:sparsification}.

As a consequence of performing this sparsification, we can
assume we are running our dynamic algorithm on a weighted metric space $(V,w,d)$ such that the size of $V$ is always at most $\tilde O(k)$.
We can make this assumption while only incurring a $\tilde O(1)$ multiplicative loss in the amortized recourse and update times of our algorithm, as well as $O(1)$ multiplicative loss in the approximation ratio with probability at least $1 - \tilde O(1/n^c)$, for any constant $c \geq 1$.\footnote{We also incur an additive $\tilde O(k)$ factor in our amortized update time. However, this additive factor is dominated by the running time of our algorithm.}

\subsection{Data Structures}

In order to efficiently implement our dynamic algorithm, we design data structures that can explicitly maintain the sets $S_i$ at each layer of the hierarchy. 
Let $S_{-1}$ denote the set $V$.
At each point in time, our algorithm maintains the following data structures for each $-1 \leq i \leq \ell + 1$.
\begin{itemize}
    \item The set $S_i$.
    \item A list $L_{i,x}$ for each $x \in V$, such that $L_{i,x}$ contains all of the points in $S_{i}$ sorted in increasing order by their distance from $x$.
\end{itemize}
We assume that the set $S_i$ and lists $L_{i,x}$ are all stored using balanced binary trees, allowing for $O(\log n)$ time queries, insertions and deletions. We refer to the lists $L_{i,x}$ as the auxiliary data structures.


\medskip
\noindent\textbf{Maintaining the Data Structures.}
We now show how to efficiently maintain these data structures and use them to efficiently implement reconstructions using the randomized local search algorithm from \Cref{sec:RLS}.


\medskip
\noindent\textbf{Maintaining the Sets $S_i$.}
Let $0 \leq i \leq \ell + 1$. We later show how to reconstruct the set $S_i$ using the randomized local search algorithm. Maintaining the set between reconstructions is straightforward since we can lazily update the set $S_i$ (i.e.~add any point inserted into $V$ to $S_i$) in $\tilde O(1)$ worst-case time.

\medskip
\noindent\textbf{Maintaining the Lists $L_{i,x}$.}
Let $0 \leq i \leq \ell + 1$.\footnote{The case where $i = -1$ follows by a similar argument.}
To maintain the lists $\{L_{i,x}\}_{x \in V}$, we update this collection of lists every time a point is either inserted or deleted from $S_i$ or $V$. Whenever a point $x$ is inserted to $V$, we must create a new list $L_{i,x}$ and add all of the points in $S_i$, which can be done in $\tilde O(k)$ time. Whenever a point $x$ is deleted from $V$, we delete the list $L_{i,x}$, which can also be done in $\tilde O(k)$ time. Whenever a point $y$ is inserted into (resp.~deleted from) $S_i$, we must add (resp.~remove) $y$ to each list in $\{L_{i,x}\}_{x \in V}$, which can be done in $\tilde O(k)$ time. Thus, we can maintain the lists $\{L_{i,x}\}_{x \in V}$ with $\tilde O(k)$ overhead as points are inserted and deleted from $S_i$ and $V$. Since $V$ has an amortized recourse of $\tilde O(1)$ (by the sparsification described in \Cref{sec:sparse para}) and $S_i$ has an amortized recourse of $\tilde O(k^{\epsilon}/\epsilon)$ (see \Cref{sec:recourse}), it follows that the amortized time taken to maintain these lists is $\tilde O(k^{1 + \epsilon}/\epsilon)$. Hence, we get the following lemma.

\begin{lemma}\label{lem:total aux time}
    Given some sequence of $T$ point insertions and deletions, the total time our algorithm spends maintaining the data structures is $T \cdot \tilde O(k^{1+\epsilon}/\epsilon^2)$.
\end{lemma}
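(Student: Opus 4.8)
The plan is to bound the cost of maintaining the auxiliary structures one layer at a time and then sum over the $O(1/\epsilon)$ layers. Fix a layer $i \in \{-1, 0, \ldots, \ell + 1\}$. The only events that force us to touch the collection $\{L_{i,x}\}_{x \in V}$ are (a) an insertion or deletion of a point in the current metric space $V$, and (b) an insertion or deletion of a point in $S_i$ caused by a reconstruction or a lazy update. I would first argue that each such event costs $\tilde O(k)$ time: by \Cref{lem:size Si} we have $|S_i| \leq k + 2 s_i \leq 3k = \tilde O(k)$ (and $|S_{-1}| = |V| = \tilde O(k)$ after sparsification), so inserting or deleting a point $x$ in $V$ amounts to creating or destroying the single balanced search tree $L_{i,x}$ with $\tilde O(k)$ elements; and inserting or deleting a point $y$ in $S_i$ requires inserting or deleting $y$ in each of the $|V| = \tilde O(k)$ trees, again $\tilde O(k)$ operations in total. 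Maintaining the set $S_i$ itself between reconstructions is free up to $\tilde O(1)$ per lazy update, and is dominated by this.

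Next I would multiply this per-event cost by the number of such events over the stream. By the dynamic sparsification of \Cref{sec:sparse para}, the sequence of updates actually fed to our algorithm has length $\tilde O(T)$, i.e.\ the point set $V$ has amortized recourse $\tilde O(1)$; and by \Cref{lem:rec bound} the set $S_i$ has amortized recourse $\tilde O(k^\epsilon/\epsilon)$. Hence, over a stream of $T$ updates, the total number of events of type (a) and (b) for layer $i$ is $T \cdot \tilde O(1 + k^\epsilon/\epsilon) = T \cdot \tilde O(k^\epsilon/\epsilon)$, so the total time spent maintaining $\{L_{i,x}\}_{x \in V}$ across the whole stream is $T \cdot \tilde O(k) \cdot \tilde O(k^\epsilon/\epsilon) = T \cdot \tilde O(k^{1+\epsilon}/\epsilon)$.

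Finally, since there are $\ell + 3 = O(1/\epsilon)$ layers $i \in \{-1, 0, \ldots, \ell + 1\}$, summing the per-layer bound gives a total maintenance time of $\sum_i T \cdot \tilde O(k^{1+\epsilon}/\epsilon) = T \cdot \tilde O(k^{1+\epsilon}/\epsilon^2)$, as claimed.

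I expect the only delicate point to be the bookkeeping of the two recourse bounds that feed into the count of events: we must invoke the sparsifier guarantee so that ``$T$ updates to the original input'' becomes ``$\tilde O(T)$ updates to the space our algorithm runs on'', and we must use that \Cref{lem:rec bound} bounds the recourse of $S_i$ \emph{for every layer $i$ simultaneously}, so that summing over layers costs exactly one extra $1/\epsilon$ factor and nothing more. Everything else is routine balanced-binary-tree accounting, with all $\mathrm{polylog}(n, \Delta)$ factors absorbed into the $\tilde O(\cdot)$ notation.
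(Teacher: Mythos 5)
Your proposal is correct and follows essentially the same route as the paper: a per-event cost of $\tilde O(k)$ for updating the lists $\{L_{i,x}\}_{x\in V}$, multiplied by the amortized recourse $\tilde O(1)$ of $V$ (from sparsification) plus $\tilde O(k^{\epsilon}/\epsilon)$ of $S_i$, giving $\tilde O(k^{1+\epsilon}/\epsilon)$ per layer, then summed over the $O(1/\epsilon)$ layers. The only difference is that you spell out explicitly the summation over layers and the handling of layer $-1$, which the paper leaves implicit.
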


\medskip
\noindent\textbf{Handling Reconstructions.}
Suppose that, after some update, we need to reconstruct layer $i$ of our hierarchy.
We begin by
making a call to $\RandLoc$ to obtain a $(1 + O(\epsilon), O(p), S_{i-1})$-restricted solution $S_{i}$ to the $(k + s_i, p)$-clustering problem in the metric space $(V, w, d)$ with probability at least $1 - 1/n^c$.\footnote{Recall that $S_{-1}$ denotes the set $V$.} Since we maintain the set $S_{i-1}$ and the lists $\{L_{i-1, x}\}_{x \in V}$, it follows from Lemmas~\ref{lem: supp local 2} and \Cref{lem: supp local 1} that this can be done in $\tilde O(k^{1 + (\ell - i + 1)\epsilon}p/\epsilon^2)$ time. We then proceed to reconstruct all layers $j > i$ of the hierarchy in increasing order. Since $\epsilon \geq 1/ \log k$, it's easy to verify that
$$\tilde O(k^{1 + (\ell - i + 1)\epsilon}p/\epsilon^2) \leq \sum_{j = i}^{\ell + 1} \tilde O(k^{1 + (\ell - j + 1)\epsilon}p/\epsilon^2).$$
In other words, the time taken to reconstruct layer $i$ dominates the time taken to reconstruct layer $i$ and all the subsequent layers.
By a similar argument to \Cref{sec:recourse}, we get the following lemma.

\begin{lemma}\label{lem:total recon time}
    Given some sequence of $T$ point insertions and deletions, the total time our algorithm spends reconstructing layers is $T \cdot \tilde O(k^{1+\epsilon}p/\epsilon^3)$.
\end{lemma}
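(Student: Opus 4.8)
The plan is to combine three ingredients: a per-reconstruction running-time bound coming from randomized local search, the observation that a cascading reconstruction is no more expensive (up to constants) than reconstructing just its top layer, and the amortized frequency bound on reconstructions already established in \Cref{sec:recourse}.

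First I would bound the cost of a single call to \textsc{ReconstructFromLayer}$(i)$. When layer $i$ is reconstructed, the call to $\RandLoc$ that recomputes $S_i$ runs, by \Cref{lem: supp local 1}, in $\tilde O(nps/\epsilon^2)$ time, where $n = |V|$ and $s = |S_{i-1}| - (k+s_i)$. After the dynamic sparsification of \Cref{sec:sparse para} we have $n = \tilde O(k)$, and by \Cref{lem:size Si} we have $|S_{i-1}| \leq k + 2s_{i-1}$, so $s \leq 2s_{i-1} = O(k^{(\ell - i + 1)\epsilon})$. Hence recomputing $S_i$ takes $\tilde O(k^{1 + (\ell - i + 1)\epsilon}p/\epsilon^2)$ time; the boundary layer $i = 0$ (where $S_{-1} = V$, so this is really an unrestricted $(k+s_0)$-clustering solve on the sparsified instance) is handled separately using $|V| = \tilde O(k)$, costing only $\tilde O(k^2 p/\epsilon^2)$ per call. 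Since \textsc{ReconstructFromLayer}$(i)$ also rebuilds layers $i+1,\dots,\ell+1$, its total cost is $\sum_{j=i}^{\ell+1} \tilde O(k^{1 + (\ell - j + 1)\epsilon}p/\epsilon^2)$; because $\epsilon \geq 1/\log k$ implies $k^\epsilon \geq e$, consecutive terms shrink by a factor $k^{-\epsilon} \leq 1/e$, so this sum is $\tilde O(k^{1 + (\ell - i + 1)\epsilon}p/\epsilon^2)$, i.e.~the top layer of the cascade dominates.

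Next I would invoke the counter argument from \Cref{sec:recourse}: over a sequence of $T$ updates, \textsc{ReconstructFromLayer}$(i)$ is called at most $T/(\lfloor k^{(\ell - i)\epsilon}\rfloor + 1) = T/(s_i + 1)$ times, since every time a layer $\leq i$ is reconstructed the counter $\tau_i$ is reset, and it must be incremented past $s_i$ before layer $i$ triggers a reconstruction again. Multiplying the per-call cost from the previous step by this count and summing over layers, the total time spent reconstructing is at most $\sum_{i=0}^{\ell+1} \frac{T}{s_i + 1}\cdot \tilde O(k^{1 + (\ell - i + 1)\epsilon}p/\epsilon^2)$. Using $s_i + 1 \geq k^{(\ell - i)\epsilon}$, each summand is at most $\tilde O(k^{1+\epsilon}p/\epsilon^2)\cdot T$, and since there are $\ell + 2 = O(1/\epsilon)$ layers, the total is $\tilde O(k^{1+\epsilon}p/\epsilon^3)\cdot T$, which is the claimed bound. (The $i = 0$ term contributes $\tilde O(k^2 p/\epsilon^2)\cdot T/(s_0+1) = \tilde O(kp/\epsilon^2)\cdot T$, which is subsumed.)

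The main obstacle — or the place that needs the most care — is the first step: justifying that a cascading reconstruction costs only as much as rebuilding its topmost layer. This rests entirely on the geometric decay of the per-layer costs $k^{1 + (\ell - j + 1)\epsilon}p/\epsilon^2$ in $j$, which requires the hypothesis $\epsilon \geq 1/\log k$ so that the ratio $k^{-\epsilon}$ between successive terms is bounded away from $1$ by an absolute constant; without this the cascade could blow up the bound by a factor of $\ell$, and the sum over $i$ would then carry an extra $1/\epsilon$. I would also double-check that the auxiliary data structures $\{L_{i-1,x}\}_{x\in V}$ needed by $\RandLoc$ are maintained for the relevant $X = S_{i-1}$ (which they are, by the data-structure maintenance in this section and \Cref{lem:total aux time}), so that the $\tilde O(nps/\epsilon^2)$ bound of \Cref{lem: supp local 1} applies verbatim and the reconstruction time is accounted for separately from the data-structure upkeep.
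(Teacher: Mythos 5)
Your proposal is correct and follows essentially the same route as the paper: bound each call to \textsc{ReconstructFromLayer}$(i)$ by $\tilde O(k^{1+(\ell-i+1)\epsilon}p/\epsilon^2)$ via \Cref{lem: supp local 1} with $n=\tilde O(k)$ after sparsification, argue (using $\epsilon \geq 1/\log k$) that the cascade is dominated by its top layer, and then charge reconstructions against the counter-based frequency bound from \Cref{sec:recourse}. The paper merely organizes the final count per update (summing $k^{(\ell - r^t + 1)\epsilon}$ over $t$) rather than per layer, but this is the same calculation, and your separate treatment of layer $0$ is consistent with it.
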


\begin{proof}
    Let $r^t \in \{0,\dots, \ell + 1\}$ be the layer that we reconstruct from while handling the $t^{th}$ update. Then, by the arguments in the preceding paragraph, it follows that the total time our algorithm spends reconstructing layers is at most
    $$ \sum_{t = 1}^T \tilde O \! \left( k^{1 + (\ell - r^{t} + 1)\epsilon} p /\epsilon^2 \right) \leq \tilde O(kp/\epsilon^2) \cdot \sum_{t = 1}^T  k^{(\ell - r^{t} + 1)\epsilon}. $$
    By the arguments in \Cref{sec:recourse}, we have that $\sum_{t = 1}^T  k^{(\ell - r^{t} + 1)\epsilon} \leq T \cdot O(k^\epsilon / \epsilon)$. Hence, our algorithm spends at most $T \cdot \tilde O(k^{1 +\epsilon} p/\epsilon^3)$ time reconstructing layers.
\end{proof}

\medskip
\noindent\textbf{Amortized Update Time.}
It is easy to maintain the counters $\tau_i$ in Algorithm~\ref{alg:update} to keep track of when we need to reconstruct each of the layers in $\tilde O(1/\epsilon)$ worst-case time per update. Hence,
it follows from Lemmas~\ref{lem:total recon time} and \ref{lem:total aux time} that the total time that our algorithm spends handling a sequence of $T$ point insertions and deletions is $T \cdot \tilde O(k^{1 + \epsilon}p/\epsilon^3)$. Thus, the amortized update time of our algorithm is $\tilde O(k^{1 + \epsilon}p/\epsilon^3)$.

\subsection{Approximation Ratio}
We now bound the approximation ratio of the solution maintained by our dynamic algorithm. Consider the state of our algorithm after handling some arbitrary sequence of updates.
We first note that the solution $S_i$ was a $(1 + O(\epsilon), O(p), S_{i-1})$-restricted solution when it was last reconstructed with probability at least $1 - O(1/n^c)$. Thus, by taking a union bound, this is true for all layers $i \in \{0,\dots, \ell + 1\}$ with probability at least $1 - O(1/(\epsilon n^c))$. Conditioned on this high probability event, we can apply \Cref{lem:approx} to get that the approximation ratio of the solution maintained by the algorithm \emph{with respect to the sparsified metric space the algorithm is being run on} is at most
$$ O(p) \cdot \sum_{i=0}^{\ell + 1} (1 + O(\epsilon))^i \leq O \! \left( \frac{p}{\epsilon} \right). $$
By the properties of the sparsifier we are using (see \Cref{sec:sparse para}) it follows that this solution is a $O(p/\epsilon)$-approximation to the $(k,p)$-clustering problem on the unsparsified metric space with high probability.

\section{Maintaining a Proper Solution}\label{sec:proper}

Recall that the output of our dynamic algorithm, $S_{\ell + 1}$, is an improper solution, i.e.~$S_{\ell + 1}$ is not necessarily a subset of $V$ (see \Cref{sec: proper improper def}).
In this section, we show how to dynamically maintain a projection of $S_{\ell + 1}$ onto the metric space $V$, $\Proj(S_{\ell + 1}, V)$, in order to obtain a proper solution, while incurring only small overhead in recourse, update time, and approximation. In particular, we prove the following lemma.

\begin{lemma}
    Given any fully dynamic $(k,p)$-clustering algorithm that maintains an $\alpha$-approximate improper solution, we can maintain a $2\alpha$-approximate proper solution while incurring a $O(1)$ factor in the recourse and an additive $\tilde O(|V|)$ factor in the update time.
\end{lemma}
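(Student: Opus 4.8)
The statement to prove is: given a fully dynamic $(k,p)$-clustering algorithm maintaining an $\alpha$-approximate improper solution $S$, we can maintain a $2\alpha$-approximate proper solution with only a constant-factor blowup in recourse and an additive $\tilde O(|V|)$ term in update time. The overall strategy is to output $S^\star := \Proj(S, V)$, the projection of the improper solution onto the current point set. The approximation bound is immediate: by Corollary~\ref{lem:proj cor} (with $X = V$, so $\cl_p(V) = 0$) we get $\cl_p(\Proj(S, V)) \le 2\cl_p(S) \le 2\alpha \cdot \Opt_k$, where the last step uses that $S$ is $\alpha$-approximate and that $\Opt_k$ already refers to the optimal \emph{proper} solution. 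So the entire difficulty is algorithmic: maintaining $\Proj(S,V)$ efficiently under updates to both $S$ (driven by the black-box algorithm) and $V$ (driven by the adversary), and controlling the recourse of the maintained set.

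\textbf{Data structures.} For each center $z \in S$, I would maintain a balanced binary search tree $L_z$ storing all points of $V$ sorted by distance from $z$; then $\pi_V(z) = L_z(1)$, the nearest point of $V$ to $z$, and $\Proj(S,V) = \{L_z(1) : z \in S\}$. (This mirrors the auxiliary structure used just before Lemma~\ref{lem:local:runtime:new} and in Section~\ref{sec:assumption:new}.) When a point $x$ is inserted into or deleted from $V$, we update every $L_z$ in $\tilde O(1)$ time per tree, for a total of $\tilde O(|S|) = \tilde O(k) = \tilde O(|V|)$ time (using the sparsification of Section~\ref{sec:sparse para}, $|V| = \tilde O(k)$), and we recompute the affected $L_z(1)$ values; this is where the additive $\tilde O(|V|)$ update-time term comes from. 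When the black-box algorithm inserts a center $z$ into $S$, we build $L_z$ from scratch in $\tilde O(|V|)$ time; when it deletes $z$, we discard $L_z$. Since the black-box algorithm has recourse $\alpha$-whatever (in our application $\tilde O(k^\epsilon/\epsilon)$, but the lemma is stated for general $\alpha$-approximate algorithms, so we only charge $O(1)$ per center change), each center change costs $\tilde O(|V|)$ update time, which is again absorbed into the additive $\tilde O(|V|)$ factor.

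\textbf{Recourse control — the main obstacle.} The naive maintenance of $\Proj(S,V)$ can have recourse far larger than that of $S$: a single update to $V$ (say deleting a point $x = L_z(1)$ for many centers $z$ simultaneously) can change the nearest-point assignment of many centers at once, and worse, the map $z \mapsto \pi_V(z)$ is not injective, so the \emph{set} $\Proj(S,V)$ can flicker as ties are reshuffled. The fix is to decouple ``which center each point of $\Proj$ serves as a witness for'' from ``the set $\Proj$ itself'': maintain, for each center $z \in S$, a designated representative $\rho(z) \in V$, and only change $\rho(z)$ when forced — i.e.\ when $\rho(z)$ is deleted from $V$, or when $z$ is newly inserted into $S$. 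One must then verify that $\{\rho(z): z \in S\}$ remains a valid $2\alpha$-approximate proper solution: this is not automatic because a lazily-kept $\rho(z)$ need not be the \emph{nearest} point of $V$ to $z$ anymore. I would handle this by, on each update to $V$, also refreshing $\rho(z) \leftarrow L_z(1)$ whenever $L_z(1)$ has changed — but charging the recourse of such refreshes carefully. The cleanest route is to argue that the number of centers $z$ whose nearest-neighbor value $L_z(1)$ changes due to a single point update in $V$ is $O(1)$ amortized when we only count ``the nearest point got strictly closer or the old nearest point was deleted'', combined with a potential-function / lazy-rebuild argument; alternatively, and more robustly, observe that it suffices to maintain \emph{any} proper solution $S^\star$ with $|S^\star| \le k$ and $\cl_p(S^\star) \le 2\cl_p(S)$, so we have freedom to not track $\Proj(S,V)$ exactly but rather any set that dominates it pointwise. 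Concretely: keep $\rho(z)$ fixed as long as $\rho(z) \in V$ and update it only on deletion of $\rho(z)$ or insertion of $z$; then $d(z, \rho(z))$ can only have \emph{decreased} relative to when $\rho$ was last set to $\pi_V(z)$ — no wait, that's false, since $V$ can lose points. The careful claim is: at every moment, $d(z, \rho(z)) = d(z, \pi_V(z))$ at the last time $\rho(z)$ was set, and since that time only points were added to $V$ (deletions of $\rho(z)$ trigger a reset), $d(z, \pi_V(z))$ today $\le d(z,\rho(z))$, so $\cl_p(\{\rho(z)\}) \ge \cl_p(\Proj(S,V))$ — the wrong direction. So instead I would reset $\rho(z) \leftarrow \pi_V(z)$ \emph{also} whenever a point insertion to $V$ strictly improves $d(z, \cdot)$ below $d(z, \rho(z))$, and bound the amortized number of such events. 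The cleanest and safest plan for the writeup: refresh $\rho(z) \leftarrow L_z(1)$ on \emph{every} update to $z$ or to $V$, eat the resulting recourse, and prove it is $O(1)$ amortized per update-to-$V$ plus $O(1)$ per change-to-$S$ via the observation that each point of $V$ can be ``promoted'' to or ``demoted'' from being some $L_z(1)$ at most $O(1)$ times per insertion/deletion event in the sorted structure. I expect making this recourse accounting airtight — in particular handling non-injectivity of $\pi_V$ and simultaneous changes across many centers from one $V$-update — to be the genuinely delicate step; everything else (approximation via Corollary~\ref{lem:proj cor}, and the $\tilde O(|V|)$ update-time bookkeeping) is routine.
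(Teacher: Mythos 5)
Your reduction to ``maintain a projection of $S$ onto $V$'' and your approximation and update-time bookkeeping match the paper, but the recourse argument --- the step you yourself flag as delicate --- is precisely where the proposal fails. Maintaining the exact nearest-point projection $\Proj(S,V)$ cannot have recourse within an $O(1)$ factor of the recourse of $S$ and $V$: if a single point $x \in V$ is the nearest point of $\Omega(k)$ centers and $x$ is deleted, every such center switches to its (possibly distinct) second-nearest point, so the \emph{set} $\Proj(S,V)$ changes by $\Omega(k)$ after one update, and an adversary can repeat this indefinitely by re-inserting and re-deleting $x$; so no amortization rescues it. In particular your key claim that ``each point of $V$ can be promoted to or demoted from being some $L_z(1)$ at most $O(1)$ times per insertion/deletion event'' is false, since one update to $V$ can flip $L_z(1)$ for many $z$ simultaneously. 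Your lazy-representative variants either break the $2\alpha$ guarantee (as you correctly observe, the inequality goes the wrong way after deletions) or, once you refresh on every improvement, collapse back to tracking $\Proj(S,V)$ exactly and inherit the same $\Omega(k)$ recourse.

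The paper's proof (\Cref{sec:proper}) circumvents this with a different object: it maintains an ordering $y_1,\dots,y_m$ of the centers (append on insertion, preserve relative order on deletion) and assigns each $y_i$ greedily to its nearest point of $V$ \emph{among the points not already claimed by} $y_1,\dots,y_{i-1}$ (Algorithm~\ref{alg:project}). This set still contains, for every center $y$, a point at distance exactly $d(y,V)$ (either $y$'s nearest point was already claimed, hence is in the set, or the greedy step picks it), so the bound $\cl_p(\Lambda, V) \leq 2\,\cl_p(S,V) \leq 2\alpha \cdot \Opt$ goes through unchanged; it is injective by construction, so $|\Lambda| = |S| \leq k$; and the crucial stability lemma, proved by induction along the ordering, shows that a single insertion or deletion in $S$ or in $V$ changes $\Lambda$ by at most two elements \emph{in the worst case}, which is what yields the $O(1)$-factor recourse, with the per-list maintenance still costing only $\tilde O(|V|)$ per update. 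That stability property of the sequential greedy projection is the missing idea in your proposal.
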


\subsection{Projecting $A$ Onto $B$}
Let $A$ and $B$ by subsets of the underlying metric space $(\mathcal V, d)$. We say that $\Lambda \subseteq B$ is a \emph{projection of $A$ onto $B$} if, for each $y \in A$, there exists some $x \in \Lambda$ such that $d(y,x) = d(y,B)$. In other words, if $\Lambda$ contains one of the points in $B$ that is closest to $y$. 
Recall that, by \Cref{lem:proj cor}, we have that $\cl_p(\Lambda, B) \leq 2 \cl_p(A, B)$ for any such projection $\Lambda$.




\subsection{Maintaining a Projection of $A$ Onto $B$}\label{subsec:dynamic proj}

Now suppose that the sets $A$ and $B$ are both evolving via a sequence of point insertions and deletions.
We now describe a dynamic algorithm that maintains a projection $\Lambda$ of $A$ onto $B$. We then show that $O(1)$ many points are added or removed from $\Lambda$ after each update to the sets, before describing the data structures that can be used to implement this algorithm efficiently.

\medskip
\noindent\textbf{The Dynamic Projection Algorithm.}
As points are inserted and deleted from $A$, we maintain an ordering $y_1,\dots , y_{m}$ of the points in $A$. When a point is inserted, we append it to the end of the ordering, and when a point is deleted, we remove it from the ordering and do not change the relative positions of the other points in $A$. The projection $\Lambda$ that we maintain is defined by the output of Algorithm~\ref{alg:project}.

\begin{algorithm}[H]
    \SetAlgoLined
    \DontPrintSemicolon
    $\Lambda_0 \leftarrow \varnothing$\;
    \For{$i = 1 \dots m$}{
        $\Lambda_i \leftarrow \Lambda_{i-1} + \{\pi_{B \setminus \Lambda_{i-1}}(y_i)\}$\;
    }
    \Return{$\Lambda_m$}
    \caption{\textsc{Project}$(A,B)$}
    \label{alg:project}
\end{algorithm}

\noindent
In the above algorithm, $\pi_{B \setminus \Lambda_{i-1}}(y_i)$ denotes a point in $B \setminus \Lambda_{i-1}$ that is closest to $y_i$.
By breaking ties consistently (and arbitrarity) when computing $\pi_{B \setminus \Lambda_{i-1}}(y_i)$, each set $\Lambda_i$ is well defined. It's easy to see that, for each $i \in [m]$, $\Lambda_i$ is a projection of $\{y_1,\dots, y_i\}$ onto $B$, and in particular, that $\Lambda_m$ is a projection from $A$ onto $B$. 

\medskip
\noindent\textbf{Recourse Analysis.}
We now analyse the recourse of the projection $\Lambda_m$ as the sets $A$ and $B$ are updated. Suppose that we have sets $A = \{y_1, \dots, y_m\}$ and $B$, and let $\Lambda$ be the projection returned by our dynamic algorithm on this input. Suppose we update the sets $A$ and $B$ by either inserting or deleting a point from one of these sets, and let $\Lambda'$ denote the updated projection. 

\begin{lemma}\label{lem: proj rec}
    We have that $|\Lambda \oplus \Lambda'| \leq 2$.
\end{lemma}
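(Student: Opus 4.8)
The plan is to analyze Algorithm~\ref{alg:project} (the greedy sequential projection) and show that a single update to $A$ or $B$ perturbs the output set $\Lambda_m$ in at most two places. First I would observe that the algorithm produces not just the final set $\Lambda_m$ but a whole chain $\varnothing = \Lambda_0 \subseteq \Lambda_1 \subseteq \cdots \subseteq \Lambda_m$, where $\Lambda_i \setminus \Lambda_{i-1} = \{\pi_{B \setminus \Lambda_{i-1}}(y_i)\}$. The key structural fact is that $\Lambda_i$ depends only on the \emph{prefix} $y_1, \dots, y_i$ of the ordering and on $B$; it does not depend on $y_{i+1}, \dots, y_m$. So the natural approach is a case analysis on the four types of update (insert into $A$, delete from $A$, insert into $B$, delete from $B$), and in each case locate the first index at which the run of the algorithm ``diverges'' from its previous run, then argue the divergence cannot cascade into more than one net change.

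**The four cases.**

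For an insertion into $A$: the point is appended at the end, so $y_1, \dots, y_m$ is unchanged and only one new step $i = m+1$ is executed. The chain $\Lambda_0, \dots, \Lambda_m$ is byte-for-byte identical, and $\Lambda_{m+1} = \Lambda_m + \{\pi_{B\setminus\Lambda_m}(y_{m+1})\}$, so exactly one point is added: $|\Lambda \oplus \Lambda'| \le 1$. For a deletion from $A$, say of the point in position $j$: the prefix $y_1,\dots,y_{j-1}$ is unchanged, so $\Lambda_i' = \Lambda_i$ for all $i < j$; from position $j$ onward the indices shift, and I would argue inductively that at each subsequent step the ``old'' chain and the ``new'' chain differ by exactly one element (the old chain has picked the greedy neighbor of $y_j$ that the new chain skipped). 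The invariant to carry is: $\Lambda_{\text{old}}$ at some index equals $\Lambda_{\text{new}}$ at the shifted index, plus a single extra point $z$ (the point $\pi_{B\setminus\Lambda_{j-1}}(y_j)$ that was ``consumed'' by the deleted client). One must check that processing the next client $y$ either picks the same neighbor in both runs (invariant preserved, $z$ unchanged) or, in the one exceptional case where the new run's greedy choice lands on $z$ itself, the discrepancy set becomes empty or moves — but it never grows to size $> 2$. At the end $|\Lambda \oplus \Lambda'| \le 2$. For a deletion from $B$ of a point $b$: if $b \notin \Lambda_m$ nothing changes at all; if $b \in \Lambda_m$, let $j$ be the step that selected $b$. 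The chain agrees up to $\Lambda_{j-1}$, then at step $j$ the new run picks some $b' \ne b$, and from then on the two chains differ by the symmetric pair $\{b, b'\}$ until possibly $b'$ would have been selected later in the old run (resolving the discrepancy) — again the discrepancy never exceeds two points. Insertion into $B$ is the mirror image of deletion from $B$.

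**The main obstacle and how to handle it.**

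The delicate point — and the one I expect to be the real obstacle — is the cascade argument: after the chains first diverge at some index, one must show the symmetric difference between the running sets $\Lambda_i^{\text{old}}$ and $\Lambda_i^{\text{new}}$ stays bounded (size $\le 2$, and moreover of a controlled ``one-in one-out'' shape) for \emph{every} subsequent client, rather than amplifying. The clean way to do this is to maintain a stronger invariant than ``$|\Lambda_i^{\text{old}} \oplus \Lambda_i^{\text{new}}| \le 2$'': namely that the two sets are related by replacing at most one element. Concretely, I would show that at each step the pair $(\Lambda_i^{\text{old}}, \Lambda_i^{\text{new}})$ is in one of a constant number of ``states'' — equal; old $=$ new $\cup\{z\}$; new $=$ old $\cup\{z\}$; or old and new differ by a single swap $z \leftrightarrow z'$ — and that every possible greedy choice for the next client maps one of these states to another one in the list. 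Because $\pi_{B\setminus\Lambda}(y)$ is the nearest available point, and the two available-sets $B\setminus\Lambda_i^{\text{old}}$ and $B\setminus\Lambda_i^{\text{new}}$ themselves differ by at most one element (by the invariant), the greedy choices can differ only by at most one element, which is precisely what keeps the state machine closed. Tracking this finite-state transition carefully, with consistent tie-breaking so that ``equal available sets force equal choices'', is the crux; everything else is bookkeeping. This yields $|\Lambda \oplus \Lambda'| \le 2$ in all cases, proving the lemma.
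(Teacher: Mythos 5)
Your proposal is correct and follows essentially the same route as the paper: a case analysis over the four update types, plus an induction along the greedy chain showing that once the two runs diverge, the discrepancy between $\Lambda_i$ and $\Lambda'_i$ never grows, precisely because the two available sets differ in at most one point (with consistent tie-breaking). The paper packages your ``state machine'' invariant slightly more compactly as a one-sided containment ($\Lambda'_i \subseteq \Lambda_i$ for deletions from $A$, and $\Lambda'_i \subseteq \Lambda_i \cup \{x\}$ for insertions into $B$, with the $B$-deletion case symmetric) and then concludes from the cardinalities, but the underlying argument is the same.
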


\begin{proof}
    We assume that $|A| \leq |B|$ before and after the update. Otherwise, the lemma is trivial.
    
    In the case that a point $y$ is inserted to $A$, the new ordering is $y_1, \dots , y_m, y$. Thus, since $B$ does not change, we have that $\Lambda' \supseteq \Lambda$ and $|\Lambda' | = |\Lambda| + 1$. Hence, $|\Lambda \oplus \Lambda'| = 1$. 
    
    In the case that a point $y_j$ is deleted from $A$, the new ordering is $y_1, \dots ,y_{j-1}, y_{j+1}, \dots, y_m$. Let $\Lambda_i$ denote the projection of $\{y_1,\dots ,y_i\}$ onto $B$ maintained by the algorithm before the update, and $\Lambda_i'$ the projection of $\{y_1,\dots ,y_i\} \setminus \{y_j\}$ maintained by the algorithm after the update. Then we have that $\Lambda_i = \Lambda'_i$ for all $i < j$. Let $x_i$ denote the element in $\Lambda_i \setminus \Lambda_{i-1}$, and let $x'_i$ denote the element in $\Lambda'_i \setminus \Lambda'_{i-1}$. Note that $\Lambda'_j \setminus \Lambda'_{j-1}$ is empty, so $x'_j$ is undefined. We can now see that $\Lambda_{i}' \subseteq \Lambda_i$ for all $i$. We can show this by induction. Clearly, this is true for all $i \leq j$. For $i > j$, if $\Lambda_{i-1}' \subseteq \Lambda_{i-1}$ and $x'_i \neq x_i$, then it must be the case that $x'_i \in \Lambda_i$. Thus, $\Lambda_{i}' \subseteq \Lambda_{i}$. It follows that $\Lambda' \subseteq \Lambda$. Since $|\Lambda'| = |\Lambda| - 1$, we get that $|\Lambda \oplus \Lambda'| = 1$.

    Finally, we consider the case that an element $x$ is inserted into $B$. By a symmetric argument, the case of a deletion from $B$ also follows. 
    Let $\Lambda_i$ and $\Lambda'_i$ denote the projections of $A$ onto $B$ maintained by the algorithm before and after the update respectively.
    Let $x_i$ and $x'_i$ denote the elements in $\Lambda_i \setminus \Lambda_{i-1}$ and $\Lambda'_i \setminus \Lambda'_{i-1}$ respectively. We now show that $\Lambda_i' \subseteq \Lambda_i \cup \{x\}$ for all $i$ by induction. First note that $x_1' \in \{x_1, x\}$, thus $\Lambda_1' \subseteq \Lambda_1 \cup \{x\}$. For $i > 0$, if $\Lambda_{i-1}' \subseteq \Lambda_{i-1} \cup \{x\}$ and $x'_i \neq x_i$, then it must be the case that $x_i \in \Lambda_{i-1} \cup \{x\}$. Thus, $\Lambda_i' \subseteq \Lambda_i \cup \{x\}$. If follows that $\Lambda' \subseteq \Lambda \cup \{x\}$. Since $|\Lambda'| = |\Lambda|$, we get that $|\Lambda \oplus \Lambda'| \leq 2$.
\end{proof}

\medskip
\noindent\textbf{Data Structures.}
We now give data structures that can be used to efficiently implement this dynamic algorithm.
For each $y_i \in A$, we maintain a list $L_{y_i}$ of all the points in $B \setminus \Lambda_{i-1}$, sorted in increasing distance from $y_i$. We can implement these lists using balanced binary trees, allowing for $O(\log n)$ time insertions and deletions, where $n$ denotes the size of $B$. Note that we have that $\Lambda_i \setminus \Lambda_{i-1} = \{L_{y_i}(1)\}$. 
When a point $y$ is inserted into (resp.~deleted from) $A$, we create (resp.~delete) the list $L_{y}$, which can be done in $\tilde O(n)$ time. Furthermore, after one of the sets $A$ or $B$ is updated, we need to appropriately update the lists. By \Cref{lem: proj rec}, we know that we only need to make $O(1)$ many changes to each of the lists.\footnote{Note that $|(B \setminus \Lambda_{i-1}) \oplus (B \setminus \Lambda'_{i-1})| = |\Lambda_{i-1} \oplus \Lambda'_{i-1}| \leq 2$ points need to be added and removed from $L_{y_i}$.} We can update the lists by scanning through the $y_i$ in order while maintaining a set of the $O(1)$ many points that might be added or removed from $L_{y_i}$ as we update the lists. Thus, we can update each of the lists in $\tilde O(1)$ time, taking $\tilde O(n)$ time in total.

\subsection{Converting from Improper to Proper Solutions}

Suppose that we have a dynamic algorithm which maintains an $\alpha$-approximate improper solution $S$ to the $(k,p)$-clustering problem in a dynamic metric space $(V,w,d)$. By using the dynamic algorithm from \Cref{subsec:dynamic proj}, we can maintain a projection $S'$ of $S$ onto $V$ with additive $\tilde O(|V|)$ overhead in the update time. By \Cref{lem:proj cor}, we have that $S'$ is a $2\alpha$-approximate proper solution to $(k,p)$-clustering problem in $(V,w,d)$. Furthermore, by \Cref{lem: proj rec}, we have that the recourse of the set $S'$ is at most a constant factor larger than the recourse of the space $V$ and the solution $S$ combined.


\section{Dynamic Sparsification}\label{app:sparsification}

In this section, we show how to dynamically sparsify the input metric space using the dynamic algorithm of \cite{ourneurips2023}. More precisely, given some dynamic metric space $(V,w,d)$ and parameters $k,p \in \mathbb N$, we show how to dynamically maintain a dynamic metric subspace $(V',w',d)$ in $\tilde O(k)$ amortized update time such that the following hold.
\begin{itemize}
    \item $V' \subseteq V$ and $|V'| = \tilde O(k)$ at all times.
    \item A sequence of $T$ updates in $(V,w,d)$ leads to at most $\tilde O(T)$ updates in $(V',w',d)$.
    \item Any $\alpha$-approximate solution to the $(k,p)$-clustering problem in the metric space $(V',w',d)$ is also an $O(\alpha)$-approximate solution to the $(k,p)$-clustering problem in the metric space $(V,w,d)$ with probability at least $1 - \tilde O(1/n^c)$.
\end{itemize}
Now, suppose that we are given a sequence of updates $\sigma_1,\dots,\sigma_T$ in a dynamic metric space $(V,w,d)$. Instead of feeding the metric space $(V,w,d)$ directly to a dynamic clustering algorithm, we can perform this dynamic sparsification to obtain a sequence of updates $\sigma'_1,\dots,\sigma'_{T'}$ for a metric space $(V',w',d)$, where $T' = \tilde O(T)$, and feed the metric space $(V',w',d)$ to our dynamic algorithm instead. If the dynamic algorithm maintains an $\alpha$-approximate solution $S$ to the $(k,p)$-clustering problem in the $(V',w',d)$, then $S$ is also an $O(\alpha)$-approximate solution to the $(k,p)$-clustering problem in $(V,w,d)$ with high probability.

It follows that we can assume that the size of the underlying metric space is always at most $\tilde O(k)$
while only incurring a $\tilde O(1)$ multiplicative loss in the amortized recourse and update times of our algorithms, as well as $O(1)$ multiplicative loss in the approximation ratio with high probability.\footnote{We also incur an additive $\tilde O(k)$ factor in our amortized update time, since we must run the algorithm $\Sparsifier$. However, this additive factor is dominated by the running time of our algorithm.} We summarize the sparsification process with the following theorem.
\begin{theorem}[\texttt{Informal}]
\label{thm:black:box:sparsifier}
    Given an $\alpha$-approximate dynamic algorithm for the $(k,p)$-clustering problem with $T(n,k)$ update time and $R(n,k)$ recourse, we can use sparsification to obtain a $O(\alpha)$-approximate algorithm with $T(\tilde O(k),k)$ update time and $R(\tilde O(k),k)$ recourse.
\end{theorem}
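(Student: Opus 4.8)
The plan is to instantiate the dynamic sparsification data structure of \cite{ourneurips2023}, suitably adapted to the $(k,p)$-clustering objective, as a black box $\Sparsifier$ that sits between the update stream and the given dynamic algorithm $\mathcal A$. First I would recall (and extend) the guarantees of $\Sparsifier$: on a dynamic metric space $(V,w,d)$ undergoing a sequence $\sigma_1,\dots,\sigma_T$ of weighted point insertions and deletions, it maintains a weighted subspace $(V',w',d)$ with $V' \subseteq V$ and $|V'| = \tilde O(k)$ at all times, in $\tilde O(k)$ amortized update time, while guaranteeing that (i) the induced stream of updates to $(V',w',d)$ has length $T' = \tilde O(T)$, and (ii) with probability at least $1 - \tilde O(1/n^c)$, every solution $S \subseteq V'$ with $|S| \le k$ that is $\alpha$-approximate for $(k,p)$-clustering in $(V',w',d)$ is simultaneously an $O(\alpha)$-approximate solution in $(V,w,d)$. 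The construction of \cite{ourneurips2023} was phrased for $k$-median, so the substantive step here is to verify that the same coreset-style argument — importance sampling of points with inflated weights so as to preserve, in expectation and then with high probability via a net over candidate center sets, the unnormalized cost $\cl_p(\cdot)^p$ of every solution up to $(1 \pm o(1))$ factors — carries over to general $p$ (invoking Lemma~\ref{lem:unnorm to norm} to pass between the normalized and unnormalized objectives), and that the periodic rebuilds performed by the data structure still yield $\tilde O(1)$ amortized recourse on $V'$, hence bound (i).

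Given $\Sparsifier$, the composition is immediate. Run $\Sparsifier$ on $\sigma_1,\dots,\sigma_T$ to obtain the induced stream $\sigma'_1,\dots,\sigma'_{T'}$ on $(V',w',d)$ and feed this stream to $\mathcal A$, which then explicitly maintains a solution $S$ to the $(k,p)$-clustering problem on $(V',w',d)$. Since $|V'| = \tilde O(k)$ throughout, each of the $T'$ updates processed by $\mathcal A$ costs $T(\tilde O(k),k)$ time and incurs $R(\tilde O(k),k)$ recourse on $S$. Summing over the stream and dividing by $T$, the total work of the combined algorithm is $T \cdot \tilde O(k) + T' \cdot T(\tilde O(k),k) = T \cdot \tilde O\!\left(T(\tilde O(k),k)\right)$, so the amortized update time is $T(\tilde O(k),k)$ — absorbing the $\tilde O(1) = T'/T$ blowup and the additive $\tilde O(k)$ cost of running $\Sparsifier$, which is dominated by $T(\tilde O(k),k) = \Omega(k)$; likewise the amortized recourse is $\tilde O(1) \cdot R(\tilde O(k),k) = R(\tilde O(k),k)$.

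For correctness, note that at every point in time $\mathcal A$ maintains a set $S \subseteq V'$ with $|S| \le k$ that is $\alpha$-approximate for $(k,p)$-clustering in $(V',w',d)$; by property (ii) of $\Sparsifier$, $S$ is then an $O(\alpha)$-approximate solution in $(V,w,d)$ with probability at least $1 - \tilde O(1/n^c)$. A union bound over the polynomially many time steps, after enlarging the constant $c$, keeps the overall failure probability at $\tilde O(1/n^c)$. This argument is valid even against an adaptive adversary, since the guarantee of $\Sparsifier$ inherited from \cite{ourneurips2023} holds against adaptive adversaries, and $\mathcal A$ is assumed to as well.

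I expect the main obstacle to be the first step: carefully re-deriving the coreset guarantee of \cite{ourneurips2023} for general $p \in \mathbb N \cup \{\infty\}$ and confirming that the data structure's recourse and update-time bounds degrade by at most polylogarithmic factors when moving from $k$-median to $\ell^p$-norm objectives — in particular, that the sampling probabilities (which depend on a constant-factor estimate of the optimal cost, itself maintained dynamically) and the lazy rebuild schedule still produce $\tilde O(1)$ amortized recourse on $V'$ and $\tilde O(k)$ amortized update time. Everything downstream of that is bookkeeping with the $\tilde O(\cdot)$ notation.
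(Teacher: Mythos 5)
Your downstream composition — run $\Sparsifier$ on the update stream, feed the induced $\tilde O(T)$-length stream over the size-$\tilde O(k)$ space $(V',w',d)$ to the given algorithm, and absorb the $\tilde O(1)$ stream blowup and the additive $\tilde O(k)$ sparsifier cost into the amortized bounds $T(\tilde O(k),k)$ and $R(\tilde O(k),k)$ — is exactly the paper's argument. The gap is in how you plan to establish the key property (ii). You propose to re-derive a \emph{coreset-style} guarantee for \cite{ourneurips2023}: importance sampling with a net over candidate center sets so that the cost of \emph{every} solution is preserved up to $(1\pm o(1))$ in the sparsified space. But the data structure of \cite{ourneurips2023} is not a coreset and does not preserve solution costs — the paper says this explicitly (see the discussion after Definition~\ref{def:sparsify:new} and again in \Cref{sec:frac sparsifier}, where extra work is needed precisely because ``the sparsifier does not preserve the cost of solutions''). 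What it maintains is a bicriteria object: a set $V'$ of size $\tilde O(k)$ together with an assignment $\sigma : V \rightarrow V'$ satisfying $\cl_p(\sigma, V, w) \leq O(1)\cdot \Opt_k^p(V,w)$, with $\tilde O(k)$ update time and $\tilde O(1)$ recourse on $V'$ (\Cref{lem:spasifier prop}). So the step you flag as ``the main obstacle'' is aimed at proving a statement that the cited structure does not support, and attempting it would either fail or force you to build and dynamically maintain a genuinely different object (a dynamic $(k,p)$ coreset with $\tilde O(1)$ recourse), for which no guarantee is available here.

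The paper closes this gap with a much weaker and purely structural transfer lemma (\Cref{lem:bicri=sparsifier2}): setting $w'(y) := \sum_{x \in \sigma^{-1}(y)} w(x)$, two applications of Minkowski's inequality give $\cl_p(S,V,w) \leq \cl_p(\sigma,V,w) + \cl_p(S,V',w')$ and $\cl_p(S,V',w') \leq 2\alpha(\beta+1)\cdot \Opt_k^p(V,w)$ whenever $\cl_p(\sigma,V,w) \leq \beta \cdot \Opt_k^p(V,w)$, so any $\alpha$-approximate solution in $(V',w',d)$ is $(\beta + 2\alpha(\beta+1)) = O(\alpha)$-approximate in $(V,w,d)$ — an $O(1)$ multiplicative loss is all the theorem needs, with no cost preservation required. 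Two further points your sketch omits: the weighted case is handled not by adapting the sampling but by bucketing weights into powers of two and running $O(\log \Delta_w)$ copies of $\Sparsifier$ on unweighted subspaces (\Cref{sec:weighted sparsifier}), and the high-probability guarantee is inherited directly from the maintained map $\sigma$ rather than from a union bound over a net of center sets. If you replace your coreset step with this bicriteria-plus-Minkowski argument, the rest of your write-up goes through as stated.
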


\subsection{The Dynamic Algorithm of \cite{ourneurips2023}}

Given some unweighted dynamic metric space $(V,d)$, the fully dynamic algorithm of \cite{ourneurips2023}, which we refer to as $\Sparsifier$, maintains a weighted metric subspace $(V', w', d)$ of $(V,d)$ of size $\tilde O(k)$. More specifically, $\Sparsifier$ maintains the following.



\medskip
\noindent\textbf{The Dynamic Algorithm $\Sparsifier$.}
Given a dynamic metric space $(V,d)$ which evolves via a sequence of point insertions and deletions and a parameter $k \in \mathbb N$, the algorithm $\Sparsifier$ explicitly maintains the following objects.

\begin{itemize}
    \item A set $V' \subseteq V$ of at most $\tilde O(k)$ points.
    \item An assignment $\sigma : V \longrightarrow V'$ such that, for all $p \in \mathbb N$, we have $\cl_p(\sigma, V) \leq O(1) \cdot\Opt_k^p(V)$ with probability at least $1 - O(1/n^c)$.\footnote{$c \geq 1$ can be any constant.}
\end{itemize}
The following lemma summarises the behaviour of this algorithm. 

\begin{lemma}\label{lem:spasifier prop}
    The algorithm $\Sparsifier$ has an amortized update time of $\tilde O(k)$ and the amortized recourse of the set $V'$ is $\tilde O(1)$.
\end{lemma}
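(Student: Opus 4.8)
The plan is to obtain both bounds by recalling the structure of $\Sparsifier$ — the dynamic bicriteria algorithm of \cite{ourneurips2023} — and reading off its timing and recourse guarantees; the approximation/coreset properties of $(V',w',\sigma)$ are established separately, so for this lemma only the bookkeeping matters. The structure I would work with is the following. For each of the $O(\log(n\Delta))$ powers of two $\hat L$ that could approximate $\Opt^p_k(V)$, maintain a candidate solution produced by a Mettu--Plaxton / online-facility-location--style greedy procedure with uniform per-facility cost $\Theta(\hat L/k)$, and let $V'$ be (the support of) the cheapest candidate that opens at most $\tilde O(k)$ facilities; for the guess $\hat L = \Theta(\Opt^p_k(V))$ such a candidate opens $\tilde O(k)$ facilities and has cost $O(\hat L)$. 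Each candidate is maintained lazily: an inserted point is fed into the greedy procedure (possibly opening a new facility there), a deleted point is handled by flagging the affected facility and deferring any reassignment, and the whole candidate is rebuilt from scratch once every $\Theta(k)$ updates.

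For the update-time bound I would run an epoch-based amortized analysis, one epoch being $\Theta(k)$ consecutive updates. Within an epoch, processing a single insertion or deletion and propagating it to all $O(\log(n\Delta))$ candidates (plus re-identifying the cheapest valid one) costs $\tilde O(k)$, since every candidate has support $\tilde O(k)$; over an epoch this totals $\tilde O(k^2)$, i.e.\ $\tilde O(k)$ amortized. The one step that needs care — and which I would import from \cite{ourneurips2023} — is that the end-of-epoch rebuild is performed not on all of $V$ but on a $\tilde O(k)$-size summary that the algorithm keeps alongside, so that a rebuild costs $\tilde O(k^2)$ and amortizes to $\tilde O(k)$ over the $\Theta(k)$ updates of the epoch. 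Summing, the amortized update time is $\tilde O(k)$.

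For the recourse bound I would enumerate the events that can change the output set $V'$: (i) an inserted point is opened as a brand-new facility, contributing one change; (ii) the currently-output candidate is rebuilt, changing $V'$ by $\tilde O(k)$ points, but this happens only once per $\Theta(k)$ updates and hence contributes $\tilde O(1)$ amortized; and (iii) the identity of the cheapest valid candidate switches, which I would bound by a potential/monotonicity argument over the $O(\log(n\Delta))$ candidates — their costs drift only slowly between rebuilds, so switches cannot be too frequent — again contributing $\tilde O(1)$ amortized. Summing these, the amortized recourse of $V'$ is $\tilde O(1)$.

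The hard part will be the update-time analysis, not the recourse. A naive from-scratch rebuild of a bicriteria solution over all of $V$ via \cite{MettuP02} costs $\tilde{\Theta}(nk) \gg \tilde O(k)$, so the rebuild must be run on a pre-maintained $\tilde O(k)$-size instance; one then has to argue, through the epoch potential, both that maintaining this auxiliary summary itself stays within the $\tilde O(k)$ amortized budget and that the lazily flagged deletions never accumulate enough within an epoch to spoil the $O(1)$-approximation of a candidate before the next rebuild. Once this update-time machinery is in place, the recourse bound follows almost mechanically from the lazy-rebuild schedule.
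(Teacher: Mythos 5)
There is a genuine gap: the lemma is a statement about the specific algorithm $\Sparsifier$ of \cite{ourneurips2023}, and the paper proves it by citation --- the $\tilde O(k)$ amortized update time is taken directly from Lemma~3.6 of \cite{ourneurips2023}, and the $\tilde O(1)$ amortized recourse of $V'$ is obtained from the potential-function argument in Section~3.3 of that paper. Your proof instead invents an internal structure for the sparsifier ($O(\log(n\Delta))$ guesses of the optimum, greedy facility-location candidates with facility cost $\Theta(\hat L/k)$, epochs of length $\Theta(k)$, output defined as the cheapest valid candidate) and then analyzes that invented algorithm. This does not establish the lemma as stated, because nothing ties your hypothesized design to the actual algorithm; indeed, the actual data structure of \cite{ourneurips2023} is a layered construction (the paper itself later refers to its partition of $V$ into $t=\tilde O(1)$ sets $C_i$ with associated values $\nu_i$ and per-layer reconstructions), not a ``pick the cheapest of $O(\log(n\Delta))$ candidates'' scheme, so the object you analyze is materially different from the object the lemma is about.

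Even on its own terms, the recourse argument has a hole at exactly the point where the real work lies. Your event (iii), a switch of the identity of the cheapest valid candidate, can replace the entire output set and hence cost $\tilde\Theta(k)$ recourse in a single update; to absorb this you would need to show such switches occur at most once per $\tilde\Omega(k)$ updates, and the appeal to ``costs drift only slowly between rebuilds'' is not an argument --- a single insertion or deletion can flip which candidate is cheapest, and adversarial update sequences can make this happen repeatedly. Controlling precisely this kind of output instability is what the potential function of \cite{ourneurips2023} is for, and it is the ingredient your proposal replaces with a hope. The correct (and intended) proof here is short: quote the update-time guarantee of \cite{ourneurips2023} and observe that the total number of insertions/deletions into $V'$ over the update sequence is $\tilde O(1)$ per update by that paper's potential argument.
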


\begin{proof}
    The amortized update time guarantees of the algorithm follow directly from Lemma 3.6 in \cite{ourneurips2023}. The amortized recourse of the set $V'$ is the total number of insertions and deletions in the set $V'$ across the sequence of updates divided by the length of the update sequence. Using the potential function from Section 3.3 of \cite{ourneurips2023}, we can see that this is bounded by $\tilde O(1)$.
\end{proof}

\noindent
The following structural lemma, which is a generalization of Theorem C.1 in \cite{ourneurips2023}, is the key observation behind why the output of this dynamic algorithm can be used as a sparsifier.

\begin{lemma}\label{lem:bicri=sparsifier2}
    Let $(V,w,d)$ be a metric space, $V' \subseteq V$, and $\sigma : V \longrightarrow V'$ be a map such that $\cl_p(\sigma, V, w) \leq \beta \cdot \Opt^p_k(V, w)$. Furthermore, let $w'(y) := \sum_{x \in \sigma^{-1}(y)} w(x)$ for all $y \in V'$ and $S \subseteq V'$ be a set such that $\cl_p(S, V', w') \leq \alpha \cdot \Opt^p_k(V', w')$. Then we have that
    $$ \cl_p(S, V, w) \leq (\beta + 2\alpha(\beta + 1)) \cdot \Opt^p_k(V, w). $$
\end{lemma}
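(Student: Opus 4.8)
The plan is to bound $\cl_p(S, V, w)$ by routing every point $x \in V$ to its center in $S$ via the intermediate point $\sigma(x) \in V'$, and then to control the cost incurred inside $V'$ by comparing $\Opt_k^p(V', w')$ against $\Opt_k^p(V, w)$. Throughout I will work with the $p$-th power of the objective (equivalently, the unnormalized objective, cf.~\Cref{lem:unnorm to norm}), but it is cleanest to phrase the triangle-type steps in terms of $\cl_p(\cdot)$ itself and invoke \Cref{lem:minkow} / \Cref{lem:proj cor}-style reasoning. First I would observe that for each $x \in V$, $d(x, S) \le d(x, \sigma(x)) + d(\sigma(x), S)$, so by Minkowski's inequality (\Cref{lem:minkow}, with $A$ and $B$ replaced by the relevant assignments and $\alpha = \beta = 1$)
$$ \cl_p(S, V, w) \le \cl_p(\sigma, V, w) + \left( \sum_{x \in V} w(x)\, d(\sigma(x), S)^p \right)^{1/p}. $$
The second term rewrites exactly as $\cl_p(S, V', w')$, since grouping the sum over $x$ by the value $y = \sigma(x) \in V'$ and using $w'(y) = \sum_{x \in \sigma^{-1}(y)} w(x)$ gives $\sum_{x \in V} w(x) d(\sigma(x), S)^p = \sum_{y \in V'} w'(y) d(y, S)^p$.

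Next I would bound each of the two resulting terms. The first is $\le \beta \cdot \Opt_k^p(V, w)$ by hypothesis. For the second, the hypothesis gives $\cl_p(S, V', w') \le \alpha \cdot \Opt_k^p(V', w')$, so it remains to show $\Opt_k^p(V', w') \le 2(\beta + 1) \cdot \Opt_k^p(V, w)$. To see this, let $S^\star \subseteq V$ be an optimal $k$-clustering of $(V, w, d)$, and consider $\Proj(S^\star, V') \subseteq V'$, the projection of $S^\star$ onto $V'$ (note $|\Proj(S^\star, V')| \le |S^\star| \le k$, so it is feasible for the $(k,p)$-clustering problem on $(V', w')$). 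The key point is that, for a point $y \in V'$, its distance to $\Proj(S^\star, V')$ is controlled via $d(y, S^\star)$ and a ``witness'' point $x$ with $\sigma(x) = y$: one uses $d(y, \Proj(S^\star, V')) \le d(y, \sigma(z))$ where $z = \pi_{S^\star}(y)$ — more carefully, I want the bound $d(y, \Proj(S^\star, V')) \le d(y, S^\star) + 2 d(z, S^\star)$ for an appropriate $z$, but since $S^\star$ need not lie in $V'$, the cleaner route is to bound $\Opt_k^p(V', w')$ by evaluating the cost of $\Proj(S^\star, V')$ on $(V', w')$, pushing the cost back through $\sigma$ to the points of $V$: each $y \in V'$ carries weight $w'(y)$ coming from its preimage, and $d(y, \Proj(S^\star, V')) \le d(y, x) + d(x, S^\star) + 2 d(x, S^\star)$-type inequalities for $x \in \sigma^{-1}(y)$ (using the projection lemma, \Cref{lem:proj lemma}, applied with $X = V'$ and the singleton preimage structure). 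Summing with weights $w'$ and regrouping as a sum over $x \in V$ then yields, via \Cref{lem:minkow}, $\cl_p(\Proj(S^\star, V'), V', w') \le \cl_p(\sigma, V, w) + 2\cl_p(\sigma, V, w) + 2\cl_p(S^\star, V, w) \le (\beta + 2\beta + 2)^{1/p} \cdot \Opt_k^p(V,w)^{1/p}$-style bounds — the constants need to be chased to land exactly on $2(\beta+1)$.

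I expect the main obstacle to be exactly this last step: getting the constant in $\Opt_k^p(V', w') \le 2(\beta+1)\Opt_k^p(V,w)$ sharp rather than just $O(\beta)$. The subtlety is that $S^\star \subseteq V$ but not necessarily $S^\star \subseteq V'$, so we must project it onto $V'$, and the projection lemma charges $2 d(x, S)$ per point; combined with the fact that the weights $w'$ on $V'$ are inherited through $\sigma$ (so distances must be measured from the original points $x$, not from $y = \sigma(x)$), one gets a triangle-inequality chain $d(y, \Proj(S^\star, V')) \le d(y, x) + d(x, \Proj(S^\star, V')) \le d(x,y) + d(x, V') + 2 d(x, S^\star)$; then $d(x, V') \le d(x, \sigma(x)) = d(x,y)$, giving $2 d(x, \sigma(x)) + 2 d(x, S^\star)$. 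Summing over $x \in V$ with weights $w(x)$ and applying Minkowski: $\cl_p(\Proj(S^\star,V'), V', w') \le 2 \cl_p(\sigma, V, w) + 2 \cl_p(S^\star, V, w) \le (2\beta + 2) \cdot (\Opt_k^p(V,w))$, i.e.\ $\Opt_k^p(V', w') \le 2(\beta+1) \Opt_k^p(V,w)$. Assembling: $\cl_p(S,V,w) \le \beta \cdot \Opt_k^p(V,w) + \alpha \cdot 2(\beta+1) \cdot \Opt_k^p(V,w) = (\beta + 2\alpha(\beta+1)) \cdot \Opt_k^p(V,w)$, which is the claim. The only care needed is that all the ``$+$'' steps are done at the level of $\cl_p$ (norms) via \Cref{lem:minkow} and only converted to $p$-th powers at the very end, and that the $p = \infty$ case is handled by the same inequalities read as $\max$ instead of $\ell^p$-sum.
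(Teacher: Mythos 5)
Your proposal is correct and follows essentially the same route as the paper's proof: split $d(x,S)\le d(x,\sigma(x))+d(\sigma(x),S)$ and apply Minkowski with the weight regrouping $\sum_{y\in V'}w'(y)f(y)=\sum_{x\in V}w(x)f(\sigma(x))$, then compare $\Opt^p_k(V',w')$ to $\Opt^p_k(V,w)$ by projecting the optimal solution $S^\star$ onto $V'$, which yields exactly the $2\alpha(\beta+1)$ term. The only cosmetic difference is that you charge both legs of the projection bound through the preimage point $x$ (giving $2d(x,\sigma(x))+2d(x,S^\star)$), whereas the paper first doubles at $y\in V'$ and then moves to $x$; the constants and structure are identical.
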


\begin{proof}
    By applying Minkowski's inequality as in \Cref{lem:minkow}, we have that
    \begin{align*}
        \cl_p(S, V, w) &\leq \left( \sum_{x \in V} w(x) (d(x, \sigma(x)) + d(\sigma(x), S))^p \right)^{1/p}\\
        &\leq \left( \sum_{x \in V} w(x) d(x, \sigma(x))^p \right)^{1/p} + \left( \sum_{x \in V} w(x) d(\sigma(x), S)^p \right)^{1/p}\\
        &= \left( \sum_{x \in V} w(x) d(x, \sigma(x))^p \right)^{1/p} + \left( \sum_{y \in V'}w'(y)d(y, S)^p \right)^{1/p}\\
        &= \cl_p(\sigma, V, w) + \cl_p(S, V', w')\\
        &\leq \cl_p(S, V', w') + \beta \cdot \Opt_k^p(V,w).
    \end{align*}    
    Now, let $S^\star$ be an optimal solution to the $(k,p)$-clustering problem in the space $(V,w,d)$. Then, by again applying Minkowski's inequality, we have that
    $$ \cl_p(S, V', w') \leq 2\alpha \cdot \cl_p(S^\star, V', w') = 2\alpha \cdot \left( \sum_{y \in V'} w'(y) d(y, S^\star)^p \right)^{1/p} = 2\alpha \cdot  \left( \sum_{x \in V} w(x) d(\sigma(x), S^\star)^p \right)^{1/p} $$
    $$ \leq 2\alpha \cdot \left( \sum_{x \in V} w(x) (d(x, \sigma(x)) + d(x, S^\star) )^p \right)^{1/p} \leq 2 \alpha \cdot \cl_p(\sigma, V, w) + 2\alpha \cdot \cl_p(S^\star, V, w) $$
    $$ \leq 2 \alpha \beta \cdot \Opt_k^p(V, w)+ 2\alpha \cdot \Opt_k^p(V, w) = 2 \alpha(\beta + 1) \cdot \Opt_k^p(V, w). $$
    Combining these inequalities, we get that
    $$ \cl_p(S, V, w) \leq \cl_p(\sigma, V, w) + \cl_p(S, V', w') \leq (\beta + 2 \alpha(\beta + 1) ) \cdot \Opt_k^p(V, w). $$
\end{proof}

\noindent
It immediately follows from \Cref{lem:bicri=sparsifier2} that any $\alpha$-approximate solution to the $(k,p)$-clustering problem in the weighted metric subspace $(V', w', d)$ defined with respect to the map $\sigma$ maintained by $\Sparsifier$ (i.e.~$w'(y) = |\sigma^{-1}(y)|$ for all $y \in V'$) is also a $O(\alpha)$-approximate solution to the $(k,p)$-clustering problem on the metric space $(V,d)$ with probability at least $1 - O(1/n^c)$. Hence, this data structure can be used to sparsify unweighted dynamic metric spaces. We now show how to extend this to weighted metric spaces.

\subsection{Extension to Weighted Metric Spaces}\label{sec:weighted sparsifier}

Let $(V,w,d)$ be a dynamic metric space evolving via a sequence of point insertions and deletions. For each integer $\log(w_{\min}) \leq i \leq \log(w_{\max})$, let $V_i:= \{x \in V \; | \; 2^i \leq w(x) < 2^{i+1}\}$. Suppose that we create $O(\log (\Delta_w))$ instances of $\Sparsifier$, one for each such $i$, and feed them the (unweighted) metric spaces $(V_i, d)$. In return, these data structures maintain a collection of maps $\sigma_i : V_i \longrightarrow V_i$ such that $\cl_p(\sigma_i, V_i) \leq O(1) \cdot \Opt_k^p(V_i)$ for all $i$ with probability at least $1 - \tilde O(1/n^c)$.\footnote{This probability guarantee follows directly from a union bound.} Let $\sigma : V \longrightarrow V$ be the map defined as $\sigma(x) := \sigma_i(x)$ for all $x \in V_i$. Since the $V_i$ partition $V$, $\sigma$ is well defined. Assume that $\cl_p(\sigma_i, V_i) \leq O(1) \cdot \Opt_k^p(V_i)$ for all $i$. We now prove the following lemma.

\begin{lemma}
    $\cl_p(\sigma, V, w) \leq O(1) \cdot \Opt^p_k(V, w)$.
\end{lemma}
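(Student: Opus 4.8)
The plan is to bucket the quantity $\cl_p(\sigma,V,w)^p$ according to the weight classes $\{V_i\}$ and then exploit two simple facts. First, within each class $V_i$ every weight lies in $[2^i,2^{i+1})$, so the weighted cost on $V_i$ of any assignment equals, up to a factor of $2$, its unweighted cost scaled by $2^i$. Second, the restriction of $\sigma$ to $V_i$ is exactly the map $\sigma_i$ output by the $i$-th copy of $\Sparsifier$, so we may invoke the (weight-free) guarantee $\cl_p(\sigma_i,V_i)\le C\cdot\Opt^p_k(V_i)$, where $C=O(1)$ is a single constant valid for all $p$.

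Concretely, first I would write, using that the $V_i$ partition $V$ and that $w(x)<2^{i+1}$ on $V_i$,
$$\cl_p(\sigma,V,w)^p \;=\; \sum_i \sum_{x\in V_i} w(x)\, d(x,\sigma_i(x))^p \;\le\; \sum_i 2^{i+1}\sum_{x\in V_i} d(x,\sigma_i(x))^p \;=\; \sum_i 2^{i+1}\,\cl_p(\sigma_i,V_i)^p.$$
Next I would bound each $\cl_p(\sigma_i,V_i)$ by $C\cdot\Opt^p_k(V_i)$ and relate $\Opt^p_k(V_i)$ to a global optimum $S^\star$ of $(V,w,d)$: projecting $S^\star$ onto $V_i$ gives a proper $k$-clustering of $V_i$, and Corollary~\ref{lem:proj cor} applied inside the metric space $(V_i,w|_{V_i},d)$ (where $\cl_p(V_i,V_i)=0$) gives $\Opt^p_k(V_i)\le \cl_p(\Proj(S^\star,V_i),V_i)\le 2\,\cl_p(S^\star,V_i)$, hence $\cl_p(\sigma_i,V_i)^p\le (2C)^p\sum_{x\in V_i} d(x,S^\star)^p$. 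Plugging this back in and using $2^i\le w(x)$ on $V_i$,
$$\cl_p(\sigma,V,w)^p \;\le\; (2C)^p\sum_i 2^{i+1}\sum_{x\in V_i} d(x,S^\star)^p \;\le\; 2(2C)^p\sum_{x\in V} w(x)\, d(x,S^\star)^p \;=\; 2(2C)^p\,\Opt^p_k(V,w)^p,$$
and taking $p$-th roots yields $\cl_p(\sigma,V,w)\le 2^{1+1/p}C\cdot\Opt^p_k(V,w)=O(1)\cdot\Opt^p_k(V,w)$.

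The only step that is not pure bookkeeping is the inequality $\Opt^p_k(V_i)\le 2\,\cl_p(S^\star,V_i)$ — i.e., that passing to a single weight class cannot inflate the per-class optimum — which is exactly where the projection lemma does its work; everything else is just keeping track of the $2^i$ scale factors at the class boundaries. I do not expect a genuine obstacle here: this is the standard ``partition the points into $O(\log\Delta_w)$ geometric weight buckets, sparsify each separately, and recombine'' reduction, and the key point is that the recombination loses only a constant (and not a $\log\Delta_w$) factor, because each point of $V$ contributes to exactly one bucket.
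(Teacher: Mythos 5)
Your proof is correct and follows essentially the same route as the paper's: decompose $\cl_p(\sigma,V,w)^p$ over the weight classes $V_i$ using $2^i \le w(x) < 2^{i+1}$, invoke the per-class guarantee $\cl_p(\sigma_i,V_i)\le O(1)\cdot \Opt^p_k(V_i)$, and bound each $\Opt^p_k(V_i)$ by $O(1)\cdot \cl_p(S^\star,V_i)$ via projection of the global optimum onto $V_i$ before recombining. The only difference is cosmetic: you make the projection step (and the resulting $2^p$-type factor) explicit via Corollary~\ref{lem:proj cor}, whereas the paper absorbs it into a $2^{O(p)}$ term; the constants match after taking $p$-th roots.
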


\begin{proof}
    We begin by showing that $2^{O(p)} \cdot \Opt^p_k(V,w)^p \geq \sum_i 2^i \cdot \Opt^p_k(V_i)^p$. Let $S^\star$ be an optimal solution to the $(k,p)$-clustering problem in $(V,w,d)$ and let $S^\star_i$ be an optimal solution to the $(k,p)$-clustering problem in $(V_i,d)$. Then we have that
    $$ \cl_p(S^\star, V, w)^p = \sum_i \cl_p(S^\star, V_i, w)^p \geq \sum_i 2^i \cdot \cl_p(S^\star, V_i)^p \geq \frac{1}{2^{O(p)}} \cdot \sum_i 2^i \cdot \cl_p(S^\star_i, V_i)^p $$
    and hence we have $2^{O(p)} \cdot \Opt_k^p(V,w)^p \geq \sum_i 2^i \cdot \Opt_k(V_i)^p$.
    It follows that
    $$ \cl_p(\sigma, V, w)^p = \sum_i \cl_p(\sigma, V_i, w)^p \leq 2 \cdot \sum_i 2^{i} \cdot\cl_p(\sigma_i, V_i)^p $$
    $$\leq 2^{O(p)} \cdot \sum_i 2^i \cdot \Opt_k^p(V_i)^p \leq 2^{O(p)} \cdot \Opt^p_k(V,w)^p. $$
\end{proof}

\noindent
By \Cref{lem:bicri=sparsifier2}, it again immediately follows that any $\alpha$-approximate solution to the $(k,p)$-clustering problem in the weighted metric subspace $(V', w', d)$ defined with respect to the map $\sigma$ (i.e.~$w'(y) = \sum_{x \in \sigma^{-1}(y)} w(x)$ for all $y \in V'$) is also a $O(\alpha)$-approximate solution to the $k$-median problem on the metric space $(V,w,d)$ with probability at least $1 - \tilde O(1/n^c)$. 

It follows from the properties of $\Sparsifier$ that the set $V' = \sigma (V)$ has an amortized recourse of $\tilde O(1)$, and can be maintained in $\tilde O(k)$ amortized update time. Furthermore, the weights of the points in $V'$ can also be maintained explicitly using the data structures in \cite{ourneurips2023}.

It follows that we can sparsify dynamic weighted metric spaces by maintaining $\tilde O(1)$ many copies of $\Sparsifier$, running them on the appropriate unweighted subspaces, and combing the outputs with the appropriate weights.

\newpage

\part{Our Algorithm for Dynamic $k$-Median Value }\label{part:II}

\section{Preliminaries}\label{sec:prelim 2}

In this section, we formally describe the notation that we use throughout \Cref{part:II}. We note that the notation used in \Cref{part:II} is different from the notation used in \Cref{part:I}. See \Cref{sec:P3 org} below for a summary of how \Cref{part:II} is organized and the main result of this part of the paper.

\subsection{Metric Spaces}

In this paper, we work with clustering problems defined on \emph{weighted metric spaces}. A weighted metric space is a triple $(V,w,d)$, where $V$ is a set of points, $d : V \times V \longrightarrow \mathbb{R}_{\geq 0}$ is a distance function satisfying the triangle inequality, and $w : V \longrightarrow \mathbb R_{> 0}$ is a function assigning a weight $w(i)$ to each point $i \in V$. 
We let $(V,d)$ denote the unweighted metric space where all the weights are $1$.
Given a point $i \in V$ and a subset $X \subseteq V$, we denote the distance from $p$ to the closest point in $X$ by $d(i, X) := \min_{j \in X} d(i,j)$.
We let $d_{\text{min}}$ and $d_{\text{max}}$ respectively denote the minimum (non-zero) and maximum distance between any two points in the metric space. We assume that, for any two distinct points $i$ and $j$ in the metric space, we have $d(i,j) \neq 0$. Similarly, we let $w_{\text{min}}$ and $w_{\text{max}}$ respectively denote the minimum and maximum weights assigned to points in the metric space. We define $\Delta_d := d_{\text{max}} / d_{\text{min}}$ and $\Delta_w := w_{\text{max}} / w_{\text{min}}$. The \emph{aspect ratio} of the space is $\Delta := \Delta_d \Delta_w$.

\medskip
\noindent\textbf{Distance Oracle.}
Whenever we are working with a metric space $(V,d, w)$, we assume that we have constant time query access to the distances between points. In other words, we assume access to an oracle that given points $i,j \in V$, returns the value of $d(i,j)$ in $O(1)$ time.

\subsection{Uniform Facility Location and $k$-Median}\label{sec:problem defs}

As input to the \emph{uniform facility location} (UFL) problem, we are given a metric space $(V,w,d)$ and a parameter $\lambda > 0$ which we call the \emph{facility opening cost}.
Our goal is to find a set $S \subseteq V$ of points (sometimes referred to as \emph{facilites})
that minimizes the following objective function, where each point $i \in V$ (sometimes referred to as a \emph{client}) is \emph{assigned} to the closest point in $S$,
$$\fl_\lambda(S, V, w) := \lambda  |S| + \sum_{i \in V} w(i) d(i, S).$$
We denote the objective value of the optimum solution by $\optf_\lambda(V,w) := \min_{S \subseteq V} \fl_\lambda(S,V,w)$.
As input to the \emph{$k$-median} problem, we are again given a metric space $(V,w,d)$ and an integer $k \geq 1$. The setting is similar to UFL, except that there is no facility opening cost and instead a constraint that $|S| \leq k$. Our goal is to find a set $S \subseteq V$ of points (sometimes referred to as \emph{centers}) of size at most $k$, that minimizes the following objective function, where each $i \in V$ is assigned to the closest point in $S$,
$$ \cl(S,V,w) := \sum_{i \in V} w(i) d(i, S). $$
We denote the objective value of the optimum solution by $\optc_k(V,w) := \min_{S \subseteq V, |S| \leq k} \cl(S, V, w)$. When the underlying metric space is obvious from the context, we omit the arguments $V$ and $w$. For example, we write $\optc_k$ instead of $\optc_k(V,w)$, $\cl(S)$ instead of $\cl(S,V,w)$, and so on.

\medskip
\noindent\textbf{Remark on the Definition of UFL.}
The facility location problem is often stated so that the assignment from the clients to facilities is required as part of the solution. Since our objective is ultimately to obtain solutions to $k$-median, where points are always implicitly assigned to the closest center, we make the same assumption in our definition of facility location.

\subsection{Fractional UFL and $k$-Median}\label{sec:LPs}
Throughout \Cref{part:II}, we work with \emph{fractional} solutions to the \emph{integral} problems described in Section~\ref{sec:problem defs}. In order to define the fractional variants of these problems, we give LP-relaxations.
We now define the standard LP-relaxation for UFL, as well as its dual LP, which plays a crucial role in the analysis of our algorithms. We then give the standard LP-relaxation for $k$-median.

\medskip
\noindent \textbf{Primal LP for UFL:}
\begin{eqnarray}
\label{eq:primal:lp:r}
\text{Minimize }   \sum_{i \in V} \lambda \cdot y_i + \sum_{i, j \in V} w(j)d(i, j) \cdot x_{j \to i} \\
\text{s.t.~} \sum_{i \in V} x_{j \to i}  \geq  1 & & \text{ for all } j \in V \label{eq:primal:lp:1:r} \\
x_{j \to i} \leq y_i & & \text{ for all } i, j \in V \label{eq:primal:lp:2:r} \\
y_i, x_{j \to i} \geq 0 & & \text{ for all  } i, j \in V \label{eq:primal:lp:3:r}
\end{eqnarray}
The variable $y_i$ indicates to what extent the point $i \in V$ is included in the solution (i.e. to what extent it is opened as a facility), and $x_{j \to i}$ indicates to what extent the point $j \in V$ has been assigned to $i \in V$. We refer to a collection of variables $\{y_i, x_{j \to i}\}$ satisfying the constraints of this LP as a \emph{fractional solution to UFL}. We say that such a solution opens $\sum_{i \in V} y_i$ facilities. We denote the objective value of the optimal fractional solution to UFL by $\foptf_\lambda$.


\medskip
\noindent  \textbf{Dual LP for UFL:}
\begin{eqnarray}
\label{eq:dual:lp:r}
\text{Maximize }   \sum_{i \in V} v_i \\
\text{s.t.~} \sum_{j \in V} w_{j \to i}  \leq  \lambda & & \text{ for all } i \in V \label{eq:dual:lp:1:r} \\
v_j \leq w(j)d(i, j) + w_{j \to i} & & \text{ for all } i, j \in V \label{eq:dual:lp:2:r} \\
v_i, w_{j \to i} \geq 0 & & \text{ for all  } i, j \in V \label{eq:dual:lp:3:r}
\end{eqnarray}
Note that, given any dual solution $\{v_i, w_{j \to i}\}$, we have that $\sum_{i \in V} v_i \geq \foptf_\lambda$.

\medskip
\noindent \textbf{LP-Relaxation for $k$-Median:}
\begin{eqnarray}
\label{eq:kmed:lp:r}
\text{Minimize } \sum_{i, j \in V} w(j)d(i, j) \cdot x_{j \to i} \\
\text{s.t.~} \sum_{i \in V} y_{i}  \leq k  & & \label{eq:kmed:lp:1:r} \\
\sum_{i \in V} x_{j \to i}  \geq  1 & & \text{ for all } j \in V \label{eq:kmed:lp:2:r} \\
x_{j \to i} \leq y_i & & \text{ for all  } i, j \in V\label{eq:kmed:lp:3:r} \\
y_i, x_{j \to i} \geq 0 & & \text{ for all  } i, j \in V \label{eq:kmed:lp:4:r}
\end{eqnarray}
The variable $y_i$ indicates to what extent a point $i \in V$ is included in the solution, and  $x_{j \to i}$ indicates to what extent $j \in V$ has been assigned to the point $i \in V$. We refer to a collection of variables $\{y_i, x_{j \to i}\}$ satisfying the constraints of this LP as a \emph{fractional solution to $k$-median}.
We denote the objective value of the optimal fractional $k$-median solution by $\foptc_k$.

\medskip
\noindent\textbf{Remark on Fractional Solutions.}
In the same way that integral $k$-median (resp.~UFL) solutions are completely defined by the points taken as centres (resp.~opened as facilities), the variables $\{y_i\}$ define a fractional solution to $k$-median (resp.~UFL), where each point is implicitly (fractionally) assigned to the closest centers (resp.~open facilities).

\medskip
\noindent\textbf{Integrality Gap of $k$-Median and UFL.}
Clearly, the cost of the optimal fractional solutions for $k$-median and UFL are at most the cost of the optimal integral solutions for these problems.
It is also known that the LP-relaxations for the $k$-median and UFL problems have small constant integrality gaps, i.e. the costs of the optimal solutions to fractional $k$-median and UFL are at most a small constant multiplicative factor smaller than the costs of the optimal integral solutions. In particular, we use the fact that, for any $k \in \mathbb N$, $\optc_k \leq 3 \cdot \foptc_k$ \cite{esa/ArcherRS03}.

\subsection{LMP Approximations for UFL}



We now introduce the notion of \emph{Lagrange multiplier preserving} (LMP) approximations for UFL \cite{JainV01}. Fix some instance of UFL, i.e. a metric space $(V,w,d)$ and facility opening cost $\lambda > 0$. We now define the notation of an LMP $\alpha$-approximation for UFL, for $\alpha \geq 1$.

\medskip
\noindent\textbf{LMP Approximations.}
Given a solution for this instance of UFL which has a total facility opening cost of $F$ and a total connection cost of $C$, we say that it is an LMP $\alpha$-approximation if
\begin{equation}
\label{eq:int:lmp:r}
\alpha \cdot F + C \leq \alpha \cdot \foptf_\lambda.
\end{equation}
For a solution $\{y_i, x_{j \to i}\}$, (\ref{eq:int:lmp:r}) can be stated equivalently as $$\alpha \cdot \lambda \sum_{i \in V} y_i + \sum_{i,j \in V} w(j)d(i,j) \cdot x_{j \to i} \leq \alpha \cdot \foptf_\lambda.$$




\medskip
\noindent\textbf{LMP Algorithms.}
We say that an algorithm $\Afl$ for UFL is LMP $\alpha$-approximate if it always returns an LMP $\alpha$-approximation.


\medskip
\noindent\textbf{Connection to $k$-Median.}
The following lemma shows a simple connection between LMP $\alpha$-approximations to UFL and $\alpha$-approximate solutions to $k$-median. This key insight allows us to use LMP algorithms to solve the $k$-median problem \cite{JainV01}.




\begin{lemma}\label{lem:simple LMP case}
    Fix a metric space $(V,w,d)$. Let $\{y_i, x_{j \to i}\}$ be an LMP $\alpha$-approximation for UFL with facility opening cost $\lambda > 0$ that opens exactly $k$ facilities, i.e. $k = \sum_{i \in V} y_i$. Then $\{y_i, x_{j \to i}\}$ is an $\alpha$-approximate solution to the $k$-median problem.
\end{lemma}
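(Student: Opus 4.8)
The plan is to show that the LMP $\alpha$-approximation inequality, when the solution happens to open exactly $k$ facilities, collapses to exactly the statement that the solution is an $\alpha$-approximate fractional $k$-median solution. First I would unpack the definitions. Let $\{y_i, x_{j \to i}\}$ be the given LMP $\alpha$-approximation for UFL with opening cost $\lambda$, and write $F := \lambda \sum_{i \in V} y_i$ for its facility-opening cost and $C := \sum_{i,j \in V} w(j) d(i,j) \cdot x_{j \to i}$ for its connection cost. By the LMP guarantee~(\ref{eq:int:lmp:r}), $\alpha F + C \leq \alpha \cdot \foptf_\lambda$. The hypothesis $\sum_{i \in V} y_i = k$ gives $F = \lambda k$, so this reads $\alpha \lambda k + C \leq \alpha \cdot \foptf_\lambda$.

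Next I would relate $\foptf_\lambda$ to $\foptc_k$. The key observation is that the fractional $k$-median LP and the fractional UFL LP with opening cost $\lambda$ have the same constraints except that UFL replaces the cardinality constraint $\sum_i y_i \leq k$ with the penalty term $\lambda \sum_i y_i$ in the objective. Hence, given any feasible fractional $k$-median solution $\{y_i^\star, x_{j \to i}^\star\}$ of cost $\foptc_k$ (which satisfies $\sum_i y_i^\star \leq k$), the same variables are feasible for UFL, with UFL-cost equal to $\lambda \sum_i y_i^\star + (\text{connection cost}) \leq \lambda k + \foptc_k$. Therefore $\foptf_\lambda \leq \lambda k + \foptc_k$. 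Plugging this into the inequality from the previous paragraph yields
$$\alpha \lambda k + C \leq \alpha \cdot \foptf_\lambda \leq \alpha \lambda k + \alpha \cdot \foptc_k,$$
and cancelling $\alpha \lambda k$ from both sides gives $C \leq \alpha \cdot \foptc_k$.

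Finally, since $\{y_i, x_{j \to i}\}$ is feasible for UFL, it satisfies the covering constraints $\sum_i x_{j \to i} \geq 1$ and $x_{j \to i} \leq y_i$; combined with $\sum_i y_i = k \leq k$, it is feasible for the $k$-median LP, and its $k$-median objective is precisely $C$. Thus it is a fractional $k$-median solution of cost $C \leq \alpha \cdot \foptc_k$, i.e.\ an $\alpha$-approximate solution to the $k$-median problem, which is what we wanted. I do not anticipate a serious obstacle here — the only subtlety worth stating carefully is the direction of the comparison $\foptf_\lambda \leq \lambda k + \foptc_k$ (we need an upper bound on $\foptf_\lambda$, obtained by exhibiting a specific feasible UFL solution from the optimal fractional $k$-median solution), and the fact that ``opens exactly $k$ facilities'' makes the penalty term an exact equality $F = \lambda k$ rather than an inequality, so that cancellation is legitimate.
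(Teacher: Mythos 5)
Your proposal is correct and is essentially the paper's own argument: the paper likewise treats the optimal (fractional) $k$-median solution as a feasible UFL solution to bound $\foptf_\lambda$, then uses $\sum_i y_i = k \geq \sum_i y_i^\star$ to cancel the $\alpha\lambda$ terms and conclude $C \leq \alpha \cdot \foptc_k$. The only difference is cosmetic — you rearrange the cancellation as $\foptf_\lambda \leq \lambda k + \foptc_k$ rather than subtracting term by term — so there is nothing to add.
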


\begin{proof}
    Let $\{y^\star_i, x^\star_{j \to i}\}$ be an optimal solution to the $k$-median problem. By using the fact that $\{y_i, x_{j \to i}\}$ is LMP $\alpha$-approximate and treating $\{y^\star_i, x^\star_{j \to i}\}$ as a solution to UFL, we get that
    $$ \alpha \lambda \cdot \sum_{i \in V} y_i + \sum_{i,j \in V} w(j)d(i,j) \cdot x_{j \to i} \leq \alpha \cdot \foptf_\lambda \leq \alpha \cdot \left( \lambda \cdot \sum_{i \in V} y^\star_i + \sum_{i,j \in V} w(j)d(i,j) \cdot x^\star_{j \to i} \right) $$
    which implies that
    $$   \sum_{i,j \in V} w(j)d(i,j) \cdot x_{j \to i} \leq \alpha \cdot \sum_{i,j \in V} w(j)d(i,j) \cdot x^\star_{j \to i} + \alpha \lambda \cdot \left( \sum_{i \in V} y^\star_i - \sum_{i \in V} y_i\right) \leq \alpha \cdot \sum_{i,j \in V} w(j)d(i,j) \cdot x^\star_{j \to i}, $$
    where we are using the fact that $\sum_{i \in V} y_i = k$ and $\sum_{i \in V} y^\star_i \leq k$. Since $\{y_i, x_{j \to i}\}$ is a feasible $k$-median solution, the lemma follows.
\end{proof}

\subsection{Dual Fitting}

The reason that we define LMP approximations with respect to the fractional optimum is so that we can analyse our algorithms via dual fitting. In particular, we often take convex combinations of feasible dual solutions while analysing our algorithms. We use the following lemma throughout our paper.

\begin{lemma}\label{lem:convex}
Let $(V,w,d)$ be a metric space. Let $\{v^1_i, w^1_{j \to i}\}$ and $\{v^2_i, w^2_{j \to i}\}$ be dual solutions with facility opening costs $\lambda_1$ and $\lambda_2$ respectively. Let $\alpha_1, \alpha_2 \geq 0$ such that $\alpha_1 + \alpha_2 \leq 1$. For all $i,j \in V$ let
$$\tilde v_{i}  :=  \alpha_1 \cdot v_{i}^{1} + \alpha_{2} \cdot v_{i}^{2} \qquad \textrm{ and } \qquad \tilde w_{j \to i} := \alpha_1 \cdot w_{j \to i}^{1} + \alpha_{2} \cdot w_{j \to i}^{2}.$$
Then $\{\tilde v_i, \tilde w_{j \to i}\}$ is a dual solution with facility opening cost $\alpha_1\lambda_1 + \alpha_2\lambda_2$.
\end{lemma}

\begin{proof}
    To see that (\ref{eq:dual:lp:1:r}) is satisfied, note that for every $i \in V$ 
    $$\sum_{j \in V} \tilde w_{j \to i} = \alpha_1 \cdot \sum_{j \in V} \tilde w^1_{j \to i} + \alpha_2 \cdot \sum_{j \in V} \tilde w^2_{j \to i} \leq \alpha_1 \lambda_1 + \alpha_2 \lambda_2.$$
    Similarly, for (\ref{eq:dual:lp:2:r}), note that for every $i,j \in V$ 
    $$\tilde v_j = \alpha_1 v^1_j + \alpha_2 v^2_j \leq  \alpha_1 \cdot (w(j)d(i,j) + w^1_{j \to i}) + \alpha_2 \cdot (w(j)d(i,j) + w^2_{j \to i})$$
    $$= (\alpha_1 + \alpha_2) \cdot w(j)d(i,j) + \alpha_1 w^1_{j \to i} + \alpha_2 w^2_{j \to i} \leq w(j)d(i,j) + \tilde w_{j \to i}. $$
\end{proof}
\noindent Completely analogous results hold for taking convex combinations of fractional solutions for $k$-median and UFL.

\subsection{The Dynamic Setting}

Our results focus on designing dynamic algorithms that are capable of either explicitly maintaining solutions or answering certain queries for some weighted metric space that evolves over time. More precisely, we assume that there is some (possibly infinite) underlying metric space $(\mathcal V,d)$, and that as input we are given a sequence of \emph{updates} (weighted point insertions and deletions) $\sigma_1, \sigma_2, \dots$ which define some initially empty weighted metric subspace $(V,w,d)$ of $(\mathcal V, d)$ that changes over time. After each update (i.e. every time a point is either inserted or deleted) we must update the data structure accordingly. The time taken to update the data structure is called the \emph{update time}. The time taken to answer a query supported by the data structure is known as the \emph{query time}.

\subsection{Organization of \Cref{part:II}}\label{sec:P3 org}

In \Cref{part:II}, we describe our dynamic algorithm for maintaining an approximation to the cost of the optimal solution for the $k$-median problem and prove the following theorem.

\begin{theorem}
    There exists is a randomized dynamic algorithm that, given any $k \in \mathbb N$, maintains a $O(1)$-approximation to the cost of the optimal integral solution for $k$-median with $\tilde O(k)$ amortized update time.
\end{theorem}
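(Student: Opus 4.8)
\medskip
\noindent\textbf{Reduction to a space of size $\tilde O(k)$.}
The plan is to run the dynamic sparsifier $\Sparsifier$ of \cite{ourneurips2023} (see \Cref{app:sparsification}) with parameter $k$, maintaining a weighted subspace $(V',w',d)$ with $|V'| = \tilde O(k)$ and an assignment $\sigma : V \to V'$ with assignment cost $D := \sum_{x \in V} w(x)\, d(x, \sigma(x)) \le O(1)\cdot \optc_k(V,w)$, in $\tilde O(k)$ amortized update time, and to maintain the scalar $D$ explicitly alongside it. The first lemma to establish is the two-sided bound
$$ \optc_k(V,w) \;\le\; D + \optc_k(V',w') \;\le\; O(1)\cdot \optc_k(V,w). $$
The left inequality is the triangle inequality $d(x,S)\le d(x,\sigma(x))+d(\sigma(x),S)$ summed with weights (taking $S$ optimal for $V'$ and using $w'(y)=\sum_{\sigma(x)=y}w(x)$). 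For the right inequality, project an optimal $k$-median solution $S^\star$ for $(V,w,d)$ onto $V'$ and apply \Cref{lem:proj lemma} at each point $\sigma(x)$, which has zero distance to $V'$, to get $\optc_k(V',w') \le 2D + 2\,\optc_k(V,w)$, then use $D \le O(1)\optc_k(V,w)$. Hence it suffices to (i) maintain $D$, and (ii) maintain an $O(1)$-estimate of $\optc_k(V',w')$ on a space of size $\tilde O(k)$ in $\tilde O(k)$ time, and then to output a suitable constant multiple of $D$ plus that estimate; step~(i) costs $\tilde O(k)$ per update since $\Sparsifier$ makes at most $\tilde O(k)$ reassignments per update and each changes $D$ by one term.

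\medskip
\noindent\textbf{Estimating the $k$-median value on the small space.}
For (ii), write $m := |V'| = \tilde O(k)$; this is essentially \Cref{thm:value:main} applied to $(V',w',d)$, and I would carry it out in three steps. First, prove \Cref{th:MP:main:new}: the (weighted analogue of the) fractional Mettu--Plaxton solution built from the radii $r_i$ is a feasible fractional UFL solution that is LMP $4$-approximate --- feasibility of $\sum_i x_{j\to i}=1$ holds because $r_j$ is exactly the value making the defining sum equal $\lambda$, the packing constraint $x_{j\to i}\le y_i$ is the Mettu--Plaxton consistency property, and the inequality $4\lambda\sum_i y_i + \sum_{i,j} w(j)d(i,j)x_{j\to i}\le 4\,\foptf_\lambda$ is shown by exhibiting a feasible dual of value at least a quarter of the primal (dual fitting). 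Second, build the data structure $\FracLMP$ that, given $\lambda$, returns $F(\lambda):=\sum_i y_i$ and the connection cost $C(\lambda)$ in $\tilde O(m)$ time and supports point insertions/deletions in $\tilde O(m)$ time: the implementation keeps, for each point $j$, a balanced tree over the distances $\{d(i,j)\}_i$ augmented with subtree sums of $w(i)$, $w(i)d(i,j)$ and $w(i)d(i,j)^2$, so that $r_j$ is located by binary search and the contribution of the ball $B(j,r_j)$ to $F$ and $C$ is read off from prefix sums, never enumerating the (possibly $\Theta(m^2)$-size) balls. Third, since $F(\lambda)$ is non-increasing in $\lambda$ with $F(\lambda)\ge m\ge k$ for small $\lambda$ and $F(\lambda)$ arbitrarily close to $1$ for large $\lambda$, an $O(\log(m\Delta))$-step binary search over a $\poly(m,\Delta)$-range of $\lambda$ (one $\FracLMP$ query per step) finds $\lambda^- \le \lambda^+$ with $\lambda^+ \le (1+\epsilon)\lambda^-$, $F^- := F(\lambda^-)\ge k$ and $F^+ := F(\lambda^+)\le k$ (the degenerate case $k=1$ is handled directly); taking $\epsilon$ a small constant, one binary search costs $\tilde O(m) = \tilde O(k)$.

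\medskip
\noindent\textbf{The estimate and its correctness.}
Let $a,b\ge 0$ with $a+b=1$ and $aF^- + bF^+ = k$, and consider the convex combination $\tilde y := ay^- + by^+$, $\tilde x := ax^- + bx^+$ of the two fractional solutions; by the argument of \Cref{lem:convex} this is a feasible fractional $k$-median solution with connection cost $\tilde C = aC^- + bC^+$. Using the LMP bound at $\lambda^-$ and $\lambda^+$ together with $\foptf_\lambda \le \lambda k + \foptc_k$ (view the optimal fractional $k$-median solution as a UFL solution), one gets $C^- \le 4\,\foptc_k$ (drop the opening term, using $F^- \ge k$) and, crucially, $\lambda^-(F^- - k) \le \foptf_{\lambda^-} - \lambda^- k \le \foptc_k$; a short computation with the identity $a(F^- - k) = b(k - F^+)$ and $\lambda^+ - \lambda^- \le \epsilon\lambda^-$ then yields $\tilde C \le 4(1+\epsilon)\,\foptc_k$. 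Since $\tilde C \ge \foptc_k$ by feasibility, and $\optc_k \le 3\,\foptc_k$ by the integrality gap of the $k$-median LP \cite{esa/ArcherRS03}, the quantity $3\tilde C$ is an $O(1)$-estimate of $\optc_k(V',w')$. Plugging this into the decomposition from the first paragraph, a suitable constant multiple of $D + 3\tilde C$ is an $O(1)$-estimate of $\optc_k(V,w)$; the cost is dominated by maintaining $\Sparsifier$, maintaining $D$, maintaining $\FracLMP$ on $(V',w',d)$, and one binary search per update, all $\tilde O(k)$ amortized, and the bound holds with probability $1 - \tilde O(1/n^c)$ (the only randomness being in $\Sparsifier$).

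\medskip
\noindent\textbf{Main obstacle.}
The technical heart is \Cref{th:MP:main:new} --- arguing that the \emph{fractional} Mettu--Plaxton solution is simultaneously LMP and a constant-factor approximation, which the classical integral analysis does not give directly --- together with the $\tilde O(m)$ implementation of $\FracLMP$ that aggregates the radii and the total connection cost without touching each ball. A secondary but necessary point is propagating everything through the dynamic sparsifier so the overall overhead stays $\tilde O(k)$ amortized: this requires that the changes $\Sparsifier$ makes to $(V',w',d)$, including weight changes induced by reassignments of $\sigma$, are few enough to be absorbed by the $\tilde O(m)$-per-change cost of $\FracLMP$, which follows from the recourse guarantees of \cite{ourneurips2023} (\Cref{lem:spasifier prop} and \Cref{thm:black:box:sparsifier}).
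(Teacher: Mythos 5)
Your overall route is the same as the paper's: run the sparsifier of \cite{ourneurips2023} to get $(V',w',d)$ of size $\tilde O(k)$, maintain the fractional Mettu--Plaxton LMP $4$-approximation for UFL via per-point augmented balanced search trees ($\RadiiMP$/$\FracLMP$), do a binary search over a geometric grid of opening costs and take the Jain--Vazirani convex combination to get a $4(1+\epsilon)$-approximate fractional $k$-median value on the small space, invoke the integrality gap of $3$, and add back the assignment cost of $\sigma$ to transfer the estimate to the original space. Your interpolation calculation and the sandwich $\optc_k(V,w)\le D+\optc_k(V',w')\le O(1)\cdot\optc_k(V,w)$ are both fine and match the paper's Lemma~\ref{lem:frac GUHA} in spirit.

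The one step that does not follow from what you have is maintaining the exact assignment cost $D=\sum_{x\in V}w(x)\,d(x,\sigma(x))$ in $\tilde O(k)$ amortized time. You justify this by asserting that $\Sparsifier$ makes at most $\tilde O(k)$ reassignments of $\sigma$ per update, but the stated guarantees (\Cref{lem:spasifier prop}, \Cref{thm:black:box:sparsifier}) only bound the recourse of the \emph{set} $V'$ by $\tilde O(1)$ and the update time by $\tilde O(k)$; a single change to $V'$ (or a layer reconstruction inside the sparsifier) can implicitly reassign far more than $\tilde O(k)$ points of $V$ without the algorithm touching them individually, since the assignment is maintained implicitly. This is exactly why the paper states that the black-box application does not suffice here ("the sparsifier does not preserve the cost of solutions") and instead opens up \cite{ourneurips2023}: it maintains a surrogate $\gamma=\sum_{i}2\nu_i|C_i|$, built from the layer radii $\nu_i$ and the sizes of the parts $C_i$ that the sparsifier already tracks, satisfying $\cl(\sigma,V)\le\gamma\le O(1)\cdot\optc_k(V)$, and then proves $\gamma$ plus the fractional connection cost on $(V',w')$ is a $\Theta(1)$-estimate of $\optc_k(V)$. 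So either replace your exact $D$ by such a certified overestimate (any quantity sandwiched between $\cl(\sigma,V)$ and $O(1)\cdot\optc_k(V)$ works in your sandwich, since it only needs to upper-bound the transfer cost and be dominated by $\optc_k$), or supply a proof about the internals of \cite{ourneurips2023} bounding the number of explicit changes to $\sigma$ per update --- as written, that bound is assumed rather than established.
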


\noindent In \Cref{sec:DS}, we describe the core data structure that we use to implement our dynamic
algorithms. In \Cref{sec:frac LMP alg}, we give our LMP algorithm for uniform facility location. In \Cref{sec:frac k med}, we give our dynamic algorithm for fractional $k$-median and $k$-median value. Finally, in \Cref{sec:frac sparsifier}, we show how to use sparsification in order to speed up our dynamic algorithms.


\section{The Core Data Structure}\label{sec:DS}

In this section, we describe the core data structure $\RadiiMP$ that we use to implement our dynamic algorithms. This data structure can be updated efficiently and implements fast queries that allow us to access values that are used by algorithms that we design in the subsequent sections.

\subsection{The Radii $r^{(\beta)}_i$}

Fix a metric space $(V,w,d)$ and a facility opening cost $\lambda > 0$. For any $\beta \geq 0$, we let $d^{(\beta)}$ be a new metric on $V$ such that $d^{(\beta)}(i, j) := \beta \cdot d(i, j)$ for all $i, j \in V$. In other words, $d^{(\beta)}$ is obtained by scaling the distances of $d$ by a factor of $\beta$.
Next, for all $i \in V$ and $r, \beta \geq 0$, let $B^{(\beta)}(i, r) := \{ j \in V \; | \; d^{(\beta)}(i, j) \leq r\}$ denote the (closed) ball of radius $r$ around the point $i$ w.r.t.~the metric $d^{(\beta)}$. Finally, for all $i \in V$ and $\beta \geq 0$, we define
\begin{equation}
\label{eq:def:r}
r^{(\beta)}_i := \min \left\{ r \geq 0 \; \middle| \; \sum_{j \in B^{(\beta)}(i, r)} w(j)\left(r - d^{(\beta)}(i, j) \right) \geq \lambda\right\}.
\footnote{To see that $r^{(\beta)}_j$ is well defined, note that this set is the preimage of a non-negative continuous function on a closed set, and hence must contain its infimum.}
\end{equation}

\noindent
Intuitively, we can interpret the value $r_i^{(\beta)}$ as follows. Keep growing a ball around $i$ in the metric space $(V, w, d^{(\beta)})$. If we take a snapshot of this process when the ball is of radius $r$, then at that time each point $j$ in this ball contributes $w(j)(r - d^{(\beta)}(i, j))$ towards the uniform facility opening cost of $\lambda$. We stop growing this ball when the sum of these contributions becomes equal to $\lambda$, and $r^{(\beta)}_i$ is precisely the radius at which stop this process.
Using the radii $r^{(\beta)}_i$, we also define
\begin{equation}
\label{eq:def:C}
C^{(\beta)}_i := \frac{1}{\lambda} \cdot \sum_{j \in B^{(\beta)}(i, r^{(\beta)}_i)} w(j) \left( r^{(\beta)}_i - d^{(\beta)}(i,j) \right) \cdot w(i)d^{(\beta)}(i,j).
\end{equation}
The radii $\{r^{(\beta)}_i\}$ and the values $\{C^{(\beta)}_i\}$ are central objects in the algorithms that we design. We now give a dynamic data structure that allows us to compute these values efficiently.

\subsection{The Data Structure \RadiiMP}

The data structure $\RadiiMP$ is initialized with a parameter $\beta \geq 0$.
Given a dynamic metric space $(V,d, w)$ which evolves via a sequence of point insertions and deletions, the data structure $\RadiiMP$ supports the following query operations.
\begin{itemize}
    \item \textsc{Radius}$(i, \lambda)$: The input to this query is a point $i \in V$ and a facility opening cost $\lambda > 0$. In response, the data structure outputs the value of $r^{(\beta)}_i$ defined with respect to facility opening cost $\lambda$.
    \item \textsc{Connection-Cost}$(i, \lambda)$: The input to this query is a point $i \in V$ and a facility opening cost $\lambda > 0$. In response, the data structure outputs the value of $C^{(\beta)}_i$ defined with respect to facility opening cost $\lambda$.
\end{itemize}
The following lemma summarises the behaviour of this data structure.

\begin{lemma}\label{lem:maintaining r}
    The data structure $\RadiiMP$ is deterministic, has a worst-case update time of $O(n \log n)$, and can answer \textsc{Radius} and \textsc{Connection-Cost} queries in $O(\log n)$ time.
\end{lemma}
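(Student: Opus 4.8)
The plan is to maintain, for every point $i \in V$, a deterministic balanced binary search tree $T_i$ whose nodes are the points $j \in V$, keyed by the scaled distance $d^{(\beta)}(i,j)$, and to augment each subtree with the three aggregates $\sum_j w(j)$, $\sum_j w(j)\, d^{(\beta)}(i,j)$ and $\sum_j w(j)\, d^{(\beta)}(i,j)^2$, where the sums range over the points $j$ stored in that subtree. Since $\beta$ is fixed at initialization, the key of node $j$ inside $T_i$ is a fixed number, so these are ordinary order-statistic-tree augmentations; using a deterministic balanced tree (e.g.\ a red--black tree) keeps the whole structure deterministic. When a point $x$ is inserted into $V$, we insert a node for $x$ into each of the $n$ existing trees $T_i$ (updating the $O(1)$ aggregates along each root-to-leaf path) and build a fresh tree $T_x$ over the $n$ current points; this costs $O(n\log n)$ in the worst case, and a deletion is handled symmetrically. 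This already gives the claimed update time.

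The core of the argument is the $O(\log n)$ query. Fix a query point $i$ and cost $\lambda > 0$, and define $g_i(r) := \sum_{j \in B^{(\beta)}(i,r)} w(j)\bigl(r - d^{(\beta)}(i,j)\bigr)$, so that $r_i^{(\beta)}$ is the unique $r$ with $g_i(r) = \lambda$; uniqueness holds because $g_i$ is continuous, $g_i(0) = 0$, and $g_i$ has slope at least $w(i) > 0$ on $(0,\infty)$, hence is strictly increasing and unbounded. Order the points by distance from $i$ as $d_1 \le d_2 \le \cdots$ with weights $w_1, w_2, \ldots$ (all in the $d^{(\beta)}$ metric), and put $W_\ell := \sum_{t \le \ell} w_t$, $D_\ell := \sum_{t \le \ell} w_t d_t$. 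Then $g_i$ is piecewise linear with breakpoints at the $d_\ell$, it equals $W_\ell d_\ell - D_\ell$ at $d_\ell$, and has slope $W_\ell$ on $[d_\ell, d_{\ell+1})$; consequently, if $\ell^\star$ is the largest index with $W_{\ell^\star} d_{\ell^\star} - D_{\ell^\star} \le \lambda$, then
\[
r_i^{(\beta)} \;=\; d_{\ell^\star} + \frac{\lambda - (W_{\ell^\star} d_{\ell^\star} - D_{\ell^\star})}{W_{\ell^\star}}.
\]
Since $W_\ell d_\ell - D_\ell = g_i(d_\ell)$ is non-decreasing in $\ell$, I would locate $\ell^\star$ together with $W_{\ell^\star}$ and $D_{\ell^\star}$ by a single root-to-leaf descent of $T_i$: at each visited node $j$ I use the accumulated left-subtree aggregates to form the prefix sums $W,D$ up to and including $j$, evaluate $W\cdot\text{key}(j) - D$, and branch into the right subtree (updating the running aggregates and the best candidate) if this is $\le \lambda$, and into the left subtree otherwise. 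This answers \textsc{Radius} in $O(\log n)$ time. For \textsc{Connection-Cost}, once $r_i^{(\beta)}$ is known the ball $B := B^{(\beta)}(i, r_i^{(\beta)})$ is exactly a distance-order prefix, and expanding the definition gives
\[
C_i^{(\beta)} \;=\; \frac{w(i)}{\lambda}\left( r_i^{(\beta)} \sum_{j \in B} w(j)\, d^{(\beta)}(i,j) \;-\; \sum_{j \in B} w(j)\, d^{(\beta)}(i,j)^2 \right),
\]
and the two sums are prefixes of the augmented aggregates $\sum w(j)\, d^{(\beta)}(i,j)$ and $\sum w(j)\, d^{(\beta)}(i,j)^2$, obtained by one more $O(\log n)$ descent of $T_i$.

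I expect the $O(\log n)$ query to be the delicate step: one must check that $g_i$ actually crosses $\lambda$ inside the interval $[d_{\ell^\star}, d_{\ell^\star+1})$ (handling the boundary cases and ties among equal distances correctly), and that the prefix-sum bookkeeping carried out during the tree walk is faithful to the piecewise-linear structure of $g_i$. Everything else --- the $O(n\log n)$ rebuild of $T_x$ on insertion, the symmetric deletion, determinism of the balanced tree, and the fact that the aggregate $\sum w(j)\, d^{(\beta)}(i,j)^2$ is maintainable inside $T_i$ precisely because the keys there are fixed once $i$ is fixed --- is routine.
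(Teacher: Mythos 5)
Your proposal is correct and matches the paper's own implementation in all essentials: per-point balanced search trees keyed by the scaled distances, augmented with subtree sums of $w(j)$, $w(j)\,d^{(\beta)}(i,j)$ and $w(j)\,d^{(\beta)}(i,j)^2$, a single $O(\log n)$ root-to-leaf descent exploiting the piecewise-linear, monotone prefix function to locate the breakpoint and interpolate $r_i^{(\beta)}$, and the same prefix-sum formula for $C_i^{(\beta)}$. The paper merely packages the descent as generic \textsc{Search-Query}/\textsc{Mass-Query} operations with an auxiliary weight-prefix routine, but the data structure and analysis are the same as yours.
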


\noindent
We now show how to implement $\RadiiMP$ to get the guarantees described in \Cref{lem:maintaining r}.

\subsection{Binary Search Data Structure}\label{sec: search DS}

Suppose we have an initially empty dynamic set $\mathcal S$ which undergoes point insertions and deletions and values $d_x \geq 0$ and $q_x \geq 0$ for each $x \in S$ which are specified at the time that $x$ is inserted into $\mathcal S$.
We now show how to implement a deterministic data structure that dynamically maintains this set $\mathcal S$ and the corresponding values $\{d_x, q_x\}$, supports useful queries, and can be updated efficiently. We maintain a \emph{balanced binary search tree} $\mathcal T$, where each node in $\mathcal T$ corresponds to a point $x \in \mathcal S$ and the values are sorted according to the $d_x$ (so that an in-order traversal of the tree $\mathcal T$ yields the sequence $x_1,\dots, x_n$ where $d_{x_1} \leq \dots \leq d_{x_n}$). Given a node $u\in \mathcal T$, we let $x_u \in \mathcal S$ denote the point corresponding to $u$ and we abbreviate $d_{x_u}$ and $q_{x_u}$ by $d_u$ and $q_u$ respectively.
Given nodes $u, v \in \mathcal T$, we write $u \prec v$ if and only if $u$ appears before $v$ when performing an in-order traversal of $\mathcal T$. We denote by $\mathcal T_u$ the subtree of $\mathcal T$ rooted at $u$, and define $q(\mathcal T_u)$ to be $\sum_{v \in \mathcal T_u} q_v$. Furthermore, we denote by $u^-$ and $u^+$ the minimum and maximum nodes in $\mathcal T_u$ with respect to the ordering $\prec$. 
For $u \in \mathcal T$, $\eta \in \mathbb N$, we define
$$\phi(u) := \sum_{v \in \mathcal T_u} q_v(d_{u^+} - d_v) \text{\qquad and \qquad} \phi_\eta(u) := \phi(u) + \eta \cdot (d_{u^+} - d_{u^-}).$$
Furthermore, for $v \in \mathcal T_u$, we also define
$$\phi(u,v) := \sum_{w \in \mathcal T_u, \; w \preceq v}q_w(d_{v} - d_w) \text{\qquad and \qquad} \phi_\eta(u,v) := \phi(u,v) + \eta \cdot (d_{v} - d_{u^-}).$$
The data structure supports the following query operations.

\begin{itemize}
    \item \textsc{Mass-Query}$(u, \eta)$: The input to this query is a node $u \in \mathcal T$ and an integer $\eta$. In response, the data structure outputs the value of $\phi_\eta(u)$.
    \item \textsc{Search-Query}$(u, \mu, \eta)$: The input to this query is a node $u \in \mathcal T$, a real number $\mu > 0$ and an integer $\eta$. In response, the data structure outputs the tuple $(u^\star, \phi_\eta(u, u^\star))$, where $$u^\star := \max \{v \in \mathcal T_u \, | \, \phi_\eta(u, v) < \mu\}.$$
\end{itemize}

\noindent We denote this data structure by $\mathcal R$. In \Cref{sec: search DS implementation}, we show how to implement this data structure so that it has a worst-case update and query time of $O(\log n)$.
In \Cref{sec: search DS maintaining r}, we then use this data structure to implement $\RadiiMP$.

\subsection{Implementation of Binary Search Data Structure}\label{sec: search DS implementation}

We implement the tree $\mathcal T$ so that each node $u$ in the tree stores the point $x_u$, the value $d_u$, the value $q_u$, the value of $\phi(u)$, the value of $q(\mathcal T_u)$, pointers to its left and right child (if it has any), and pointers to $u^-$ and $u^+$.
We now show how to efficiently implement each query, followed by how to efficiently update the data structure.

\medskip
\noindent \textbf{Implementing \textnormal{\textsc{Mass-Query}}.}
Given a node $u$, we can retrieve the value of $\phi(u)$ in $O(1)$ time. Using the pointers to $u^-$ and $u^+$, we can retrieve the values of $d_{u^-}$ and $d_{u^+}$ in $O(1)$ time. Hence, we can compute $\phi_\eta(u) = \phi(u) + (d_{u^+} - d_{u^-})\cdot \eta$ in $O(1)$ time.
It follows that we can implement \textsc{Mass-Query} to run in $O(1)$ time.

\medskip
\noindent \textbf{Implementing \textnormal{\textsc{Search-Query}}.} Let $u$ be a node in $\mathcal T$. Let $v$ and $w$ be the left and right children of $u$ respectively. If $u$ has no left (resp.~right) child, we set $v$ (resp.~w) to $\texttt{null}$. Note that $u^- = v^-$, $u^+ = w^+$, and $u^- \prec \dots \prec v^+ \prec u \prec w^- \prec \dots \prec u^+$.
We implement \textsc{Search-Query}$(u, \mu, \eta)$ recursively using Algorithm~\ref{alg: search query}.

\begin{algorithm}[H]\label{alg: search query}
    \SetAlgoLined
    \DontPrintSemicolon

    \If{ $v \neq \textnormal{\texttt{null}}$ }{
        \If{ $\phi_\eta(u, v^+) \geq \mu$ }{
        \Return $\textsc{Search-Query}(v, \mu, \eta)$\;
        }
        \If{ $\phi_\eta(u,u) \geq \mu$ }{
            \Return $(v^+, \phi_\eta(u, v^+))$\;
        }
    }
    \If{ $w = \textnormal{\texttt{null}}$ \textnormal{\textbf{or}} $\phi_\eta(u, w^-) \geq \mu$ }{
        \Return $(u, \phi_\eta(u, u))$\;
    }
    $\mu' \leftarrow \mu - \phi_\eta(u, w^-)$\;
    $\eta' \leftarrow \eta + q_u + \sum_{v' \in \mathcal T_v} q_{v'}$\;
    \Return $\textsc{Search-Query}(w, \mu', \eta')$\;
    \caption{\textsc{Search-Query}$(u, \mu, \eta)$}
\end{algorithm}

\begin{claim}
    Algorithm~\ref{alg: search query} returns the correct response to the query \textsc{Search-Query}$(u, \mu, \eta)$.
\end{claim}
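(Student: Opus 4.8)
The plan is to prove the claim by structural induction on the subtree $\mathcal T_u$, showing that \textsc{Search-Query}$(u,\mu,\eta)$ returns $(u^\star,\phi_\eta(u,u^\star))$ with $u^\star=\max\{v\in\mathcal T_u\mid\phi_\eta(u,v)<\mu\}$. Two structural facts underpin everything. First, a \emph{monotonicity lemma}: along the in-order ordering $\prec$ of $\mathcal T_u$, the map $v\mapsto\phi_\eta(u,v)$ is non-decreasing, since passing to the $\prec$-successor only enlarges the index set of the sum defining $\phi(u,v)$, raises $d_v$, and hence raises every surviving summand as well as the term $\eta(d_v-d_{u^-})$. Consequently $\{v\in\mathcal T_u\mid\phi_\eta(u,v)<\mu\}$ is $\prec$-downward closed, and since $\phi_\eta(u,u^-)=\phi(u,u^-)+\eta(d_{u^-}-d_{u^-})=0<\mu$ it is nonempty, so $u^\star$ is well defined. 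Second, two \emph{decomposition identities}, obtained by splitting $\mathcal T_u$ in-order into $\mathcal T_v$, then $u$, then $\mathcal T_w$ (with the convention $\mathcal T_{\texttt{null}}=\varnothing$, $d_{u^-}=d_u$ when $u$ has no left child): for every $v'\in\mathcal T_v$ one has $\phi_\eta(u,v')=\phi_\eta(v,v')$ (the sum and the offset $d_{u^-}=d_{v^-}$ are literally unchanged); and for every $v'\in\mathcal T_w$ one has $\phi_\eta(u,v')=\phi_{\eta'}(w,v')+\phi_\eta(u,w^-)$, where $\eta'=\eta+q_u+q(\mathcal T_v)$. The second identity is the crux: the part of $\phi(u,v')$ coming from nodes in $\mathcal T_v\cup\{u\}$ equals $(q_u+q(\mathcal T_v))\,d_{v'}$ minus a constant, and the linear-in-$d_{v'}$ piece is exactly absorbed when $\eta$ is bumped to $\eta'$; the remaining constant residual, evaluated at $v'=w^-$ where $\phi_{\eta'}(w,w^-)=0$, is seen to equal $\phi_\eta(u,w^-)$.

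With these in hand I would run the case analysis mirroring Algorithm~\ref{alg: search query}, using monotonicity to localize $u^\star$. If $v\neq\texttt{null}$ and $\phi_\eta(u,v^+)\ge\mu$, then $u^\star\prec v^+$, so $u^\star\in\mathcal T_v$; by the first identity $u^\star$ is also the maximizer for the sub-instance, and by the inductive hypothesis $\textsc{Search-Query}(v,\mu,\eta)$ returns exactly $(u^\star,\phi_\eta(v,u^\star))=(u^\star,\phi_\eta(u,u^\star))$. If $\phi_\eta(u,v^+)<\mu$ but $\phi_\eta(u,u)\ge\mu$, then $u^\star\succeq v^+$ and $u^\star\prec u$, forcing $u^\star=v^+$, which the algorithm returns with the correct mass $\phi_\eta(u,v^+)$. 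Otherwise $\phi_\eta(u,u)<\mu$, so $u^\star\succeq u$: if $w=\texttt{null}$ or $\phi_\eta(u,w^-)\ge\mu$ there is no node strictly between $u$ and $w^-$, hence $u^\star=u$, matching the returned $(u,\phi_\eta(u,u))$; and if $\phi_\eta(u,w^-)<\mu$ then $u^\star\succeq w^-$, so $u^\star\in\mathcal T_w$, and the second identity gives $\phi_\eta(u,v')<\mu\iff\phi_{\eta'}(w,v')<\mu'$ for $v'\in\mathcal T_w$ with $\mu'=\mu-\phi_\eta(u,w^-)$, so by the inductive hypothesis $\textsc{Search-Query}(w,\mu',\eta')$ returns $(u^\star,\phi_{\eta'}(w,u^\star))$; adding the offset $\phi_\eta(u,w^-)$ back onto the second coordinate recovers $(u^\star,\phi_\eta(u,u^\star))$. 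The base case $v=w=\texttt{null}$ is immediate: $\mathcal T_u=\{u\}$, $\phi_\eta(u,u)=0<\mu$, so $u^\star=u$ and the output is $(u,0)$.

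I expect the main obstacle to be pinning down the right-child decomposition identity with the precise shift $\eta'=\eta+q_u+q(\mathcal T_v)$ and the additive offset $\phi_\eta(u,w^-)$, and then keeping the bookkeeping of the returned mass value consistent (the recursive call naturally reports $\phi_{\eta'}(w,u^\star)$, which differs from the desired $\phi_\eta(u,u^\star)$ by exactly $\phi_\eta(u,w^-)$). A secondary, fiddly point is the boundary handling — the pointers $u^-,u^+$, the empty-subtree conventions, and the degenerate positions $u^\star=v^+$ and $u^\star=u$ — which I would fold uniformly into the monotonicity lemma and the two identities so that the case analysis closes without special-casing missing children beyond what the pseudocode already does. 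Correctness of \textsc{Mass-Query} (returning $\phi_\eta(u)=\phi(u)+\eta(d_{u^+}-d_{u^-})$ directly from the stored fields) is immediate and needs no induction.
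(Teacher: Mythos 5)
Your proof is correct and follows essentially the same route as the paper's: induction over the subtree, the same right-child decomposition identity $\phi_\eta(u,w')=\phi_\eta(u,w^-)+\phi_{\eta+q_u+q(\mathcal T_v)}(w,w')$, and the same case analysis mirroring the pseudocode; you merely make explicit the monotonicity of $v\mapsto\phi_\eta(u,v)$ and the left-child identity $\phi_\eta(u,v')=\phi_\eta(v,v')$, both of which the paper uses implicitly. One small point in your favour: you correctly note that the recursive call on $w$ naturally reports $\phi_{\eta'}(w,u^\star)$ rather than $\phi_\eta(u,u^\star)$, so the offset $\phi_\eta(u,w^-)$ must be added back to the second coordinate --- a bookkeeping step that the paper's pseudocode and proof gloss over.
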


\begin{proof}
    We prove this claim by induction on the depth of the subtree rooted at the node $u$. For the base case where $u$ has a subtree of depth 1 (i.e.~$u$ is a leaf), we know that $v = w = \texttt{null}$, so Algorithm~\ref{alg: search query} returns $(u, \phi_\eta(u,u))$ as required.

    Now, for the inductive step, suppose that the subtree rooted at $u$ has depth $\ell$ and that Algorithm~\ref{alg: search query} returns the correct response when the subtree rooted at the input node has depth less than $\ell$. First, consider the case that $u^\star \prec u$. Then it must be the case that $v \neq \texttt{null}$. Hence, the algorithm either returns the output of $\textsc{Search-Query}(v, \mu, \eta)$ if $u^\star \prec v^+$ or $(v^+, \phi_\eta(u, v^+))$ if $u^\star = v^+$. If $u^\star = u$, then our algorithm returns $(u, \phi_\eta(u, u))$. Otherwise, $u^\star \succ u$, and we recurse on $w$. In this case, we must appropriately modify $\mu$ and $\eta$. We do this by noting that, for $w' \succ w^-$, 
    $$\phi_\eta(u, w') = \phi_\eta(u, w^-) + \phi_{\eta + q_u + \sum_{v' \in \mathcal T_v} q_{v'}}(w, w').$$ Hence, $\phi_\eta(u, w') \geq \mu$ iff $\phi_{\eta + q_u + \sum_{v' \in \mathcal T_v} q_{v'}}(w, w') \geq \mu - \phi_\eta(u, w^-)$. Note that $\mu - \phi_\eta(u, w^-) > 0$ since $u^\star > u$.
\end{proof}

\begin{claim}
    Algorithm~\ref{alg: search query} can be implemented to run in $O(\log n)$ time.
\end{claim}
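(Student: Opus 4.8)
The claim to prove is that Algorithm~\ref{alg: search query} runs in $O(\log n)$ time. The plan is to argue that the recursion traces a single root-to-leaf path in the balanced binary search tree $\mathcal T$, and that each level of the recursion does only $O(1)$ work, given the augmented data stored at each node.

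First I would observe the control flow: on a call $\textsc{Search-Query}(u,\mu,\eta)$, the algorithm either returns immediately (in the leaf base case, or when one of the threshold comparisons $\phi_\eta(u,v^+)\geq\mu$, $\phi_\eta(u,u)\geq\mu$, $\phi_\eta(u,w^-)\geq\mu$ fails appropriately) or it recurses exactly once, on either the left child $v$ or the right child $w$. Hence the total number of recursive calls is bounded by the depth of $\mathcal T$, which is $O(\log n)$ since $\mathcal T$ is balanced. So it remains to show each call does $O(1)$ work.

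The main thing to verify is that every quantity referenced inside a single call can be evaluated in $O(1)$ time using the fields stored at the nodes (namely $x_u$, $d_u$, $q_u$, $\phi(u)$, $q(\mathcal T_u)$, child pointers, and the pointers to $u^-$ and $u^+$). The quantities $d_{u^-}, d_{u^+}, d_{v^+}, d_{w^-}$ are retrieved in $O(1)$ by following the stored $u^-,u^+$ pointers of the relevant nodes; $\phi(u)$ and $q(\mathcal T_u)$ are stored directly. The key identity I would invoke is that for $v^+ \preceq v'$ within $\mathcal T_u$ (and analogously on the right side),
\[
\phi_\eta(u, v') = \phi_\eta(v, v') , \qquad
\phi_\eta(u, u) = \phi(v) + (d_u - d_{u^-})\cdot\eta + q_u \cdot (d_u - d_u) + \dots
\]
— more precisely, $\phi_\eta(u,v^+) = \phi_\eta(v,v^+) = \phi(v) + \eta(d_{v^+}-d_{v^-})$, which is a \textsc{Mass-Query} on $v$ and hence $O(1)$; and $\phi_\eta(u,u) = \phi_\eta(u,v^+) + (q_u + q(\mathcal T_v))(d_u - d_{v^+}) + \eta(d_u - d_{v^+})$ wait — I need to be careful here. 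I would derive, by splitting the defining sum $\phi(u,u)=\sum_{w\in\mathcal T_u, w\preceq u} q_w(d_u - d_w)$ over $\mathcal T_v \cup \{u\}$, a closed form for $\phi_\eta(u,u)$ in terms of $\phi(v)$, $q(\mathcal T_v)$, $q_u$, and the stored distances, all computable in $O(1)$. Similarly $\phi_\eta(u,w^-) = \phi_\eta(u,u) + (\text{stored quantities})\cdot(d_{w^-}-d_u)$, and the updated parameters $\mu' = \mu - \phi_\eta(u,w^-)$ and $\eta' = \eta + q_u + q(\mathcal T_v)$ (reading $q(\mathcal T_v)$ from the stored field) are $O(1)$ to form. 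Putting this together: each recursive call performs a constant number of field lookups and arithmetic operations, the recursion depth is $O(\log n)$, so the total running time is $O(\log n)$.

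The main obstacle I anticipate is purely bookkeeping: getting the splitting formulas for $\phi_\eta(u,\cdot)$ exactly right — in particular correctly accounting for the contribution of the pivot $u$ itself and of the left subtree $\mathcal T_v$ when expressing $\phi_\eta(u,u)$ and $\phi_\eta(u,w^-)$ in terms of the precomputed $\phi(v)$ and $q(\mathcal T_v)$ — and then separately confirming (which is really part of the update-time analysis, stated in \Cref{lem:maintaining r}) that these augmented fields $\phi(u)$ and $q(\mathcal T_u)$ can themselves be maintained in $O(\log n)$ per insertion/deletion under rebalancing rotations. For the present claim, though, the correctness of those fields is assumed, so the argument reduces to the routine ``constant work per level on a single root-to-leaf path'' observation.
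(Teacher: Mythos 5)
Your proposal is correct and follows essentially the same argument as the paper: one recursive call per level of a balanced tree gives $O(\log n)$ depth, and each level costs $O(1)$ because $\phi(u,v^+)=\phi(v)$ and the prefix sums split against the stored $\phi(\cdot)$ and $q(\mathcal T_\cdot)$ fields, which is exactly the paper's computation. The one tentative slip (the extra $q_u$ term in your first attempt at $\phi_\eta(u,u)$) disappears since the pivot contributes $q_u(d_u-d_u)=0$, as you yourself flagged, and the paper's formula $\phi(u,u)=\phi(u,v^+)+(d_u-d_{v^+})\cdot q(\mathcal T_v)$ confirms the corrected bookkeeping.
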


\begin{proof}
    In order to implement this algorithm to run in $O(\log n)$ time, we argue that there are at most $O(\log n)$ levels in the recursion and that we spend at most $O(1)$ time at each level. First note that at most one recursive call is made per call to Algorithm~\ref{alg: search query} and that the subtree of the node in the recursive call has strictly smaller depth. Hence, we make at most $O(\log n)$ calls in total since the tree is balanced. If we can find the values of $\phi_\eta(u, v^+)$, $\phi_\eta(u, u)$, and $\phi_\eta(u, w^-)$ in $O(1)$ time, then all other operations can clearly be implemented in $O(1)$ time and the claimed running time follows.
    It is sufficient to compute $\phi(u, v^+)$, $\phi(u, u)$, and $\phi(u, w^-)$.
    Since $\phi(u, v^+) = \phi(v, v^+) = \phi(v)$ and we have access to the node $v$, we can compute $\phi(u, v^+)$ in $O(1)$ time. Similarly, by noting that $\phi(u,u) = \phi(u, v^+) + (d_u - d_{v^+}) \cdot q(\mathcal T_v)$ and $\phi(u, w^-) = \phi(u,u) + (d_{w^-} - d_u) \cdot (q(\mathcal T_v) + q_u)$, we can also compute $\phi_\eta(u, u)$ and $\phi_\eta(u, w^-)$ in $O(1)$ time.
\end{proof}

\medskip
\noindent \textbf{Updating the Data Structure.} After an update, the tree $\mathcal T$ may no longer be balanced, and some of the nodes $u$ in $\mathcal T$ may no longer store the correct values of $\phi(u)$ and $q(\mathcal T_u)$, and may no longer be storing the correct pointers to $u^-$ and $u^+$ (since these nodes might change). Using a standard implementation such as an AVL tree, we can re-balance the tree in $O(\log n)$ time. In order to efficiently update the information stored at the nodes, we first note that there are at most $O(\log n)$ nodes in the tree whose subtrees change as the result of an update and the proceeding balancing operation, and that the information stored at a node is completely determined by its subtree. We scan through all such nodes $u$, starting with the ones at the bottom of the tree, and update these values in the following manner. Suppose $u$ is an internal node in $\mathcal T$ which stores the point $x_u$, and let $v$ and $w$ be the left and right children of $u$ respectively. Then we have that $q(\mathcal T_u) = q_u + q(\mathcal T_v) + q(\mathcal T_w)$, so we can find $q(\mathcal T_u)$ in $O(1)$ time.
Furthermore, we have that $u^- = v^-$, $u^+ = w^+$, and
$$ \phi(u) = \phi(v) + \phi(w) + q(\mathcal T_v) (d_{w^+} - d_{v^+}) + q_u(d_{w^+} - d_u). $$
Note that if $u$ has no left (resp. right) child, then $u^- = u$, $u^+ = w^+$, and $\phi(u) = \phi(w) + q_u(d_{w^+} - d_u)$ (resp. $u^- = v^-$, $u^+ = u$, and $\phi(u) = \phi(v) + q(\mathcal T_v)(d_{u} - d_{v^+}$)). Hence, we can update all of the information stored at the node $u$ in $O(1)$ time as long as the correct information is stored at the nodes $v$ and $w$. In total, it takes $O(\log n)$ time to update the tree $\mathcal T$.

\subsection{Implementation of \RadiiMP}\label{sec: search DS maintaining r}

We now show how to use this binary search data structure $\mathcal R$ in order to implement $\RadiiMP$. Let $(V,w,d)$ be the dynamic metric space which evolves via a sequence of point insertions and deletions. At each point in time, we maintain a data structure $\mathcal R_i$ for each $i \in V$. Furthermore, for each $i \in V$, $\mathcal R_i$ contains each point $j \in V$ with $d_j = d^{(\beta)}(i, j)$ and $q_j = w(j)$. Upon the insertion of a point $i$, we add $i$ to $\mathcal R_j$ with $d_i = d^{(\beta)}(i, j)$ and $q_i = w(i)$ for all $j \in V$, create a new data structure $\mathcal R_i$, and insert each $j \in V$ into $\mathcal R_i$ with $d_j = d^{(\beta)}(i, j)$ and $q_j = w(j)$ for all $j \in V$. This entire process takes $O(n \log n)$. Similarly, upon the deletion of a point $i \in V$, we remove $i$ from $\mathcal R_j$ for all $j \in V$ and delete the data structure $\mathcal R_i$, again taking $O(n \log n)$ time.

\medskip
\noindent \textbf{Computing $r^{(\beta)}_i$.} To see how we can obtain the value of $r^{(\beta)}_i$ from $\mathcal R_i$, note that calling \textsc{Search-Query}$(u_i, \lambda, 0)$ on $\mathcal R_i$, where $u_i$ is the root of the tree in the data structure $\mathcal R_i$, returns the node $u^\star$ corresponding to the point $i^\star$ and the value $\gamma$ such that
$$ \gamma = \sum_{j \in B^{(\beta)}(i, r)} w(j) \! \left(r - d^{(\beta)}(i, j) \right) < \lambda,$$
where $r = d^{(\beta)}(i^\star, i)$, and for any point $i'$ such that $d^{(\beta)}(i', i) > d^{(\beta)}(i^\star, i)$, we have that 
$$\sum_{j \in B^{(\beta)}(i, d^{(\beta)}(i', i))} w(j) \! \left(d^{(\beta)}(i', i) - d^{(\beta)}(i, j) \right) \geq \lambda.$$
It follows that
$$\sum_{j \in B^{(\beta)}(i, r)} w(j)\! \left(r^{(\beta)}_i - d^{(\beta)}(i, j) \right) = \sum_{j \in B^{(\beta)}(i, r^{(\beta)}_i)} w(j)\! \left(r^{(\beta)}_i - d^{(\beta)}(i, j) \right) = \lambda,$$
and so $r^{(\beta)}_i = r + (\lambda - \gamma)/ \sum_{j \in B^{(\beta)}(i, r)} w(j)$. Since $\mathcal R_i$ maintains a balanced tree of the points sorted by their distance from $i$, given the node that stores the point $i^\star$, we can compute $\sum_{j \in B^{(\beta)}(i, r)} w(j)$ in $O(\log n)$ time.
In order to do this, first note that  $\sum_{j \in B^{(\beta)}(i, r)} w(j) = \sum_{v \preceq u^\star} q_{v}$.\footnote{Here, the ordering $\preceq$ is taken with respect to the tree stored by $\mathcal R_i$.} Then, using the tree $\mathcal T$ stored by $\mathcal R_i$, we compute this value recursively by calling Algorithm~\ref{alg: weight mass} on the input $(u_i, u^\star)$. It's easy to see that this algorithm runs in $O(\log n)$ time and returns the correct value.
Hence, given the node $u^\star$ and the value of $\gamma$, we can find $r^{(\beta)}_i$ in $O(\log n)$ time.

\begin{algorithm}[H]\label{alg: weight mass}
    \SetAlgoLined
    \DontPrintSemicolon
    Let $v$ and $w$ be the left and right children of $u$ respectively\;

    \If{ $u \preceq u^\star$ }{
        \Return $q_{u^\star} + q(\mathcal T_v) + \textsc{Compute-Weight}(w, u^\star)$\;
    }
    \Return \textsc{Compute-Weight}$(v, u^\star)$\;
    \caption{\textsc{Compute-Weight}$(u, u^\star)$}
\end{algorithm}

\medskip
\noindent \textbf{Computing $C^{(\beta)}_i$.} We can easily extend the data structure so that we can also find $C^{(\beta)}_i$ in $O(\log n)$ time. We do this by modifying the balanced binary tree $\mathcal T$ so that, at each node $u \in \mathcal T$, it stores 
$$\psi(u) := \sum_{v \in \mathcal T_u} q_v d_v \text{\qquad and \qquad} \psi^2(u) := \sum_{v \in \mathcal T_u} q_v d_v^2.$$
It's easy to see that we can also efficiently update these values in the same way as the values of $q(\mathcal T_u)$ stored by the nodes. Furthermore, by using analogous algorithms to the one presented in Algorithm~\ref{alg: weight mass}, we can easily use these values to find $\sum_{v \preceq u} q_v d_v$ and $\sum_{v \preceq u} q_v d_v^2$ in $O(\log n)$ for any node $u$ in $\mathcal T$.

By noticing that the point $i^\star$ corresponding to the node $u^\star$ returned from the call to \textsc{Search-Query}$(u_i, \lambda, 0)$ on $\mathcal R_i$ satisfies
$$C^{(\beta)}_i = \frac{1}{\lambda} \cdot \sum_{j \in B^{(\beta)}(i, r^{(\beta)}_i)} w(j) \! \left(r^{(\beta)}_i - d^{(\beta)}(i, j) \right) \cdot w(i)d^{(\beta)}(i,j)  $$
$$ = \frac{w(i)}{\lambda} \cdot \left( r^{(\beta)}_i \cdot \sum_{j \in B^{(\beta)}(i, r^{(\beta)}_i)} w(j) d^{(\beta)}(i, j) - \sum_{j \in B^{(\beta)}(i, r^{(\beta)}_i)} w(j) \! \left(d^{(\beta)}(i,j)\right)^2 \right)$$ 
$$= \frac{w(i)}{\lambda} \cdot \left( r^{(\beta)}_i \cdot \sum_{v \preceq u^\star} q_v d_v -  \sum_{v \preceq u^\star} q_v d_v^2 \right), $$
we can find $C^{(\beta)}_i$ in $O(\log n)$ time after finding the value of $r^{(\beta)}_i$. 


\section{An LMP Algorithm for UFL}\label{sec:frac LMP alg}

In this section, we present the dynamic data structure \textsc{FracLMP}. Given some dynamic metric space $(V, w, d)$ undergoing point insertions and deletions, the data structure \textsc{FracLMP} supports queries that, given any facility opening cost $\lambda > 0$, return an LMP $4$-approximation for UFL with facility opening cost $\lambda$. In order to implement these queries, we use a modified variant of an algorithm by Mettu-Plaxtion~\cite{MettuP00}, which we call the fractional MP algorithm. We show that this gives a simple fractional algorithm for UFL, which is LMP $4$-approximate. We note that this LMP variant is very similar to the one in \cite{esa/ArcherRS03}.

\subsection{The Fractional Mettu-Plaxton Algorithm}

Let $(V,w,d)$ be a metric space. The fractional MP algorithm constructs an LMP $4$-approxiamtion for UFL with facility opening cost $\lambda$ as described in the following pseudocode.

\begin{algorithm}[H]\label{alg: fractional MP}
    \SetAlgoLined
    \DontPrintSemicolon
    For $i \in V$, set $y_i \leftarrow (1 / \lambda) \cdot w(i) \, r^{(1/4)}_i$\;
    For $i,j \in V$, set $x_{j \to i} \leftarrow (1 / \lambda) \cdot \max\left(0, w(i) \! \left( r^{(1/4)}_j - d^{(1/4)}(i, j) \right)\right)$\;
    \caption{\textsc{FractionalMP}$(V,d,w,\lambda)$}
\end{algorithm}


\begin{theorem}
\label{th:MP:primary}
Algorithm~\ref{alg: fractional MP} returns a fractional LMP $4$-approximation for UFL. 
\end{theorem}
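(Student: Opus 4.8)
The plan is to verify two things: (i) that the variables $\{y_i,x_{j\to i}\}$ output by \textsc{FractionalMP} form a \emph{feasible} fractional solution to the UFL primal LP, and (ii) that this solution satisfies the LMP inequality $4\cdot F + C \le 4\cdot\foptf_\lambda$, where $F=\lambda\sum_i y_i$ is the total (fractional) facility-opening cost and $C=\sum_{i,j}w(j)d(i,j)x_{j\to i}$ is the total connection cost. For (i), note $y_i = w(i)r_i^{(1/4)}/\lambda \ge 0$ and $x_{j\to i}=\max(0,w(i)(r_j^{(1/4)}-d^{(1/4)}(i,j)))/\lambda\ge 0$, so nonnegativity is immediate; the covering constraint $\sum_i x_{j\to i}\ge 1$ is exactly the defining inequality~\eqref{eq:def:r} of $r_j^{(1/4)}$ (with $\beta=1/4$), since $\sum_{i\in V} w(i)\max(0,r_j^{(1/4)}-d^{(1/4)}(j,i)) = \sum_{i\in B^{(1/4)}(j,r_j^{(1/4)})} w(i)(r_j^{(1/4)}-d^{(1/4)}(j,i)) \ge \lambda$, and dividing by $\lambda$ gives $\sum_i x_{j\to i}\ge 1$; and the constraint $x_{j\to i}\le y_i$ holds because $w(i)(r_j^{(1/4)}-d^{(1/4)}(i,j)) \le w(i)\,r_j^{(1/4)} + $ (need to be careful here) — actually the cleaner route is $x_{j\to i} = \max(0, w(i)(r_j^{(1/4)}-d^{(1/4)}(i,j)))/\lambda$, and I must check $r_j^{(1/4)}-d^{(1/4)}(i,j)\le r_i^{(1/4)}$; this is a standard "bounded-difference" property of the MP radii, proved by contradiction using the ball-growing interpretation (if $r_j^{(1/4)} > r_i^{(1/4)} + d^{(1/4)}(i,j)$ then the ball of that radius around $j$ contains everything the ball of radius $r_i^{(1/4)}$ around $i$ contains, shifted by at most $d^{(1/4)}(i,j)$, and one derives that the contribution sum around $i$ already exceeds $\lambda$ before radius $r_i^{(1/4)}$). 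I would state and prove this radius-Lipschitz lemma as the first step.

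Next I would set up the dual-fitting argument for the LMP bound. The natural dual certificate is $v_i := w(i)\,r_i^{(1/4)}$ together with $w_{j\to i} := \max(0, w(i)(r_j^{(1/4)} - \tfrac14 d(i,j)) ) \cdot (\text{something})$ — more precisely, following Archer--Remy--Sviridenko~\cite{esa/ArcherRS03} and Mettu--Plaxton, one sets $v_i = w(i) r_i^{(1/4)}$ and $w_{j\to i} = \max(0, v_j - w(j)d(i,j))$ appropriately scaled. I would verify that $\{v_i, w_{j\to i}\}$ is feasible for the UFL \emph{dual} LP~\eqref{eq:dual:lp:r}: constraint~\eqref{eq:dual:lp:2:r} $v_j \le w(j)d(i,j)+w_{j\to i}$ is immediate from the definition of $w_{j\to i}$ as a truncated difference, and constraint~\eqref{eq:dual:lp:1:r} $\sum_j w_{j\to i}\le\lambda$ reduces (again via the ball interpretation and the definition~\eqref{eq:def:r} of $r_i^{(1/4)}$) to the statement that the total "overflow mass" at $i$ is at most $\lambda$. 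Then weak LP duality gives $\sum_i v_i \le \foptf_\lambda$, wait — I want the reverse-flavored bound, so actually the key numerical claim is $4F + C \le 4\sum_i v_i \le 4\,\foptf_\lambda$. Establishing $4F + C \le 4\sum_i v_i$ is a direct computation: $F = \sum_i w(i)r_i^{(1/4)} = \sum_i v_i$ by definition of $y_i$, and then one must show $C \le 3\sum_i v_i$, i.e. the total connection cost is at most $3$ times the total radius-mass. This last inequality is where the $C^{(\beta)}_i$ quantities from~\eqref{eq:def:C} and the metric rescaling by $\beta=1/4$ enter: the factor $1/4$ in $d^{(1/4)}$ is chosen precisely so that a client's connection cost is comparable to $r$-values along a short path, and the "$4$" in the LMP ratio comes from $1 + 3$ (opening $+$ connecting) after the rescaling is unwound.

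The main obstacle, I expect, is the connection-cost bound $C \le 3\sum_i v_i$ — equivalently bounding $\sum_{i,j} w(j)d(i,j)x_{j\to i}$ against $\sum_i w(i)r_i^{(1/4)}$. The subtlety is that a client $j$ is fractionally served by \emph{all} facilities $i$ with $d^{(1/4)}(i,j) < r_j^{(1/4)}$, each to extent proportional to $r_j^{(1/4)} - d^{(1/4)}(i,j)$, so $C = \sum_j \sum_{i} w(j) d(i,j) \cdot w(i)(r_j^{(1/4)}-d^{(1/4)}(i,j))/\lambda$; after substituting $d^{(1/4)}(i,j) = \tfrac14 d(i,j)$ this becomes $4\sum_j w(j)\sum_i w(i) d^{(1/4)}(i,j)(r_j^{(1/4)}-d^{(1/4)}(i,j))/\lambda$, and the inner sum over $i$ is exactly $\lambda \cdot C^{(1/4)}_j / w(j)$ by~\eqref{eq:def:C} (up to the role of the $w(j)$ factor), so $C = 4\sum_j C^{(1/4)}_j$. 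Thus the heart of the proof is the per-point inequality $4 C^{(1/4)}_j \le 3\,w(j)r_j^{(1/4)}$, or more robustly a global averaging argument showing $\sum_j C_j^{(1/4)} \le \tfrac34\sum_j w(j)r_j^{(1/4)}$. I would prove the per-point bound directly from~\eqref{eq:def:r} and~\eqref{eq:def:C}: writing $m_j(r) := \sum_{i\in B^{(1/4)}(j,r)} w(i)$ for the mass function, one has $\lambda = \int_0^{r_j^{(1/4)}} m_j(r)\,dr$ and $C_j^{(1/4)} = \tfrac1\lambda\int_0^{r_j^{(1/4)}} m_j(r)\, r\, dr \cdot (\text{check constants})$, so $\lambda C_j^{(1/4)} = \int_0^{r_j} m_j(r) r\,dr \le r_j \int_0^{r_j} m_j(r)\,dr = \lambda r_j$, giving $C_j^{(1/4)}\le r_j^{(1/4)}$ — and a sharper weighting recovers the $3/4$; combined with the opening cost this yields $4F+C\le 4F + 4\cdot\tfrac34 F \cdot(\dots)$, which after the bookkeeping gives the claimed $4\,\foptf_\lambda$. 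If the clean constant turns out to require the $\{v_i,w_{j\to i}\}$ dual-feasibility route rather than the crude integral bound, I would fall back on that; either way the structure is: radius-Lipschitz lemma $\Rightarrow$ primal feasibility; ball-mass identity $\Rightarrow$ $F = \sum_i v_i$ and $C = 4\sum_j C_j^{(1/4)}$; integral/dual estimate $\Rightarrow$ $C \le 3\sum_i v_i$; weak duality $\Rightarrow$ $4F+C \le 4\,\foptf_\lambda$.
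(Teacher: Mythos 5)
Your primal-feasibility part is fine and matches the paper: the covering constraint $\sum_i x_{j\to i}\ge 1$ comes straight from the definition of $r_j^{(1/4)}$, and $x_{j\to i}\le y_i$ follows from the radius-Lipschitz property $r_j^{(\beta)}\le r_i^{(\beta)}+d^{(\beta)}(i,j)$ (the paper's Observations on $w^{(\beta)}_{j\to i}$). The identities $F=\sum_i w(i)r_i^{(1/4)}$ and $C=4\sum_j C_j^{(1/4)}$ are also correct.

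The gap is in the LMP bound, and it is not just a matter of bookkeeping. Your dual certificate lives at the same scale $\beta=1/4$ as the algorithm, i.e.\ $v_i=w(i)r_i^{(1/4)}$, so $\sum_i v_i=F$ exactly; then the inequality you aim for, $4F+C\le 4\sum_i v_i$, forces $C\le 0$, and the fallback ``$C\le 3\sum_i v_i$'' only yields $4F+C\le 7F$, which is not an LMP statement at all (knowing $F\le\foptf_\lambda$ and $C\le O(1)\cdot F$ gives no bound of the form $4F+C\le 4\foptf_\lambda$). Moreover the sharper per-point bound you hope for, $4C_j^{(1/4)}\le 3\,w(j)r_j^{(1/4)}$, is false: if all of the mass defining $r_j^{(1/4)}$ sits at distance just below $r_j^{(1/4)}$ from $j$, then $C_j^{(1/4)}$ approaches $w(j)r_j^{(1/4)}$, so the only per-point bound available at scale $1/4$ is $C_j^{(1/4)}\le w(j)r_j^{(1/4)}$, which is too weak. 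The missing idea in the paper is a \emph{two-scale} dual fitting: the feasible dual is built from the radii at the coarser scale $\beta=1/2$, namely $\tilde v_j=w(j)r_j^{(1/2)}$ and $\tilde w_{j\to i}=\max\bigl(0,\,w(j)(r_j^{(1/2)}-d(i,j))\bigr)$ (Lemma~\ref{lm:dual:fitting:1}, whose feasibility uses exactly your Lipschitz lemma, with the full distance $d=2d^{(1/2)}$ providing the slack), together with the per-client inequality $w(j)r_j^{(1/4)}+C_j^{(1/4)}\le w(j)r_j^{(1/2)}$ (Lemma~\ref{lm:dual:fitting:2}). That inequality is the real technical core and does not follow from your integral/averaging estimate: the paper proves it by first restricting to the ball of radius $r_j^{(1/2)}$ around $j$ via a monotonicity lemma (Lemma~\ref{lm:monotone}: deleting points can only increase a client's opening-plus-connection cost) and then comparing the two scales inside that ball with a Cauchy--Schwarz computation. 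Once that per-point bound is in hand, summing gives $4F+C=4\sum_j\bigl(w(j)r_j^{(1/4)}+C_j^{(1/4)}\bigr)\le 4\sum_j w(j)r_j^{(1/2)}\le 4\foptf_\lambda$ by weak duality. Without introducing the second scale (or an equivalent strengthening of the dual), your outline cannot reach the factor $4$.
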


\subsection{Analysis of Algorithm~\ref{alg: fractional MP}}

To analyze Algorithm~\ref{alg: fractional MP} and show that the solution $\{y_i, x_{j \to i}\}$ is LMP $4$-approximate, we construct a feasible dual solution and apply dual fitting.
Towards this end, for every $i, j \in V$ and $\beta \geq 0$, define $w^{(\beta)}_{j \to i} := \max\left(0, w(i) \! \left( r^{(\beta)}_j - d^{(\beta)}(i, j) \right)\right)$.\footnote{Note that these correspond to the dual variables $w_{j \to i}$, as defined in~\Cref{sec:LPs}.} Note that $x_{j \to i} = (1 / \lambda) \cdot w^{(1/4)}_{j \to i}$. We can now make the following observations.

\begin{observation}\label{obs: MP anal 1}
    For all $j \in V$, $\beta \geq 0$, we have that $\sum_{i \in B^{(\beta)}(j, r^{(\beta)}_j)} w^{(\beta)}_{j \to i} = \lambda$.
\end{observation}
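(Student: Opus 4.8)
The plan is to read the identity off directly from the definition of $r^{(\beta)}_j$, with the only genuine work being a continuity argument that upgrades the defining inequality ``$\geq \lambda$'' to an equality. First I would observe that every $i \in B^{(\beta)}(j, r^{(\beta)}_j)$ satisfies $d^{(\beta)}(i,j) \leq r^{(\beta)}_j$, so $r^{(\beta)}_j - d^{(\beta)}(i,j) \geq 0$ and hence the maximum in $w^{(\beta)}_{j \to i} = \max\left(0,\, w(i)\!\left(r^{(\beta)}_j - d^{(\beta)}(i,j)\right)\right)$ is attained by its second argument; using also the symmetry $d^{(\beta)}(i,j) = d^{(\beta)}(j,i)$, this reduces the claim to showing $g\!\left(r^{(\beta)}_j\right) = \lambda$, where
\[
g(r) \;:=\; \sum_{i \in B^{(\beta)}(j, r)} w(i)\!\left(r - d^{(\beta)}(i,j)\right), \qquad r \geq 0.
\]

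Next I would establish that $g : [0,\infty) \to \mathbb{R}_{\geq 0}$ is continuous and nondecreasing (in fact strictly increasing, since $j$ itself always lies in $B^{(\beta)}(j, r)$ because $d^{(\beta)}(j,j) = 0$, contributing the term $w(j)\,r$ with $w(j) > 0$). Continuity is the point to be careful about: away from the finitely many threshold radii $\{d^{(\beta)}(i,j) : i \in V\}$, $g$ is a finite sum of affine functions, and as $r$ crosses such a threshold the newly-included summand $w(i)\!\left(r - d^{(\beta)}(i,j)\right)$ enters with value exactly $0$, so no jump occurs. Each already-active summand is nondecreasing in $r$, giving monotonicity.

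Finally, by definition $r^{(\beta)}_j$ is the smallest $r \geq 0$ with $g(r) \geq \lambda$, and this minimum is attained (as recorded in the footnote to~\eqref{eq:def:r}: it is the infimum of the preimage, under the continuous function $g$, of the closed set $[\lambda, \infty)$). If we had $g\!\left(r^{(\beta)}_j\right) > \lambda$, then by continuity $g(r) \geq \lambda$ for all $r$ in a neighbourhood of $r^{(\beta)}_j$, in particular for some $r < r^{(\beta)}_j$, contradicting minimality; hence $g\!\left(r^{(\beta)}_j\right) = \lambda$. Combining this with the reduction from the first paragraph yields $\sum_{i \in B^{(\beta)}(j, r^{(\beta)}_j)} w^{(\beta)}_{j \to i} = \lambda$. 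The main (and essentially only) obstacle is verifying continuity of $g$ at the threshold radii; everything else is bookkeeping.
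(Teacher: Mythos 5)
Your proof is correct and follows the same route as the paper, which simply asserts that the identity ``follows directly from the definitions'' of $w^{(\beta)}_{j\to i}$ and $r^{(\beta)}_j$; you spell out the details the paper leaves implicit, namely that the max is inactive inside the ball and that continuity of $r \mapsto \sum_{i \in B^{(\beta)}(j,r)} w(i)\bigl(r - d^{(\beta)}(i,j)\bigr)$ together with minimality of $r^{(\beta)}_j$ (and the attainment noted in the footnote to the definition) forces the defining inequality to hold with equality.
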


\begin{proof}
    Follows directly from the definitions of $w^{(\beta)}_{j \to i}$ and $r^{(\beta)}_i$.
\end{proof}

\begin{observation}\label{lem: r <= r + d}
    For all $i,i' \in V$, $\beta \geq 0$, we have that $r^{(\beta)}_{i} \leq r^{(\beta)}_{i'} + d^{(\beta)}(i,i')$.
\end{observation}

\begin{proof}
    Let $i,i' \in V$. Since $d^{(\beta)}$ is a metric, we have that $B^{(\beta)} (i, r + d^{(\beta)}(i, i') ) \supseteq B^{(\beta)} (i', r )$ and $d^{(\beta)}(i, i') - d^{(\beta)}(i, j) \geq - d^{(\beta)}(i', j)$ for all $j \in V$, $r \geq 0$. It follows that, for any $r \geq 0$
    $$ \sum_{j \in B^{(\beta)}(i, r + d^{(\beta)}(i, i'))} w(j) \!\left (r  + d^{(\beta)}(i, i') - d^{(\beta)}(i, j) \right) \geq \sum_{j \in B^{(\beta)}(i', r)} w(j) \! \left (r - d^{(\beta)}(i', j) \right). $$
    Hence, $r^{(\beta)}_i \leq r^{(\beta)}_{i'} + d^{(\beta)}(i,i')$.
\end{proof}

\begin{observation}\label{obs: MP anal 2}
    For all $i,j \in V$, $\beta \geq 0$, we have that $w(i)r^{(\beta)}_i \geq w^{(\beta)}_{j \to i}$.
\end{observation}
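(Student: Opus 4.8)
The plan is to reduce this observation directly to the two facts that $w$ is strictly positive and that the radii $r^{(\beta)}_\cdot$ satisfy the ``Lipschitz'' bound of Observation~\ref{lem: r <= r + d}. Recall that $w^{(\beta)}_{j \to i} = \max\left(0,\, w(i)\left(r^{(\beta)}_j - d^{(\beta)}(i,j)\right)\right)$, so to bound it from above by $w(i)r^{(\beta)}_i$ it suffices to dominate each of the two arguments of the max separately.

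First I would handle the trivial argument: since $w(i) > 0$ and $r^{(\beta)}_i \geq 0$ by definition, we immediately get $w(i)r^{(\beta)}_i \geq 0$. For the second argument, I would invoke Observation~\ref{lem: r <= r + d} with the pair $(j,i)$ in place of $(i,i')$, which gives $r^{(\beta)}_j \leq r^{(\beta)}_i + d^{(\beta)}(j,i)$; using the symmetry $d^{(\beta)}(j,i) = d^{(\beta)}(i,j)$ of the (scaled) metric, this rearranges to $r^{(\beta)}_j - d^{(\beta)}(i,j) \leq r^{(\beta)}_i$, and multiplying through by the positive quantity $w(i)$ yields $w(i)\left(r^{(\beta)}_j - d^{(\beta)}(i,j)\right) \leq w(i)r^{(\beta)}_i$.

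Finally, combining the two displays and taking the maximum over the two arguments, we conclude $w^{(\beta)}_{j \to i} = \max\left(0,\, w(i)\left(r^{(\beta)}_j - d^{(\beta)}(i,j)\right)\right) \leq w(i)r^{(\beta)}_i$, which is exactly the claimed inequality. There is essentially no obstacle here: the entire content is packaged in Observation~\ref{lem: r <= r + d}, and the only thing to be careful about is remembering that $w(i)>0$ so that scaling preserves the inequality and that the $\max$ is dominated as soon as both of its arguments are.
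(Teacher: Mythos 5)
Your proof is correct and follows essentially the same route as the paper: both arguments of the $\max$ are handled via nonnegativity of $w(i)r^{(\beta)}_i$ and via Observation~\ref{lem: r <= r + d} applied to the pair $(j,i)$, exactly as in the paper's case analysis. No gaps.
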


\begin{proof}
    If $r^{(\beta)}_j - d^{(\beta)}(i, j) \leq 0$, then $w^{(\beta)}_{j \to i} = 0 \leq r^{(\beta)}_j$. If $r^{(\beta)}_j - d^{(\beta)}(i, j) > 0$, then, by \Cref{lem: r <= r + d}, $w^{(\beta)}_{j \to i} = w(i)(r^{(\beta)}_j - d^{(\beta)}(i, j)) \leq w(i)(r^{(\beta)}_i + d^{(\beta)}(i, j) - d^{(\beta)}(i, j)) = w(i)r^{(\beta)}_i$.
\end{proof}

\noindent
It follows from Observations~\ref{obs: MP anal 1} and \ref{obs: MP anal 2} that $\{y_i, x_{j \to i}\}$ is a feasible solution to UFL.
In particular, by \Cref{obs: MP anal 1}, each point $j \in V$ has sufficient mass within its ball $B^{(1/4)}(j, r_j^{(1/4)})$ to get fully assigned (fractionally) to points within the concerned ball, so $\sum_{i \in V} x_{j \to i} = 1$, and by \Cref{obs: MP anal 2}, $y_i \geq x_{j \to i}$ for all $i,j \in V$.

For all $j \in V$ and  $\beta \geq 0$, we define $C_j^{(\beta)} := (1/\lambda) \cdot \sum_{i \in V} w^{(\beta)}_{j \to i} \cdot w(j) d^{(\beta)}(i, j)$. Thus, the connection cost incurred by a given client $j \in V$ under the MP algorithm (in the original metric $d$) equals $4 C^{(1/4)}_j = \sum_{i \in V} x_{j \to i} \cdot w(j) d(i,j)$. It follows that
the total cost incurred by the MP algorithm equals $\sum_{j \in V} \left( w(j)r_j^{(1/4)} + 4 C_j^{(1/4)}\right)$.


\subsubsection{Proof of \Cref{th:MP:primary}}
The proof of \Cref{th:MP:primary} now follows from the two lemmas stated below, which we prove in~\Cref{sec:lm:dual:fitting:1} and~\Cref{sec:lm:dual:fitting:2} respectively.

\begin{lemma}
\label{lm:dual:fitting:1}
For all $i, j \in V$, let $\tilde v_j := w(j)  r^{(1/2)}_j$ and $\tilde  w_{j \to i} := \max\left(0, w(j) \!\left(r^{(1/2)}_j - d(i, j)\right)\right)$. Then $\{ \tilde v_j, \tilde w_{j \to i} \}$ is a feasible dual solution.
\end{lemma}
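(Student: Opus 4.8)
The plan is to verify that the proposed dual solution $\{\tilde v_j, \tilde w_{j\to i}\}$ satisfies the two families of dual constraints~(\ref{eq:dual:lp:1:r}) and~(\ref{eq:dual:lp:2:r}), along with non-negativity~(\ref{eq:dual:lp:3:r}). Non-negativity is immediate from the definitions, since $\tilde v_j = w(j) r^{(1/2)}_j \geq 0$ and $\tilde w_{j\to i}$ is a maximum with $0$. Constraint~(\ref{eq:dual:lp:2:r}), namely $\tilde v_j \leq w(j) d(i,j) + \tilde w_{j\to i}$, is also a routine case analysis: if $r^{(1/2)}_j - d(i,j) > 0$ then $\tilde w_{j\to i} = w(j)(r^{(1/2)}_j - d(i,j))$ and the inequality holds with equality; if $r^{(1/2)}_j - d(i,j) \leq 0$ then $\tilde w_{j\to i} = 0$ and $\tilde v_j = w(j) r^{(1/2)}_j \leq w(j) d(i,j)$, so the inequality holds.

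The substantive step is constraint~(\ref{eq:dual:lp:1:r}): for every fixed $i \in V$ we must show $\sum_{j\in V} \tilde w_{j\to i} \leq \lambda$. First I would rewrite $\tilde w_{j\to i}$ in terms of the scaled metric $d^{(1/2)}$: note $\tilde w_{j\to i} = \max(0, w(j)(r^{(1/2)}_j - 2 d^{(1/2)}(i,j)))$, since $d(i,j) = 2 d^{(1/2)}(i,j)$. The key idea is to relate the sum over all $j$ with multiplier $w(j)$ of $(r^{(1/2)}_j - 2 d^{(1/2)}(i,j))^+$ back to the defining inequality for $r^{(1/2)}_i$, which says $\sum_{j \in B^{(1/2)}(i, r^{(1/2)}_i)} w(j)(r^{(1/2)}_i - d^{(1/2)}(i,j)) \geq \lambda$ (with the infimum $r^{(1/2)}_i$ making this an equality by continuity). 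The plan is to show the pointwise bound
\begin{equation}
\label{eq:dualfit:key}
\max\bigl(0,\, r^{(1/2)}_j - 2 d^{(1/2)}(i,j)\bigr) \;\leq\; \max\bigl(0,\, r^{(1/2)}_i - d^{(1/2)}(i,j)\bigr)
\end{equation}
for every $j \in V$. Given~(\ref{eq:dualfit:key}), summing against $w(j)$ yields $\sum_{j\in V}\tilde w_{j\to i} \leq \sum_{j \in B^{(1/2)}(i,r^{(1/2)}_i)} w(j)(r^{(1/2)}_i - d^{(1/2)}(i,j)) = \lambda$, which is exactly constraint~(\ref{eq:dual:lp:1:r}). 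To prove~(\ref{eq:dualfit:key}), I would use the triangle-inequality-type bound \Cref{lem: r <= r + d} with $\beta = 1/2$, which gives $r^{(1/2)}_j \leq r^{(1/2)}_i + d^{(1/2)}(i,j)$; hence $r^{(1/2)}_j - 2 d^{(1/2)}(i,j) \leq r^{(1/2)}_i - d^{(1/2)}(i,j)$, and taking $\max$ with $0$ on both sides (a monotone operation) gives~(\ref{eq:dualfit:key}). One small subtlety: \Cref{lem: r <= r + d} as stated in the excerpt is \Cref{lem: r <= r + d} (Observation) which holds for all $\beta \geq 0$, so invoking it at $\beta = 1/2$ is legitimate; I would double-check that the definition of $r^{(\beta)}$ used there matches the one in~(\ref{eq:def:r}).

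I expect the main obstacle to be purely bookkeeping around the scaling factor: the dual solution is phrased in the \emph{unscaled} metric $d$ for $\tilde w_{j\to i}$ but the radii $r^{(1/2)}_j$ are defined with respect to $d^{(1/2)}$, so one must be careful that $d(i,j) = 2 d^{(1/2)}(i,j)$ is applied consistently when translating~(\ref{eq:dual:lp:1:r}) into a statement about balls in the $d^{(1/2)}$ metric. Once the pointwise inequality~(\ref{eq:dualfit:key}) is in place, everything else is a one-line summation. I would also note for completeness that the infimum defining $r^{(1/2)}_i$ is attained (as already argued in the footnote to~(\ref{eq:def:r})), so the defining relation can indeed be used as an equality $\sum_{j \in B^{(1/2)}(i,r^{(1/2)}_i)} w(j)(r^{(1/2)}_i - d^{(1/2)}(i,j)) = \lambda$ rather than merely $\geq \lambda$; only the $\geq$ direction is actually needed here, so even this point is not essential.
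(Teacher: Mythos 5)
Your proof is correct and follows essentially the same route as the paper: constraints~(\ref{eq:dual:lp:2:r}) and~(\ref{eq:dual:lp:3:r}) by direct inspection, and constraint~(\ref{eq:dual:lp:1:r}) by applying \Cref{lem: r <= r + d} with $\beta = 1/2$ to get $r^{(1/2)}_j - d(i,j) \leq r^{(1/2)}_i - d^{(1/2)}(i,j)$ pointwise and then summing, using that $\sum_{j \in B^{(1/2)}(i, r^{(1/2)}_i)} w(j)\bigl(r^{(1/2)}_i - d^{(1/2)}(i,j)\bigr) = \lambda$. The only small slip is your closing remark that ``only the $\geq$ direction is actually needed'': the summation step actually requires that this sum at radius $r^{(1/2)}_i$ not exceed $\lambda$, i.e.\ the equality (which follows from continuity and minimality of $r^{(1/2)}_i$, exactly as you use it in the main argument), so the remark is inaccurate but harmless.
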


\begin{lemma}
\label{lm:dual:fitting:2}
For each $j \in V$, we have $w(j)r_j^{(1/4)} + C_j^{(1/4)} \leq w(j)r_j^{(1/2)}$. 
\end{lemma}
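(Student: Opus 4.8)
The plan is to prove the per-point inequality by reducing it, via the combinatorial structure of the radii $r^{(\beta)}_j$, to a single clean one-variable convexity statement. Fix $j \in V$ and write $\rho := r^{(1/4)}_j$, $e_i := d^{(1/4)}(i,j) = \tfrac14 d(i,j)$, and $B := B^{(1/4)}(j,\rho) = \{ i \in V : e_i \le \rho \}$. By Observation~\ref{obs: MP anal 1} (with $\beta = 1/4$), together with the minimality/continuity in the definition of $r^{(1/4)}_j$, we have $\sum_{i \in B} w(i)(\rho - e_i) = \lambda$. Dividing the definition of $C^{(1/4)}_j$ by $w(j)$ gives $\sigma := C^{(1/4)}_j / w(j) = \tfrac1\lambda \sum_{i \in B} w(i)(\rho - e_i)\,e_i$, which is a convex combination of the values $\{e_i : i \in B\}$ and hence satisfies $0 \le \sigma < \rho$. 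Since $w(j) r^{(1/4)}_j + C^{(1/4)}_j = w(j)(\rho + \sigma)$ and $w(j) > 0$, the lemma is equivalent to the bound $\rho + \sigma \le r^{(1/2)}_j$.

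Next I would introduce the function $f(r) := \sum_{i \in V} w(i)\max(0, r - e_i) = \sum_{i : e_i \le r} w(i)(r - e_i)$, which is continuous, convex (a finite sum of convex functions), and strictly increasing, since its right-derivative is $W(r) := \sum_{i : e_i \le r} w(i) \ge w(j) > 0$ for all $r \ge 0$ (the point $j$ itself has $e_j = 0$). By construction $f(0) = 0$ and $f(\rho) = \lambda$. Rewriting the defining condition of $r^{(1/2)}_j$ in terms of $f$ — using $d^{(1/2)}(i,j) = 2e_i$, the ball $B^{(1/2)}(j, r) = \{ i : e_i \le r/2 \}$, and the identity $\sum_{i : e_i \le r/2} w(i)(r - 2e_i) = 2 f(r/2)$ — shows that $r^{(1/2)}_j = 2 t^\star$, where $t^\star$ is the unique point with $f(t^\star) = \lambda/2$. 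Because $f$ is strictly increasing, the target $\rho + \sigma \le 2 t^\star$ is in turn equivalent to $f\!\left(\tfrac{\rho+\sigma}{2}\right) \le \tfrac{\lambda}{2}$. To bring $m := \tfrac{\rho+\sigma}{2}$ into a usable form, I would note that interchanging the (finite) sum with the integral gives $\sum_{i \in B} w(i)(\rho - e_i)^2 = 2\int_0^\rho f(s)\,ds$, while a direct expansion gives $\lambda(\rho - \sigma) = \sum_{i \in B} w(i)(\rho - e_i)^2$; combining these, $m = \rho - \tfrac1\lambda \int_0^\rho f(s)\,ds$.

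Rescaling by setting $g(t) := f(\rho t)/\lambda$ on $[0,1]$ produces a convex, strictly increasing function with $g(0) = 0$ and $g(1) = 1$; writing $c := \int_0^1 g(t)\,dt = \tfrac1{\rho\lambda}\int_0^\rho f$, we get $m/\rho = 1 - c$, and $c \in (0, 1/2]$ (the upper bound because $g$ lies below its chord $t \mapsto t$). Thus everything reduces to the following statement, which I expect to be the crux of the argument: \emph{for any convex, nondecreasing $g : [0,1] \to [0,1]$ with $g(0) = 0$, $g(1) = 1$ and $c := \int_0^1 g > 0$, one has $g(1-c) \le 1/2$.} To prove it, let $\phi := g(1-c)$ (assume $\phi > 0$, else there is nothing to show) and let $\mu$ be a subgradient of $g$ at the interior point $1-c \in [1/2, 1)$. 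The supporting line $\ell(t) := \phi + \mu\bigl(t - (1-c)\bigr)$ satisfies $\ell \le g$ on all of $[0,1]$; evaluating at $t = 0$ and using $g(0) = 0$ forces $\mu \ge \phi/(1-c) > 0$, so $\ell$ has its unique root $t_r = (1-c) - \phi/\mu$ inside $[0, 1-c)$. Since $g \ge \max(0, \ell)$ pointwise, integrating the triangle under $\ell$ over $[t_r, 1]$ yields $c = \int_0^1 g \ge \tfrac12 (1 - t_r)\,\ell(1) = \dfrac{(\phi + \mu c)^2}{2\mu}$. Setting $u := \mu c > 0$, this rearranges to $u^2 + 2(\phi - 1)u + \phi^2 \le 0$, which can hold for a positive $u$ only if the discriminant $4(1 - 2\phi)$ is nonnegative, i.e.\ $\phi \le 1/2$. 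This proves the claim, hence $f(m) \le \lambda/2$, hence $\rho + \sigma \le r^{(1/2)}_j$, hence the lemma. The main obstacle is precisely this last inequality: the obvious estimates (bounding $g$ above or below by chords, or bounding the two "halves" of $\int_0^1 g$ by crude monotonicity) all turn out to be too lossy, and one genuinely needs to combine the convexity of $g$ with the exact value $\int_0^1 g = c$ — the supporting-line-plus-discriminant device above is one economical way to do this. Everything else is routine bookkeeping (interchange of sum and integral, integration by parts, and strict monotonicity of $f$).
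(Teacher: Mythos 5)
Your proof is correct, and it takes a genuinely different route from the paper's. The paper first passes from the whole space to the ball $V^{\star} = B^{(1/2)}(j, r_j^{(1/2)})$, invoking a separate Monotonicity Lemma (Lemma~\ref{lm:monotone}, proved independently and used only here) to argue that deleting the points outside this ball can only increase $w(j)r_j^{(1/4)} + C_j^{(1/4)}$; on $V^{\star}$ both sides of the lemma become closed-form expressions in $\lambda$, the total weight $m^{\star}$ and the distances $d(i,j)$, and a single application of Cauchy--Schwarz finishes. You instead stay in the full space and encode everything in the one-variable function $f(r) = \sum_{i} w(i)\max(0, r - d(i,j)/4)$, using $f(r_j^{(1/4)}) = \lambda$ and $r_j^{(1/2)} = 2 f^{-1}(\lambda/2)$, and your identities $\lambda(\rho-\sigma) = \sum_{i \in B} w(i)(\rho - e_i)^2 = 2\int_0^{\rho} f(s)\,ds$ all check out, as do the continuity/strict-monotonicity facts (which rely on $w(j)>0$, as you note). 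This reduces the lemma to the extremal statement that a convex nondecreasing $g$ on $[0,1]$ with $g(0)=0$, $g(1)=1$ and mean $c$ satisfies $g(1-c) \le 1/2$, which your supporting-line/discriminant argument proves correctly (the subgradient exists since $1-c$ is interior, and a positive root of the quadratic forces a nonnegative discriminant). What each approach buys: yours avoids both the restriction step and the Monotonicity Lemma entirely, at the price of the convexity lemma, which is tight (e.g.\ $g(t)=t$) and is the real content of your argument; the paper's is more mechanical once the restriction is justified, and its Monotonicity Lemma is a structural fact about the Mettu--Plaxton quantities that could be of independent use. Both arguments ultimately rest on the same kind of quadratic inequality (Cauchy--Schwarz versus a nonnegative discriminant).
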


\noindent It follows from Lemma~\ref{lm:dual:fitting:2} that
$$ 4 \lambda \cdot \sum_{i \in V} y_i + \sum_{i,j \in V} w(j) d(i,j) \cdot x_{j \to i} = 4 \cdot \sum_{j \in V} \left( w(j)r_j^{(1/4)} + C_j^{(1/4)}\right) \leq 4 \cdot \sum_{j \in V} w(j)r^{(1/2)}_j.$$
By Lemma~\ref{lm:dual:fitting:1}, $\sum_{j \in V} w(j) r^{(1/2)}_j$ is the objective of a feasible dual solution and hence is at most $\foptf_\lambda$. It follows that $\{y_i, x_{j \to i}\}$ is a fractional LMP $4$-approximation.

\subsubsection{Proof of~\Cref{lm:dual:fitting:1}}
\label{sec:lm:dual:fitting:1}

In order to show that $\{ \tilde v_j, \tilde w_{j \to i} \}$ is a feasible dual solution, we show that these variables satisfy constraints $(\ref{eq:dual:lp:1:r})$, $(\ref{eq:dual:lp:2:r})$, and $(\ref{eq:dual:lp:3:r})$. Clearly, the variables are all non-negative, and
$$ \tilde w_{j \to i} + w(j)d(i, j) = \max\left(0, w(j) \!\left(r^{(1/2)}_j - d(i, j)\right)\right) + w(j)d(i, j)$$ 
$$\geq w(j) \! \left(r^{(1/2)}_j - d(i, j)\right) + w(j)d(i, j) = w(j)r^{(1/2)}_j = \tilde v_j, $$
so constraints $(\ref{eq:dual:lp:2:r})$ and $(\ref{eq:dual:lp:3:r})$ are satisfied. To see that $(\ref{eq:dual:lp:1:r})$ is satisfied, we first note that, by Observation~\ref{lem: r <= r + d}, for all $i \in V$ we have that
$$r^{(1/2)}_{j} - d(i, j) = r^{(1/2)}_{j} - 2 \cdot d^{(1/2)}(i, j) \leq r^{(1/2)}_{i} - d^{(1/2)}(i,j).$$
It follows that
$$ \sum_{j \in V} \tilde w_{j \to i} = \sum_{j \in V} \max\left(0, w(j) \!\left(r^{(1/2)}_j - d(i, j)\right)\right) \leq \sum_{j \in V} \max\left(0, w(j) \!\left(r^{(1/2)}_i - d^{(1/2)}(i, j)\right)\right) = \lambda. $$

\subsubsection{Proof of~\Cref{lm:dual:fitting:2}}
\label{sec:lm:dual:fitting:2}

Throughout this proof, we fix a point $j \in V$.
To simplify notation, we will write $r_j, 
C_j$ instead of $r_j^{(1/2)}, 
C_j^{(1/2)}$. 
Let $V^\star := B(j, r_j)$. Consider a new metric space $(V^\star, d^\star)$, where $d^\star(i, j) = d(i, j)/4$ for all $i, j \in V^\star$.\footnote{$d^\star$ is the metric $d^{(1/4)}$ restricted to $V^\star$}
Let $r^{\star}_j, 
C^{\star}_j$ be the values of $r_j,
C_j$ defined with respect to the metric space $(V^{\star}, d^{\star})$. At the end of \Cref{sec:lm:dual:fitting:2}, we show that $w(j)r_j^{(1/4)} + C_j^{(1/4)} \leq w(j)r^{\star}_j + C^{\star}_j$ (see \Cref{lm:monotone}). Thus, from this point onward, our goal is to show that
\begin{equation}
\label{eq:toshow}
w(j)r^{\star}_j + C^{\star}_j \leq w(j)r_j.
\end{equation}
We start by making the following observations.
\begin{eqnarray}
\label{eq:toshow:1}
\sum_{i \in V^{\star}} w(i)\! \left(r_j - \frac{d(i, j)}{2} \right)  =  \lambda & & \\
\sum_{i \in V^{\star}} w(i)\! \left(r_j^{\star} - \frac{d(i, j)}{4} \right)  =  \lambda & &  \label{eq:toshow:2} \\
C^{\star}_j  =  \frac{1}{\lambda} \cdot \sum_{i \in V^{\star}} w(i)\! \left(r_j^{\star} - \frac{d(i, j)}{4} \right) \frac{w(j)d(i, j)}{4} & & \label{eq:toshow:3} 
\end{eqnarray}
Let $m^{\star} = \sum_{i \in V^\star} w(i)$. From~(\ref{eq:toshow:1}),~(\ref{eq:toshow:2}) and~(\ref{eq:toshow:3}), we infer that
\begin{eqnarray}
\label{eq:toshow:4}
r_j & = & \frac{1}{m^{\star}} \left( \lambda + \sum_{i \in V^{\star}} \frac{w(i)d(i, j)}{2} \right) \\
\label{eq:toshow:5}
r^{\star}_j + \frac{C^{\star}_j}{w(j)} & = & \frac{1}{m^{\star} \lambda} \left( \lambda + \sum_{i \in V^{\star}} \frac{w(i)d(i, j)}{4} \right)^2 - \frac{1}{\lambda} \sum_{i \in V^{\star}} w(i)\!\left( \frac{d(i, j)}{4} \right)^2
\end{eqnarray}
Now, we multiply both of the above inequalities by $m^{\star}$, and obtain
\begin{eqnarray}
\label{eq:toshow:6}
m^{\star} r_j & = & \lambda+\sum_{i \in V^{\star}} \frac{w(i)d(i,j)}{2} \\ 
\label{eq:toshow:7}
m^{\star}\!\left( r^{\star}_j + \frac{C^{\star}_j}{w(j)}\right) & = & \frac{1}{\lambda} \left( \lambda+\sum_{i \in V^{\star} } \frac{w(i)d(i, j)}{4} \right)^2 - \frac{m^{\star}}{\lambda} \cdot \sum_{i \in V^{\star}}  w(i)\!\left( \frac{d(i,j)}{4} \right)^2
\end{eqnarray}
From~(\ref{eq:toshow:6}) and~(\ref{eq:toshow:7}), we derive that
\begin{eqnarray}
\lambda m^{\star} \!\left(r^{\star}_j + \frac{C^{\star}_j}{w(j)}\right) - \lambda m^{\star}  r_j 
& = & \left( \sum_{i \in V^{\star}} \frac{w(i)d(i, j)}{4} \right)^2 - m^{\star} \cdot \sum_{i \in V^{\star}} w(i)\!\left( \frac{d(i, j)}{4} \right)^2 \leq 0
\label{eq:toshow:8}
\end{eqnarray}
where (\ref{eq:toshow:8}) follows from applying Cauchy-Schwarz to the vectors $\left\langle\sqrt{w(i)}\right\rangle_i$ and $\left\langle\sqrt{w(i)} d(i,j)/4\right\rangle_i$.
This concludes the proof of~\Cref{eq:toshow}, which, in turn, implies~\Cref{lm:dual:fitting:2}.

\begin{lemma}[Monotonicity Lemma]
\label{lm:monotone}
Consider any metric space $(V,w,d)$, facility opening cost $\lambda > 0$, and any point $i' \in V$. Let $V':= V \setminus \{ i'\}$, and let $r'_j, w'_{j \to i}, C'_j$ be the values of $r_j^{(1)}, w_{j \to i}^{(1)}, C_j^{(1)}$ defined with respect to the metric subspace $(V', d, w)$. Then for all $j \in V'$, we have $w(j)r_j^{(1)} + C_j^{(1)} \leq w(j)r_j' + C_j'$.
\end{lemma}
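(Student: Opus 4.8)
The plan is to reduce the statement to a clean comparison of two ``ball-growing'' quantities and then settle it with a short convexity argument. First I would record the basic facts about the radii: writing $g_W(\rho) := \sum_{i \in W,\, d(i,j) \le \rho} w(i)\,(\rho - d(i,j))$ for a point set $W$, the function $g_W$ is continuous, vanishes at $\rho = 0$ (distinct points are at positive distance) and is strictly increasing on $(0,\infty)$, so $r := r_j^{(1)}$ and $r' := r_j'$ are the unique solutions of $g_V(r) = \lambda$ and $g_{V'}(r') = \lambda$. Since $g_V(\rho) = g_{V'}(\rho) + w(i')\max(0,\rho - d(i',j))$, we get $g_{V'}(r) = \lambda - w(i')\max(0, r-d(i',j)) \le \lambda$, hence $r' \ge r$, and moreover $g_{V'}(r') - g_{V'}(r) = w(i')\max(0, r-d(i',j))$.

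The main device is an algebraic identity. Using $g_V(r) = \lambda$ to substitute $\lambda r = \sum_{i \in V,\, d(i,j)\le r} w(i)\bigl(r^2 - r\,d(i,j)\bigr)$ inside $w(j) r + C_j^{(1)}$ and combining it with the sum defining $C_j^{(1)}$, one obtains
$$ w(j)\, r_j^{(1)} + C_j^{(1)} = \frac{w(j)}{\lambda}\, G_V(r), \qquad \text{where } G_W(\rho) := \sum_{i \in W,\, d(i,j)\le \rho} w(i)\bigl(\rho^2 - d(i,j)^2\bigr), $$
and likewise $w(j)\,r_j' + C_j' = \frac{w(j)}{\lambda}\, G_{V'}(r')$. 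So it suffices to prove $G_V(r) \le G_{V'}(r')$. Since $G_V(r) = G_{V'}(r) + w(i')\max\bigl(0, r^2 - d(i',j)^2\bigr)$, this is equivalent to
$$ w(i')\max\bigl(0,\, r^2 - d(i',j)^2\bigr) \;\le\; G_{V'}(r') - G_{V'}(r). $$

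For the right-hand side I would split the points of $V'$ at distance at most $r'$ from $j$ into those at distance $\le r$ and those at distance in $(r, r']$. For a point of the first type, $(r')^2 - r^2 = (r'-r)(r'+r) \ge 2r\,(r'-r)$; for a point of the second type, $(r')^2 - d(i,j)^2 = (r'-d(i,j))(r'+d(i,j)) \ge 2r\,(r'-d(i,j))$ since $d(i,j) > r$. Summing these term by term with weights $w(i)$ against the analogous decomposition of $g_{V'}(r') - g_{V'}(r)$ gives $G_{V'}(r') - G_{V'}(r) \ge 2r\bigl(g_{V'}(r') - g_{V'}(r)\bigr) = 2r\,w(i')\max(0, r - d(i',j))$. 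On the other hand, $w(i')(r^2 - d(i',j)^2) = w(i')(r-d(i',j))(r+d(i',j)) \le 2r\,w(i')(r-d(i',j))$ whenever $d(i',j) \le r$ (since then $r + d(i',j) \le 2r$), while the left-hand side is $0$ otherwise; chaining the two bounds yields the displayed inequality and hence the lemma. Note this runs uniformly, so no separate case analysis for $d(i',j) \ge r$ is needed (that case just produces equality).

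I expect the main obstacle to be two-fold. First, finding the right reformulation: the naive attempt to show $G_V(r) \le G_{V'}(r)$ fails, because removing a point only \emph{decreases} $G$ at a fixed radius, so the argument must couple the \emph{increase} of the radius from $r$ to $r'$ with the mass $w(i')\max(0, r-d(i',j))$ of the removed point, through the constraint relating $g_{V'}$ at $r$ and $r'$. Second, establishing the identity $w(j) r_j^{(1)} + C_j^{(1)} = \frac{w(j)}{\lambda} G_V(r)$ requires careful but routine bookkeeping of the ball-growing sums; once that is in hand, everything reduces to the elementary inequality $r + d(i',j) \le 2r$ and the pointwise bounds $(\rho_1)^2 - (\rho_2)^2 \ge 2r(\rho_1 - \rho_2)$ for $\rho_1 \ge \rho_2 \ge r$.
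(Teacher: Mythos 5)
Your proof is correct. The reduction is sound: the defining sums are continuous and (since $j$ itself lies in both $V$ and $V'$) strictly increasing in the radius, so $g_V(r)=\lambda=g_{V'}(r')$ hold with equality and $r\le r'$; the identity $w(j)\,r_j^{(1)}+C_j^{(1)}=\tfrac{w(j)}{\lambda}\sum_{i:\,d(i,j)\le r}w(i)\bigl(r^2-d(i,j)^2\bigr)$ follows by exactly the substitution you describe (multiply $g_V(r)=\lambda$ by $r$ and add the sum defining $C_j^{(1)}$, factoring $(r-d)(r+d)$), and your term-wise bounds $(r')^2-r^2\ge 2r(r'-r)$, $(r')^2-d^2\ge 2r(r'-d)$ for ring points, together with $r+d(i',j)\le 2r$ on the other side, chain correctly through $g_{V'}(r')-g_{V'}(r)=w(i')\max(0,r-d(i',j))$ to give $G_V(r)\le G_{V'}(r')$.

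This is a genuinely different route from the paper's. The paper fixes $\chi:=r'_j-r_j$, treats the case $i'\notin B(j,r_j)$ separately, expands the difference $\lambda\xi$ of the two costs directly, lower-bounds the cross terms by replacing $d(i,j)$ with $d(i',j)$ on the new ring $B'(j,r'_j)\setminus B(j,r_j)$, and closes with a ball-growing feasibility argument showing $\lambda-\sum_{i\in B(j,r_j)\setminus\{i'\}}\max\bigl(0,w(i)(d(i',j)-d(i,j))\bigr)\ge 0$. Your argument instead eliminates $C_j$ entirely via the quadratic reformulation $w(j)r+C=\tfrac{w(j)}{\lambda}\sum w(i)(r^2-d^2)$, which makes the two sides directly comparable and removes both the case analysis and the final feasibility step; it is also pleasantly consonant with the quadratic manipulations (sums of $d$ and $d^2$, Cauchy--Schwarz) that the paper uses in the companion Lemma~\ref{lm:dual:fitting:2}. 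The paper's computation, on the other hand, stays closer to the dual variables $w_{j\to i}$ and so reads more naturally alongside the rest of the dual-fitting analysis. Either proof is acceptable; yours is arguably the cleaner one.
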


\begin{proof}

Fix any point $j \in V'$. Throughout this proof, we will refer to $r_j^{(1)}, w_{j \to i}^{(1)}, C_j^{(1)}$ as $r_j, w_{j \to i}, C_j$ respectively. Also, we will let $B'(j, r) := \{ i \in V' : d(i, j) \leq r\}$ denote the ball of radius $r$ around $j$ in the metric space $(V', d, w)$. We will show that $w(j)r_j + C_j \leq w(j)r'_j + C'_j$. In words, this means that as we switch from the input $(V', d, w)$ to $(V,w,d)$ by adding one extra point $i'$, the sum of the (fractional) facility opening cost and connection cost paid by client $j$ can only decrease.

\newcommand{\s}{\texttt{size}}

If $i' \notin B(j, r_j)$, then $r_j = r'_j$ and $C_j = C'_j$, and so the lemma trivially holds. From now onward,  assume that $i' \in B(j, r_j)$. In this case, it is easy to verify that $r_j \leq r'_j$. Let $\chi := r'_j - r_j \geq 0$. Thus, because of the insertion of the point $i'$, the concerned ball around $j$ has shrunk its radius by an additive $\chi$ factor.
Recall that $w_{j \to i} = \max(0, w(i)(r_j - d(i,j)))$. Since
$$\sum_{i \in B'(j, r'_j)} w(i) \! \left(r'_j - d(i, j)\right) = \lambda = \sum_{i \in B(j, r_j)} w(i) \! \left(r_j - d(i, j)\right),$$
we have that
\begin{equation}
\label{eq:derive}
w_{j \to i'}  = \chi \cdot \sum_{i \in B(j, r_j) \setminus \{i' \}} w(i) + \sum_{i \in B'(j, r'_j) \setminus B(j, r_j)} w(i)\!\left( r'_j - d(i, j) \right).
\end{equation}
Finally, let $\xi := (w(j)r'_j + C'_j) - (w(j)r_j + C_j)$.  We now infer that
\begin{equation}
\label{eq:derive2}
\lambda \xi = \lambda w(j) \! \left( r'_j - r_j \right) +   \sum_{i \in B'(j, r'_j)} w'_{j \to i} \cdot w(j)d(i, j) - \sum_{i \in B(j, r_j) \setminus \{i'\}} w_{j \to i} \cdot w(j)d(i, j)     - w_{j \to i'} \cdot w(j)d(i', j).
\end{equation}
By rearranging terms, we get that
$$\sum_{i \in B'(j, r'_j)} w'_{j \to i} \cdot w(j)d(i, j) - \sum_{i \in B(j, r_j) \setminus \{i'\}} w_{j \to i} \cdot w(j)d(i, j) $$
$$=\sum_{i \in B(j, r_j) \setminus \{i'\}} \left( w'_{j \to i} - w_{j \to i} \right) \cdot w(j)d(i, j) + \sum_{i \in B'(j, r'_j) \setminus B(j, r_j)} w'_{j \to i} \cdot w(j)d(i, j) $$
$$ =   w(j)\chi \cdot \sum_{i \in B(j, r_j) \setminus \{i'\}} w(i) d(i, j) + w(j) \cdot \sum_{i \in B'(j, r'_j) \setminus B(j, r_j)} w(i)\!\left( r'_j - d(i, j) \right) \cdot d(i, j)$$
$$ \geq   w(j)\chi \cdot \sum_{i \in B(j, r_j) \setminus \{i'\}} w(i) d(i, j) + w(j)d(i', j) \cdot \sum_{i \in B'(j, r'_j) \setminus B(j, r_j)} w(i)\!\left( r'_j - d(i, j) \right),$$
where the final inequality follows from lower bounding $d(i,j)$ by $d(i', j)$ for all $i \in B'(j, r'_j) \setminus B(j, r_j)$. Now, we can apply (\ref{eq:derive}) to express this lower bound as
$$ w(j)\chi \cdot \sum_{i \in B(j, r_j) \setminus \{i'\}} w(i) d(i, j) + w(j)d(i', j) \cdot \left( w_{j \to i'} - \chi \cdot \sum_{i \in B(j, r_j) \setminus \{i' \}} w(i)\right) $$
$$ = w(j)\chi \cdot \sum_{i \in B(j, r_j) \setminus \{i'\}} w(i) \! \left(d(i, j) - d(i', j) \right) + w_{j \to i'} \cdot w(j)d(i', j). $$
We can now combine (\ref{eq:derive2}) with this lower bound to get that
$$ \lambda \xi \geq \lambda w(j)\chi +w(j)\chi \cdot \sum_{i \in B(j, r_j) \setminus \{i'\}} w(i) \! \left(d(i, j) - d(i', j) \right) $$
$$ = w(j)\chi \cdot \left( \lambda - \sum_{i \in B(j, r_j) \setminus \{i'\}} w(i) \! \left(d(i', j) - d(i, j) \right) \right)$$
$$ \geq w(j)\chi \cdot \left( \lambda - \sum_{i \in B(j, r_j) \setminus \{i'\}} \max \left(0, w(i) \! \left(d(i', j) - d(i, j) \right) \right) \right) \geq 0. $$
To see why the last inequality holds, consider the following argument. Suppose that we are growing a ball around $j$ in the metric space $(V', d, w)$ and we have reached radius $d(i',j)$. Since $d(i',j) \leq r_j \leq r'_j$, it must be the case that the total contribution of all the points in this ball, towards the facility opening cost of $j$, is at most $\lambda$.
\end{proof}

\subsection{The Data Structure \FracLMP}

Given a dynamic metric space $(V,d, w)$ which evolves via a sequence of point insertions and deletions, the data structure $\FracLMP$ supports the following query operations.
\begin{itemize}
    \item \textsc{Solution}$(\lambda)$: The input to this query is a facility opening cost $\lambda > 0$. In response, the data structure outputs the values $\{y_i\}$ of the solution produced by running \textsc{FractionalMP} with facility opening cost $\lambda$.
    \item \textsc{Connection-Cost}$(\lambda)$: The input to this query is a facility opening cost $\lambda > 0$. In response, the data structure outputs the connection cost $\sum_{i,j \in V} w(j)d(i,j) \cdot x_{j \to i}$ of the solution produced by running \textsc{FractionalMP} with facility opening cost $\lambda$.
\end{itemize}
The following lemma summarises the behaviour of this data structure.

\begin{lemma}\label{lem:fracLMP}
    The data structure $\FracLMP$ is deterministic, has a worst-case update time of $O(n \log n)$, and can answer \textsc{Solution} and \textsc{Connection-Cost} queries in $O(n\log n)$ time.
\end{lemma}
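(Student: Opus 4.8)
The plan is to implement $\FracLMP$ as a thin wrapper around a single instance of the core data structure $\RadiiMP$ from \Cref{lem:maintaining r}, initialized with parameter $\beta = 1/4$. Since $\RadiiMP$ is deterministic and has worst-case update time $O(n\log n)$, $\FracLMP$ inherits these properties immediately: on each (weighted) point insertion or deletion in $(V,w,d)$ we simply forward the update to $\RadiiMP$ and do nothing else. Determinism of $\FracLMP$ is then immediate, since every remaining operation is deterministic arithmetic.

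For the \textsc{Solution}$(\lambda)$ query, recall from line~1 of Algorithm~\ref{alg: fractional MP} that the fractional MP solution sets $y_i = (1/\lambda)\cdot w(i)\, r^{(1/4)}_i$ for every $i \in V$. It therefore suffices to iterate over the $n$ points of $V$, issue a \textsc{Radius}$(i, \lambda)$ query to $\RadiiMP$ (each costing $O(\log n)$ time by \Cref{lem:maintaining r}) to obtain $r^{(1/4)}_i$, and then spend $O(1)$ time per point to compute $y_i$. This produces the vector $\{y_i\}_{i \in V}$ in $O(n \log n)$ total time, matching the claimed bound (and in any case $\Omega(n)$ time is needed merely to report the $n$ output values).

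For the \textsc{Connection-Cost}$(\lambda)$ query, we invoke the identity established in the analysis of Algorithm~\ref{alg: fractional MP} (en route to \Cref{th:MP:primary}): the total connection cost $\sum_{i,j \in V} w(j) d(i,j) \cdot x_{j \to i}$ of the fractional MP solution equals $4 \sum_{j \in V} C^{(1/4)}_j$, where $C^{(1/4)}_j$ is precisely the value returned by the \textsc{Connection-Cost}$(j, \lambda)$ query of $\RadiiMP$ (note that the definition of $C_j^{(\beta)}$ in~(\ref{eq:def:C}) coincides with the one used in \Cref{sec:frac LMP alg}, since the $\max(0, \cdot)$ truncation there simply restricts the sum to the ball $B^{(\beta)}(j, r_j^{(\beta)})$). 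So we again scan the $n$ points, issue one \textsc{Connection-Cost}$(j, \lambda)$ query per point, accumulate the running sum in $O(1)$ time per point, and multiply the result by $4$; the total cost is $O(n \log n)$.

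There is no genuine obstacle here: all the substantive work has already been carried out, namely the construction and analysis of $\RadiiMP$ (\Cref{lem:maintaining r}) and the two identities $y_i = (1/\lambda) w(i) r^{(1/4)}_i$ and $(\text{connection cost}) = 4 \sum_j C^{(1/4)}_j$. The only points that require care in the writeup are (i) confirming that the scaling parameter $\beta = 1/4$ handed to $\RadiiMP$ matches the scaled metric $d^{(1/4)}$ appearing in \textsc{FractionalMP}, so that the queried quantities $r^{(1/4)}_i$ and $C^{(1/4)}_i$ are exactly the ones needed, and (ii) the elementary observation that $n$ queries of cost $O(\log n)$ plus $O(n)$ additional arithmetic sum to $O(n \log n)$. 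This will complete the proof of \Cref{lem:fracLMP}.
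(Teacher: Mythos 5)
Your proposal is correct and matches the paper's own implementation essentially verbatim: $\FracLMP$ is a wrapper around a single $\RadiiMP$ instance with $\beta = 1/4$, updates are forwarded (giving determinism and $O(n\log n)$ worst-case update time), and each query is answered by $n$ calls to \textsc{Radius} or \textsc{Connection-Cost} at $O(\log n)$ each, using the identities $y_i = (1/\lambda)w(i)r^{(1/4)}_i$ and total connection cost $= 4\sum_{j\in V} C^{(1/4)}_j$. No gaps.
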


\noindent
We now show how to implement $\FracLMP$ to get the guarantees described in \Cref{lem:fracLMP}.

\subsection{Implementation of $\FracLMP$}

The data structure $\FracLMP$ maintains the data structure $\RadiiMP$ with parameter $\beta = 1/4$, which it uses as a black box to answer queries. Since $\RadiiMP$ is deterministic and has a worst-case update time of $O(n \log n)$, this gives the desired update time. We now show how to implement the \textsc{Solution} and \textsc{Connection-Cost} queries in $O(n\log n)$ time.

\medskip
\noindent \textbf{Implementing \textnormal{\textsc{Solution}}.} Whenever a query $\textsc{Solution}(\lambda)$ is made, we use $\RadiiMP$ to obtain the values $\{r^{(1/4)}_i\}$ in $O(n \log n)$ time. Since $y_i = (1/\lambda) \cdot w(i) r^{(1/4)}_i$, we can then compute the values of $\{y_i\}$ in $O(n)$ time. In total, this takes $O(n \log n)$ time.

\medskip
\noindent \textbf{Implementing \textnormal{\textsc{Connection-Cost}}.} Whenever a query $\textsc{Connection-Cost}(\lambda)$ is made, we use $\RadiiMP$ to obtain the values $\{C^{(1/4)}_i\}$ in $O(n \log n)$ time. Since $\sum_{i \in V} w(j)d(i,j) \cdot x_{j \to i} = 4 C^{(1/4)}_j$, it follows that
$$\sum_{i,j \in V} w(j)d(i,j) \cdot x_{j \to i} = 4 \sum_{j \in V} C^{(1/4)}_j.$$
Hence, we can compute the value of $\sum_{i,j \in V} w(j)d(i,j) \cdot x_{j \to i}$ in $O(n)$ time. In total, this takes $O(n \log n)$ time.

\section{Dynamic Fractional $k$-Median}\label{sec:frac k med}

In this section, we present the dynamic data structure $\FracKMed$. The properties of this data structure immediately imply the following theorem.

\begin{theorem}\label{thm:dynamic frac k-med}
    There exists a deterministic dynamic algorithm that has a worst-case update time of $O(n \log n)$ and, given any $\epsilon > 0$, $k \in \mathbb N$, returns a $4(1 + \epsilon)$-approximate fractional solution for $k$-median in $O(n \log n \log\log_{1 + \epsilon}(n \Delta))$ time.
\end{theorem}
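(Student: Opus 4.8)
The plan is to obtain this algorithm as a lightweight wrapper around the data structure $\FracLMP$ of \Cref{lem:fracLMP}: on every point insertion/deletion we simply forward the update to $\FracLMP$, which handles it deterministically in $O(n\log n)$ worst-case time, and we do all the real work at query time. For a facility opening cost $\lambda>0$, write $N(\lambda):=\sum_{i\in V}y_i$ for the (fractional) number of facilities opened by the fractional Mettu--Plaxton solution with cost $\lambda$, and $C(\lambda)$ for its connection cost; both are obtainable from $\FracLMP$ with a single \textsc{Solution}$(\lambda)$/\textsc{Connection-Cost}$(\lambda)$ query in $O(n\log n)$ time. By \Cref{th:MP:primary} this solution is an LMP $4$-approximation for UFL, i.e.\ $4\lambda N(\lambda)+C(\lambda)\le 4\,\foptf_\lambda$. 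Since for any $\lambda$ an optimal fractional $k$-median solution is a feasible fractional UFL solution (it opens at most $k$ facilities and serves every client fully), we also have $\foptf_\lambda\le \lambda k+\foptc_k$. Combining the two, $C(\lambda)\le 4\,\foptc_k-4\lambda\bigl(N(\lambda)-k\bigr)$ for every $\lambda>0$.

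Next I would run a geometric binary search to locate a $\lambda$ at which $N(\lambda)$ is close to $k$. The cases $k\ge n$ (output $y_i=1$ for all $i$, cost $0$) and $k=1$ (which amounts to maintaining the optimal $1$-median value, handled separately via the standard Jain--Vazirani perturbation) are disposed of first, so assume $2\le k\le n-1$. Set $\lambda_{\min}:=w_{\min}d_{\min}/8$ and $\lambda_{\max}:=d_{\max}\sum_{j\in V}w(j)/4$. A short computation shows that at $\lambda_{\min}$ every ball $B^{(1/4)}(i,r_i^{(1/4)})$ is the singleton $\{i\}$, so $N(\lambda_{\min})=n\ge k$; that at $\lambda_{\max}$ every such ball is all of $V$, which gives $N(\lambda_{\max})\le 2\le k$; and that $\lambda_{\max}/\lambda_{\min}\le 2n\Delta$. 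Starting from $[\lambda^-,\lambda^+]=[\lambda_{\min},\lambda_{\max}]$ we maintain the invariant $N(\lambda^-)\ge k\ge N(\lambda^+)$: at each step we evaluate $N$ at the geometric midpoint $\sqrt{\lambda^-\lambda^+}$ with one \textsc{Solution} query and replace $\lambda^-$ or $\lambda^+$ accordingly, stopping (and returning that solution, which is $4$-approximate for $k$-median by \Cref{lem:simple LMP case}) if we ever see $N(\lambda)=k$ exactly. Note that \emph{no monotonicity of $N$ is required}: the invariant is preserved because each endpoint is always replaced by a point at which $N$ has been measured on the correct side of $k$. Since the geometric midpoint halves $\log(\lambda^+/\lambda^-)$ each step, after $O\bigl(\log\log_{1+\epsilon}(n\Delta)\bigr)$ steps we reach $\lambda^+\le(1+\epsilon)\lambda^-$.

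Let $\lambda_1:=\lambda^-$, $\lambda_2:=\lambda^+$, $k_1:=N(\lambda_1)\ge k$, $k_2:=N(\lambda_2)\le k$, and with two more queries obtain the opening vectors $y^1,y^2$ and connection costs $C_1:=C(\lambda_1)$, $C_2:=C(\lambda_2)$. I would output the convex combination $\tilde y:=\alpha_1 y^1+\alpha_2 y^2$ with $\alpha_1:=\tfrac{k-k_2}{k_1-k_2}$ and $\alpha_2:=\tfrac{k_1-k}{k_1-k_2}$ (nonnegative, summing to $1$, with $\sum_i\tilde y_i=\alpha_1k_1+\alpha_2k_2=k$), together with the cost bound $\tilde C:=\alpha_1 C_1+\alpha_2 C_2$. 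By the fractional analogue of \Cref{lem:convex}, $\tilde y$ — with clients implicitly assigned to the closest open centers — is a feasible fractional $k$-median solution whose optimally assigned connection cost is at most $\tilde C$. To bound $\tilde C$, the inequality from the first paragraph gives $C_i\le 4\,\foptc_k-4\lambda_i(k_i-k)$ for $i=1,2$, and (since $C_1\ge 0$ and $k_1\ge k$) the key estimate $\lambda_1(k_1-k)\le\foptc_k$. Taking the $\alpha$-weighted combination and using $\alpha_1(k_1-k)=\alpha_2(k-k_2)$ yields $\tilde C\le 4\,\foptc_k+4\alpha_1(k_1-k)(\lambda_2-\lambda_1)$; then $\lambda_2-\lambda_1\le\epsilon\lambda_1$, $\alpha_1\le 1$, and $\lambda_1(k_1-k)\le\foptc_k$ make the last term at most $4\epsilon\,\foptc_k$, so $\tilde C\le 4(1+\epsilon)\,\foptc_k\le 4(1+\epsilon)\,\optc_k$. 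There are $O\bigl(\log\log_{1+\epsilon}(n\Delta)\bigr)$ queries, each costing $O(n\log n)$, giving the claimed query time; the update time is the $O(n\log n)$ of $\FracLMP$, and everything is deterministic.

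The step I expect to be the main obstacle is the edge-case bookkeeping around the search range: verifying that $\lambda_{\min},\lambda_{\max}$ genuinely bracket $k$ for \emph{all} metric inputs (including degenerate ones), cleanly disposing of $k=1$ (where no finite $\lambda$ opens exactly one facility, so the convex-combination step needs a separate argument), and confirming the iteration count yields exactly the $\log\log_{1+\epsilon}(n\Delta)$ factor. By contrast the approximation inequality is short once the observation $\lambda_1(k_1-k)\le\foptc_k$ is isolated — this is precisely what converts a naive additive error of order $\epsilon\lambda_1 k$ into the multiplicative $\epsilon\,\foptc_k$.
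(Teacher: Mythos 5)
Your algorithm is the same as the paper's: forward updates to $\FracLMP$, binary-search over facility costs to find $\lambda_1\le\lambda_2\le(1+\epsilon)\lambda_1$ whose MP solutions open $k_1\ge k\ge k_2$ facilities, and return the convex combination opening exactly $k$, at the same $O(n\log n)$ update and $O(n\log n\log\log_{1+\epsilon}(n\Delta))$ query cost (the paper discretizes $[d_{\min}w_{\min}/8,\,2n\,d_{\max}w_{\max}]$ into a $(1+\epsilon)$-geometric grid and binary-searches the grid, which is equivalent to your geometric bisection). Where you genuinely diverge is the approximation analysis: the paper argues by dual fitting — it scales the dual at $\lambda_2$ by $(1+\epsilon)^{-1}$ to make it feasible at cost $\lambda_1$, combines the two duals (\Cref{lem:convex}), concludes the mixed solution is LMP $4(1+\epsilon)$-approximate for UFL, and invokes \Cref{lem:simple LMP case} — whereas you work purely on the primal side, using $C(\lambda)\le 4\foptc_k-4\lambda\bigl(N(\lambda)-k\bigr)$ (from the LMP guarantee plus $\foptf_\lambda\le\lambda k+\foptc_k$) together with the identity $\alpha_1(k_1-k)=\alpha_2(k-k_2)$ and the key observation $\lambda_1(k_1-k)\le\foptc_k$; your derivation of $\tilde C\le 4(1+\epsilon)\foptc_k$ is correct and arguably more elementary, since it never constructs dual solutions, while the paper's dual-fitting route yields the slightly stronger structural statement that the combined solution is itself LMP $4(1+\epsilon)$-approximate at cost $\lambda_1$. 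The one place you fall short of the paper is $k=1$: you correctly observe that no finite $\lambda$ makes the MP solution open at most one facility, but dismiss the case with a vague appeal to a ``Jain--Vazirani perturbation.'' The paper's actual fix is different and concrete: query $\FracLMP$ at $\lambda=n\,d_{\max}w_{\max}/4$, rescale the openings $y_i^\star$ so they sum to $1$, and check that the saved opening cost $4\lambda\delta$ dominates the at most $\delta\,n\,d_{\max}w_{\max}$ increase in connection cost, so the rescaled solution remains LMP $4$-approximate and \Cref{lem:simple LMP case} applies; you would need to supply this (or an equivalent) argument, plus the routine maintenance of $d_{\min},d_{\max},w_{\min},\sum_j w(j)$ needed to set your search interval, to make the proof complete.
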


\noindent
Subsequently, in \Cref{sec:dynamic value k-med}, we prove the following corollary of \Cref{thm:dynamic frac k-med}.

\begin{corollary}\label{cor: dynamic value k-med}
    There exists a deterministic dynamic algorithm that has a worst-case update time of $O(n \log n)$ and, given any $\epsilon > 0$, $k \in \mathbb N$, returns a $12(1 + \epsilon)$-approximation to the cost of the optimal integral solution for $k$-median in $O(n \log n \log\log_{1 + \epsilon}(n \Delta))$ time.
\end{corollary}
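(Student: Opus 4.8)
The plan is to run the dynamic fractional $k$-median algorithm of \Cref{thm:dynamic frac k-med}, extract the \emph{cost} of the fractional solution it maintains, and scale it by the integrality gap of the $k$-median LP. First I would invoke \Cref{thm:dynamic frac k-med}: it gives a deterministic dynamic algorithm with $O(n\log n)$ worst-case update time which, on a query with inputs $\epsilon>0$ and $k\in\mathbb N$, produces in $O(n\log n\log\log_{1+\epsilon}(n\Delta))$ time a feasible fractional $k$-median solution $\{y_i, x_{j\to i}\}$ with connection cost $\Lambda := \sum_{i,j\in V} w(j)\,d(i,j)\,x_{j\to i} \le 4(1+\epsilon)\cdot\foptc_k$. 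The key point is that this solution is assembled (inside $\FracKMed$) as a convex combination $\alpha_1\cdot(\text{sol at }\lambda^-) + \alpha_2\cdot(\text{sol at }\lambda^+)$ of two \textsc{FractionalMP} outputs, where $\lambda^-,\lambda^+$ and $\alpha_1,\alpha_2$ are already known to the algorithm from its binary search; hence $\Lambda = \alpha_1\cdot C(\lambda^-) + \alpha_2\cdot C(\lambda^+)$, and each $C(\lambda)$ is returned by a single $\textsc{Connection-Cost}(\lambda)$ query to $\FracLMP$ in $O(n\log n)$ time by \Cref{lem:fracLMP}. So $\Lambda$ is computable with only $O(n\log n)$ extra work, which is absorbed into the stated query time.

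Next I would chain the relevant inequalities. Since $\{y_i, x_{j\to i}\}$ is a feasible fractional $k$-median solution we have $\Lambda \ge \foptc_k$, and by the approximation guarantee $\Lambda \le 4(1+\epsilon)\,\foptc_k$. Combining this with the constant integrality gap of the $k$-median LP, namely $\optc_k \le 3\,\foptc_k$ (see \Cref{sec:LPs} and \cite{esa/ArcherRS03}), together with the trivial bound $\foptc_k \le \optc_k$, yields
\[
\optc_k \;\le\; 3\,\foptc_k \;\le\; 3\,\Lambda \qquad\text{and}\qquad 3\,\Lambda \;\le\; 12(1+\epsilon)\,\foptc_k \;\le\; 12(1+\epsilon)\,\optc_k.
\]
Therefore the algorithm returns the value $3\Lambda$, which lies in $[\,\optc_k,\; 12(1+\epsilon)\,\optc_k\,]$ and is thus a $12(1+\epsilon)$-approximation to the optimal integral $k$-median cost. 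The update time is exactly that of \Cref{thm:dynamic frac k-med}, and the query time is the same up to the absorbed $O(n\log n)$ overhead.

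The degenerate case $\optc_k = 0$ is handled separately: this happens exactly when $|V|\le k$ (distinct points have non-zero distance), in which case $\foptc_k = 0$, the maintained fractional solution has cost $0$, and the algorithm simply returns $0$. Beyond this, I do not expect any genuine obstacle — the corollary is essentially a one-line consequence of \Cref{thm:dynamic frac k-med}. The one step that needs to be spelled out carefully is the first paragraph: confirming that the \emph{numerical value} of the maintained fractional solution's cost can be produced within the same time budget, which relies on the two-point-convex-combination structure of the construction and on the $\textsc{Connection-Cost}$ query supported by $\FracLMP$ (\Cref{lem:fracLMP}).
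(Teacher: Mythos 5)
Your proposal is correct and follows essentially the same route as the paper: query $\FracKMed$ for the connection cost $C$ of its $4(1+\epsilon)$-approximate fractional solution, then combine $\foptc_k \le C \le 4(1+\epsilon)\foptc_k$ with the integrality gap bound $\optc_k \le 3\foptc_k$ to conclude $\optc_k \le 3C \le 12(1+\epsilon)\optc_k$. The extra details you supply (how the cost is assembled from the two $\textsc{Connection-Cost}$ queries, and the $\optc_k=0$ case) are already subsumed by the paper's $\FracKMed$ \textsc{Solution} query, which returns the connection cost directly.
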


\subsection{The Data Structure \FracKMed}

Given a dynamic metric space $(V,w,d)$ which evolves via a sequence of point insertions and deletions, the data structure $\FracKMed$ supports the following query operation.
\begin{itemize}
    \item \textsc{Solution}$(\epsilon, k)$: The inputs to this query are parameters $\epsilon \in (0,1)$, $k \in \mathbb N$. In response, the data structure computes a $4(1+\epsilon)$-approximate fractional solution $\{y_i, x_{j \to i}\}$ to the $k$-median problem and returns the variables $\{y_i\}$ and the connection cost $\sum_{i,j \in V} w(j) d(i,j) \cdot x_{j \to i}$.    
\end{itemize}
The following lemma summarises the behaviour of this data structure.

\begin{lemma}\label{lem:fracKMed}
    The data structure $\FracKMed$ is deterministic, has a worst-case update time of $O(n \log n)$, and can answer \textsc{Solution} queries in $O(n \log n \log\log_{1 + \epsilon}(n \Delta))$ time.
\end{lemma}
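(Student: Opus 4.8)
The update time is immediate: $\FracKMed$ maintains a single instance of $\FracLMP$ (which itself maintains $\RadiiMP$ with parameter $\beta = 1/4$), so by \Cref{lem:fracLMP} the worst-case update time is $O(n\log n)$. All the content is in implementing the \textsc{Solution} query, which I would do by an outer binary search over the facility opening cost $\lambda$, following the Jain--Vazirani reduction \cite{JainV01} but carried out entirely with the fractional LMP solution of \Cref{alg: fractional MP}.

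First I would record that the fractional opening $\Phi(\lambda) := \sum_{i \in V} y_i(\lambda) = (1/\lambda)\sum_{i\in V} w(i)\, r^{(1/4)}_i(\lambda)$ is non-increasing in $\lambda$: writing $r^{(1/4)}_i(\lambda) = g_i^{-1}(\lambda)$ where $g_i(r) := \sum_{j : d^{(1/4)}(i,j)\le r} w(j)\bigl(r - d^{(1/4)}(i,j)\bigr)$ is convex, non-decreasing, and vanishes at $r=0$, the ratio $g_i^{-1}(\lambda)/\lambda$ is non-increasing, and summing over $i$ gives the claim. Moreover, once $\lambda$ is below roughly $d_{\min} w_{\min}/n$ the solution opens $\ge n \ge k$ facilities, and once $\lambda$ exceeds roughly $n\, d_{\max} w_{\max}$ it opens at most one facility, so it suffices to search the $O(\log_{1+\epsilon}(n\Delta))$ geometrically spaced candidates $\lambda_t := \lambda_0 (1+\epsilon)^t$ inside this window. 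I would binary search these candidates — each step invoking $\FracLMP.\textsc{Solution}(\lambda_t)$ in $O(n\log n)$ time and summing the returned $\{y_i\}$ in $O(n)$ time — to locate consecutive values $\lambda^- \le \lambda^+ \le (1+\epsilon)\lambda^-$ with $k^- := \Phi(\lambda^-) \ge k \ge \Phi(\lambda^+) =: k^+$. This costs $O(n\log n\,\log\log_{1+\epsilon}(n\Delta))$, matching the claimed query time.

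Next I would combine the two solutions. Let $\{y^-,x^-\}$, $\{y^+,x^+\}$ be the outputs of \Cref{alg: fractional MP} at $\lambda^-,\lambda^+$, let $C^{\pm}$ be their connection costs (obtained from $\FracLMP.\textsc{Connection-Cost}$), and pick $a,b\ge 0$ with $a+b = 1$ and $a k^- + b k^+ = k$. Then $\tilde y := a y^- + b y^+$, $\tilde x := a x^- + b x^+$ is a feasible fractional $k$-median solution: it opens exactly $k$ facilities and the assignment constraints are preserved under convex combinations (the primal analogue of \Cref{lem:convex}), and its connection cost is $\tilde C = aC^- + bC^+$ by linearity. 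To bound $\tilde C$, use that \Cref{alg: fractional MP} is LMP $4$-approximate (\Cref{th:MP:primary}) and that the optimal fractional $k$-median solution, read as a UFL solution, witnesses $\foptf_{\lambda} \le \foptc_k + \lambda k$: at $\lambda^-$ this gives $4\lambda^- k^- + C^- \le 4\foptc_k + 4\lambda^- k$, hence $C^- \le 4\foptc_k - 4\lambda^-(k^- - k)$ and, crucially, $\lambda^-(k^- - k)\le \foptc_k$; at $\lambda^+$ it gives $C^+ \le 4\foptc_k + 4\lambda^+(k - k^+)$. Taking the convex combination, using $b(k - k^+) = a(k^- - k)$, the cross terms telescope to $\tilde C \le 4\foptc_k + 4a(k^- - k)(\lambda^+ - \lambda^-)$; finally $\lambda^+ - \lambda^- \le \epsilon\lambda^-$ and $a(k^- - k)\lambda^- \le \foptc_k$ yield $\tilde C \le 4(1+\epsilon)\foptc_k$. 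The query returns $\{\tilde y_i\}$ together with $\tilde C$.

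The step I expect to be the main obstacle is this last one — the fractional Jain--Vazirani bookkeeping showing the convex combination loses only a $(1+\epsilon)$ factor. The delicate point is converting the two per-scale LMP guarantees, which are stated against the \emph{different} quantities $\foptf_{\lambda^-}$ and $\foptf_{\lambda^+}$, into a single bound against $\foptc_k$; the inequality $\lambda^-(k^- - k)\le \foptc_k$ extracted from the LMP guarantee at $\lambda^-$ is exactly what tames the otherwise uncontrolled factor $a(k^- - k)$, and care is needed so that the $(1+\epsilon)$ slack from the geometric search window is not compounded and the constant comes out as precisely $4(1+\epsilon)$. A secondary point is pinning down the $O(\log_{1+\epsilon}(n\Delta))$-wide search window cleanly, including the degenerate cases $\optc_k = 0$ and $k\le 1$, but this is routine given the monotonicity of $\Phi$.
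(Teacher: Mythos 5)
Your implementation and time analysis coincide with the paper's (binary search over a geometric grid of $O(\log_{1+\epsilon}(n\Delta))$ opening costs, each probe being one $\FracLMP$ query plus an $O(n)$ summation), but your approximation bookkeeping takes a genuinely different route. The paper rescales the dual solution at $\lambda^+=(1+\epsilon)\lambda^-$ by $(1+\epsilon)^{-1}$ so that it becomes feasible for opening cost $\lambda^-$, takes a convex combination of duals (\Cref{lem:convex}), concludes that the mixed primal solution is LMP $4(1+\epsilon)$-approximate at cost $\lambda^-$, and then invokes \Cref{lem:simple LMP case}. You instead stay on the primal side: you bound $\foptf_{\lambda}\le \foptc_k+\lambda k$ by reading the optimal fractional $k$-median solution as a UFL solution, extract $\lambda^-(k^--k)\le\foptc_k$ from nonnegativity of $C^-$, and telescope the cross terms using $a(k^--k)=b(k-k^+)$ to get $\tilde C\le 4(1+\epsilon)\foptc_k$. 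This computation is correct and is essentially an inlined, quantitative version of \Cref{lem:simple LMP case}; the paper's dual-scaling route packages the same $(1+\epsilon)$ loss more abstractly. Your monotonicity claim for $\Phi(\lambda)$ is also correct (convexity of $g_i$ with $g_i(0)=0$ gives $g_i^{-1}(\lambda)/\lambda$ non-increasing), though it is not actually needed: the boundary conditions alone let a binary search locate a consecutive crossing pair, which is all the paper uses.

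The one place where your plan is not merely "routine" is $k=1$, and your stated reason for dismissing it is wrong. Since $\lambda=\sum_{j\in B^{(1/4)}(i,r_i)}w(j)(r_i-d^{(1/4)}(i,j))\le r_i\sum_j w(j)$, every point has $y_i\ge w(i)/\sum_j w(j)$, so $\sum_i y_i\ge 1$ for \emph{every} finite $\lambda$; the fractional opening never drops below $1$ (and is typically strictly above it), so no crossing pair exists for $k=1$ and monotonicity of $\Phi$ does not help. This is also why your claim that large $\lambda$ opens "at most one facility" is imprecise: the correct statement (the paper's \Cref{obs:discrete lambda 2}) is at most $1+\delta$ facilities for $\lambda\ge n\,d_{\max}w_{\max}/\delta$, which suffices for $k\ge 2$ but not for $k=1$. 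The paper therefore handles $k=1$ by a separate normalization step (Step IV): query $\FracLMP$ at $\lambda=n\,d_{\max}w_{\max}/4$, rescale $y_i\leftarrow y_i/\sum_j y_j$, and argue via the LMP inequality that the $\delta\cdot(n\,d_{\max}w_{\max}-4\lambda)\le 0$ slack absorbs the rescaling, giving an LMP $4$-approximation opening exactly one facility. You would need this (or an equivalent) argument, plus the small implementation detail of maintaining $d_{\min},d_{\max},w_{\min},w_{\max}$ to define the search window, to complete the proof.
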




\noindent
The data structure $\FracKMed$ maintains the data structure $\FracLMP$, which it uses as a black box to answer queries. 
We now show how to implement the \textsc{Solution} query. In \Cref{sec:frac k med algo}, we give the algorithm that the data structure uses to construct the solution whenever a query is made. In \Cref{sec:frac k med anal}, we analyse this algorithm and show that it computes the desired output. Finally, in \Cref{sec:frackmed imp}, we show how to implement $\FracKMed$.

\subsection{Constructing the Fractional Solution}\label{sec:frac k med algo}

The data structure $\FracKMed$ computes the fractional solution $\{y_i, x_{j \to i}\}$ as follows.

\medskip
\noindent \textbf{Step I:} If $k = 1$, skip directly to step IV. Otherwise, let $L := d_{\min}w_{\min}/8$, $H := 2n \cdot d_{\max}w_{\max}$ and $\lambda_t := (1 + \epsilon)^{t-1} \cdot L$. Consider the $T = O(\log_{1 + \epsilon}(n \Delta))$ many values $\lambda_1 < \dots < \lambda_T$, where $H \leq \lambda_T < (1 + \epsilon) H$.

\medskip
\noindent \textbf{Step II:} There must necessarily exist an index $t \in [T-1]$ such that $\FracLMP$ fractionally opens at least $k$ facilities when run with facility opening cost $\lambda_t$ and less than $k$ when run with facility opening cost $\lambda_{t+1}$.
We can perform a binary search on the values $\lambda_1, \dots, \lambda_T$ in order to find such a value of $t$, making $O(\log \log_{1 + \epsilon}(n \Delta))$ many queries to $\textsc{FracLMP}$.
Let $\{y_i^{1}\}$, $C^1$ and  $\{y_i^{2}\}$, $C^2$ denote the fractional solutions and connection costs returned by $\FracLMP$ with facility opening costs $\lambda_{t}$ and $\lambda_{t+1}$ respectively.

\medskip
\noindent \textbf{Step III:} Define real numbers $\alpha_1, \alpha_2 \geq 0$ such that $\alpha_1 \cdot \sum_{i \in V}y^{1}_i + \alpha_2 \cdot \sum_{i \in V}y^{2}_i = k$ and $\alpha_1 + \alpha_2 = 1$. For all $i \in V$, let $y_{i}  :=  \alpha_1 \cdot y_{i}^{1} + \alpha_{2} \cdot y_{i}^{2}$. Return the solution $\{y_i\}$ and its cost $\alpha_1 C^1 + \alpha_2 C^2$.

\medskip
\noindent \textbf{Step IV:} Query $\FracLMP$ with facility opening cost $n \cdot d_{\max}w_{\max}/4$ to obtain a solution $\{y_i^\star, x_{j \to i}^\star\}$. For all $i \in V$, let $y_{i}  :=  (\sum_{j \in V} y^\star_j)^{-1} \cdot y^\star_i$. Return the solution $\{y_i\}$ and its cost $\sum_{i,j \in V} w(j) d(i,j) \cdot y_i$.


\subsection{Analysis of \FracKMed}\label{sec:frac k med anal}

We begin with the case that $k > 1$. We first show that there must necessarily exist an index $t \in [T-1]$ such that $\FracLMP$ fractionally opens at least $k$ facilities when run with facility opening cost $\lambda_t$ and less than $k$ when run with facility opening cost $\lambda_{t+1}$.

\begin{observation}\label{obs:discrete lambda 1}
    Running $\FracLMP$ with $\lambda \leq d_{\min}w_{\min}/8$ fully opens every facility.
\end{observation}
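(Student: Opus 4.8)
The plan is to prove this directly by showing that the fractional MP solution produced by $\FracLMP$ with such a small $\lambda$ satisfies $y_i = 1$ for \emph{every} $i \in V$; since the algorithm then opens $\sum_{i \in V} y_i = |V| = n \ge k$ facilities (we implicitly use $n \ge k$, as otherwise the instance is trivial), this establishes the observation. Recall from Algorithm~\ref{alg: fractional MP} that $y_i = (1/\lambda)\cdot w(i)\, r^{(1/4)}_i$, so it suffices to prove $r^{(1/4)}_i = \lambda/w(i)$ for all $i \in V$.

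First I would record the easy upper bound $r^{(1/4)}_i \le \lambda/w(i)$. For any $r \ge 0$ the point $i$ itself lies in $B^{(1/4)}(i,r)$ since $d^{(1/4)}(i,i) = 0 \le r$, and it contributes $w(i)\bigl(r - d^{(1/4)}(i,i)\bigr) = w(i)\,r$ to the sum in~(\ref{eq:def:r}). Taking $r = \lambda/w(i)$ already makes this single term equal to $\lambda$, so $\lambda/w(i)$ belongs to the defining set in~(\ref{eq:def:r}), whence $r^{(1/4)}_i \le \lambda/w(i)$.

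The core of the argument is the matching lower bound $r^{(1/4)}_i \ge \lambda/w(i)$, where the hypothesis $\lambda \le d_{\min}w_{\min}/8$ enters. Fix $i \in V$ and any $r < \lambda/w(i)$. Since $w(i) \ge w_{\min}$, we have $r < \lambda/w(i) \le d_{\min}/8$, and hence for every $j \ne i$, $d^{(1/4)}(i,j) = d(i,j)/4 \ge d_{\min}/4 > d_{\min}/8 > r$ (using that $d_{\min}$ is the smallest nonzero distance in the metric space). Therefore $B^{(1/4)}(i,r) = \{i\}$, so the sum in~(\ref{eq:def:r}) at radius $r$ equals exactly $w(i)\,r < w(i)\cdot\lambda/w(i) = \lambda$. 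Thus no $r < \lambda/w(i)$ belongs to the defining set in~(\ref{eq:def:r}), giving $r^{(1/4)}_i \ge \lambda/w(i)$. Combining the two bounds yields $r^{(1/4)}_i = \lambda/w(i)$, hence $y_i = 1$.

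There is essentially no serious obstacle here; the only point requiring care is the bookkeeping around the scaled metric $d^{(1/4)}$ and its ball $B^{(1/4)}(i,r) = \{\,j : d(i,j) \le 4r\,\}$, together with tracking the weights so that the threshold $\lambda$ is compared against the correct per-point contribution $w(i)\,r$ of the center $i$ itself. The slack factor of $8$ in the hypothesis (rather than the bare $4$ that the argument needs) just makes the separation $d^{(1/4)}(i,j) > r$ strict with room to spare.
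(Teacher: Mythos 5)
Your proof is correct, but it takes a genuinely different route from the paper's. The paper argues indirectly through the LMP guarantee of \Cref{th:MP:primary}: writing $\ell=\sum_{i\in V}y_i$, it combines $\foptf_\lambda\le \lambda n$ (the cost of opening everything) with the LMP inequality $4\lambda\ell + C\le 4\cdot\foptf_\lambda$ and a lower bound $C\ge (n-\ell)\,d_{\min}w_{\min}\ge 8\lambda(n-\ell)$ on the connection cost incurred by fractionally unopened mass, which forces $\ell\ge n$ and hence $y_i=1$ for all $i$. You instead work directly from the definition~(\ref{eq:def:r}) of the radii used by Algorithm~\ref{alg: fractional MP}: for any $r<\lambda/w(i)\le d_{\min}/8$ the ball $B^{(1/4)}(i,r)$ contains only $i$ itself, whose contribution $w(i)r$ falls short of $\lambda$, while $r=\lambda/w(i)$ already suffices, so $r_i^{(1/4)}=\lambda/w(i)$ exactly and $y_i=1$. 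Both arguments are sound; yours is more elementary (no appeal to the LMP property or dual fitting), pins down the radii in closed form, and, as you note, shows the constant $8$ could be relaxed to $4$. What the paper's approach buys is uniformity: essentially the same LMP-based counting argument is reused for the companion \Cref{obs:discrete lambda 2}, where no closed-form computation of the radii is available.
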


\begin{proof}
    We first note that $\foptf_\lambda \leq \lambda n$, since it is at most the cost of the solution that fully opens each facility. Now suppose that running $\FracLMP$ with this value of $\lambda$ returns a solution that fractionally opens $\ell \in [1, n]$ facilities.
    Since the algorithm $\FracLMP$ is LMP $4$-approximate, it follows that
    $$ 4 \cdot \lambda n \geq 4 \cdot \foptf_\lambda \geq 4\cdot \lambda \ell +  (n - \ell) \cdot d_{\min}w_{\min} \geq 4 \cdot \lambda (\ell + 2(n - \ell)).$$
    It follows that $n \geq \ell + 2(n - \ell)$ and so $\ell \geq n$. Hence, the solution returned by $\FracLMP$ must fully open every facility.
\end{proof}

\begin{observation}\label{obs:discrete lambda 2}
    Running $\FracLMP$ with $\lambda \geq n \cdot d_{\max}w_{\max}/\delta$ factionally opens at most $(1 + \delta)$ facilities.
\end{observation}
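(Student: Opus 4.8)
The plan is to combine the LMP guarantee of $\FracLMP$ with a trivial upper bound on $\foptf_\lambda$ obtained from the single-facility solution; this mirrors the proof of Observation~\ref{obs:discrete lambda 1}.

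First I would unpack what $\FracLMP$ returns: on input $\lambda$ it outputs the fractional solution $\{y_i, x_{j\to i}\}$ produced by \textsc{FractionalMP} (Algorithm~\ref{alg: fractional MP}), whose facility-opening cost is $F = \lambda \sum_{i \in V} y_i$ and whose connection cost is $C = \sum_{i,j\in V} w(j) d(i,j) \cdot x_{j \to i} \ge 0$. By Theorem~\ref{th:MP:primary}, this solution is LMP $4$-approximate, i.e.~$4F + C \le 4\,\foptf_\lambda$. Dropping the nonnegative term $C$ gives $\lambda \sum_{i\in V} y_i = F \le \foptf_\lambda$.

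Next I would bound $\foptf_\lambda$ from above. Fix any point $v \in V$ (if $V = \varnothing$ the claim is vacuous). The set $\{v\}$ is a feasible integral, hence fractional, solution to UFL with opening cost $\lambda$, and $\fl_\lambda(\{v\}, V, w) = \lambda + \sum_{i \in V} w(i) d(i, v) \le \lambda + n\, d_{\max} w_{\max}$. Since $\foptf_\lambda \le \optf_\lambda \le \fl_\lambda(\{v\}, V, w)$, we obtain $\foptf_\lambda \le \lambda + n\, d_{\max} w_{\max}$. Chaining this with the previous display yields $\lambda \sum_{i \in V} y_i \le \lambda + n\, d_{\max} w_{\max}$, so the number of facilities opened fractionally is $\sum_{i\in V} y_i \le 1 + (n\, d_{\max} w_{\max})/\lambda$, which is at most $1+\delta$ whenever $\lambda \ge n\, d_{\max} w_{\max}/\delta$.

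There is no genuine obstacle here — the argument is a two-line LMP/dual-fitting estimate. The only point requiring a moment of care is that the LMP inequality in Theorem~\ref{th:MP:primary} is stated against the \emph{fractional} optimum $\foptf_\lambda$ rather than the integral optimum $\optf_\lambda$; since $\foptf_\lambda \le \optf_\lambda$, using the single-facility solution as an upper bound for $\optf_\lambda$ still validly upper bounds $\foptf_\lambda$, so the estimate goes through unchanged.
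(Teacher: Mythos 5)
Your proposal is correct and takes essentially the same route as the paper: it upper bounds $\foptf_\lambda$ by the cost $\lambda + n\,d_{\max}w_{\max}$ of a single fully opened facility, applies the LMP $4$-approximation guarantee of Theorem~\ref{th:MP:primary}, and discards the nonnegative connection cost to conclude $\sum_{i} y_i \le 1 + n\,d_{\max}w_{\max}/\lambda \le 1+\delta$. The paper's write-up only differs cosmetically (it carries along a crude lower bound on the connection cost before dropping it), so no further comment is needed.
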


\begin{proof}
    We first note that $\foptf_\lambda \leq \lambda + n \cdot d_{\max}w_{\max}$, since it is at most the cost of a solution that fully opens $1$ facility. Now suppose that running $\FracLMP$ with this value of $\lambda$ returns a solution that fractionally opens $\ell \in [1, n]$ facilities. Then, by the same arguments as \Cref{obs:discrete lambda 1}, we have that
    $$ 4 \cdot \lambda(1 + \delta) \geq 4 \cdot (\lambda + n \cdot d_{\max}w_{\max}) \geq \foptf_\lambda \geq 4 \cdot \lambda \ell +  (n - \ell) \cdot d_{\min}w_{\min} \geq 4 \cdot \lambda \ell. $$
    It follows that $\ell \leq (1 + \delta)$. Hence, the solution returned by $\FracLMP$ fractionally opens at most $(1 + \delta)$ facilities.
\end{proof}

\noindent
It follows from \Cref{obs:discrete lambda 1} and \Cref{obs:discrete lambda 2} with $\delta = 1/2$ that $\FracLMP$ opens less than $k$ facilities for facility opening cost $\lambda_1$ and at least $k$ facilities for facility opening cost $\lambda_T$. So there must necessarily exist an index $t \in [T-1]$ such that $\FracLMP$ fractionally opens at least $k$ facilities when run with facility opening cost $\lambda_t$ and less than $k$ when run with facility opening cost $\lambda_{t+1}$.

Let $\{v^{1}_j, w_{j \to i}^{1}\}$ and $\{v^{2}_j, w_{j \to i}^{2}\}$ denote optimal dual solution for UFL with facility opening costs $\lambda_t$ and $\lambda_{t+1}$ respectively. Let $\lambda := \lambda_t$ and recall that $\lambda_{t+1} = (1 + \epsilon)\lambda$. We have that
\begin{eqnarray}
\label{eq:duality:r:1}
4  \lambda \cdot \sum_{i \in V}y^1_i + \sum_{i, j \in V} w(j)d(i, j) \cdot x^{1}_{j \to i} & \leq & 4 \cdot \sum_{j \in V} v^{1}_j \\
\label{eq:duality:r:2}
4  (1+\epsilon)\lambda \cdot\sum_{i \in V}y^2_i + \sum_{i, j \in V} w(j)d(i, j) \cdot x^{2}_{j \to i} & \leq & 4 \cdot \sum_{j \in V} v^{2}_j
\end{eqnarray}
Define $\hat{v}_j^{2} := (1+\epsilon)^{-1} \cdot v_j^{2}$ and $\hat{w}_{j \to i}^{2} := (1+\epsilon)^{-1} \cdot w_{j \to i}^{2}$ for all $i,j \in V$. It follows from \Cref{lem:convex} that $\{\hat{v}_j^{2}, \hat w^{2}_{j \to i}\}$ is a dual solution for UFL with facility opening cost $\lambda$.



We now rearrange terms in~(\ref{eq:duality:r:1}) and~(\ref{eq:duality:r:2}) to obtain
\begin{eqnarray}
\label{eq:duality:r:3}
4(1 + \epsilon) \lambda \cdot \sum_{i \in V}y^1_i + \sum_{i, j \in V} w(j) d(i, j) \cdot x^{1}_{j \to i} & \leq & 4(1 + \epsilon) \cdot \sum_{j \in V} v^{1}_j  \\
\label{eq:duality:r:4}
4(1+\epsilon) \lambda \cdot \sum_{i \in V}y^2_i + \sum_{i, j \in V} w(j) d(i, j) \cdot x^{2}_{j \to i} & \leq & 4(1+\epsilon) \cdot \sum_{j \in V} \hat{v}^{2}_j
\end{eqnarray}
We now take convex combinations of the other variables in the same way as $\{y^1_i\}$ and $\{y^2_i\}$. For all $i,j \in V$ let
$$\tilde x_{j \to i} := \alpha_1 \cdot x_{j \to i}^{1} + \alpha_{2} \cdot x_{j \to i}^{2}, \quad \tilde v_{i}  :=  \alpha_1 \cdot v_{i}^{1} + \alpha_{2} \cdot \hat v_{i}^{2}, \quad \textrm{ and } \quad \tilde w_{j \to i} := \alpha_1 \cdot w_{j \to i}^{1} + \alpha_{2} \cdot \hat w_{j \to i}^{2}.$$
We now infer that
\begin{equation}
\label{eq:convex:3}
4(1 + \epsilon) \lambda \cdot \sum_{i \in V}y_i + \sum_{i, j \in V} w(j) d(i, j) \cdot x_{j \to i} \leq 4(1+\epsilon) \cdot \sum_{j \in V} \tilde v_j.
\end{equation}
It follows from \Cref{lem:convex} that $\{\tilde v_j, \tilde w_{j \to i}\}$ is a dual solution for UFL with facility opening cost $\lambda$.
Since $k = \sum_{i \in V} y_i$, it follows that $\{y_i, x_{j \to i}\}$ is an LMP $4(1 + \epsilon)$-approximation for UFL. Hence, by \Cref{lem:simple LMP case}, $\{y_i, x_{j \to i}\}$ is a $4(1 + \epsilon)$-approximate solution to $k$-median.

Finally, note that the cost of this fractional $k$-median solution is $\sum_{i, j \in V} w(j)d(i, j) \cdot x_{j \to i} = \alpha_1 \cdot \sum_{i, j \in V} w(j)d(i, j) \cdot x^1_{j \to i} + \alpha_2 \cdot \sum_{i, j \in V} w(j)d(i, j) \cdot x^2_{j \to i}$ as required.

\medskip
\noindent \textbf{Case $k=1$.} Now suppose that $k = 1$. Then we can turn the fractional solution $\{y_i^\star, x_{j \to i}^\star\}$ obtained by running $\FracLMP$ with facility opening cost $n \cdot d_{\max}w_{\max} / 4$ into an LMP $4$-approximate solution that opens exactly 1 facility.
Recall that $y_i = (\sum_{j \in V} y^\star_j)^{-1} \cdot y^\star_i$ for all $i \in V$. Clearly, $\sum_{i \in V} y_i = 1$. This defines a unique assignment of points to facilities $\{x_{j \to i}\}$, where $x_{j \to i} = y_i$. Let $\delta := \sum_{i \in V} y^\star_{i} - 1$ and note that, for each $j \in V$, $\sum_{i \in V} w(j)d(i,j) \cdot x_{j \to i} \leq \sum_{i \in V} w(j)d(i,j) \cdot x^\star_{j \to i} + \delta \cdot d_{\max} w_{\max}$. Hence, we get that
$$ 4 \lambda \cdot \sum_{i \in V} y_i + \sum_{i,j \in V} w(j)d(i,j) \cdot x_{j \to i} \leq 4 \lambda \cdot \left( \sum_{i \in V} y^\star_i - \delta \right) + \sum_{i,j \in V} w(j)d(i,j) \cdot x^\star_{j \to i} + \delta n \cdot d_{\max} w_{\max} $$
$$  = 4 \lambda \cdot  \sum_{i \in V} y^\star_i + \sum_{i,j \in V} w(j)d(i,j) \cdot x^\star_{j \to i} + \delta \cdot \left( n \cdot d_{\max} w_{\max} - 4 \lambda \right) 
 \leq 4 \foptf_\lambda.$$
It follows that $\{y_i, x_{j \to i}\}$ is a fractional LMP $4$-approximate solution for UFL that opens exactly 1 facility. Applying \Cref{lem:simple LMP case}, we get that $\{y_i, x_{j \to i}\}$ is an $4$-approximate fractional solution to the $k$-median problem. The connection cost of this fractional $k$-median solution is $\sum_{i,j \in V} w(j) d(i,j) \cdot y_i$.

\subsection{Implementation of \FracKMed}\label{sec:frackmed imp}

We now show how to implement $\FracKMed$ so that it is deterministic, and $O(n \log n)$ worst-case update time, and $O(n\log n \log \log_{1 + \epsilon}(n \Delta))$ query time. For now, we assume knowledge of $d_{\min}$, $d_{\max}$, $w_{\min}$, and $w_{\max}$.

\medskip
\noindent \textbf{Update Time.} The data structure $\FracKMed$ maintains the data structure $\FracLMP$ as well as the value of $\sum_{j \in V} w(j)d(i,j)$ for all $i \in V$. The data structure $\FracLMP$ is deterministic and has a worst-case update time of $O(n \log n)$. We can store the values $\{\sum_{j \in V} w(j)d(i,j)\}$ in a balanced binary tree, so they can be accessed in $O(\log n)$ time and updated in $O(n \log n)$ time after an update. Hence, the update time of $\FracKMed$ is $O(n \log n)$.

\medskip
\noindent \textbf{Query Time.}
We can perform the binary search in $O(\log \log_{1 + \epsilon}(n \Delta))$ steps. Each step of the binary search requires us to make one query to $\FracLMP$, which takes $O(n \log n)$ time, and then compute how many facilities are opened by the solution returned by the query, which can be done in $O(n)$ time. Hence, this takes $O(n\log n \log \log_{1 + \epsilon}(n \Delta))$ time in total. After completing the binary search, we can obtain $\{y^1_i\}$, $C^1$, $\{y^2_i\}$, and $C^2$ in $O(n \log n)$ time by making $O(1)$ many queries to $\FracLMP$. We can then compute $\alpha_1$ and $\alpha_2$ and construct the output in $O(n)$ time.

In step IV, we can compute the solution $\{y_i\}$ in $O(n \log n)$ time by making one query to $\FracLMP$ and scaling the output. Using the binary tree that we maintain, we can then find $y_i \cdot \sum_{j \in V} w(j)d(i,j)$ in $O(\log n)$ time for each $i \in V$. Hence, we can compute the connection cost of the unique solution defined by the variables $\{y_i\}$ in $O(n \log n)$ time.

\medskip
\noindent \textbf{Maintaining $d_{\min}$, $d_{\max}$, $w_{\min}$, and $w_{\max}$.}
In order to perform the binary search, we need to know the values of $d_{\min}$, $d_{\max}$, $w_{\min}$, and $w_{\max}$. It is easy to deterministically maintain these values with $O(n \log n)$ update time. In particular, we maintain balanced binary trees of all the (non-zero) distances $\{d(i,j)\}_{i \neq j}$ and all of the weights $\{w(i)\}$. Every time a point $i$ is inserted (resp. deleted) into $V$, we add (resp. remove) all the distances from $i$ to the other points in $V$ to the binary tree storing the non-zero distances. This takes $O(n \log n)$ time in total. The binary tree storing the weights can be maintained in an analogous way in only $O(\log n)$ time.
Using these data structures, we can then obtain all of these values for the current metric space $V$ in $O(\log n)$ time.

\subsection{Proof of \Cref{cor: dynamic value k-med}}\label{sec:dynamic value k-med}

Suppose we query the data structure $\FracKMed$ and it returns a connection cost of $C$. Since this is the cost of a $4(1 + \epsilon)$-approximate fractional solution to the $k$-median problem, we know that $\foptc_k \leq C \leq 4(1 + \epsilon)\foptc_k$. Since the LP-relaxation of $k$-median has an integrality gap of at most $3$, we know that $\foptc_k \leq \optc_k \leq 3\foptc_k$, and hence $\optc_k \leq 3C \leq 12(1 + \epsilon)\optc_k$. It follows that $\FracKMed$ can also be used to maintain a $12(1 + \epsilon)$-approximation to the cost of the optimal integral solution to the $k$-median problem.

\section{Dynamic Sparsification  (Take II)}\label{sec:frac sparsifier}

In this section, we show how to use the dynamic algorithm of \cite{ourneurips2023} in order to obtain the following theorem.

\begin{theorem}\label{thm:dynamic frac k-med sparse}
    There exists a randomized dynamic algorithm that, given any $k \in \mathbb N$, maintains a $O(1)$-approximation to the cost of the optimal integral solution for $k$-median with $\tilde O(k)$ amortized update time.
\end{theorem}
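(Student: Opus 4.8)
The plan is to combine the dynamic fractional $k$-median data structure of \Cref{thm:dynamic frac k-med} (equivalently \Cref{cor: dynamic value k-med}, which gives a $12(1+\epsilon)$-approximation to $\optc_k$ in $O(n\log n\log\log_{1+\epsilon}(n\Delta))$ query time and $O(n\log n)$ update time) with the dynamic sparsification machinery already developed in \Cref{app:sparsification}. The only issue — the one flagged back in \Cref{sec:prelim:new} — is that the sparsifier of \cite{ourneurips2023} preserves \emph{approximately optimal solutions} but not the \emph{optimal value}: for the bicriteria map $\sigma : V \to V'$ with weights $w'(y) = \sum_{x\in\sigma^{-1}(y)} w(x)$, one has $\Opt^p_k(V',w')$ potentially much smaller than $\Opt^p_k(V,w)$, so we cannot simply read off the $k$-median value on the $\tilde O(k)$-point sparsified space. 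The fix is to use the sparsifier \emph{only to speed up the queries} while tracking the correction terms needed to recover the true value.

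First I would run $\Sparsifier$ (in the weighted version of \Cref{sec:weighted sparsifier}) on the input $(V,w,d)$, maintaining the map $\sigma$, the sparsified space $(V',w',d)$ of size $\tilde O(k)$, and — crucially — the scalar quantity $\cl_1(\sigma,V,w) = \sum_{x\in V} w(x)\,d(x,\sigma(x))$, the assignment cost of $\sigma$ itself. By \Cref{lem:spasifier prop} this has $\tilde O(k)$ amortized update time and the value $\cl_1(\sigma,V,w)$ can be maintained within the same budget (it changes by $\tilde O(1)$ terms per update to $V$, each a single distance lookup, and by the bounded recourse of $V'$). Next I would run the data structure $\FracKMed$ of \Cref{sec:frac k med} on the \emph{sparsified} space $(V',w',d)$, which by \Cref{lem:fracKMed} has worst-case update time $O(|V'|\log|V'|) = \tilde O(k)$ and answers \textsc{Solution}$(\epsilon,k)$ queries in $O(|V'|\log|V'|\log\log_{1+\epsilon}(|V'|\Delta)) = \tilde O(k)$ time. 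Let $C'$ denote the fractional $k$-median connection cost returned; by \Cref{thm:dynamic frac k-med} we have $\foptc_k(V',w') \le C' \le 4(1+\epsilon)\foptc_k(V',w')$, hence $\optc_k(V',w') = \Theta(1)\cdot C'$ using the integrality gap bound $\optc_k \le 3\foptc_k$ of \cite{esa/ArcherRS03}.

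The key step is to recover $\optc_k(V,w)$ from $C'$ and $\cl_1(\sigma,V,w)$. One direction is \Cref{lem:bicri=sparsifier2}: any $\alpha$-approximate $k$-median solution $S \subseteq V'$ for $(V',w',d)$ satisfies $\cl_1(S,V,w) \le (\beta + 2\alpha(\beta+1))\cdot\optc_k(V,w)$ where $\beta = O(1)$ is the sparsifier's bicriteria ratio, so $\optc_k(V',w') \le \cl_1(S,V',w') \le \cl_1(S,V,w) + \cl_1(\sigma,V,w) = O(\optc_k(V,w))$; this gives $\optc_k(V',w') = O(\optc_k(V,w))$ and together with $\cl_1(\sigma,V,w) \le O(1)\cdot\optc_k(V,w)$ (the bicriteria guarantee) it shows that $C' + \cl_1(\sigma,V,w) = O(\optc_k(V,w))$. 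For the reverse direction I would argue that $\optc_k(V,w) \le 2\big(\optc_k(V',w') + \cl_1(\sigma,V,w)\big)$: take an optimal $k$-median solution $S^\star_{V'}$ on $(V',w',d)$, observe that by Minkowski/triangle inequality $\cl_1(S^\star_{V'},V,w) \le \cl_1(\sigma,V,w) + \cl_1(S^\star_{V'},V',w') = \cl_1(\sigma,V,w) + \optc_k(V',w')$, and then project $S^\star_{V'}$ onto $V$ losing a further factor $2$ via \Cref{lem:proj cor} (or \Cref{cl:projection:new}) to get a proper solution of size $\le k$ on $(V,w,d)$. Combining both directions, the quantity $\widehat{\Opt} := 3C' + \cl_1(\sigma,V,w)$ (say) satisfies $\widehat{\Opt} = \Theta(1)\cdot\optc_k(V,w)$, and it is computable in $\tilde O(k)$ time per query. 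Fixing $\epsilon$ to a constant, the amortized update time is dominated by the $\tilde O(k)$ cost of running $\Sparsifier$ plus the $\tilde O(k)$ cost of maintaining $\FracKMed$ on a space of size $\tilde O(k)$ (it is called only when a query is issued, or we re-query after every update for the same total), giving the claimed $\tilde O(k)$ amortized bound. The main obstacle is getting the two-sided comparison between $\optc_k(V,w)$ and $\optc_k(V',w') + \cl_1(\sigma,V,w)$ exactly right — in particular making sure the additive assignment cost $\cl_1(\sigma,V,w)$ is the only correction needed and that the sparsifier's randomized guarantee (the high-probability event from \Cref{lem:bicri=sparsifier2}) is carried through — but this is essentially bookkeeping on top of \Cref{lem:bicri=sparsifier2} and \Cref{lem:proj cor}, both of which are already established.
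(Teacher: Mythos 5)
Your overall architecture matches the paper's: run $\Sparsifier$ on $(V,w,d)$, run the fractional $k$-median data structure on the $\tilde O(k)$-point space $(V',w',d)$, and report a constant multiple of the returned connection cost plus an additive correction for the assignment cost of the bicriteria map $\sigma$; your two-sided comparison between $\optc_k(V,w)$ and $C'+\cl_1(\sigma,V,w)$ is essentially the same calculation the paper performs (your lower-bound step is in fact slightly simpler than needed, since $V'\subseteq V$, so the optimal solution on $(V',w')$ is already a proper solution for $(V,w,d)$ and no projection or extra factor $2$ is required).

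The gap is the claim that $\cl_1(\sigma,V,w)$ itself can be maintained within the $\tilde O(k)$ amortized budget ``by the bounded recourse of $V'$''. The $\tilde O(1)$ amortized recourse of the \emph{set} $V'$ does not bound the number of changes to the \emph{assignment} $\sigma$: when the sparsifier rebuilds one of its layers or drops a point of $V'$, the assignment of a large fraction of the $n$ points of $V$ (and $n$ may be polynomially larger than $k$) can change simultaneously, and even a single from-scratch evaluation of $\sum_{x\in V}w(x)\,d(x,\sigma(x))$ costs $\Theta(n)\gg\tilde O(k)$. This is precisely the obstruction the paper's proof is built around: it uses \cite{ourneurips2023} in a white-box way and maintains, with only $\tilde O(1)$ overhead, a surrogate $\gamma=\sum_i 2\nu_i|C_i|$ assembled from per-layer quantities (the values $\nu_i$ and the sizes of the removal sets $C_i$), which satisfies $\cl(\sigma,V)\le\gamma\le O(1)\cdot\optc_k(V)$ with high probability. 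Your argument goes through verbatim if you substitute such a $\gamma$ for $\cl_1(\sigma,V,w)$ --- the upper-bound direction only needs $\gamma\le O(1)\cdot\optc_k(V)$ and the lower-bound direction only needs $\gamma\ge\cl_1(\sigma,V,w)$ --- but as written, the efficient maintainability of the exact assignment cost is unsupported, and it is the one genuinely non-black-box ingredient of the theorem.
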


\noindent Unlike \Cref{app:sparsification}, this result does not follow immediately from applying the algorithm of \cite{ourneurips2023} in a black box manner. This is because the sparsifier does not preserve the cost of solutions.\footnote{Note that the sparsifier is a bicriteria approximation, but is not a coreset.} Thus, some extra work is required to efficiently approximate the cost of the fractional solution maintained by the algorithm in the original unsparsified space.

We now show how to use \Cref{cor: dynamic value k-med} and the sparsifier of \cite{ourneurips2023} in a white box manner in order to prove \Cref{thm:dynamic frac k-med sparse}. For simplicity, we focus on the case that the input space is unweighted. Using the ideas from \Cref{sec:weighted sparsifier}, this can easily be extended to work for weighted metric spaces.

\subsection{The Cost of The Sparsifier}

Recall that, given a dynamic metric space $(V,d)$ which evolves via a sequence of point insertions and deletions and a parameter $k \in \mathbb N$, the algorithm $\Sparsifier$ (see \Cref{app:sparsification}) explicitly maintains the following objects with an amortized update time of $\tilde O(k)$.
\begin{itemize}
    \item A set $V' \subseteq V$ of at most $\tilde O(k)$ points.
    \item An assignment $\sigma : V \longrightarrow V'$ such that $\cl(\sigma, V) \leq O(1) \cdot\optc_k(V)$ with probability at least $1 - O(1/n^c)$.\footnote{Here, $\cl(\sigma, V) = \sum_{j \in V} d(j, \sigma(j))$.}
\end{itemize}

\noindent
In order to approximate the cost of our solution in the space $V$, it will be necessary for us to be able to approximate the cost of the assignment $\sigma$, i.e.~the value of $\cl(\sigma, V)$. It turns out that we can maintain an approximation $\gamma$ of $\cl(\sigma, V)$ by appropriately modifying the sparsifier. In particular, we have the following lemma.

\begin{lemma}
    We can modify $\Sparsifier$ so that it explicitly maintains a value $\gamma$ such that $\cl(\sigma, V) \leq \gamma \leq O(1) \cdot\optc_k(V)$ with probability at least $1 - O(1/n^c)$.
\end{lemma}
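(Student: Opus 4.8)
The plan is to open up the $\Sparsifier$ algorithm of~\cite{ourneurips2023} and track the quantity $\cl(\sigma, V)$ through its internal structure, rather than treating it as a black box. Recall that $\Sparsifier$ maintains an assignment $\sigma : V \to V'$ whose cost is (w.h.p.) within an $O(1)$ factor of $\optc_k(V)$; the key point is that this assignment is produced by an explicit combinatorial process (a hierarchical/ball-carving construction à la Mettu--Plaxton, or whatever primitive~\cite{ourneurips2023} uses), and at each update only $\tilde O(1)$ clients are reassigned. First I would augment the data structure so that it stores, for each point $y \in V'$, the partial sum $\sum_{j \in \sigma^{-1}(y)} d(j, y)$, and maintains the running total $\gamma := \sum_{y \in V'} \sum_{j \in \sigma^{-1}(y)} d(j, y) = \cl(\sigma, V)$. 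Since a single update changes $\sigma$ on only $\tilde O(1)$ points and changes $V'$ by $\tilde O(1)$ points (by Lemma~\ref{lem:spasifier prop} and the recourse bound on $V'$), each such partial sum and hence $\gamma$ can be updated in $\tilde O(k)$ time, which is dominated by the existing update time of $\Sparsifier$. Thus $\gamma$ is maintained \emph{exactly} equal to $\cl(\sigma, V)$, and the chain $\cl(\sigma, V) \leq \gamma \leq O(1) \cdot \optc_k(V)$ holds with probability $1 - O(1/n^c)$ simply because $\gamma = \cl(\sigma, V)$ and the right inequality is the defining guarantee of $\Sparsifier$.

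If, on closer inspection, the assignment $\sigma$ maintained by~\cite{ourneurips2023} is only maintained \emph{implicitly} (e.g.~$\sigma$ is the composition of several levels of a hierarchy and is never materialized in full), then the approach would instead be to maintain $\gamma$ level-by-level: for each level $\ell$ of the hierarchy, store the aggregate connection cost contributed at that level, using the triangle inequality to telescope the per-level costs into a bound on $\cl(\sigma, V)$ up to a constant factor. Concretely, one would show $\cl(\sigma, V) \leq \sum_\ell (\text{cost at level }\ell) \leq O(1) \cdot \cl(\sigma, V)$ by a standard geometric-series argument on the level radii, and then maintain each level's aggregate cost incrementally. Either way, the essential observation is that the sparsifier already performs work proportional to the number of reassignments it makes, so piggy-backing a cost counter onto it is free up to constants.

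The main obstacle I anticipate is \emph{not} the asymptotics but verifying that the specific internal representation of $\sigma$ in~\cite{ourneurips2023} supports $O(1)$-per-reassignment bookkeeping of connection costs --- in particular, whether a single logical update can trigger a cascade that reassigns more than $\tilde O(1)$ clients in the worst case (amortized versus worst-case recourse of $\sigma$, as opposed to of $V'$). If the reassignment recourse of $\sigma$ is only bounded \emph{amortized}, then the maintenance of $\gamma$ is also only amortized-$\tilde O(k)$, which is fine for the stated theorem but needs to be said carefully; and one must double-check that the potential-function argument from Section~3.3 of~\cite{ourneurips2023} indeed bounds the total number of $(j,\sigma(j))$-changes, not merely the changes to $V'$. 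Assuming that holds (which it should, since each point entering or leaving $V'$ forces at most $\tilde O(1)$ reassignments and each client migration is itself charged by the potential), the rest is routine, and the lemma follows by taking $\gamma$ to be this maintained counter and invoking the high-probability cost guarantee of $\Sparsifier$ verbatim.
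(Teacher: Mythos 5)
Your primary route has a genuine gap: it rests on the claim that ``at each update only $\tilde O(1)$ clients are reassigned,'' which you attribute to \Cref{lem:spasifier prop}. That lemma only bounds the recourse of the \emph{set} $V'$, not of the assignment $\sigma$. In the data structure of \cite{ourneurips2023} the assignment is induced by a hierarchy of $\tilde O(1)$ layers, and a single update can trigger the reconstruction of a layer, which simultaneously reassigns every point removed at that layer and below --- potentially far more than $\tilde O(1)$ points, and in particular a single center leaving $V'$ orphans all of $\sigma^{-1}(y)$. Consequently the exact quantity $\cl(\sigma,V)$ is not something you can maintain by adding or subtracting $d(j,\sigma(j))$ per reassignment within the stated budget, and the ``which it should'' assumption you lean on at the end is precisely the step that fails. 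This is why the paper does \emph{not} set $\gamma = \cl(\sigma,V)$: instead it maintains the surrogate $\gamma := \sum_{i=1}^{t} 2\nu_i |C_i|$, where $\{C_i\}_{i\in[t]}$ is the partition of $V$ into the points removed at each of the $t=\tilde O(1)$ layers and $\nu_i$ is a per-layer value computed at reconstruction time. This surrogate only requires tracking the sizes $|C_i|$ and the values $\nu_i$, which costs $\tilde O(1)$ overhead, and the sandwich $\cl(\sigma,V) \leq \sum_i 2\nu_i|C_i| \leq O(1)\cdot \optc_k(V)$ is supplied directly by Lemmas B.6 and B.10 of \cite{ourneurips2023}; no exact bookkeeping of connection costs is ever needed.

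Your fallback paragraph (maintain per-level aggregates and sandwich them against $\cl(\sigma,V)$) is the right idea and is essentially the paper's argument, but as written it is a sketch that defers exactly the content the lemma needs: which per-level quantities are maintainable in $\tilde O(1)$ overhead, and why their sum upper-bounds $\cl(\sigma,V)$ while remaining $O(1)\cdot\optc_k(V)$. Note also that an inequality of the form $\gamma \leq O(1)\cdot \cl(\sigma,V)$ is not what is required --- the lemma needs $\gamma$ to \emph{upper}-bound $\cl(\sigma,V)$ and to be upper-bounded by $O(1)\cdot\optc_k(V)$, which is exactly what the cited lemmas of \cite{ourneurips2023} provide for $\sum_i 2\nu_i|C_i|$; so the proof should be routed through those specific statements rather than through the exact connection cost.
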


\begin{proof}
    Using the notation from \cite{ourneurips2023}, their algorithm maintains a partition of the space $V$ into $t = \tilde O(1)$ many sets $\{C_i\}_{i \in [t]}$ and a collection of values $\{\nu_i\}_{i \in [t]}$ such that
    $$ \cl(\sigma, V) \leq \sum_{i = 1}^t 2 \nu_i |C_i| \leq O(1) \cdot \optc_k(V) $$
    with high probability (see their Lemmas B.6 and B.10). The sets $C_i$ correspond to the points removed from each of the ``layers'' in their data structure, and can easily be maintained explicitly with $\tilde O(1)$ overhead. Similarly, the value $\nu_i$ is computed each time the $i^{th}$ layer is reconstructed, and can easily be maintained explicitly. Thus, we can modify the algorithm to explicitly maintain the value of $\gamma := \sum_{i = 1}^t 2 \nu_i |C_i|$ with only $\tilde O(1)$ overhead in the running time.
\end{proof}

\subsection{Proof of \Cref{thm:dynamic frac k-med sparse}}

As described in \Cref{app:sparsification}, we feed the metric space $(V, d)$ to the sparsifier, and then run the algorithm from \Cref{cor: dynamic value k-med} on the sparsified metric space $(V', w',d)$ maintained by the sparsifier. Let $\{x_{j \to i}', y_i'\}$ denote the fractional solution to the $k$-median problem on $(V',w',d)$ maintained by our dynamic algorithm. Note that our dynamic algorithm is capable of explicitly maintaining the connection cost $\sum_{i,j \in V'} w'(j)d(i,j) \cdot x_{j \to i}'$.
We now show that 
\begin{equation}\label{eq:fracking}
\gamma + \sum_{i,j \in V'} w'(j)d(i,j) \cdot x_{j \to i}' = \Theta(1) \cdot \optc_k(V).
\end{equation}
Since we can maintain the LHS of Equation~(\ref{eq:fracking}) explicitly with $\tilde O(k)$ update time, the theorem follows.

Define a solution $\{x_{j \to i}, y_i\}$ in the space $(V,d)$ be setting $y_i = y_i'$ for all $i \in V'$, $x_{j \to i} = x_{\sigma(j) \to i}'$ for all $i \in V', j \in V$, and setting all over variables to $0$.
Now consider the following lemma.

\begin{lemma}\label{lem:frac GUHA}
    The solution $\{y_i, x_{j \to i}\}$ is a feasible and $\sum_{i,j \in V'} w'(j)d(i,j) \cdot x_{j \to i}' \leq O(1) \cdot \optc_k(V)$.
\end{lemma}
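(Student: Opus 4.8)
We want to show that the solution $\{y_i, x_{j \to i}\}$ defined in the space $(V,d)$ — obtained by "lifting" the fractional solution $\{y_i', x_{j \to i}'\}$ on the sparsified space $(V',w',d)$ via the assignment $\sigma$ — is feasible for fractional $k$-median on $(V,d)$, and that its connection cost (which equals $\sum_{i,j\in V'} w'(j)d(i,j)\cdot x'_{j\to i}$) is at most $O(1)\cdot\optc_k(V)$.
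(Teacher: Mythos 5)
Your proposal contains no proof: it only restates what the lemma asserts. Both claims still need to be established. For feasibility, you must actually verify the three LP constraints for $\{y_i, x_{j \to i}\}$. This is routine but not automatic: since $y_i = y_i'$ for $i \in V'$ (and $0$ otherwise) and $x_{j \to i} = x'_{\sigma(j) \to i}$, feasibility of $\{y_i', x'_{j \to i}\}$ on $(V', w', d)$ gives $\sum_i y_i = \sum_{i \in V'} y_i' \leq k$, $\sum_i x_{j \to i} = \sum_{i \in V'} x'_{\sigma(j) \to i} \geq 1$ for every $j \in V$, and $x_{j \to i} = x'_{\sigma(j) \to i} \leq y_i' = y_i$; you need to say this explicitly, since feasibility on $V$ is a statement about every client $j \in V$, not just the points of $V'$.

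The substantive missing piece is the cost bound, and simply invoking ``the fractional solution is approximately optimal on the sparsifier'' does not suffice, because the optimum on $(V', w', d)$ is not a priori comparable to $\optc_k(V)$ --- the sparsifier does not preserve objective values (this is exactly the caveat the surrounding section is addressing). The argument the paper uses, and which your proposal would need, is: take an optimal integral solution $S^\star$ to $k$-median on $(V,d)$; project it onto $V'$ to get $S' \subseteq V'$ with $|S'| \leq k$ and $\cl(S', V', w') \leq 2\,\cl(S^\star, V', w')$ (triangle inequality / projection lemma); then use the $O(1)$-approximation guarantee of $\{y_i', x'_{j \to i}\}$ against the fractional optimum on $(V', w', d)$, which is at most $\cl(S', V', w')$. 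Finally, bound $\cl(S^\star, V', w') = \sum_{j \in V} d(\sigma(j), S^\star) \leq \sum_{j \in V} d(j, \sigma(j)) + \sum_{j \in V} d(j, S^\star) \leq O(1)\cdot \optc_k(V)$, where the first term is controlled by the sparsifier guarantee $\cl(\sigma, V) \leq O(1)\cdot \optc_k(V)$. Without this chain --- in particular without using the guarantee on $\sigma$ --- the connection-cost bound cannot be deduced, so as it stands your submission has no argument to assess.
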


\begin{proof} To Do.
    It is clear that the variables $\{y_i, x_{j \to i}\}$ satisfy (\ref{eq:kmed:lp:4:r}). To see that the other constraints are also satisfied,
    note that $\{y_i', x_{j \to i}'\}$ is a feasible solution to the $k$-median problem for $(V',w')$. It follows that $\sum_{i \in V} y_i = \sum_{i \in V'} y'_i \leq k$, $\sum_{i \in V} x_{j \to i} = \sum_{i \in V'} x'_{\sigma(j) \to i} \geq 1$ for all $j \in V$, and $x_{j \to i} = x'_{\sigma(j) \to i} \leq y'_i = y_i$ for all $j \in V, i \in V'$,\footnote{Note that $x_{j \to i} = 0 = y_i$ when $i \notin V'$.} establishing (\ref{eq:kmed:lp:1:r}), (\ref{eq:kmed:lp:2:r}), and (\ref{eq:kmed:lp:3:r}) respectively.
    
    For the second part of the lemma, let $w'(i) := |\sigma^{-1}(i)|$ for all $i \in V'$. Now, let $S^\star$ be an optimal integral solution to the $k$-median problem on $(V,d)$. Then there exists some set $S' \subseteq V'$ of size at most $k$ such that $\cl(S', V', w') \leq 2\cl(S^\star, V', w')$. It follows that
    $$\sum_{i, j, \in V'} w'(j)d(i,j) \cdot x'_{j \to i} \leq O(1) \cdot \foptc_k(V',w') \leq O(1) \cdot \sum_{i \in V'} w'(i)d(i, S') \leq O(1) \cdot \sum_{i \in V'} w'(i)d(i, S^\star) $$
    $$ = O(1) \cdot \sum_{j \in V} d(\sigma(j), S^\star) \leq O(1) \cdot \sum_{j \in V} d(j, \sigma(j)) + O(1) \cdot \sum_{j \in V} d(\sigma(j), S^\star) \leq O(1) \cdot \optc_k(V). $$    
\end{proof}

\noindent
It follows that
$$ \Theta(1) \cdot \optc_k(V) \leq \foptc_k(V) \leq  \sum_{i,j \in V} d(i,j) \cdot x_{j \to i} \leq \cl(\sigma, V) + \sum_{i,j \in V} w(j)d(i,j) \cdot x_{j \to i}'$$
$$ \gamma + \sum_{i,j \in V} w(j)d(i,j) \cdot x_{j \to i}' \leq O(1) \cdot \optc_k(V). $$

\newpage

\bibliography{bibl.bib}
\bibliographystyle{alpha}

\end{document}